\documentclass[12pt]{article}
\usepackage{amsmath}
\usepackage{graphicx}
\usepackage{caption}
\usepackage{enumerate}
\usepackage[numbers,super,sort&compress]{natbib}
\usepackage{url} 
\usepackage{amsfonts}
\usepackage{hyperref}
\usepackage{float} 
\usepackage{amssymb}
\usepackage[utf8]{inputenc}
\usepackage{multirow}
\usepackage{xcolor}
\usepackage{adjustbox} 
\usepackage{subcaption}
\usepackage{placeins}
\usepackage{lineno}
\usepackage{tikz}
\usetikzlibrary{arrows.meta,positioning}
\usepackage{amsthm}

\theoremstyle{plain}

\newtheorem{proposition}{Proposition}

\theoremstyle{remark}
\newtheorem{remark}{Remark}

\theoremstyle{definition}

\newcommand{\blind}{0}

\usepackage[margin=1in, bottom=1.5in]{geometry}

\begin{document}
\def\spacingset#1{\renewcommand{\baselinestretch}%
{#1}\small\normalsize} \spacingset{1}

\if1\blind
{
  \title{\bf Title}
  \author{Author 1\thanks{
    The authors gratefully acknowledge \textit{please remember to list
    all relevant funding sources in the unblinded version}}\hspace{.2cm}\\
    Department of YYY, University of XXX\\
    and \\
    Author 2 \\
    Department of ZZZ, University of WWW}
  \maketitle
} \fi

\if0\blind
{
  \title{\bf MR-CCC: Bayesian Mendelian Randomization for Causal
    Cell--Cell Communication}
  \author{%
    \large
    Bitan Sarkar$^{1}$ and Yang Ni$^{2}$\thanks{Corresponding author:
    Yang Ni (\href{mailto:yang.ni@austin.utexas.edu}
    {yang.ni@austin.utexas.edu}).}
    \\[0.75em]
    $^{1}$Department of Statistics, Texas A\&M University,
    College Station, TX, USA\\
    $^{2}$Department of Statistics and Data Sciences,\\
    The University of Texas at Austin, Austin, TX, USA
  }
  \date{}
  \maketitle
} \fi

\spacingset{1.45}

\bigskip


\begin{abstract}
Cell--cell communication (CCC) is commonly inferred from
ligand--receptor co-expression, an associational paradigm that
cannot distinguish causal signaling from shared regulation or
confounding. We propose \emph{MR-CCC}, a Bayesian Mendelian
randomization framework that uses cis-eQTLs as instruments for
ligand and receptor expression and explicitly models
receptor-modulated ligand effects through an interaction term, so
the causal effect of a ligand can vary with receptor abundance.
A spike--and--slab prior yields posterior inclusion probabilities
quantifying evidence for causal signaling, and an efficient Gibbs
sampler provides scalable inference. Benchmarked against naive
regression, MVMR, and MR-BMA, MR-CCC controls false discoveries
under confounding while retaining high power, and uniquely
estimates both the ligand main and receptor-modulated interaction
effects. Applied to the OneK1K NK cells\,$\rightarrow$\,monocytes
axis, MR-CCC identifies eight discoveries across GABA, interferon, interleukin, and prostaglandin signaling, including a
stoichiometry-dependent dissociation of the two IL-18 receptor
chains and co-discovery of both obligate IFN-$\gamma$ receptor
subunits.
\end{abstract}


\spacingset{1.45}

\section{Introduction}

Cell--cell communication (CCC) is a fundamental organizing principle
of multicellular systems, governing immune regulation, development,
and disease progression.  At the molecular level, CCC is mediated
through ligand--receptor interactions, whereby ligands expressed in a
sender cell type bind cognate receptors in a receiver cell type to
initiate downstream signaling programs.  Advances in sequencing
technologies have enabled large-scale mapping of these interactions
across diverse biological systems, giving rise to computational
frameworks such as CellPhoneDB \cite{efremova2020cellphonedb} and
CellChat \cite{cellchat}.

These methods have established a widely used paradigm for CCC
inference based on co-expression and enrichment of ligand--receptor
pairs.  However, this paradigm is fundamentally associational: it
identifies coordinated expression patterns but does not distinguish
causal signaling from shared regulation or confounding.  Ligand
expression, receptor expression, and downstream pathway activity may
be jointly driven by latent factors such as donor heterogeneity,
environmental exposures, or cell-state programs, leading to spurious
communication signals.  As a result, current CCC analyses can
overestimate or misattribute intercellular signaling relationships.

This limitation motivates a paradigm shift: rather than
asking \emph{whether} ligand and receptor expression are jointly
elevated across cell types, the key causal question is
instead \emph{whether} variation in ligand expression in the sender
cell type \emph{mechanically drives} downstream pathway activity in
the receiver cell type through receptor engagement, independently of
any shared confounders.  Mendelian randomization (MR) provides a
principled strategy for precisely this causal question by leveraging
genetic variants as instrumental variables
\cite{davey_smith_hemani_2014}.  Because germline variants are fixed
at conception, they are largely independent of environmental and
downstream confounders, enabling identification of causal effects
under standard MR assumptions.  MR has been widely applied in
epidemiology and extended to settings with pleiotropy
\cite{bowden2015mr}, multivariable exposures \cite{Sanderson2021},
and Bayesian formulations \cite{Zuber2020}.  However, existing MR
approaches are not designed for CCC problems: they typically do not
account for interaction effects, which are crucial for modeling
receptor-modulated communication.  In CCC, the effect of a ligand is
inherently context-dependent and often modulated by receptor abundance
in the receiving cell.  Capturing this interaction effect requires
moving beyond additive models toward interaction-based causal
formulations.  At the same time, the large number of candidate
ligand--receptor--pathway combinations necessitates principled
selection procedures that can separate true communication signals
from noise.

To address these challenges, we introduce \emph{MR-CCC}, a Bayesian
Mendelian randomization framework for causal cell--cell communication
inference.  MR-CCC operationalizes this causal paradigm by jointly
modeling ligand expression in the sender cell type, receptor
expression in the receiver cell type, and downstream pathway activity
within a coherent causal framework.  The method uses cis-eQTLs as
instruments for both ligand and receptor expression, replacing
observed molecular quantities with their instrument-based conditional
expectations to mitigate confounding.  It explicitly incorporates
receptor-modulated ligand effects through an interaction term,
allowing communication strength to vary with receptor abundance.  An
identification result establishes that the plug-in working model
recovers the structural causal coefficients under standard MR
assumptions, even in the presence of the endogenous interaction term
(Proposition~\ref{prop:ident}).  A spike--and--slab prior enables
probabilistic selection of communication effects, yielding posterior
inclusion probabilities (PIPs) that quantify evidence for causal
signaling.  Posterior inference is carried out via an efficient Gibbs
sampler with closed-form full conditionals for all model parameters,
making MR-CCC computationally tractable for genome-scale CCC screens
involving thousands of ligand--receptor--pathway triplets. An overview
of the framework, the simulation benchmark, and the lead biological
discovery is shown in Figure~\ref{fig:graphical_abstract}.

\begin{figure}[htb]
  \centering
  \includegraphics[width=0.88\linewidth]{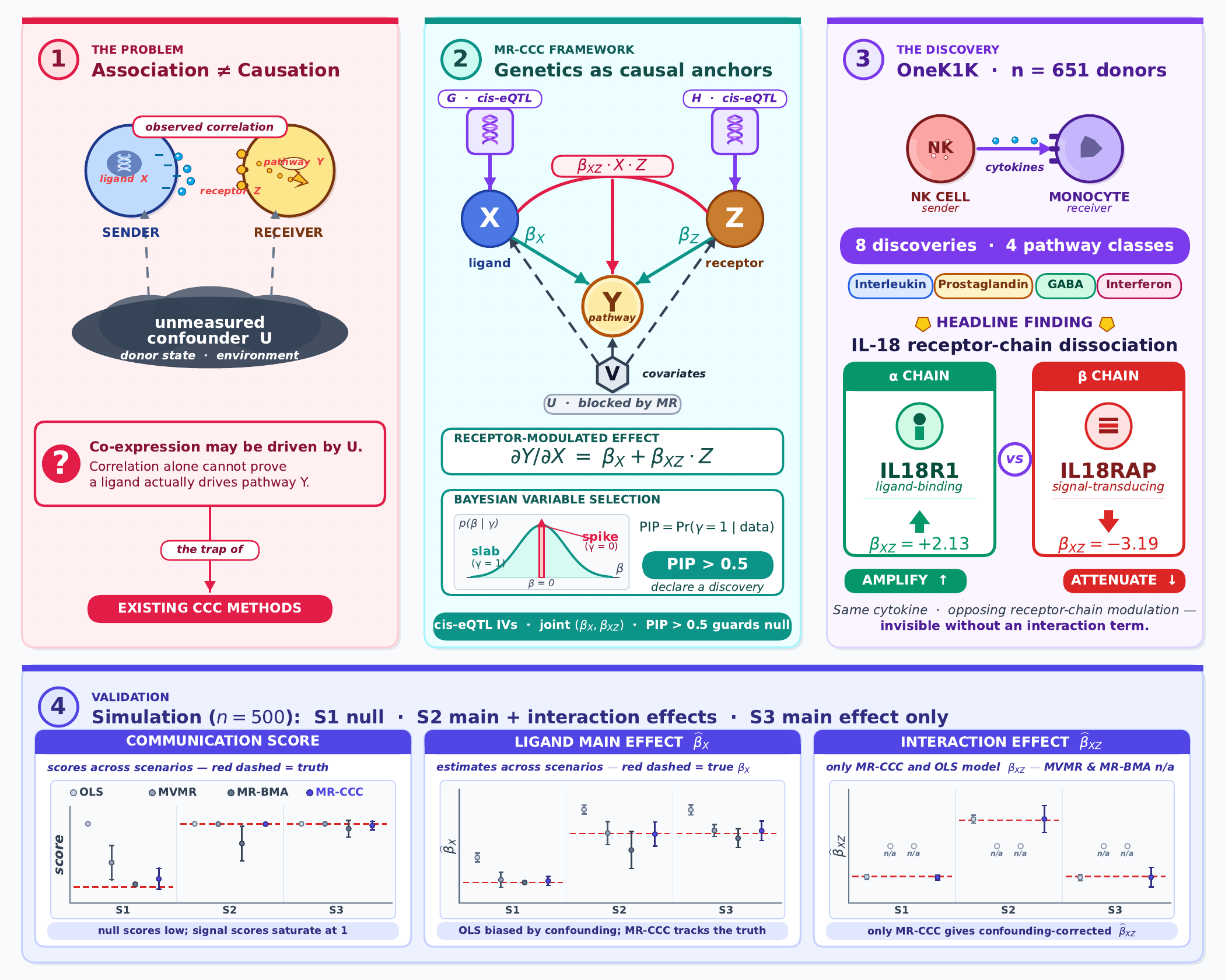}
  \caption{\textbf{Overview of the MR-CCC framework, simulation
  benchmark, and lead biological discovery.}
  \textbf{(1)}~Ligand--receptor co-expression between a sender and a
  receiver cell can be driven by an unmeasured confounder $U$, so
  correlation alone cannot establish causal communication.
  \textbf{(2)}~MR-CCC uses cis-eQTLs $G$ and $H$ as instruments for
  ligand $X$ and receptor $Z$ and adjusts for covariates $V$, blocking
  $U$. The working outcome model identifies the receptor-modulated
  ligand effect $\partial Y/\partial X = \beta_X + \beta_{XZ}Z$; a
  spike--and--slab prior on $(\beta_X,\beta_{XZ})$ yields a posterior
  inclusion probability $\mathrm{PIP} = \Pr(\gamma = 1 \mid
  \mathrm{data})$, with discovery declared at $\mathrm{PIP} > 0.5$.
  \textbf{(3)}~Application to NK cells\,$\rightarrow$\,monocytes in
  the OneK1K cohort ($n = 651$ donors): eight discoveries across
  four pathway classes, led by an IL-18 receptor-chain dissociation
  (\textit{IL18R1}: $\hat\beta_{XZ}^{(s)} = +2.13$;
  \textit{IL18RAP}: $\hat\beta_{XZ}^{(s)} = -3.19$).
  \textbf{(4)}~Simulation at $n = 500$ across S1 (null), S2 (main
  $+$ interaction), and S3 (main only): MR-CCC controls the null
  score, remains unbiased for $\beta_X$ under confounding, and is
  the only method that recovers a confounding-corrected
  $\beta_{XZ}$.}
  \label{fig:graphical_abstract}
\end{figure}

We benchmark MR-CCC against three alternatives --- naive regression
(OLS), multivariable Mendelian randomization (MVMR\cite{Sanderson2021}),
and Bayesian model averaging MR (MR-BMA\cite{Zuber2020}) --- across
a null and two signal scenarios. MR-CCC alone combines strong null
calibration, robustness to unmeasured confounding, and joint
regularized estimation of both the ligand main and receptor-modulated
interaction effects under a biologically motivated structural model.

Applied to the OneK1K cohort \cite{yazar2022onek1k} along the
NK cells\,$\rightarrow$\,monocytes axis, MR-CCC identifies eight
high-confidence causal communication signals across GABA, interferon,
interleukin, and prostaglandin signaling --- including a receptor
stoichiometry-dependent dissociation of the two obligate IL-18
receptor chains --- while correctly assigning near-zero posterior
probability to ICAM adhesion pairs and the reversed IL-15 direction,
demonstrating sender--receiver-specific causal inference not
achievable by co-expression tools.

\section{Results}
\label{sec:results}

We evaluate MR-CCC in two complementary settings: simulation
experiments that benchmark its statistical properties against
three alternatives, and an application to the OneK1K single-cell
eQTL cohort.

\subsection{Simulation Study}
\label{sec:simulation}

We conducted simulation experiments to evaluate MR-CCC. The study was designed to assess three aspects of
performance: (i) discrimination between the absence and presence of
CCC, (ii) estimation accuracy for the ligand main
effect \(\beta_X\) and the ligand--receptor interaction effect
\(\beta_{XZ}\), and (iii) comparative performance relative to a naive
regression baseline and two established MR
alternatives. 
We simulated data from the model in Equation~\eqref{eq:dgm} at four
sample sizes $n \in \{500,\;1000,\;10{,}000,\;30{,}000\}$ across
three scenarios; full simulation parameter values are in Methods
(Section~\ref{subsec:dgm}).

\subsubsection{Simulation Scenarios}
\label{subsec:scenarios}

Three scenarios were defined by the values of \(\beta_X\),
\(\beta_{XZ}\), and
\(\gamma = 1-\mathbb{I}(\beta_X =\beta_{XZ} = 0)\).

\paragraph{Scenario 1 (S1): No communication,
\(\gamma = 0\), \(\beta_X = 0\), \(\beta_{XZ} = 0\).}
A null setting with neither a ligand main effect nor a
ligand-receptor interaction.

\paragraph{Scenario 2 (S2): Communication with receptor modulation,
\(\gamma = 1\), \(\beta_X = 0.3\), \(\beta_{XZ} = 0.3\).}
Communication is driven jointly by a ligand main effect and a
receptor-modulated ligand effect.

\paragraph{Scenario 3 (S3): Communication without receptor modulation,
\(\gamma = 1\), \(\beta_X = 0.3\), \(\beta_{XZ} = 0\).}
Communication is driven solely by the ligand main effect.


\subsubsection{Competing Methods}
We compared MR-CCC against three alternatives: ordinary least
squares (OLS, naive regression), multivariable Mendelian
randomization (MVMR \cite{Sanderson2021}), and Bayesian model
averaging MR (MR-BMA \cite{Zuber2020}). Detailed implementations
and decision rules for each method are in Methods
(Sections~\ref{subsec:methods} and \ref{subsec:metrics}).

\subsubsection{Simulation Results}
\label{subsec:results}

\subsubsection*{Scenario~1: No communication
(\(\gamma=0\), \(\beta_X=0\), \(\beta_{XZ}=0\))}

Scenario~1 evaluates false positive behavior. A well-calibrated method
should assign communication scores near zero, maintain negligible
rejection rates, and recover both \(\beta_X\) and \(\beta_{XZ}\) near
zero.

As shown in Table~\ref{tab:sim_s1}, MR-CCC satisfies all three criteria
across all sample sizes. Its communication score ranges from 0.078--0.130 across sample sizes, and its rejection rate is at most 5\%. Both \(\mathrm{Bias}(\beta_X)\)
and \(\mathrm{MAD}(\beta_X)\) shrink toward zero as \(n\) increases,
reaching 0.001 (0.002) at \(n = 30{,}000\); the same pattern holds for
\(\beta_{XZ}\).

The contrast with OLS is stark. OLS assigns a score of 1.000
and rejects in every single replicate at every sample size. Because
\(X\), \(Z\), and \(Y\) share the unmeasured confounder \(U\),
confounding alone generates an association signal indistinguishable from
true communication under an associational model. Crucially, OLS does
\emph{not} self-correct as \(n\) grows: its bias in \(\beta_X\) remains at or above
0.151 even at \(n = 30{,}000\), confirming that large sample sizes
cannot rescue an associational paradigm under unmeasured confounding.

MVMR substantially corrects the estimation bias by instrumenting both
exposures, demonstrating the value of a causal CCC framework.
Nevertheless, at the detection level MVMR does not fully resolve the
false discovery problem: its null communication scores remain in the
range 0.371--0.564 and its rejection rate reaches 5\% at \(n = 500\).
This reflects residual finite-sample correlation between the
instrumented ligand exposure \(\hat X\) and the outcome even under the
null.

MR-BMA correctly identifies the absence of a ligand effect: its
\(\mathrm{MIP}_X\) scores remain in the narrow range 0.040--0.056
because the ligand instruments \(G\) do not predict \(Y\) after
accounting for the receptor instruments \(H\), which reflect the genuine
receptor main effect \(\beta_Z = 0.5\). However, MR-BMA does not model
\(\beta_{XZ}\), limiting its applicability when receptor-modulated
communication is of scientific interest.

MR-CCC's spike--and--slab prior provides the additional regularization
that closes this gap: by shrinking the communication effect vector toward
zero unless the data supply sufficiently strong evidence of a signal, it
controls false discoveries in a way that neither OLS nor MVMR achieves,
while simultaneously producing a structured joint estimate of
\((\beta_X, \beta_{XZ})\) that neither MR-BMA nor MVMR can provide. As shown in Figure~\ref{fig:sim_score}, MR-CCC
scores cluster tightly near zero, MR-BMA scores remain low, OLS scores
sit uniformly at one, and MVMR occupies an intermediate range.

\begin{table}[htb]
\centering
\caption{\textbf{Simulation results under the null scenario (S1).} Scenario~1 (\(\gamma=0\), \(\beta_X=0\),
\(\beta_{XZ}=0\)). Each entry is reported as mean (sd) across 20 replicates.
Score denotes the communication score produced by each method; lower values and
lower rejection rates are preferred under the null. ``---'' indicates the method does not estimate $\beta_{XZ}$ (MR-BMA and MVMR).}
\begin{adjustbox}{max width=\textwidth}
\begin{tabular}{llcccccc}
\hline
$n$ & Method & Score & Rejection rate & Bias($\beta_X$) & MAD($\beta_X$)
    & Bias($\beta_{XZ}$) & MAD($\beta_{XZ}$) \\
\hline
500
  & OLS    & 1.000 (0.000) & 1.000 (0.000) & 0.154 (0.027) & 0.154 (0.027)
            & $-$0.001 (0.013) & 0.011 (0.006) \\
  & MVMR   & 0.387 (0.272) & 0.050 (0.224) & 0.017 (0.045) & 0.036 (0.031) & --- & --- \\
  & MR-BMA & 0.044 (0.023) & 0.000 (0.000) & 0.001 (0.003) & 0.002 (0.003)
            & --- & --- \\
  & MR-CCC & 0.130 (0.164) & 0.050 (0.224) & 0.010 (0.027) & 0.012 (0.026)
            & $-$0.005 (0.013) & 0.008 (0.012) \\
\hline
1000
  & OLS    & 1.000 (0.000) & 1.000 (0.000) & 0.149 (0.017) & 0.149 (0.017)
            & $-$0.003 (0.009) & 0.007 (0.006) \\
  & MVMR   & 0.371 (0.290) & 0.000 (0.000) & 0.009 (0.031) & 0.024 (0.021) & --- & --- \\
  & MR-BMA & 0.040 (0.011) & 0.000 (0.000) & 0.001 (0.001) & 0.001 (0.001)
            & --- & --- \\
  & MR-CCC & 0.090 (0.082) & 0.000 (0.000) & 0.002 (0.008) & 0.004 (0.007)
            & 0.001 (0.005) & 0.003 (0.004) \\
\hline
10000
  & OLS    & 1.000 (0.000) & 1.000 (0.000) & 0.153 (0.006) & 0.153 (0.006)
            & $-$0.001 (0.005) & 0.004 (0.003) \\
  & MVMR   & 0.428 (0.197) & 0.000 (0.000) & 0.001 (0.009) & 0.008 (0.004) & --- & --- \\
  & MR-BMA & 0.040 (0.009) & 0.000 (0.000) & 0.000 (0.001) & 0.000 (0.000)
            & --- & --- \\
  & MR-CCC & 0.078 (0.048) & 0.000 (0.000) & 0.000 (0.001) & 0.001 (0.001)
            & 0.000 (0.001) & 0.001 (0.001) \\
\hline
30000
  & OLS    & 1.000 (0.000) & 1.000 (0.000) & 0.151 (0.005) & 0.151 (0.005)
            & $-$0.002 (0.002) & 0.002 (0.002) \\
  & MVMR   & 0.564 (0.222) & 0.000 (0.000) & $-$0.002 (0.007) & 0.007 (0.004) & --- & --- \\
  & MR-BMA & 0.056 (0.040) & 0.000 (0.000) & 0.000 (0.001) & 0.001 (0.001)
            & --- & --- \\
  & MR-CCC & 0.101 (0.119) & 0.000 (0.000) & 0.001 (0.002) & 0.001 (0.002)
            & 0.000 (0.001) & 0.001 (0.001) \\
\hline
\end{tabular}
\end{adjustbox}
\label{tab:sim_s1}
\end{table}

\subsubsection*{Scenario~2: Communication with receptor modulation
(\(\gamma=1\), \(\beta_X=0.3\), \(\beta_{XZ}=0.3\))}

Scenario~2 evaluates detection power when both the ligand main effect and
the receptor-modulated interaction are present. A well-performing method
should assign high communication scores and reject in nearly all
replicates.

Table~\ref{tab:sim_s2} shows that MR-CCC achieves this goal throughout.
Its score reaches 0.996 already at \(n = 500\) and is exactly 1.000 at
larger sample sizes; its rejection rate is 1.000 at all \(n\). More
importantly, MR-CCC provides accurate joint estimation:
\(\mathrm{Bias}(\beta_X) = -0.003\) and
\(\mathrm{Bias}(\beta_{XZ}) = 0.006\) at \(n = 500\), converging
toward zero with increasing \(n\). OLS also rejects in every replicate,
but its \(\beta_X\) bias remains above 0.147 uniformly.

MVMR matches MR-CCC in detection power; its score is 0.997 at
\(n = 500\) and exactly 1.000 at all larger sample sizes, a consequence
of the frequentist null being severely violated by both causal signals. Its estimates of \(\beta_X\) are accurate (bias near zero, MAD
comparable to MR-CCC), confirming that MR correction resolves the
confounding component; \(\beta_{XZ}\) is not estimated by MVMR.

MR-BMA shows low power at \(n = 500\) (score 0.690, rejection rate
0.650), improving to near-complete detection only at \(n \geq 1000\).
This underperformance arises because MR-BMA operates on per-instrument
summary statistics averaged across the interaction signal; when both
\(\beta_X\) and \(\beta_{XZ}\) are nonzero, the reduced-form association
of each instrument with \(Y\) conflates the two effects, making it
harder for the BMA step to assign high posterior probability to the
ligand model at small \(n\). Furthermore, even at large sample sizes,
MR-BMA's MACE for \(\beta_X\) incurs a persistent negative bias
(e.g.~\(-0.028\) at \(n = 30{,}000\)); this is a structural artifact
of pooling all instrument information without explicitly modelling the
interaction, so that part of the ligand effect is absorbed into the
receptor-model term. MR-BMA provides no estimate of \(\beta_{XZ}\).

MR-CCC's combination of Bayesian regularization and explicit interaction
modelling allows it to maintain full detection power at small \(n\) while
providing reliable estimation of the full effect vector at all sample
sizes. Figure~\ref{fig:sim_betaX} makes clear that OLS estimates are
shifted uniformly upward, whereas MR-CCC and MVMR track the true value;
Figure~\ref{fig:sim_betaXZ} shows that MR-CCC centers near the true
\(\beta_{XZ} = 0.3\); MVMR and MR-BMA are excluded from this figure as they do not
model the interaction.

\begin{table}[htb]
\centering
\caption{\textbf{Simulation results under the main + interaction
  signal scenario (S2).} Scenario~2 (\(\gamma=1\), \(\beta_X=0.3\),
\(\beta_{XZ}=0.3\)). Each entry is reported as mean (sd) across 20 replicates.
Score denotes the communication score produced by each method; higher values and
higher rejection rates are preferred when signal is present. ``---'' indicates that
\(\beta_{XZ}\) is not estimated by MR-BMA or MVMR.}
\begin{adjustbox}{max width=\textwidth}
\begin{tabular}{llcccccc}
\hline
$n$ & Method & Score & Rejection rate & Bias($\beta_X$) & MAD($\beta_X$)
    & Bias($\beta_{XZ}$) & MAD($\beta_{XZ}$) \\
\hline
500
  & OLS    & 1.000 (0.000) & 1.000 (0.000) & 0.147 (0.027)    & 0.147 (0.027)
            & 0.005 (0.021)    & 0.016 (0.013) \\
  & MVMR   & 0.997 (0.006) & 1.000 (0.000) & 0.003 (0.071)    & 0.053 (0.046)
            & --- & --- \\
  & MR-BMA & 0.690 (0.273) & 0.650 (0.489) & $-$0.101 (0.113) & 0.120 (0.091)
            & --- & --- \\
  & MR-CCC & 0.996 (0.017) & 1.000 (0.000) & $-$0.003 (0.074) & 0.061 (0.039)
            & 0.006 (0.071)    & 0.053 (0.045) \\
\hline
1000
  & OLS    & 1.000 (0.000) & 1.000 (0.000) & 0.147 (0.021)    & 0.147 (0.021)
            & 0.003 (0.012)    & 0.010 (0.007) \\
  & MVMR   & 1.000 (0.000) & 1.000 (0.000) & $-$0.001 (0.045) & 0.038 (0.023)
            & --- & --- \\
  & MR-BMA & 0.984 (0.031) & 1.000 (0.000) & $-$0.028 (0.048) & 0.047 (0.029)
            & --- & --- \\
  & MR-CCC & 1.000 (0.000) & 1.000 (0.000) & 0.005 (0.048)    & 0.041 (0.024)
            & $-$0.005 (0.043) & 0.035 (0.024) \\
\hline
10000
  & OLS    & 1.000 (0.000) & 1.000 (0.000) & 0.153 (0.007)    & 0.153 (0.007)
            & $-$0.001 (0.005) & 0.004 (0.003) \\
  & MVMR   & 1.000 (0.000) & 1.000 (0.000) & 0.003 (0.016)    & 0.012 (0.010)
            & --- & --- \\
  & MR-BMA & 1.000 (0.000) & 1.000 (0.000) & $-$0.022 (0.015) & 0.023 (0.014)
            & --- & --- \\
  & MR-CCC & 1.000 (0.000) & 1.000 (0.000) & 0.006 (0.016)    & 0.013 (0.011)
            & 0.002 (0.015)    & 0.012 (0.009) \\
\hline
30000
  & OLS    & 1.000 (0.000) & 1.000 (0.000) & 0.152 (0.005)    & 0.152 (0.005)
            & 0.000 (0.003)    & 0.002 (0.002) \\
  & MVMR   & 1.000 (0.000) & 1.000 (0.000) & 0.000 (0.008)    & 0.006 (0.005)
            & --- & --- \\
  & MR-BMA & 1.000 (0.000) & 1.000 (0.000) & $-$0.028 (0.008) & 0.028 (0.008)
            & --- & --- \\
  & MR-CCC & 1.000 (0.000) & 1.000 (0.000) & 0.001 (0.009)    & 0.007 (0.005)
            & 0.001 (0.009)    & 0.007 (0.005) \\
\hline
\end{tabular}
\end{adjustbox}
\label{tab:sim_s2}
\end{table}

\subsubsection*{Scenario~3: Communication without receptor modulation
(\(\gamma=1\), \(\beta_X=0.3\), \(\beta_{XZ}=0\))}

In this scenario, communication is present (\(\beta_X \neq 0\)) but the
interaction is absent (\(\beta_{XZ} = 0\)). 

Table~\ref{tab:sim_s3} shows that MR-CCC, MVMR, and MR-BMA all assign
scores close to one and reject in nearly all replicates, correctly
identifying the presence of communication. The more informative
comparison comes from estimation accuracy.
For \(\beta_X\), MR-CCC and MVMR both achieve small bias and decreasing
MAD with \(n\) (MR-CCC: bias $= 0.017$, MAD $= 0.047$ at \(n = 500\);
bias $= 0.000$, MAD $= 0.007$ at \(n = 30{,}000\)). OLS again shows
persistent upward bias exceeding 0.151 at all sample sizes, whereas MR-BMA incurs a small but persistent bias (e.g.~$-0.029$ at
\(n = 30{,}000\)) and
provides no estimate of \(\beta_{XZ}\).

The critical evaluation concerns \(\beta_{XZ}\), which is truly zero.
Figure~\ref{fig:sim_betaXZ} shows that MR-CCC keeps estimates of
\(\beta_{XZ}\) tightly centered near zero across all sample sizes, with
variability decreasing as \(n\) grows (MAD $= 0.036$ at $n=500$;
MAD $= 0.004$ at $n=30{,}000$). In particular,
the MR-CCC spike--and--slab prior does not spuriously activate the
interaction component when only the main effect is present.

 In summary, Scenario~3 confirms that
MR-CCC's joint modelling of \((\beta_X, \beta_{XZ})\) is not only
powerful but also specific: it correctly infers the pattern of
communication without inventing interaction effects that are not present
in the data.

\begin{table}[htb]
\centering
\caption{\textbf{Simulation results under the main-effect-only
  scenario (S3).} Scenario~3 (\(\gamma=1\), \(\beta_X=0.3\),
\(\beta_{XZ}=0\)). Each entry is reported as mean (sd) across 20 replicates.
Score denotes the communication score produced by each method; higher values and
higher rejection rates are preferred when signal is present. ``---'' indicates that
\(\beta_{XZ}\) is not estimated by MR-BMA or MVMR.}
\begin{adjustbox}{max width=\textwidth}
\begin{tabular}{llcccccc}
\hline
$n$ & Method & Score & Rejection rate & Bias($\beta_X$) & MAD($\beta_X$)
    & Bias($\beta_{XZ}$) & MAD($\beta_{XZ}$) \\
\hline
500
  & OLS    & 1.000 (0.000) & 1.000 (0.000) & 0.145 (0.031)    & 0.145 (0.031)
            & $-$0.005 (0.017) & 0.014 (0.010) \\
  & MVMR   & 1.000 (0.000) & 1.000 (0.000) & 0.019 (0.036)    & 0.033 (0.023)
            & --- & --- \\
  & MR-BMA & 0.924 (0.132) & 1.000 (0.000) & $-$0.028 (0.058) & 0.043 (0.048)
            & --- & --- \\
  & MR-CCC & 0.974 (0.068) & 1.000 (0.000) & 0.017 (0.059)    & 0.047 (0.038)
            & $-$0.003 (0.052) & 0.036 (0.037) \\
\hline
1000
  & OLS    & 1.000 (0.000) & 1.000 (0.000) & 0.151 (0.024)    & 0.151 (0.024)
            & 0.001 (0.013)    & 0.010 (0.007) \\
  & MVMR   & 1.000 (0.000) & 1.000 (0.000) & 0.020 (0.036)    & 0.032 (0.024)
            & --- & --- \\
  & MR-BMA & 0.999 (0.001) & 1.000 (0.000) & $-$0.010 (0.030) & 0.026 (0.018)
            & --- & --- \\
  & MR-CCC & 1.000 (0.000) & 1.000 (0.000) & 0.027 (0.045)    & 0.040 (0.034)
            & 0.008 (0.026)    & 0.022 (0.016) \\
\hline
10000
  & OLS    & 1.000 (0.000) & 1.000 (0.000) & 0.153 (0.008)    & 0.153 (0.008)
            & $-$0.001 (0.005) & 0.004 (0.003) \\
  & MVMR   & 1.000 (0.000) & 1.000 (0.000) & 0.002 (0.011)    & 0.008 (0.007)
            & --- & --- \\
  & MR-BMA & 1.000 (0.000) & 1.000 (0.000) & $-$0.025 (0.010) & 0.025 (0.010)
            & --- & --- \\
  & MR-CCC & 1.000 (0.000) & 1.000 (0.000) & 0.004 (0.014)    & 0.012 (0.008)
            & $-$0.006 (0.011) & 0.009 (0.009) \\
\hline
30000
  & OLS    & 1.000 (0.000) & 1.000 (0.000) & 0.151 (0.004)    & 0.151 (0.004)
            & 0.000 (0.002)    & 0.002 (0.002) \\
  & MVMR   & 1.000 (0.000) & 1.000 (0.000) & $-$0.002 (0.007) & 0.006 (0.004)
            & --- & --- \\
  & MR-BMA & 1.000 (0.000) & 1.000 (0.000) & $-$0.029 (0.006) & 0.029 (0.006)
            & --- & --- \\
  & MR-CCC & 1.000 (0.000) & 1.000 (0.000) & 0.000 (0.009)    & 0.007 (0.005)
            & 0.001 (0.005)    & 0.004 (0.003) \\
\hline
\end{tabular}
\end{adjustbox}
\label{tab:sim_s3}
\end{table}

\begin{figure}[htb]
\centering
\includegraphics[width=0.95\linewidth]{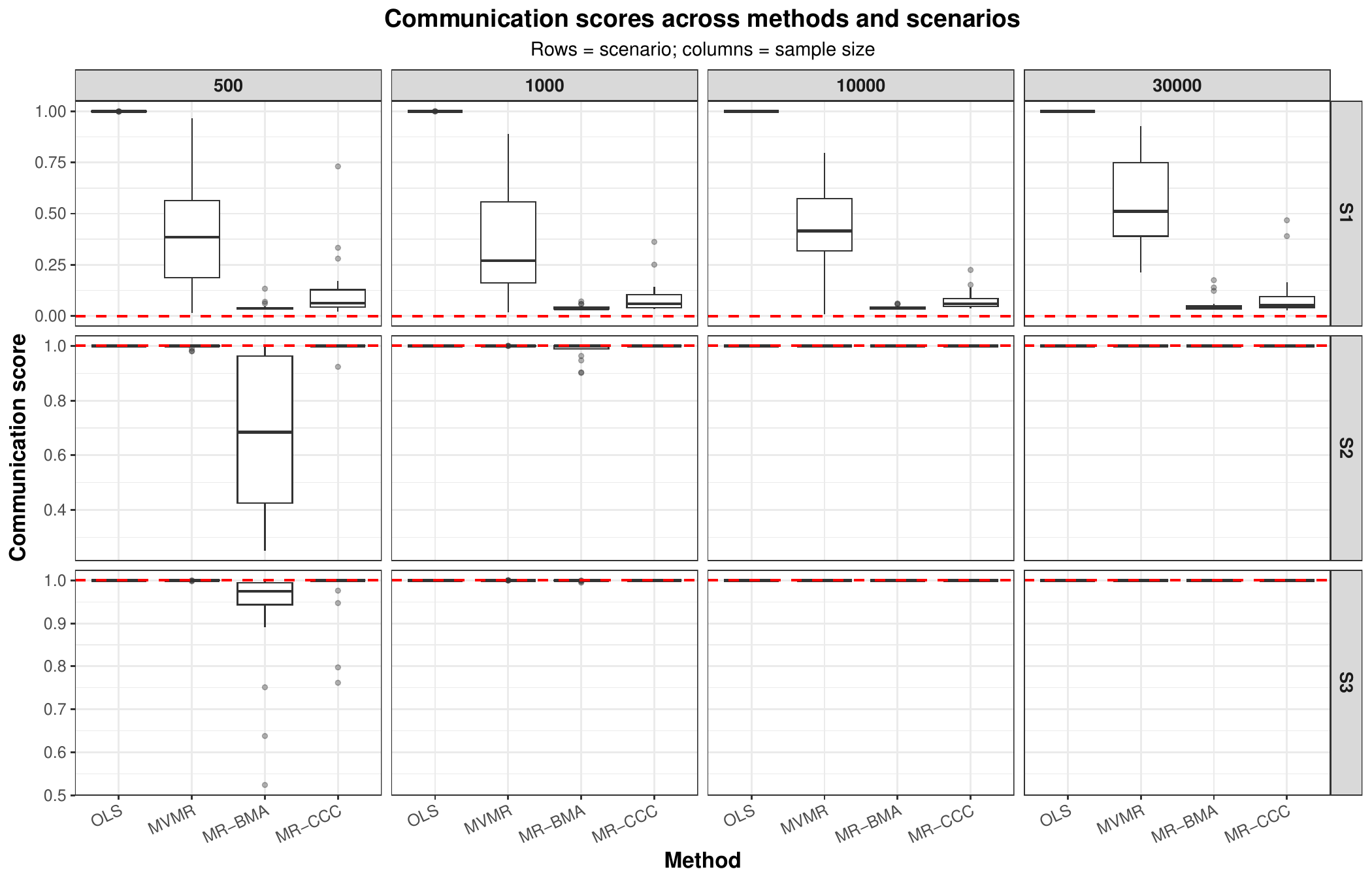}
\caption{\textbf{Communication scores across scenarios, sample sizes,
  and methods.} Boxplots of communication scores across 20 replicates,
  stratified by scenario (rows: S1--S3) and sample size (columns:
  \(n \in \{500, 1000, 10{,}000, 30{,}000\}\)). For MR-CCC the score
  is the posterior inclusion probability
  \(\Pr(\gamma = 1 \mid \text{data})\). For MR-BMA the score
  is \(\mathrm{MIP}_X\). For OLS the score is \(1 - p_F\), where \(p_F\) is the joint \(F\)-test \(p\)-value for \(H_0\colon \beta_X = \beta_{XZ} = 0\).
  For MVMR the score is \(1 - p_t\), where \(p_t\) is the \(t\)-test \(p\)-value for \(H_0\colon \beta_X = 0\). The red dashed line marks
  the true communication state (0 for S1; 1 for S2 and S3). Higher
  scores are preferred when signal is present; lower scores are
  preferred under the null.}
\label{fig:sim_score}
\end{figure}

\begin{figure}[htb]
\centering
\includegraphics[width=0.95\linewidth]{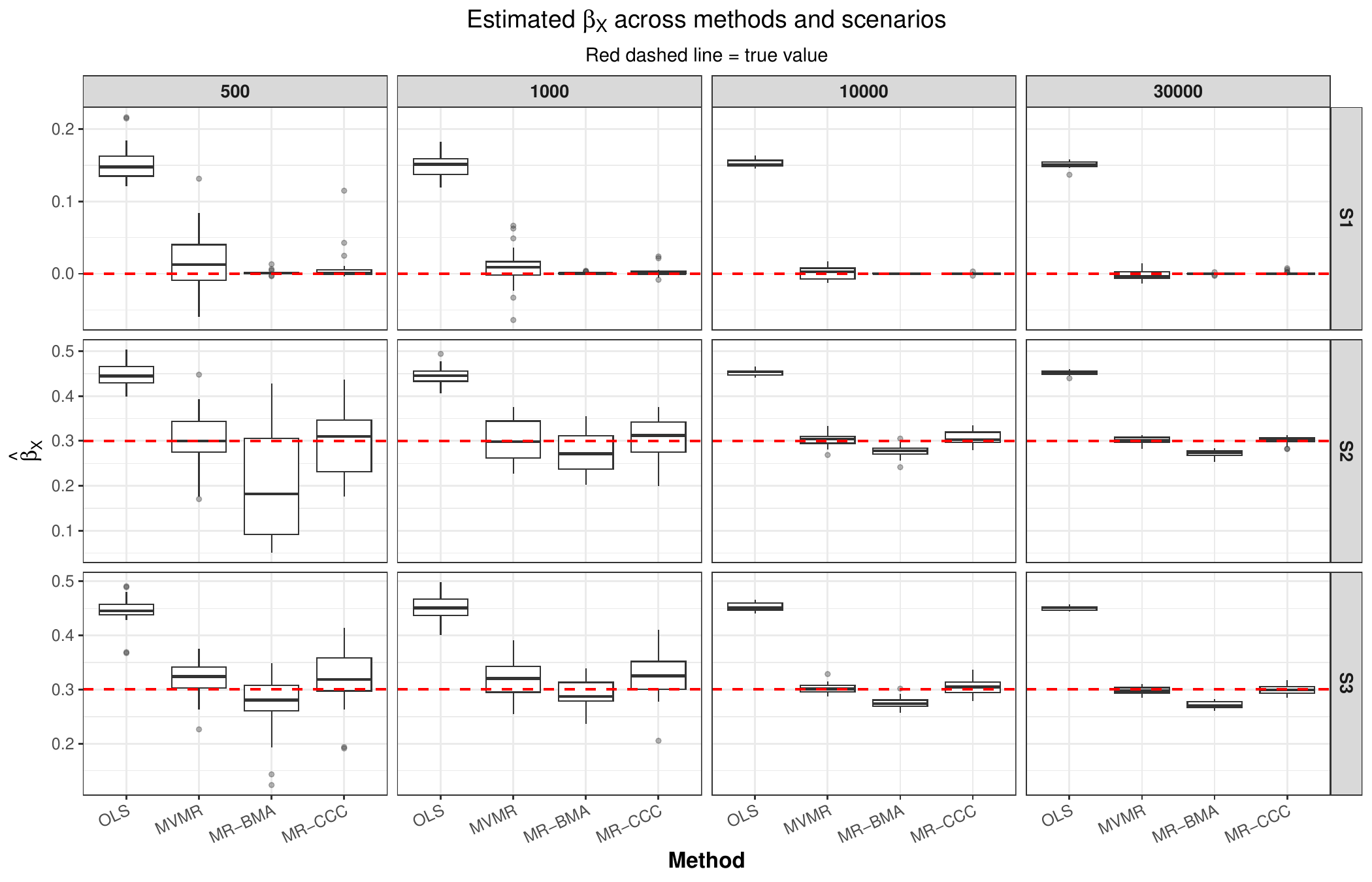}
\caption{\textbf{Estimates of the ligand main effect $\beta_X$ across
  scenarios, sample sizes, and methods.} Boxplots of the estimated ligand main effect \(\hat\beta_X\)
  across 20 replicates, stratified by scenario and sample size. Rows
  correspond to scenarios S1--S3; columns correspond to sample size. For
  MR-BMA the plotted value is the model-averaged causal effect (MACE)
  for the ligand exposure. The red dashed line indicates the true value
  of \(\beta_X\) in each scenario (0 for S1; 0.3 for S2 and S3).}
\label{fig:sim_betaX}
\end{figure}

\begin{figure}[htb]
\centering
\includegraphics[width=0.95\linewidth]{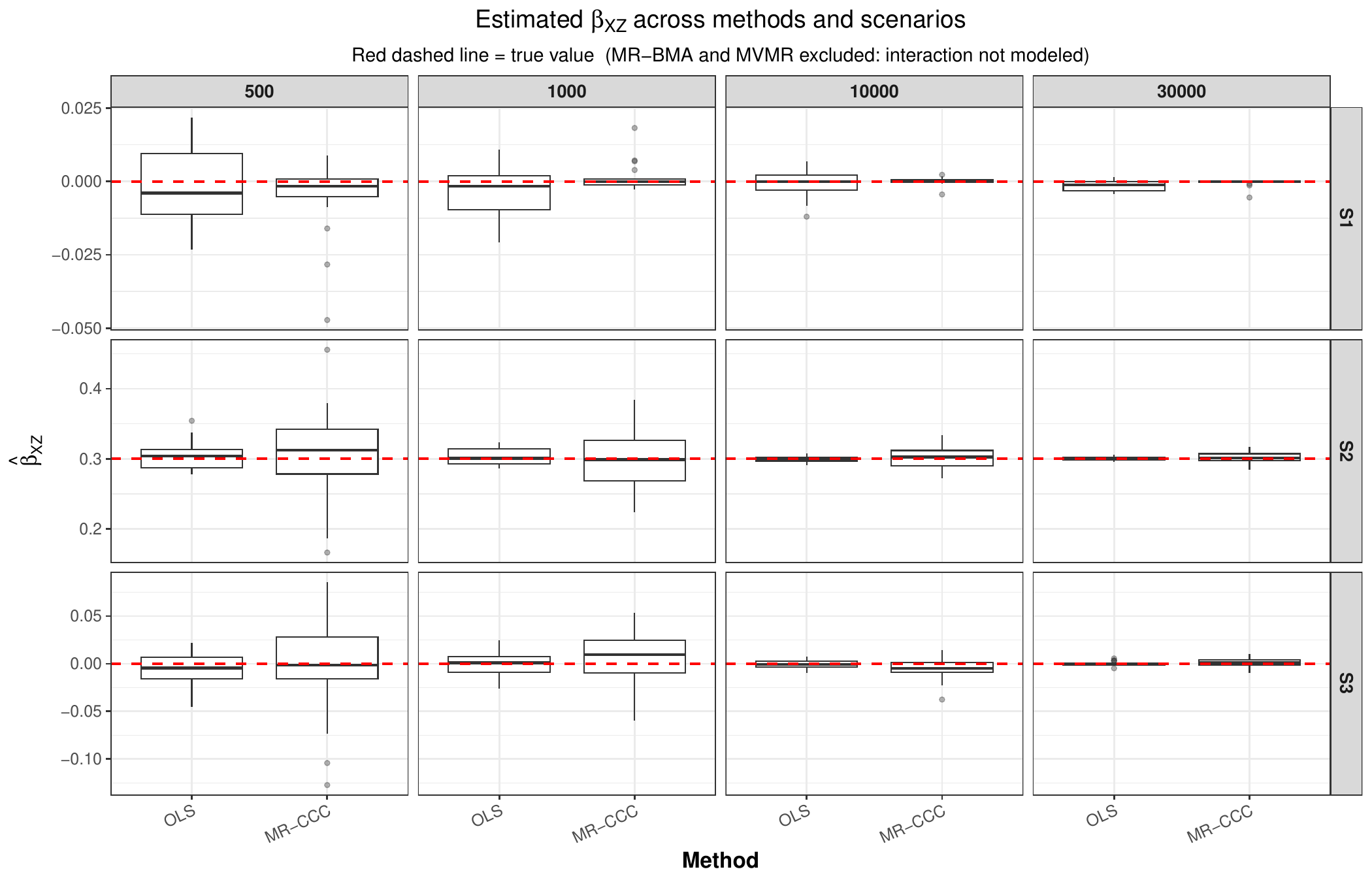}
\caption{\textbf{Estimates of the receptor-modulated interaction
  effect $\beta_{XZ}$ across scenarios, sample sizes, and methods.} Boxplots of the estimated ligand--receptor interaction effect
  \(\hat\beta_{XZ}\) across 20 replicates, stratified by scenario and
  sample size. MR-BMA and MVMR are omitted as neither models the
  interaction term \(\beta_{XZ}\). The red dashed line indicates the
  true value (0 for S1 and S3; 0.3 for S2). Under S1 and S3, where the
  interaction is truly absent, MR-CCC remains well-calibrated near zero.}
\label{fig:sim_betaXZ}
\end{figure}





\paragraph{Summary.}
MR-CCC uniquely combines strong null calibration, robustness to
unmeasured confounding via Bayesian MR, and joint estimation of
$(\beta_X,\beta_{XZ})$ under a biologically motivated
ligand--receptor model. The real-data analysis below uses
$n = 651$, which lies in the range where MR-CCC's advantage over
competing methods is largest.

\subsection{Real Data Analysis}
\label{sec:realdata}

We applied MR-CCC to the OneK1K cohort \cite{yazar2022onek1k}
along the
\[
  \textbf{NK cells} \;\rightarrow\; \textbf{Monocytes}
\]
axis, a central axis of innate immune crosstalk in which NK cells
regulate monocyte function through cytokines, prostaglandins, and
other soluble mediators \cite{vivier2008innate}. Donor-level expression construction, filtering, ligand--receptor
and pathway annotations, pathway-activity construction, and
instrument selection are described in Methods
(Section~\ref{subsec:realdata_methods}); after filtering, the analysis
retained $n = 651$ donors and 41 ligand--receptor--pathway
triplets spanning 11 biological pathways.

\subsubsection{Results for NK Cells
  \texorpdfstring{\(\rightarrow\)}{->}
  Monocytes}
\label{subsec:results_nk_mono}
For each of the 41 triplets, MR-CCC was fitted using the Gibbs sampler with
20{,}000 iterations, a 2{,}000-iteration burn-in, and a thinning factor
of 10.  Posterior means of \(\beta_X\), \(\beta_{XZ}\), and
\(\beta_Z\) were recorded together with the posterior inclusion
probability
\(\mathrm{PIP} = \Pr(\gamma = 1 \mid \text{data})\).
All reported effect sizes are standardized:
\[
\hat{\beta}_X^{(s)} = \hat{\beta}_X \cdot
\frac{\mathrm{sd}(X)}{\mathrm{sd}(Y)}, \qquad
\hat{\beta}_{XZ}^{(s)} = \hat{\beta}_{XZ} \cdot
\frac{\mathrm{sd}(X)\,\mathrm{sd}(Z)}{\mathrm{sd}(Y)},
\]
so that effects are expressed in standard-deviation (SD) units of
pathway activity per SD change in ligand or receptor expression.

\paragraph{Overview.}
Figure~\ref{fig:pip_ranking} ranks all 41 triplets by PIP.  Eight
triplets exceed the PIP \(> 0.5\) discovery threshold:
\textit{IL18}--\textit{IL18RAP} (\(\mathrm{PIP} = 0.976\)),
\textit{IL18}--\textit{IL18R1} (\(\mathrm{PIP} = 0.928\)),
\textit{PTGES3}--\textit{PTGER4} (\(\mathrm{PIP} = 0.906\)),
\textit{SLC6A6}--\textit{GABBR1} (\(\mathrm{PIP} = 0.856\)),
\textit{IL1B}--\textit{IL1RAP} (\(\mathrm{PIP} = 0.716\)),
\textit{IFNG}--\textit{IFNGR2} (\(\mathrm{PIP} = 0.613\)),
\textit{PTGES3}--\textit{PTGER2} (\(\mathrm{PIP} = 0.571\)), and
\textit{IFNG}--\textit{IFNGR1} (\(\mathrm{PIP} = 0.564\)).
These span four biochemically distinct pathway classes --- interleukin,
prostaglandin, GABA, and interferon signaling --- revealing a
cytokine- and lipid-mediator-driven NK--monocyte axis operating
through soluble ligands rather than cell-contact adhesion mechanisms.
A further three triplets occupy the moderate-evidence zone
\(0.20 \leq \mathrm{PIP} < 0.50\):
\textit{AKR1C3}--\textit{PTGDR} (0.377),
\textit{CLEC2B}--\textit{KLRF1} (0.249), and
\textit{IL23A}--\textit{IL12RB1} (0.204).
The remaining 30 triplets fall below PIP \(= 0.20\), with the large
majority at or below PIP \(= 0.10\).  Both the discoveries and the
most confidently absent signals are biologically interpretable, as
discussed below.

\begin{figure}[htb]
\centering
\includegraphics[width=\textwidth]{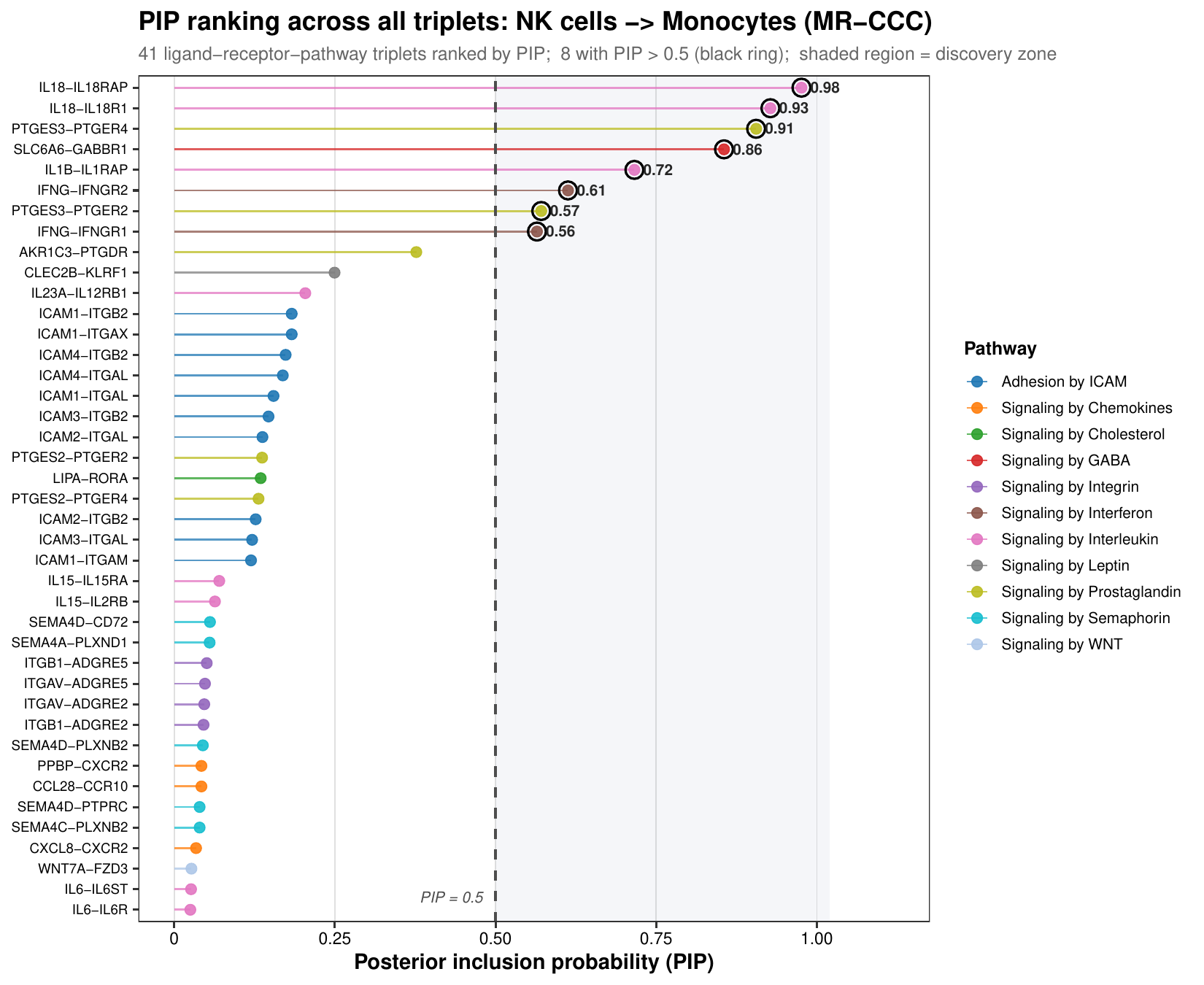}
\caption{\textbf{Posterior inclusion probabilities for the NK
  cells\,$\rightarrow$\,monocytes analysis.} Posterior inclusion probabilities for all 41
  ligand--receptor--pathway triplets in the NK cells \(\rightarrow\)
  Monocytes analysis (\(n = 651\) donors), ranked from highest to
  lowest PIP.  Points are coloured by pathway class.  Black rings
  identify the eight discoveries with PIP \(> 0.5\): the two
  \textit{IL18} receptor-chain pairs lead at PIP\,=\,0.98 and 0.93,
  followed by \textit{PTGES3}--\textit{PTGER4} (0.91),
  \textit{SLC6A6}--\textit{GABBR1} (0.86), \textit{IL1B}--\textit{IL1RAP}
  (0.72), and three additional interferon and prostaglandin pairs
  (0.56--0.61).  The dashed vertical line and shaded region mark the
  PIP \(> 0.5\) discovery zone.}
\label{fig:pip_ranking}
\end{figure}

\paragraph{Interleukin Signaling.}
The interleukin class yielded the three strongest discoveries spanning
two cytokines of the IL-1 superfamily:
\textit{IL18}--\textit{IL18RAP}
(\(\mathrm{PIP} = 0.976\);
\(\hat{\beta}_X^{(s)} = 0.289\);
\(\hat{\beta}_{XZ}^{(s)} = -3.19\)),
\textit{IL18}--\textit{IL18R1}
(\(\mathrm{PIP} = 0.928\);
\(\hat{\beta}_X^{(s)} = 0.333\);
\(\hat{\beta}_{XZ}^{(s)} = +2.13\)), and
\textit{IL1B}--\textit{IL1RAP}
(\(\mathrm{PIP} = 0.716\);
\(\hat{\beta}_X^{(s)} = 0.663\);
\(\hat{\beta}_{XZ}^{(s)} = -1.01\)).
IL-18 is a pleiotropic cytokine produced by NK cells that potently
activates monocytes and co-stimulates interferon production
\cite{okamura1995interleukin18}.  The functional IL-18 receptor
complex is assembled from two chains: the primary binding subunit
\textit{IL18R1} and the accessory signal-transducing subunit
\textit{IL18RAP} (IL-18R$\beta$), which is indispensable for downstream
signaling \cite{born1998il18rap}. MR-CCC reveals a striking
functional dissociation between these two obligate chains.
For \textit{IL18}--\textit{IL18R1}, the interaction coefficient is
\emph{positive} (\(\hat{\beta}_{XZ}^{(s)} = +2.13\)), with a
sign-reversal threshold below the population mean (\(Z^{*} \approx
-0.16\) SD), meaning that the causal effect of NK-derived IL-18
on monocyte pathway activity through \textit{IL18R1} is positive and
reinforced across virtually the entire monocyte population. Higher
\textit{IL18R1} abundance amplifies the activating signal, consistent
with a receptor-availability-dependent reinforcement mechanism. By
contrast, for \textit{IL18}--\textit{IL18RAP}, the interaction is
strongly negative (\(\hat{\beta}_{XZ}^{(s)} = -3.19\)), with a
sign-reversal threshold at \(Z^{*} \approx +0.09\) SD --- effectively
at the population mean --- so that monocytes with even slightly
above-average \textit{IL18RAP} expression exhibit an attenuated or
reversed response to NK-derived IL-18.  This opposing modulation
across the two receptor chains of the same cytokine --- amplification
through the binding chain, attenuation through the accessory chain --- is consistent with a receptor stoichiometry-dependent tuning of IL-18
signaling at the NK--monocyte interface, whereby excess
\textit{IL18RAP} may sequester ligand or promote inhibitory
complexes rather than productive heterodimers.
A third interleukin discovery, \textit{IL1B}--\textit{IL1RAP}
(PIP\,=\,0.716), shows the same receptor-saturation attenuation
pattern at $Z^{*} \approx +0.67$ SD, where IL-1$\beta$
\cite{dinarello2009interleukin} acts through the obligate IL-1
receptor accessory chain \textit{IL1RAP}.

\paragraph{GABA Signaling.}
The GABA class yielded one discovery:
\textit{SLC6A6}--\textit{GABBR1}
(\(\mathrm{PIP} = 0.856\);
\(\hat{\beta}_X^{(s)} = 0.531\);
\(\hat{\beta}_{XZ}^{(s)} = -2.52\)).
\textit{SLC6A6} encodes a sodium- and chloride-dependent GABA
transporter that concentrates and releases GABA into the extracellular
space; \textit{GABBR1} encodes the GABA\textsubscript{B1} subunit of
the metabotropic GABA-B receptor on monocytes
\cite{bjurstrom2008gaba}.  The inferred main effect is positive
(\(\hat{\beta}_X^{(s)} = 0.531\)) and strongly attenuated by a
negative interaction with receptor expression
(\(\hat{\beta}_{XZ}^{(s)} = -2.52\)), placing the sign-reversal
threshold at \(Z^{*} = -\hat{\beta}_X^{(s)} /
\hat{\beta}_{XZ}^{(s)} \approx +0.21\) SD of monocyte
\textit{GABBR1} expression --- very close to the population mean.
Consequently, NK-derived GABA exerts a mild activating influence on
monocytes with low \textit{GABBR1} expression, but an inhibitory
influence on monocytes with above-average receptor abundance,
consistent with the hyperpolarizing, inhibitory nature of GABA-B
signaling \cite{bettler2004gabab} and indicative of a
receptor-saturation-dependent regulatory switch in NK--monocyte
innate immune crosstalk \cite{tian1999gaba}.

\paragraph{Interferon Signaling.}
MR-CCC simultaneously identified both obligate receptor chains of the
functional IFN-$\gamma$ receptor heterodimer:
\textit{IFNG}--\textit{IFNGR1}
(\(\mathrm{PIP} = 0.564\);
\(\hat{\beta}_X^{(s)} = 0.542\);
\(\hat{\beta}_{XZ}^{(s)} = -0.915\)) and
\textit{IFNG}--\textit{IFNGR2}
(\(\mathrm{PIP} = 0.613\);
\(\hat{\beta}_X^{(s)} = 0.584\);
\(\hat{\beta}_{XZ}^{(s)} = -0.714\)).
IFN-$\gamma$ is a hallmark cytokine of cytotoxic lymphocytes that
classically activates monocytes and macrophages to upregulate
antigen-presentation and antimicrobial programs
\cite{schroder2004interferon}.  The co-discovery of both receptor
subunits --- which must heterodimerize to form the signaling-competent
receptor \cite{bach1997ifngr} --- provides internal biological
validation: a spurious, confounding-driven signal would be unlikely to
co-select both obligate chains simultaneously.  Both pairs exhibit a
positive main effect (\(\hat{\beta}_X^{(s)} > 0\)) that is attenuated
and eventually reversed by increasing receptor expression
(\(\hat{\beta}_{XZ}^{(s)} < 0\)), with sign-reversal thresholds at
\(Z^{*} \approx +0.59\) SD for \textit{IFNGR1} and
\(Z^{*} \approx +0.82\) SD for \textit{IFNGR2}, corresponding to
monocytes in the upper tercile of the receptor expression distribution.
This pattern is consistent with receptor-saturation kinetics and
downstream negative-feedback regulation of IFN-$\gamma$ signaling at
high receptor occupancy \cite{bach1997ifngr}.

\paragraph{Prostaglandin Signaling.}
Two prostaglandin E\textsubscript{2} receptor pairs were discovered:
\textit{PTGES3}--\textit{PTGER4}
(\(\mathrm{PIP} = 0.906\);
\(\hat{\beta}_X^{(s)} = 1.13\);
\(\hat{\beta}_{XZ}^{(s)} = -1.61\)) and
\textit{PTGES3}--\textit{PTGER2}
(\(\mathrm{PIP} = 0.571\);
\(\hat{\beta}_X^{(s)} = 0.729\);
\(\hat{\beta}_{XZ}^{(s)} = -0.130\)).
\textit{PTGES3} encodes cytosolic prostaglandin E synthase~3, which
catalyzes the terminal step in prostaglandin E\textsubscript{2}
(PGE\textsubscript{2}) biosynthesis \cite{tanioka2000ptges};
\textit{PTGER2} and \textit{PTGER4} encode the EP2 and EP4
G-protein-coupled receptors for PGE\textsubscript{2} on monocytes
\cite{hata2004pgrecept,kalinski2012pge2}.
These two pairs exhibit structurally distinct receptor-modulation
profiles.  For \textit{PTGES3}--\textit{PTGER4}, the effect is
substantially modulated by receptor expression
(\(\hat{\beta}_{XZ}^{(s)} = -1.61\)) with sign-reversal at \(Z^{*} \approx +0.70\) SD, indicating that PGE\textsubscript{2}-mediated
NK--monocyte signaling through EP4 is activating at low-to-average
receptor levels but suppressive at above-average EP4 abundance.  For
\textit{PTGES3}--\textit{PTGER2}, the interaction term is near zero
(\(\hat{\beta}_{XZ}^{(s)} = -0.130\)), placing the theoretical sign-reversal at \(Z^{*} \approx +5.6\) SD --- far beyond the
empirical distribution of monocyte \textit{PTGER2} expression --- so
that this pair behaves as a near-pure main effect across the entire
population.  The co-discovery of two EP receptor subtypes sharing a
common ligand but with qualitatively different interaction profiles
underscores the receptor-subtype specificity captured by MR-CCC's
interaction model.

\begin{figure}[htb]
\centering
\includegraphics[width=\textwidth]{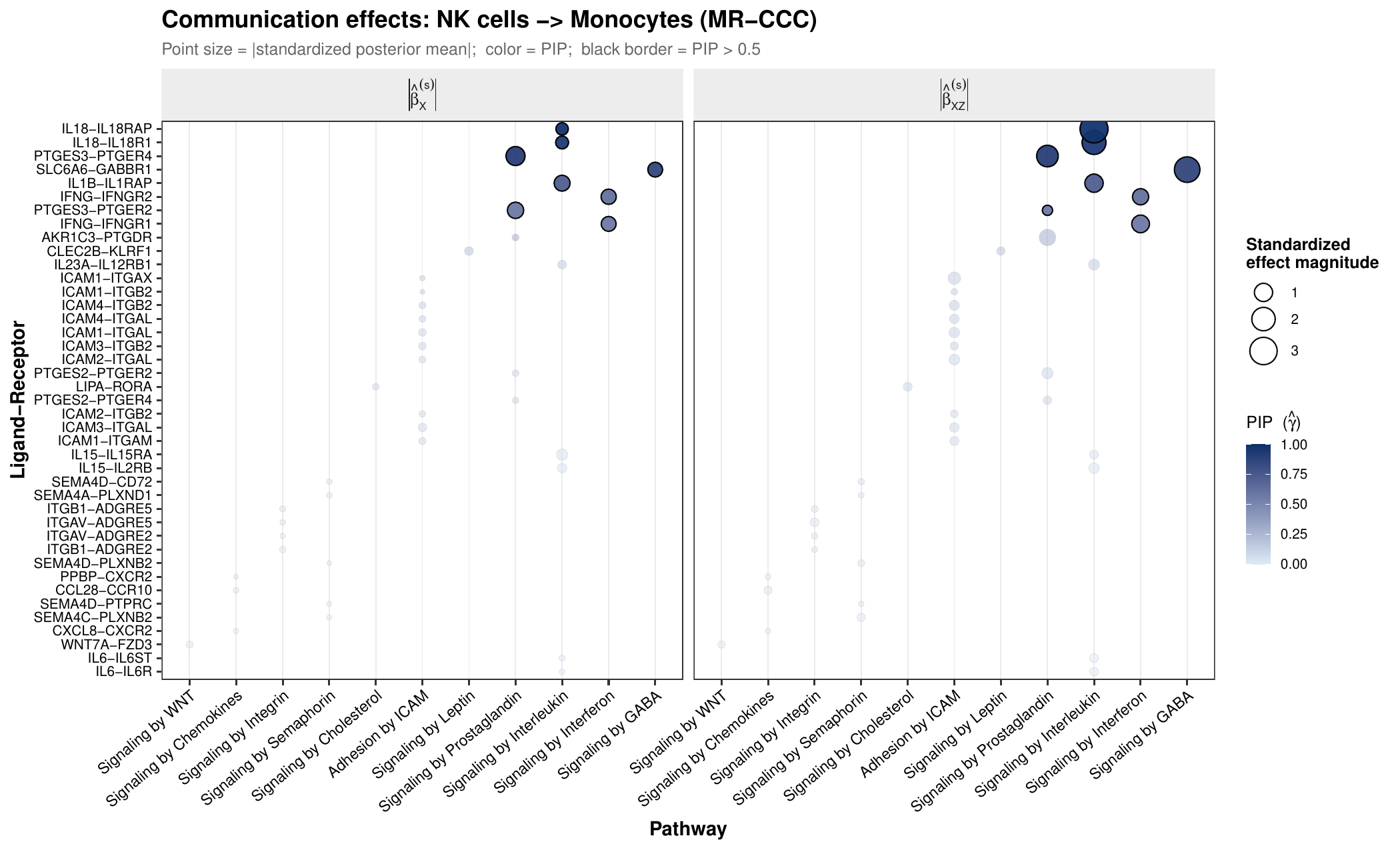}
\caption{\textbf{Pathway-specific posterior mean effects for the NK cells
  \(\rightarrow\) Monocytes analysis.}  The left panel shows
  \(|\hat{\beta}_X^{(s)}|\), the absolute standardized ligand main
  effect, and the right panel shows \(|\hat{\beta}_{XZ}^{(s)}|\), the
  absolute standardized receptor-modulation effect.  Point size encodes
  effect magnitude; color encodes PIP on a 0--1 scale, with black
borders marking the eight discoveries (PIP \(> 0.5\)).  Discoveries
concentrate in the GABA, Interferon, Interleukin, and Prostaglandin pathway classes.  The two \textit{IL18} receptor-chain pairs
(\textit{IL18RAP}, \textit{IL18R1}) are visible as the largest, darkest
points in the Interleukin column of the right panel, reflecting
their opposing and large interaction magnitudes
(\(|\hat{\beta}_{XZ}^{(s)}| = 3.19\) and \(2.13\), respectively). ICAM adhesion pairs show uniformly small, pale points in both panels.}
\label{fig:bubble}
\end{figure}

\paragraph{Moderate-evidence signals.}
Three triplets in the \(0.20\)--\(0.50\) PIP range warrant attention
as candidates for replication in larger cohorts.
\textit{AKR1C3}--\textit{PTGDR} (PIP\,=\,0.377) pairs the
aldo-keto reductase AKR1C3 \cite{penning2006akr1c3}, a
prostaglandin-biosynthetic enzyme, with the PGD\textsubscript{2}
receptor PTGDR \cite{hata2004pgrecept} on monocytes, placing it
squarely within the prostaglandin signaling theme of this axis.
\textit{CLEC2B}--\textit{KLRF1} (PIP\,=\,0.249) nominally pairs an
NK activating ligand with its cognate receptor; however, as
\textit{KLRF1} (NKp80) is predominantly expressed on NK cells rather
than monocytes \cite{welte2006nkp80}, this sub-threshold signal most
plausibly reflects an incidental instrument correlation rather than a
genuine NK\(\rightarrow\)Monocyte communication event.
\textit{IL23A}--\textit{IL12RB1} (PIP\,=\,0.204) shows suggestive
evidence for IL-23/IL-12 receptor cross-signaling, consistent with the
structural homology between these cytokine families and their shared
receptor subunit \cite{oppmann2000il23}.

\paragraph{Absent signals and biological specificity.}
A key strength of MR-CCC is its ability to assign near-zero posterior
probability to biologically implausible signals, providing an
intrinsic specificity calibration that complements the discovery set.
All ten ICAM adhesion pairs annotated in CellPhoneDB --- involving
\textit{ICAM1}, \textit{ICAM2}, \textit{ICAM3}, and \textit{ICAM4}
paired with \textit{ITGAL}, \textit{ITGB2}, \textit{ITGAM}, and
\textit{ITGAX} --- returned PIPs\,$\leq 0.18$, consistent with the
expectation that ICAM-mediated adhesion requires direct cell--cell
contact \cite{springer1990adhesion} and does not operate as a soluble
signaling mechanism propagated through NK-cell ligand expression.
\textit{IL15}--\textit{IL15RA}
(PIP\,=\,0.07) and \textit{IL15}--\textit{IL2RB}
(PIP\,=\,0.06) are near zero: IL-15 is principally produced by
monocytes and dendritic cells to sustain NK cell homeostasis
\cite{fehniger2001il15}, and its absence in the
NK\(\rightarrow\)Monocyte direction correctly reflects this
established biological directionality.
\textit{IL6}--\textit{IL6ST} (PIP\,=\,0.026) and
\textit{WNT7A}--\textit{FZD3} (PIP\,=\,0.027) likewise show
negligible posterior support, appropriate for pathways not associated
with NK cell efferent signaling \cite{clevers2006wnt}.
Directional specificity is further illustrated by cross-axis
comparison with the Monocytes \(\rightarrow\) CD4\(^{+}\) T cell
axis (Supplementary Section~S2.5.2):
\textit{SEMA4D}--\textit{CD72} \cite{kumanogoh2000cd100}, which
achieves PIP\,=\,0.674 in the Monocytes\(\rightarrow\)CD4\(^{+}\)
direction, returns PIP\,=\,0.056 in the NK\(\rightarrow\)Monocytes
direction.  This pair-specific, direction-specific posterior contrast
demonstrates that MR-CCC recovers causal signals tied to
sender--receiver identity rather than simply flagging co-expressed
ligand--receptor pairs irrespective of directionality.

\paragraph{Receptor-modulated effect curves.}
Figure~\ref{fig:curves} plots
\(\hat{\beta}_X^{(s)} + \hat{\beta}_{XZ}^{(s)} \cdot
(Z/\mathrm{sd}(Z))\) across the observed donor range for the eight
discovery triplets, grouped by pathway class. The four panels show
qualitatively distinct modulation profiles: the diverging IL-18
receptor chains (amplification through \textit{IL18R1}, reversal
through \textit{IL18RAP}), the EP4-versus-EP2 prostaglandin contrast,
the GABA-B activation-to-inhibition crossover near the population
mean, and the parallel decline of both \textit{IFNG} receptor chains
in the upper receptor tercile.

\begin{figure}[htb]
\centering
\includegraphics[width=\textwidth]{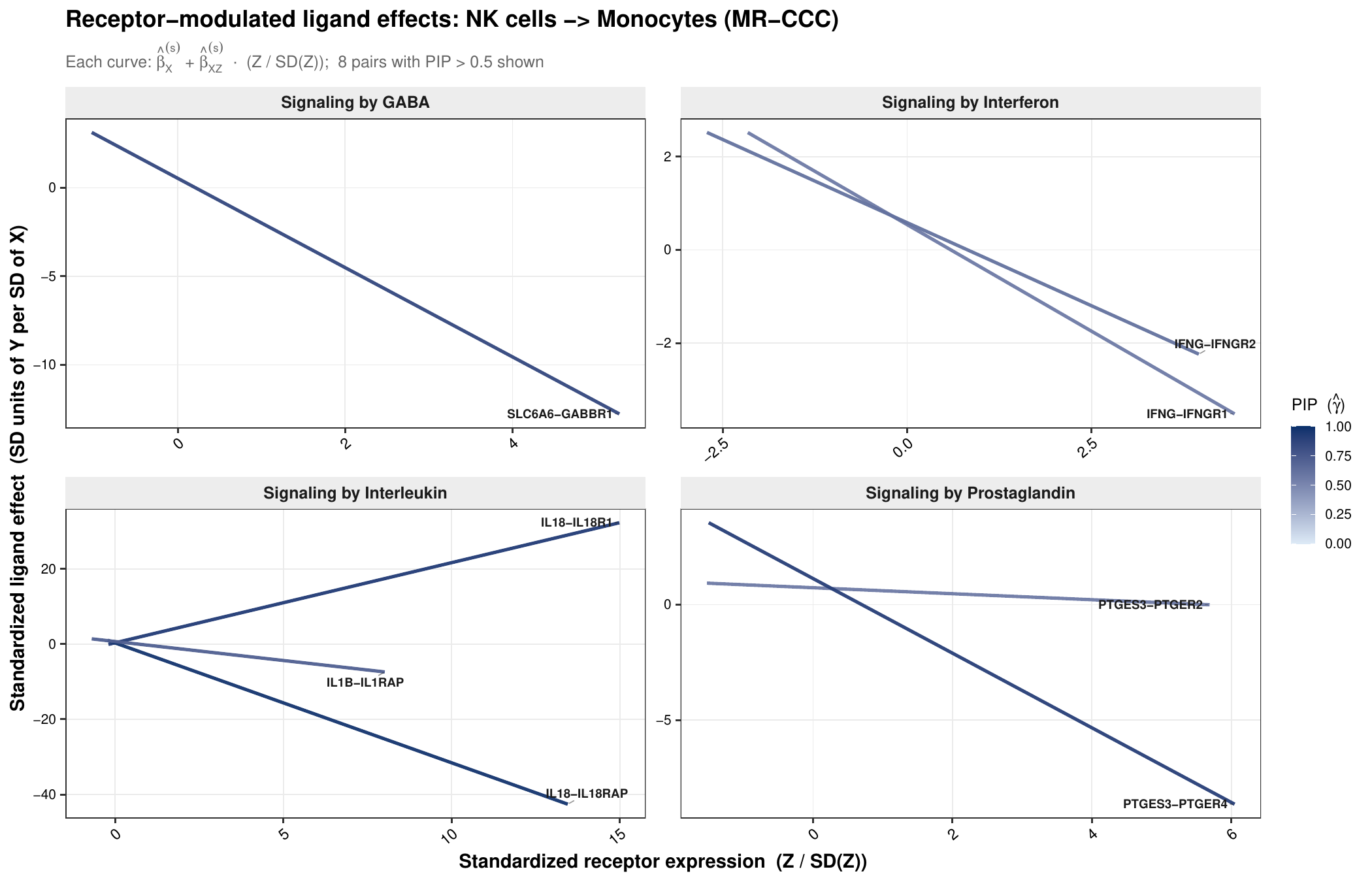}
\caption{\textbf{Receptor-modulated ligand effect curves for the
  eight NK cells\,$\rightarrow$\,monocytes discoveries.} Receptor-modulated ligand effects for the NK cells
  \(\rightarrow\) Monocytes analysis (\(n = 651\) donors).  Each
  curve represents
  \(\hat{\beta}_X^{(s)} + \hat{\beta}_{XZ}^{(s)} \cdot
    (Z / \mathrm{sd}(Z))\),
  plotted over the observed donor range of standardized receptor
  expression.  Only the eight discovery pairs (PIP \(> 0.5\)) are
  displayed, grouped into the four pathway panels with confirmed
  discoveries (GABA, Interferon, Interleukin, and Prostaglandin).
  The Interleukin panel shows the most striking pattern: the
  \textit{IL18}--\textit{IL18R1} curve
  (upward slope; \(\hat{\beta}_{XZ}^{(s)} = +2.13\)) and the
  \textit{IL18}--\textit{IL18RAP} curve (steep downward slope;
  \(\hat{\beta}_{XZ}^{(s)} = -3.19\)) diverge in opposite directions,
  revealing opposing receptor-chain modulation for the same cytokine.
  In the Prostaglandin panel, \textit{PTGES3}--\textit{PTGER4}
  (strong negative slope) and \textit{PTGES3}--\textit{PTGER2}
  (near-horizontal) illustrate receptor-subtype-specific modulation.
  In the GABA panel, \textit{SLC6A6}--\textit{GABBR1} crosses zero
  near the population mean.  In the Interferon panel, both
  \textit{IFNG} receptor-chain curves (\textit{IFNGR1} and
  \textit{IFNGR2}) decline from positive intercepts with sign-reversals
  in the upper tercile of the receptor distribution.}
\label{fig:curves}
\end{figure}

\paragraph{Summary.}
The eight NK cells\,$\rightarrow$\,monocytes discoveries span four
pathway classes (GABA, interferon, interleukin, prostaglandin) with
biologically coherent receptor modulation, while implausible signals
(ICAM adhesion, reversed IL-15) correctly receive near-zero PIPs.
Results for the remaining 19 ordered cell-type pairs are in
Supplementary Materials (Section~S2). Two cross-direction patterns merit note:
the IL-18 opposing-chain pattern is partially mirrored in the
CD8$^{+}$ T cells\,$\rightarrow$\,monocytes direction
(Supplementary Section~S2.3.4), suggesting a receptor stoichiometry-dependent
tuning mechanism general to cytotoxic lymphocyte senders; and the
NK\,$\rightarrow$\,monocytes axis (8 discoveries) is the most
evidence-rich directional axis in the 20-pair panel, far exceeding
the reverse direction, consistent with NK cells as primary cytokine
effectors in innate immune regulation.

\section{Discussion}
\label{sec:discussion}

Inferring biologically meaningful CCC from
transcriptomic data remains challenging because ligand expression,
receptor expression, and downstream pathway activity may all
be influenced by shared unmeasured
confounding.  Existing methods primarily rely on association patterns
and do not distinguish causal signaling from correlated expression
driven by confounders.  We introduced MR-CCC, a Bayesian Mendelian
randomization framework that integrates instrumental variables,
receptor-modulated effect modeling, and probabilistic variable
selection to enable causal inference of intercellular communication.
MR-CCC shifts CCC analysis from an associational
paradigm toward a causal, mechanistically grounded framework in which
the fundamental question is not whether ligand and receptor expression
are jointly elevated, but whether variation in ligand expression
mechanically drives downstream pathway activity through receptor
engagement.

MR-CCC advances current approaches in three key respects.  First, it
leverages cis-eQTL instruments for both ligand and receptor
expression, mitigating bias from unmeasured confounding that would
otherwise propagate into effect estimates and communication scores.
Second, it explicitly models receptor-dependent modulation through an
interaction term, allowing the strength and direction of ligand
effects to vary with receptor expression in the receiving cell; this
structure is identifiable under standard MR assumptions
(Proposition~\ref{prop:ident}).  Third, the spike--and--slab prior
yields posterior inclusion probabilities that serve as an
interpretable, calibrated communication score.

The simulation results demonstrate that these three properties are
jointly necessary for reliable inference under confounding.  In the
null scenario, MR-CCC suppresses spurious signals, maintaining
near-zero PIPs, negligible rejection rates, and accurate parameter
estimates across all sample sizes.  OLS, which is an association-based method, fails entirely: it assigns
the maximum communication score and rejects in every replicate
regardless of sample size, confirming that the associational paradigm
is fundamentally compromised under even moderate unmeasured
confounding.  MVMR corrects confounding bias in estimation but does
not fully resolve the false-discovery problem, as null communication
scores remain elevated due to residual correlation between the
instrumented ligand exposure and the outcome.  MR-BMA achieves low
null scores via Bayesian model selection but incurs a persistent
structural bias in the ligand effect estimate, attributable to the
missing interaction term, and provides no estimate of the
receptor-modulated effect.  MR-CCC is the only evaluated approach
that simultaneously achieves strong null calibration, MR-based
confounding correction, and joint regularized estimation of the full
communication effect vector.  When communication is present, MR-CCC
retains high detection power with stable estimation of both the main
and interaction effects across all sample sizes and signal scenarios.

Application to OneK1K illustrates the biological specificity of
MR-CCC. Along the most evidence-rich axis (NK cells\,$\rightarrow$\,monocytes), MR-CCC's explicit interaction model
uncovered findings inaccessible to additive frameworks: opposing
receptor-modulation signs across the two obligate IL-18 receptor
subunits \cite{okamura1995interleukin18,born1998il18rap}, indicating
a receptor stoichiometry-dependent tuning of IL-18 signaling, and
co-discovery of both obligate IFN-$\gamma$ receptor chains
\cite{bach1997ifngr}, which must heterodimerize to signal --- internal
validation that confounding-driven signals would not produce. The
framework also assigned near-zero posterior probability to
biologically implausible signals: ICAM adhesion pairs
\cite{springer1990adhesion} (no soluble mechanism), the reversed
IL-15 direction \cite{fehniger2001il15} (monocyte-to-NK biology),
and \textit{SEMA4D}--\textit{CD72} \cite{kumanogoh2000cd100}, a
high-confidence discovery in the Monocytes\,$\rightarrow$\,CD4$^{+}$
T direction returning PIP\,=\,0.056 here, demonstrating
sender--receiver-specific inference.

More broadly, modeling pathway activity as the outcome links
ligand--receptor interactions to downstream functional programs in
the receiving cell.  Rather than reporting a single interaction
score per pair, MR-CCC produces pathway-resolved communication
signals that are statistically supported and directly interpretable
in terms of the biological programs they engage, supporting the
construction of causally annotated communication networks that can
guide experimental prioritization.

Several extensions may further enhance the framework.  Future work
could develop cell-state-resolved formulations that preserve
within-donor variation, enabling detection of communication signals
that differ across discrete cell states or continuous transcriptional
trajectories rather than capturing only cell population-average effects.
A second direction is to incorporate hierarchical structure across
pathways or cell-type pairs, sharing information across related
triplets while preserving context-specific effects.  A third is to
jointly model multiple ligands or receptors to study coordinated
or higher-order signaling patterns.  Finally, integrating alternative
outcome representations for nonlinear or non-Gaussian pathway
behavior may improve robustness in complex settings, and the
demonstrated specificity of MR-CCC suggests its use as a principled
screening filter to complement, rather than replace, existing
co-expression pipelines.

In summary, MR-CCC provides a principled, instrument-based framework
for causal inference of receptor-modulated CCC.
By integrating MR with interaction modeling and
Bayesian variable selection, it establishes a causal, pathway-aware
approach to intercellular signaling that moves beyond correlational
analyses toward statistically rigorous and mechanistically
interpretable causal CCC.

\section{Methods}\label{sec:methods}

Consider an ordered pair of cell types, with one acting as the putative
sender and the other as the receiver. The goal is to determine whether
variation in ligand expression in the sender cell type causes the change of
pathway activity in the receiver cell type, while allowing this effect to
be moderated by receptor expression. For donor \(i=1,\dots,n\), let
\(X_i\) denote ligand expression in the sender cell, \(Z_i\)
receptor expression in the receiver cell, \(Y_i\) pathway activity
in the receiver cell, \(G_i\) cis-eQTL instruments for \(X_i\),
\(H_i\) cis-eQTL instruments for \(Z_i\), \(V_i\) observed donor-level
covariates, and \(U_i\) any unmeasured confounders. Because germline genetic variants are fixed at conception, they are largely independent of
environmental confounding, and provide a basis for causal identification
via MR. 

\subsection{Structural Model and Identification}
\label{subsec:structural}

Without loss of generality, we assume all observed variables have been centered (zero mean). We posit the following
data-generating structural model:
\begin{equation}
\label{eq:dgm}
\begin{aligned}
  X_i &= \pi_X^\top G_i + \alpha_X^\top V_i + \lambda_X U_i +
          \varepsilon_{Xi},\\
  Z_i &= \pi_Z^\top H_i + \alpha_Z^\top V_i + \lambda_Z U_i +
          \varepsilon_{Zi},\\
  Y_i &= \beta_X X_i + \beta_Z Z_i + \beta_{XZ} X_i Z_i +
          \alpha_Y^\top V_i + \lambda_Y U_i + \varepsilon_{Yi},
\end{aligned}
\end{equation}
with \(E[\varepsilon_{Xi}\varepsilon_{Zi}\mid U_i,G_i,H_i,V_i]=0\),
where \(\pi_X,\pi_Z\) are instrument effects; \(\alpha_X,\alpha_Z,
\alpha_Y\) are covariate effects; \(\lambda_X,\lambda_Z,\lambda_Y\)
quantify the influence of the unmeasured confounder; \(\beta_X,\beta_Z\)
are the ligand and receptor main effects; \(\beta_{XZ}\) is the
ligand--receptor interaction; and \(\varepsilon_{Xi},\varepsilon_{Zi},
\varepsilon_{Yi}\) are independent mean-zero errors. The
receptor-modulated effect of ligand on pathway activity is
\(\beta_X + \beta_{XZ}Z_i\), which is the main inferential target. The causal diagram corresponding to \eqref{eq:dgm} is depicted in Figure~\ref{fig:mrccc_dag}.

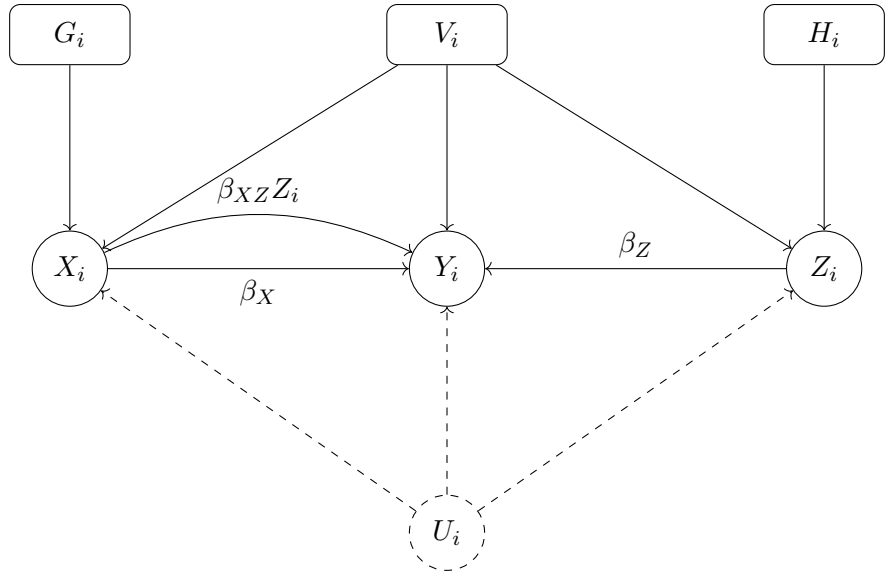
\begin{figure}[htb]
\centering
\begin{tikzpicture}[
    node distance=2.8cm,
    every node/.style={font=\small},
    box/.style={draw, rectangle, rounded corners, minimum width=1.6cm,
                minimum height=0.8cm},
    latent/.style={draw, circle, dashed, minimum size=1cm},
    obs/.style={draw, circle, minimum size=1cm}
]
\node[obs] (X) {$X_i$};
\node[obs, right=4cm of X] (Y) {$Y_i$};
\node[obs, right=4cm of Y] (Z) {$Z_i$};
\node[box, above=2.2cm of X] (G) {$G_i$};
\node[box, above=2.2cm of Y] (V) {$V_i$};
\node[box, above=2.2cm of Z] (H) {$H_i$};
\node[latent, below=2.5cm of Y] (U) {$U_i$};
\draw[->] (G) -- (X);
\draw[->] (H) -- (Z);
\draw[->] (V) -- (X);
\draw[->] (V) -- (Y);
\draw[->] (V) -- (Z);
\draw[->] (X) -- node[midway, below] {$\beta_X$} (Y);
\draw[->] (Z) -- node[midway, above] {$\beta_Z$} (Y);
\draw[->, bend left=25] (X) to node[midway, above] {$\beta_{XZ}Z_i$} (Y);
\draw[->, dashed] (U) -- (X);
\draw[->, dashed] (U) -- (Y);
\draw[->, dashed] (U) -- (Z);
\end{tikzpicture}
\caption{\textbf{Causal diagram of the MR-CCC structural model.}
  The causal effect of ligand expression on pathway activity is
  \(\beta_X + \beta_{XZ} Z_i\), allowing receptor-modulated communication.
  Genetic instruments \(G_i\) and \(H_i\) identify ligand and receptor
  expression respectively; observed covariates \(V_i\) and unmeasured
  confounders \(U_i\) may affect all three observed biological
  variables ($X_i$, $Z_i$, $Y_i$). Solid arrows indicate causal effects;
  dashed arrows indicate the unobserved confounder pathways that MR
  blocks.}
\label{fig:mrccc_dag}
\end{figure}

Because \(U_i\) is unobserved, regressing \(Y_i\) directly on \((X_i,
Z_i)\) yields biased estimates. Define the instrument-based conditional
means
\begin{align*}
X_i^\ast := E[X_i \mid G_i, V_i] = \pi_X^\top G_i + \alpha_X^\top V_i,
\qquad
Z_i^\ast := E[Z_i \mid H_i, V_i] = \pi_Z^\top H_i + \alpha_Z^\top V_i.
\end{align*}
Replacing \((X_i, Z_i)\) with \((X_i^\ast, Z_i^\ast)\) in the outcome
equation gives the working model
\begin{align}\label{eq:workm}
Y_i = \mu + \beta_X X_i^\ast + \beta_Z Z_i^\ast +
      \beta_{XZ} X_i^\ast Z_i^\ast + \alpha_Y^\top V_i + \varepsilon_{Yi}.
\end{align}
The intercept \(\mu\) is necessary even under centering because
\(E[X_i^\ast Z_i^\ast] = \mathrm{Cov}(X_i^\ast, Z_i^\ast) \neq 0\) in
general. The following proposition shows the plug-in regression
\eqref{eq:workm} identifies the structural coefficients.

\begin{proposition}[Identification]\label{prop:ident}
Consider the data-generating model in \eqref{eq:dgm}. Assume relevance (\(\pi_X \neq 0\),
\(\pi_Z \neq 0\)), exclusion restriction (\(G_i, H_i\) do not directly
affect \(Y_i\)), and independence (\(U_i \perp G_i, H_i, V_i\)). Then
\begin{align*}
E[Y_i \mid X_i^\ast, Z_i^\ast, V_i]
= \mu + \beta_X X_i^\ast + \beta_Z Z_i^\ast + \beta_{XZ} X_i^\ast Z_i^\ast
  + \alpha_Y^\top V_i ,
\end{align*}
where \(\mu := \lambda_Y E[U_i] + \beta_{XZ}\lambda_X\lambda_Z E[U_i^2]\)
is a constant. Consequently, \eqref{eq:workm} identifies
\((\beta_X, \beta_Z, \beta_{XZ})\).
\end{proposition}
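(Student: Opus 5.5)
The plan is to decompose each exposure into its instrument-based part and a residual, substitute this into the outcome equation, and then take conditional expectations term by term. Write $X_i = X_i^\ast + \tilde X_i$ and $Z_i = Z_i^\ast + \tilde Z_i$, where $\tilde X_i := \lambda_X U_i + \varepsilon_{Xi}$ and $\tilde Z_i := \lambda_Z U_i + \varepsilon_{Zi}$. Substituting into the outcome equation of \eqref{eq:dgm} and expanding the product gives
\begin{align*}
Y_i &= \beta_X X_i^\ast + \beta_Z Z_i^\ast + \beta_{XZ} X_i^\ast Z_i^\ast + \alpha_Y^\top V_i \\
&\quad + \beta_X \tilde X_i + \beta_Z \tilde Z_i + \beta_{XZ}\bigl(X_i^\ast \tilde Z_i + Z_i^\ast \tilde X_i + \tilde X_i \tilde Z_i\bigr) + \lambda_Y U_i + \varepsilon_{Yi}.
\end{align*}
The exclusion restriction guarantees that $G_i,H_i$ enter $Y_i$ only through $X_i,Z_i$, so no further terms appear. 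Note that $(X_i^\ast, Z_i^\ast, V_i)$ is a measurable function of $(G_i,H_i,V_i)$. I will therefore condition first on the finer $\sigma$-field generated by $(G_i,H_i,V_i)$ and then apply the tower property.

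Next, each residual term is evaluated given $(G_i,H_i,V_i)$. The errors $\varepsilon_{Xi},\varepsilon_{Zi},\varepsilon_{Yi}$ are mean zero and independent of $(G_i,H_i,V_i)$; I will read "independent mean-zero errors" as independent of the instruments and covariates as well. Together with the independence assumption $U_i \perp (G_i,H_i,V_i)$, this gives
\[
E[\tilde X_i \mid G_i,H_i,V_i] = \lambda_X E[U_i], \qquad E[\tilde Z_i \mid G_i,H_i,V_i] = \lambda_Z E[U_i].
\]
Since $X_i^\ast$ and $Z_i^\ast$ are fixed under this conditioning, the cross terms contribute $\beta_{XZ}(\lambda_Z X_i^\ast + \lambda_X Z_i^\ast) E[U_i]$. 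For the quadratic term, expand
\[
\tilde X_i\tilde Z_i = \lambda_X\lambda_Z U_i^2 + \lambda_X U_i\varepsilon_{Zi} + \lambda_Z U_i \varepsilon_{Xi} + \varepsilon_{Xi}\varepsilon_{Zi}.
\]
The two middle terms have zero conditional mean because the errors are independent of $U_i$ and have mean zero. The last term vanishes by the stated condition $E[\varepsilon_{Xi}\varepsilon_{Zi}\mid U_i,G_i,H_i,V_i]=0$ and the tower property. Hence $E[\tilde X_i\tilde Z_i \mid G_i,H_i,V_i] = \lambda_X\lambda_Z E[U_i^2]$, which is constant.

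The main obstacle is the linear leftovers $\beta_X\lambda_X E[U_i] + \beta_Z\lambda_Z E[U_i]$ and $\beta_{XZ}(\lambda_Z X_i^\ast + \lambda_X Z_i^\ast)E[U_i]$. The second of these is not constant, so it would contaminate the coefficients on $X_i^\ast$ and $Z_i^\ast$, and neither appears in the stated $\mu$. To remove them I will invoke the standing normalization that all variables are centered, and extend it to the latent confounder by taking $E[U_i]=0$. This is without loss of generality, since a nonzero mean of $U_i$ can be absorbed into the intercepts, which centering already removes. Under $E[U_i]=0$, every linear leftover vanishes, and so does the $\lambda_Y E[U_i]$ piece of $\mu$. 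The remaining constant is $\mu = \beta_{XZ}\lambda_X\lambda_Z E[U_i^2]$, which matches the stated formula with $E[U_i]=0$; I will include the $\lambda_Y E[U_i]$ term in the write-up for generality.

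Having obtained the conditional mean given $(G_i,H_i,V_i)$ as a function of $(X_i^\ast,Z_i^\ast,V_i)$ alone, the tower property yields the displayed expression for $E[Y_i\mid X_i^\ast,Z_i^\ast,V_i]$. For the identification claim, I will argue that the regressors $(1, X_i^\ast, Z_i^\ast, X_i^\ast Z_i^\ast, V_i)$ have a nonsingular second-moment matrix, so the population least-squares projection of $Y_i$ onto them equals this conditional mean with unique coefficients. Relevance ($\pi_X\neq0$, $\pi_Z\neq0$) makes $X_i^\ast$ and $Z_i^\ast$ vary beyond $V_i$. I will additionally assume the instruments are nondegenerate, meaning $G_i$ and $H_i$ are not collinear with $V_i$ and their product term is not linearly dependent on the others. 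Noting this mild rank condition explicitly, the coefficients $(\beta_X,\beta_Z,\beta_{XZ})$ are then identified.
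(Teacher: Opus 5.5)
Your proposal is correct and follows the same route as the paper's proof. Both write $X_i=X_i^\ast+(\lambda_X U_i+\varepsilon_{Xi})$ and $Z_i=Z_i^\ast+(\lambda_Z U_i+\varepsilon_{Zi})$, expand the product, condition on $(G_i,H_i,V_i)$ and apply the tower property, and find that the only surviving extra term is the constant $\lambda_X\lambda_Z E[U_i^2]$. Your explicit assumption $E[U_i]=0$, more precisely $\lambda_X E[U_i]=\lambda_Z E[U_i]=0$, is silently required by the paper's claim $E[\eta_{Xi}\mid X_i^\ast,Z_i^\ast,V_i]=0$ and by its definition of $X_i^\ast$ as a conditional mean; this and your full-rank condition on $(1,X_i^\ast,Z_i^\ast,X_i^\ast Z_i^\ast,V_i)$ just state openly what the paper leaves implicit.
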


\begin{proof}
Write \(X_i = X_i^\ast + \eta_{Xi}\) and \(Z_i = Z_i^\ast + \eta_{Zi}\)
where \(\eta_{Xi} = \lambda_X U_i + \varepsilon_{Xi}\) and
\(\eta_{Zi} = \lambda_Z U_i + \varepsilon_{Zi}\). Substituting into the
outcome equation and expanding \(X_i Z_i = X_i^\ast Z_i^\ast +
X_i^\ast \eta_{Zi} + Z_i^\ast \eta_{Xi} + \eta_{Xi}\eta_{Zi}\), we
take conditional expectation given \((X_i^\ast, Z_i^\ast, V_i)\). Since
\(X_i^\ast\) and \(Z_i^\ast\) are measurable functions of
\((G_i, H_i, V_i)\), iterated expectation gives
\(E[\eta_{Xi} \mid X_i^\ast, Z_i^\ast, V_i] = 0\) and likewise for
\(\eta_{Zi}\), eliminating all cross-terms. The product
\(\eta_{Xi}\eta_{Zi}\) contributes
\[
E[\eta_{Xi}\eta_{Zi} \mid X_i^\ast, Z_i^\ast, V_i]
= \lambda_X\lambda_Z E[U_i^2],
\]
a constant by independence of \(U_i\) from \((G_i, H_i, V_i)\) and
mutual orthogonality of the idiosyncratic errors. The confounding term
contributes \(\lambda_Y E[U_i]\), also a constant. Collecting these
constants into \(\mu\) and using \(E[\varepsilon_{Yi} \mid X_i^\ast,
Z_i^\ast, V_i] = 0\) completes the proof.
\end{proof}

\begin{remark}
Under the centering assumption \(E[U_i] = 0\), the intercept reduces to
\(\mu = \beta_{XZ}\lambda_X\lambda_Z E[U_i^2]\), which vanishes when
\(\beta_{XZ} = 0\) (no receptor modulation) or when confounding does not
propagate through both exposures (\(\lambda_X\lambda_Z = 0\)). In all
other cases \(\mu \neq 0\), confirming the necessity of the intercept in
\eqref{eq:workm}.
\end{remark}


\subsection{Bayesian Formulation of MR-CCC}\label{subsec:mrccc}

We embed the working model \eqref{eq:workm} in a Bayesian
framework by placing a spike--and--slab prior on the communication effect
vector \(\beta = (\beta_X, \beta_{XZ})^\top\), which enables joint variable
selection over the presence of communication. 
Let \(X_\beta = [X^\ast \;\;
X^\ast\!\circ Z^\ast] \in \mathbb{R}^{n\times 2}\) denote the design
matrix for the ligands and the interaction of ligands and receptors, so that
\(X_\beta\beta = \beta_X X^\ast + \beta_{XZ}\,X^\ast \circ Z^\ast\) where $\circ$ is the element-wise product.


\subsubsection{Prior Specification}

\paragraph{Exposure model parameters.}
Zellner's \(g\)-priors \cite{zellner1986} and inverse gamma priors are placed on the exposure model parameters:
\begin{align*}
\pi_X \mid \sigma_X^2 &\sim \mathcal{N}(0,\,\sigma_X^2 g_G(G^\top G)^{-1}),
&\alpha_X \mid \sigma_X^2 &\sim \mathcal{N}(0,\,\sigma_X^2 g_V(V^\top V)^{-1}),
&\sigma_X^2 &\sim \mathrm{IG}(a_\sigma, b_\sigma),\\
\pi_Z \mid \sigma_Z^2 &\sim \mathcal{N}(0,\,\sigma_Z^2 g_H(H^\top H)^{-1}),
&\alpha_Z \mid \sigma_Z^2 &\sim \mathcal{N}(0,\,\sigma_Z^2 g_V(V^\top V)^{-1}),
&\sigma_Z^2 &\sim \mathrm{IG}(a_\sigma, b_\sigma).
\end{align*}

\paragraph{Outcome model parameters.} To enable the selection of the communication effect $\beta$, we impose a spike--and--slab prior: 
\begin{align*}
\beta \mid \gamma, \sigma_Y^2 &\sim
  \mathcal{N}_2\!\left(0,\; s_\gamma\,\sigma_Y^2\, g_\beta
  (X_\beta^\top X_\beta)^{-1}\right),
\qquad
s_\gamma =
\begin{cases}
1, & \gamma = 1,\\
\nu_1, & \gamma = 0,
\end{cases}\\
\gamma &\sim \mathrm{Bernoulli}(\rho),\qquad \rho \sim \mathrm{Beta}(a_\rho, b_\rho).
\end{align*}
Here \(\nu_1 \ll 1\) concentrates the slab near zero under
\(\gamma = 0\). In all analyses we set
\(g_G = g_H = g_V = g_Z = g_\beta = \min(n, 100)\),
\(a_{\sigma} = 3\), \(b_{\sigma} = 2\),
\(a_{\rho} = 3\), \(b_{\rho} = 1\), and \(\nu_1 = 10^{-4}\). For the nuisance parameters, we assume the Zellner's \(g\) and inverse gamma priors as before: 
\begin{align*}
\alpha_Y \mid \sigma_Y^2 &\sim \mathcal{N}(0,\,\sigma_Y^2 g_V(V^\top V)^{-1}),
&\beta_Z \mid \sigma_Y^2 &\sim \mathcal{N}(0,\,\sigma_Y^2 g_Z(Z^{\ast\top}Z^\ast)^{-1}),
&\sigma_Y^2 &\sim \mathrm{IG}(a_\sigma, b_\sigma),
\end{align*}
with a diffuse normal prior on the intercept,
\(\mu \mid \sigma_Y^2 \sim \mathcal{N}(0,\, n\,\sigma_Y^2)\).


\subsubsection{Posterior Inference}
Posterior inference proceeds via a Gibbs sampler with closed-form full
conditionals for every parameter. At each iteration, the sampler updates
the first-stage exposure parameters $(\pi_X,\alpha_X,\sigma_X^2)$ and
$(\pi_Z,\alpha_Z,\sigma_Z^2)$; then recomputes the plug-in conditional
means $X^\ast=G\pi_X+V\alpha_X$ and $Z^\ast=H\pi_Z+V\alpha_Z$ and the
communication design $X_\beta=[X^\ast\;|\;X^\ast\!\circ Z^\ast]$; then
updates the outcome-model nuisance parameters $(\mu,\alpha_Y,\beta_Z,
\sigma_Y^2)$; and finally updates $\beta\mid\gamma$, the inclusion
indicator $\gamma$, and the inclusion probability $\rho$. All matrix
inverses are regularized by a small ridge term $\lambda>0$ for numerical
stability. The full conditional distributions and the complete sampling
scheme are given in Supplementary Materials (Section~S1); the sampler is implemented
in \texttt{Rcpp} for computational efficiency. Posterior summaries are
the posterior means of $\beta_X,\beta_{XZ}$ and the posterior inclusion
probability $\mathrm{PIP}=\Pr(\gamma=1\mid\text{data})$.

\subsection{Simulation Design}
\label{subsec:dgm}

We generated data
from the data-generating model in \eqref{eq:dgm} with four sample sizes $n \in \{500,\;1000,\;10{,}000,\;30{,}000\}$.
We fixed \(p_G = p_H = 5\) and \(p_V = 3\). Instruments, covariates, and
the confounder were drawn independently from standard normal
$G_i,\, H_i,\, V_i,\, U_i \;\overset{\mathrm{iid}}{\sim}\;
\mathcal{N}(0, I)$. Instrument effects were set to moderate strength
\(\pi_X = \pi_Z = (0.5, \ldots, 0.5)^\top \in \mathbb{R}^5\). Covariate effects were
\(\alpha_X = \alpha_Z = \alpha_Y = (0.3, \ldots, 0.3)^\top\). The
receptor main effect was fixed at \(\beta_Z = 0.5\), confounding
effects at \(\lambda_X = \lambda_Z = \lambda_Y = 0.7\), and residual
variances at \(\sigma_X^2 = \sigma_Z^2 = \sigma_Y^2 = 1\). For each
scenario and sample size, 20 independent replicates were generated.

\subsection{Simulation: Competing Methods}
\label{subsec:methods}

We compared MR-CCC against three alternative approaches.

\paragraph{(1) OLS (naive regression).}
We fitted
\[
Y_i = \mu + \beta_X X_i + \beta_Z Z_i + \beta_{XZ} X_i Z_i +
       V_i^\top \alpha_Y + \varepsilon_i
\]
by ordinary least squares, ignoring the instrumental variables and
therefore not correcting for unmeasured confounding. The communication
score is \(1 - p_F\), where \(p_F\) is the \(p\)-value from a joint \(F\)-test of \(H_0 \colon \beta_X = \beta_{XZ} = 0\):
\[
F = \tfrac{1}{2}\,(C\hat\beta)^\top
    \bigl[C\,\widehat{\operatorname{Var}}(\hat\beta)\,C^\top\bigr]^{-1}
    (C\hat\beta),
\qquad
p_F = P\bigl(F(2,\, n - p) > F_{\mathrm{obs}}\bigr),
\]
where \(C\) is the \(2 \times p\) contrast matrix selecting
\(\beta_X\) and \(\beta_{XZ}\) from the OLS coefficient vector.
Communication is declared when \(p_F \leq 0.05\).

\paragraph{(2) MVMR (multivariable Mendelian randomization).}
We applied MVMR \cite{Sanderson2021}. First,
ligand and receptor expressions were each regressed on the full
instrument set \([G, H]\) and covariates \(V\) by OLS to obtain fitted
values \(\hat X\) and \(\hat Z\). Then, the outcome was
regressed on \(\hat X\), \(\hat Z\), and covariates by OLS.
The communication score is
\(1 - p_t\), where \(p_t\) is the \(p\)-value of the \(t\)-test
for \(H_0 \colon \beta_X = 0\); communication is declared when
\(p_t \leq 0.05\). MVMR corrects for unmeasured confounding through MR
but relies on a fixed frequentist significance threshold, does not model
the ligand--receptor interaction, and does not impose a structured prior
on the communication effect.

\paragraph{(3) MR-BMA (Bayesian model averaging MR).}
We applied the Bayesian model averaging approach of \cite{Zuber2020}.
For each instrument \(g_j \in [G, H]\), one-sample summary statistics
\((\hat\beta_{Xj}, \hat\beta_{Zj}, \hat\beta_{Yj})\) and their standard
errors were obtained by simple OLS. With ligand \(X\) and receptor \(Z\)
as the two candidate exposures, MR-BMA enumerates all \(2^2 = 4\)
exposure inclusion models, assigns a \(g\)-prior Bayes factor to each
model, normalizes to posterior model probabilities, and extracts the
marginal inclusion probability (MIP) and model-averaged causal effect
(MACE) for each exposure. The communication score is
\(\mathrm{MIP}_X\), the posterior probability that the ligand has a
nonzero causal effect on the pathway; communication is declared when
\(\mathrm{MIP}_X > 0.5\). Crucially, MR-BMA operates on per-instrument
summary statistics and models only additive effects of the two exposures;
it does not model the ligand--receptor interaction \(\beta_{XZ}\), and
estimates of this parameter are unavailable from this method.

\paragraph{(4) MR-CCC.}
We applied the Gibbs sampler described in Section~\ref{subsec:mrccc}.
The posterior inclusion probability
\[
\mathrm{PIP} = \Pr(\gamma = 1 \mid \text{data})
\]
served as the natural communication score; communication is declared when
\(\mathrm{PIP} > 0.5\). Posterior means of \(\beta_X\) and
\(\beta_{XZ}\) were used as point estimates. The sampler was run for
20{,}000 iterations with a 2{,}000-iteration burn-in and a thinning
factor of 5. MR-CCC is the only method that simultaneously (i) corrects
for unmeasured confounding via MR, (ii) performs Bayesian
model selection over the presence of communication, and (iii) provides
regularized joint estimates of both \(\beta_X\) and \(\beta_{XZ}\).

\subsection{Simulation: Performance Metrics}
\label{subsec:metrics}

\paragraph{Communication score and rejection rule.}
Each method produces a scalar communication score on \([0, 1]\). For
MR-CCC the score is \(\mathrm{PIP} = \Pr(\gamma = 1 \mid
\text{data})\), a Bayesian posterior probability. For MR-BMA the score
is \(\mathrm{MIP}_X\), likewise a Bayesian quantity. For OLS the score is \(1 - p_F\), the complement of the joint \(F\)-test \(p\)-value for \(H_0\colon\beta_X = \beta_{XZ} = 0\). For
MVMR the score is \(1 - p_t\), the complement of the \(t\)-test \(p\)-value for \(H_0\colon\beta_X = 0\). Neither is a probability, but
both place all four methods on a common \([0, 1]\) scale for
visualization. Bayesian methods declare communication when the respective
score exceeds 0.5; frequentist methods declare communication when the
corresponding \(p\)-value is \(\leq 0.05\).

For each scenario, sample size, and method we report: (i) mean
communication score with standard deviation and (ii) rejection rate
across the 20 replicates.

\paragraph{Parameter estimation.}
For \(\beta_X\) and \(\beta_{XZ}\), we recorded the point estimate from
each method in every replicate. For MR-BMA, \(\beta_X\) is the MACE;
\(\beta_{XZ}\) is not estimable (denoted ``---''). For MVMR,
\(\beta_{XZ}\) is likewise not estimated as it is not included in the
standard formulation. We computed bias
(signed deviation from the true value) and mean absolute deviation (MAD)
across the 20 replicates. Results are reported in
Tables~\ref{tab:sim_s1}--\ref{tab:sim_s3}.

\subsection{Real-Data Analysis: Data Preparation and Instrument Selection}
\label{subsec:realdata_methods}

\paragraph{Dataset.}
The OneK1K cohort \cite{yazar2022onek1k} comprises donor-matched
germline genotype data and single-cell RNA sequencing profiles for
982 donors across five major peripheral immune cell types: B cells,
CD4$^{+}$ T cells, CD8$^{+}$ T cells, natural killer (NK) cells,
and monocytes.  For each ordered sender$\rightarrow$receiver pair we
evaluated ligand--receptor--pathway triplets in which ligand
expression was measured in the sender cell type and receptor
expression and pathway activity were obtained in the receiver cell
type.

\paragraph{Preprocessing and donor-level expression construction.}
Raw single-cell RNA counts were aggregated to donor-level gene
expression matrices by summing counts across all cells of a given
type per donor. Donor-specific library sizes were computed within
each cell type, normalized to the cell-type median, and used to
scale donor-level expression values. Donors were excluded if their
library size exceeded the median plus three median absolute
deviations, or fell below 25\% of the median, in either the sender
or receiver cell type. After filtering, the NK cells $\rightarrow$
Monocytes analysis retained 651 donors. For each triplet, ligand
expression $X$ was the normalized donor-level expression of the
ligand gene in NK cells; receptor expression $Z$ the corresponding
gene in monocytes; and pathway activity $Y$ a donor-level scalar
derived from the pathway gene set in monocytes.

\paragraph{Ligand--receptor and pathway definitions.}
Candidate ligand--receptor pairs were obtained from CellPhoneDB
\cite{efremova2020cellphonedb}. Pathway annotations were taken from
MSigDB Reactome gene sets \cite{liberzon2015msigdb}. After
restricting to genes present in the filtered expression matrices
and triplets with valid cis-eQTL instruments, the analysis yielded
41 ligand--receptor--pathway triplets spanning 11 biological
pathways.

\paragraph{Pathway activity construction and instrument selection.}
For each Reactome gene set in monocytes, donor-level pathway
activity was summarized using six complementary representations:
first principal component, mean expression, AUCell
\cite{aibar2017scenic}, UCell \cite{andreatta2021ucell}, ssGSEA
\cite{barbie2009ssgsea}, and GSVA \cite{hanzelmann2013gsva}. The
representation with the strongest absolute association with ligand
expression was retained for each triplet. Candidate cis-SNPs were
identified within a $\pm 200$~kb promoter-centred window for each
ligand and receptor gene, screened by regressing donor-level
expression on genotype dosage, adjusting for age, sex, and the
first three genotype principal components (these variables are also
used as covariates in MR-CCC), and ranked by association strength.
Up to ten top cis-SNPs per gene were used as instruments; triplets
lacking valid instruments for either gene were excluded.

\section*{Data availability}
Preprocessed donor-level single-cell RNA-seq expression matrices and
genotype data from the OneK1K cohort used in this study are publicly
available on Zenodo at
\url{https://doi.org/10.5281/zenodo.19675075}. The original OneK1K
genotype and scRNA-seq data are available from the study authors
(\url{https://onek1k.org}) under the original data access terms
\cite{yazar2022onek1k}. Ligand--receptor annotations were obtained
from CellPhoneDB v5 \cite{efremova2020cellphonedb}; pathway gene
sets from MSigDB \cite{liberzon2015msigdb}.

\section*{Code availability}
All analysis code, including the Gibbs sampler
(\texttt{mr\_ccc\_gibbs.cpp}),\\ simulation scripts
(\texttt{simulation\_mrccc.R}), database construction
(\texttt{build\_lr\_database.R}) and real-data analysis
(\texttt{real\_data\_analysis.R}) is publicly available at
\url{https://github.com/bitansa/MR-CCC} under the MIT License.

\section*{Author contributions}
Y.N.\ conceived the study and proposed the MR-CCC framework.
B.S.\ developed the methodology, implemented the Gibbs sampler
and the competing methods (OLS, MVMR, MR-BMA), performed the
simulation experiments and the OneK1K real-data analysis,
interpreted the biological findings together with Y.N., conducted
the supporting biological literature review, and wrote the
manuscript under Y.N.'s supervision. Both authors approved the
final version.

\section*{Competing interests}
The authors declare no competing interests.

\bibliographystyle{unsrtnat}   
\bibliography{Bibliography-MM-MC}


\end{document}



\def\spacingset#1{\renewcommand{\baselinestretch}%
{#1}\small\normalsize} \spacingset{1}


\if1\blind
{
  \title{\bf Title}
  \author{Author 1\thanks{
    The authors gratefully acknowledge \textit{please remember to list all relevant funding sources in the unblinded version}}\hspace{.2cm}\\
    Department of YYY, University of XXX\\
    and \\
    Author 2 \\
    Department of ZZZ, University of WWW}
  \maketitle
} \fi

\if0\blind
{
  \title{\bf Supplementary Materials\\ MR-CCC: Bayesian Mendelian Randomization for Causal Cell--Cell Communication}

  \author{%
    \large
    Bitan Sarkar$^{1}$ and Yang Ni$^{2}$\thanks{Corresponding author: Yang Ni (\href{mailto:yang.ni@austin.utexas.edu}{yang.ni@austin.utexas.edu}).}
    \\[0.75em]
    $^{1}$Department of Statistics, Texas A\&M University, College Station, TX, USA\\
    $^{2}$Department of Statistics and Data Sciences,\\ The University of Texas at Austin, Austin, TX, USA
  }

  \date{} 
  \maketitle
} \fi

\spacingset{1.45}

\section{Posterior Inference}

Posterior inference is performed via a Gibbs sampler with closed-form
updates. All matrix inverses are regularized by a small ridge term
\(\lambda > 0\) for numerical stability. For a matrix
\(A \in \mathbb{R}^{n \times p}\) and vector \(b \in \mathbb{R}^{n}\),
we write \(A \odot b := \mathrm{diag}(b)\,A\), i.e., the matrix
whose \(i\)-th row is \(b_i\) times the \(i\)-th row of \(A\).

\paragraph{1)}
Define
\(w_X := \beta_X\mathbf{1}_n + \beta_{XZ}Z^\ast\),
\(GW := G \odot w_X\), and
\(r_{\pi_X} := y - \mu\mathbf{1}_n - \beta_Z Z^\ast - V\alpha_Y
- (V\alpha_X)\circ w_X\). Then
\begin{align*}
\Sigma_{\pi_X} &= \Bigl[\tfrac{1}{\sigma_X^2}(1+g_G^{-1})G^\top G
  + \tfrac{1}{\sigma_Y^2}(GW)^\top(GW) + \lambda I\Bigr]^{-1},\\
m_{\pi_X} &= \Sigma_{\pi_X}\Bigl[\tfrac{1}{\sigma_X^2}G^\top(x - V\alpha_X)
  + \tfrac{1}{\sigma_Y^2}(GW)^\top r_{\pi_X}\Bigr],
\qquad \pi_X \mid \cdot \sim \mathcal{N}(m_{\pi_X},\Sigma_{\pi_X}).
\end{align*}

\paragraph{2)}
Define
\(VW := V \odot w_X\) and
\(r_{\alpha_X} := y - \mu\mathbf{1}_n - \beta_Z Z^\ast - V\alpha_Y
- (G\pi_X)\circ w_X\). Then
\begin{align*}
\Sigma_{\alpha_X} &= \Bigl[\tfrac{1}{\sigma_X^2}(1+g_V^{-1})V^\top V
  + \tfrac{1}{\sigma_Y^2}(VW)^\top(VW) + \lambda I\Bigr]^{-1},\\
m_{\alpha_X} &= \Sigma_{\alpha_X}\Bigl[\tfrac{1}{\sigma_X^2}V^\top(x-G\pi_X)
  + \tfrac{1}{\sigma_Y^2}(VW)^\top r_{\alpha_X}\Bigr],
\qquad \alpha_X \mid \cdot \sim \mathcal{N}(m_{\alpha_X},\Sigma_{\alpha_X}).
\end{align*}

\paragraph{3)}
Let \(x_\mathrm{res} := x - G\pi_X - V\alpha_X\). Then
\begin{align*}
\sigma_X^2 \mid \cdot \sim \mathrm{IG}\!\Biggl(
  a_\sigma + \tfrac{n+p_G+p_V}{2},\;\;
  b_\sigma + \tfrac{1}{2}\Bigl[x_\mathrm{res}^\top x_\mathrm{res}
  + g_G^{-1}\pi_X^\top G^\top G\pi_X
  + g_V^{-1}\alpha_X^\top V^\top V\alpha_X\Bigr]\Biggr).
\end{align*}

\paragraph{4)}
Define
\(w_Z := \beta_Z\mathbf{1}_n + \beta_{XZ}X^\ast\),
\(HW := H \odot w_Z\), and
\(r_{\pi_Z} := y - \mu\mathbf{1}_n - \beta_X X^\ast - V\alpha_Y
- (V\alpha_Z)\circ w_Z\). Then
\begin{align*}
\Sigma_{\pi_Z} &= \Bigl[\tfrac{1}{\sigma_Z^2}(1+g_H^{-1})H^\top H
  + \tfrac{1}{\sigma_Y^2}(HW)^\top(HW) + \lambda I\Bigr]^{-1},\\
m_{\pi_Z} &= \Sigma_{\pi_Z}\Bigl[\tfrac{1}{\sigma_Z^2}H^\top(z - V\alpha_Z)
  + \tfrac{1}{\sigma_Y^2}(HW)^\top r_{\pi_Z}\Bigr],
\qquad \pi_Z \mid \cdot \sim \mathcal{N}(m_{\pi_Z},\Sigma_{\pi_Z}).
\end{align*}

\paragraph{5)}
Define
\(VW_Z := V \odot w_Z\) and
\(r_{\alpha_Z} := y - \mu\mathbf{1}_n - \beta_X X^\ast - V\alpha_Y
- (H\pi_Z)\circ w_Z\). Then
\begin{align*}
\Sigma_{\alpha_Z} &= \Bigl[\tfrac{1}{\sigma_Z^2}(1+g_V^{-1})V^\top V
  + \tfrac{1}{\sigma_Y^2}(VW_Z)^\top(VW_Z) + \lambda I\Bigr]^{-1},\\
m_{\alpha_Z} &= \Sigma_{\alpha_Z}\Bigl[\tfrac{1}{\sigma_Z^2}V^\top(z - H\pi_Z)
  + \tfrac{1}{\sigma_Y^2}(VW_Z)^\top r_{\alpha_Z}\Bigr],
\qquad \alpha_Z \mid \cdot \sim \mathcal{N}(m_{\alpha_Z},\Sigma_{\alpha_Z}).
\end{align*}

\paragraph{6)}
Let \(z_\mathrm{res} := z - H\pi_Z - V\alpha_Z\). Then
\begin{align*}
\sigma_Z^2 \mid \cdot \sim \mathrm{IG}\!\Biggl(
  a_\sigma + \tfrac{n+p_H+p_V}{2},\;\;
  b_\sigma + \tfrac{1}{2}\Bigl[z_\mathrm{res}^\top z_\mathrm{res}
  + g_H^{-1}\pi_Z^\top H^\top H\pi_Z
  + g_V^{-1}\alpha_Z^\top V^\top V\alpha_Z\Bigr]\Biggr).
\end{align*}

\paragraph{7)}
Let \(r_\mu := y - X_\beta\beta - \beta_Z Z^\ast - V\alpha_Y\). Then
\begin{align*}
v_\mu = \frac{\sigma_Y^2}{n + n^{-1}}, \qquad
m_\mu = \frac{\mathbf{1}_n^\top r_\mu}{n + n^{-1}},
\qquad \mu \mid \cdot \sim \mathcal{N}(m_\mu,\, v_\mu).
\end{align*}

\paragraph{8)}
Let \(r_Y := y - \mu\mathbf{1}_n - X_\beta\beta - \beta_Z Z^\ast\) and
\(c_V := g_V/(1+g_V)\). Then
\begin{align*}
\alpha_Y \mid \cdot \sim \mathcal{N}\!\bigl(c_V(V^\top V)^{-1}V^\top r_Y,\;\;
c_V\,\sigma_Y^2\,(V^\top V)^{-1}\bigr).
\end{align*}

\paragraph{9)}
Let \(r_Z := y - \mu\mathbf{1}_n - X_\beta\beta - V\alpha_Y\) and
\(c_Z := g_Z/(1+g_Z)\). Then
\begin{align*}
\beta_Z \mid \cdot \sim \mathcal{N}\!\Bigl(
  c_Z\,\frac{Z^{\ast\top}r_Z}{Z^{\ast\top}Z^\ast},\;\;
  c_Z\,\sigma_Y^2\,(Z^{\ast\top}Z^\ast)^{-1}\Bigr),
\end{align*}
with \(Z^{\ast\top}Z^\ast\) regularized as \(Z^{\ast\top}Z^\ast + \lambda\).

\paragraph{10)}
Let \(e_Y := y - \mu\mathbf{1}_n - X_\beta\beta - \beta_Z Z^\ast - V\alpha_Y\). Then
\begin{align*}
\sigma_Y^2 \mid \cdot \sim \mathrm{IG}\!\Biggl(
  a_\sigma + \tfrac{n + p_V + 4}{2},\;\;
  b_\sigma + \tfrac{1}{2}\Bigl[e_Y^\top e_Y
  + n^{-1}\mu^2
  + (g_\beta s_\gamma)^{-1}\beta^\top X_\beta^\top X_\beta\beta\\
  + g_Z^{-1}\beta_Z^2 Z^{\ast\top}Z^\ast
  + g_V^{-1}\alpha_Y^\top V^\top V\alpha_Y\Bigr]\Biggr).
\end{align*}

\paragraph{11)}
Let \(r_\beta := y - \mu\mathbf{1}_n - \beta_Z Z^\ast - V\alpha_Y\) and
\(c_\gamma := g_\beta s_\gamma/(1 + g_\beta s_\gamma)\). Then
\begin{align*}
\beta \mid \cdot \sim \mathcal{N}_2\!\bigl(
  c_\gamma(X_\beta^\top X_\beta)^{-1}X_\beta^\top r_\beta,\;\;
  c_\gamma\,\sigma_Y^2\,(X_\beta^\top X_\beta)^{-1}\bigr),
\end{align*}
with \(X_\beta^\top X_\beta\) regularized by \(\lambda I_2\).

\paragraph{12)}
Let \(Q := \beta^\top(X_\beta^\top X_\beta)\beta\). Define
\begin{align*}
\log A &= -\tfrac{1}{2}(g_\beta\sigma_Y^2)^{-1}Q + \log\rho,\\
\log B &= -\tfrac{1}{2}(g_\beta\nu_1\sigma_Y^2)^{-1}Q + \log(1-\rho) - \log\nu_1,
\end{align*}
and set \(p = \exp(\log A)/[\exp(\log A)+\exp(\log B)]\). Then
\(\gamma \mid \cdot \sim \mathrm{Bernoulli}(p)\).

\paragraph{13)}
\[
\rho \mid \gamma \sim \mathrm{Beta}(a_\rho + \gamma,\; b_\rho + 1 - \gamma).
\]

\section{Real Data Results for Additional Ordered Cell-Type Pairs}
\label{sec:supp_realdata}

The main manuscript presents the results for the NK cells
\(\rightarrow\) Monocytes axis. Here we report the results
for the remaining 19 ordered sender\,$\rightarrow$\,receiver
pairs among the five peripheral immune cell types profiled in
the OneK1K cohort \cite{yazar2022onek1k}. Data
preprocessing, donor-level expression construction, cis-eQTL
instrument selection, and model fitting follow exactly the same
pipeline described in main paper. For each ordered pair we display three figures:
(i) the PIP ranking across all evaluated ligand--receptor--pathway
triplets; (ii) the bubble plot of standardized posterior main and
interaction effects; and (iii) the receptor-modulated effect curves
for the high-PIP triplets.

\subsection{B Cells as Sender}

\subsubsection{B Cells \(\rightarrow\) CD4\(^+\) T Cells}
\label{supp:B_CD4}

Across 868 donors, MR-CCC evaluated 42 ligand--receptor--pathway
triplets for the B cell (sender) to CD4\(^+\) T cell (receiver)
direction and identified three high-confidence causal communication
signals with posterior inclusion probability (PIP) exceeding the
discovery threshold of 0.5: \textit{IL15}--\textit{IL15RA} within the Interleukin signaling pathway (PIP \(= 0.867\)),
\textit{LIPA}--\textit{RORA} within the Cholesterol signaling pathway (PIP \(= 0.686\)),
and \textit{SLC6A6}--\textit{GABBR1} within the GABA signaling pathway (PIP \(= 0.54\))
(Figures~\ref{fig:supp_B_CD4_pip}--\ref{fig:supp_B_CD4_curves}).

The strongest signal, \textit{IL15}--\textit{IL15RA} (PIP \(= 0.867\)),
reflects the well-characterized mechanism of IL-15 transpresentation,
in which antigen-presenting cells, including B cells, display IL-15 on
their surface via IL-15R\(\alpha\) for recognition by neighboring
CD4\(^+\) T cells~\cite{dubois2002il15ra,fehniger2001il15}.
IL-15 is a pleiotropic cytokine that promotes T cell survival,
proliferation, and memory formation, with its bioavailability tightly
controlled by IL-15R\(\alpha\)-mediated
transpresentation~\cite{waldmann2006il15}. The estimated main effect
is negative (\(\hat{\beta}_X = -1.13\)), while the receptor-modulated
interaction is large and positive (\(\hat{\beta}_{XZ} = 2.28\)),
producing an effect curve that crosses zero near the population-mean
\textit{IL15RA} expression level. CD4\(^+\) T cells expressing
\textit{IL15RA} above this mean therefore exhibit a progressively
positive causal response to B cell IL-15 production, consistent with
IL-15R\(\alpha\)-dependent co-stimulation that amplifies with receptor
abundance.

The second discovered triplet, \textit{LIPA}--\textit{RORA} (PIP \(= 0.686\)), connects B cell lysosomal acid lipase A
(\textit{LIPA}) activity---which releases free cholesterol and fatty
acids from intracellular lipid
esters~\cite{tall2015cholesterol}---to \textit{RORA} signaling in
CD4\(^+\) T cells. \textit{RORA} is a nuclear receptor activated by
cholesterol metabolites and oxysterols that negatively regulates
inflammatory NF-\(\kappa\)B signaling and shapes helper T cell
differentiation~\cite{jetten2009ror,yang2008rora}. Both the main
effect (\(\hat{\beta}_X = -0.682\)) and the interaction
(\(\hat{\beta}_{XZ} = -2.40\)) are negative, yielding a monotonically
decreasing effect curve: B cell lipid-catabolic activity exerts a
progressively stronger suppressive influence on CD4\(^+\) T cell
downstream pathway activity as \textit{RORA} expression rises. This
pattern suggests that B cell-derived cholesterol metabolites engage
\textit{RORA}-mediated anti-inflammatory programs in CD4\(^+\) T
cells in a receptor dose-dependent manner.

The third discovered triplet, \textit{SLC6A6}--\textit{GABBR1} (PIP \(= 0.54\)), exhibits a near-zero main effect
(\(\hat{\beta}_X = 0.014\)) alongside a large positive interaction
(\(\hat{\beta}_{XZ} = 2.01\)), yielding an almost purely
receptor-dependent effect curve that passes through zero at the mean
and rises steeply. \textit{SLC6A6} encodes the taurine and
\(\beta\)-alanine transporter; taurine is a known partial agonist of
GABA receptors and provides a mechanistic link between B cell
amino-acid transport and GABA-B receptor activation on CD4\(^+\) T
cells~\cite{tian1999gaba}. GABA-B receptor signaling has been shown
to suppress T cell activation and proliferation in a receptor
density-dependent manner~\cite{bjurstrom2008gaba}, consistent with
the purely interaction-driven pattern observed here: the communication
signal is essentially absent when \textit{GABBR1} expression is at or
below the population mean and emerges only in CD4\(^+\) T cells with
elevated receptor levels.

Notably, several biologically implausible triplets received near-zero
PIPs, providing implicit validation of the model's specificity.
Interferon-\(\gamma\) (\textit{IFNG}) paired with either of its
receptors (\textit{IFNGR1}, PIP \(= 0.043\); \textit{IFNGR2},
PIP \(= 0.033\)) ranked near the bottom of the PIP distribution,
consistent with the well-established fact that IFN-\(\gamma\) is
produced primarily by Th1 CD4\(^+\) and CD8\(^+\) T cells rather
than B cells, making B cells an implausible sender in this
axis~\cite{schroder2004interferon}. The \textit{CCL28}--\textit{CCR10}
pair (PIP \(= 0.029\)) similarly received one of the lowest PIPs:
\textit{CCL28}--\textit{CCR10} mediates homing of T cells to skin and
mucosal epithelia rather than peripheral lymphoid communication, and
this tissue-specific axis is not expected to be active in peripheral
blood donors~\cite{pan2000ccl28,homey2002ccr10}. Finally, the
\textit{WNT7A}--\textit{FZD3} pair (PIP \(= 0.028\)) received the
lowest PIP of all 42 triplets; WNT signaling between B cells and
CD4\(^+\) T cells has no established biological basis in peripheral
blood~\cite{clevers2006wnt}. 

\begin{figure}[htbp]
\centering
\includegraphics[width=\textwidth]{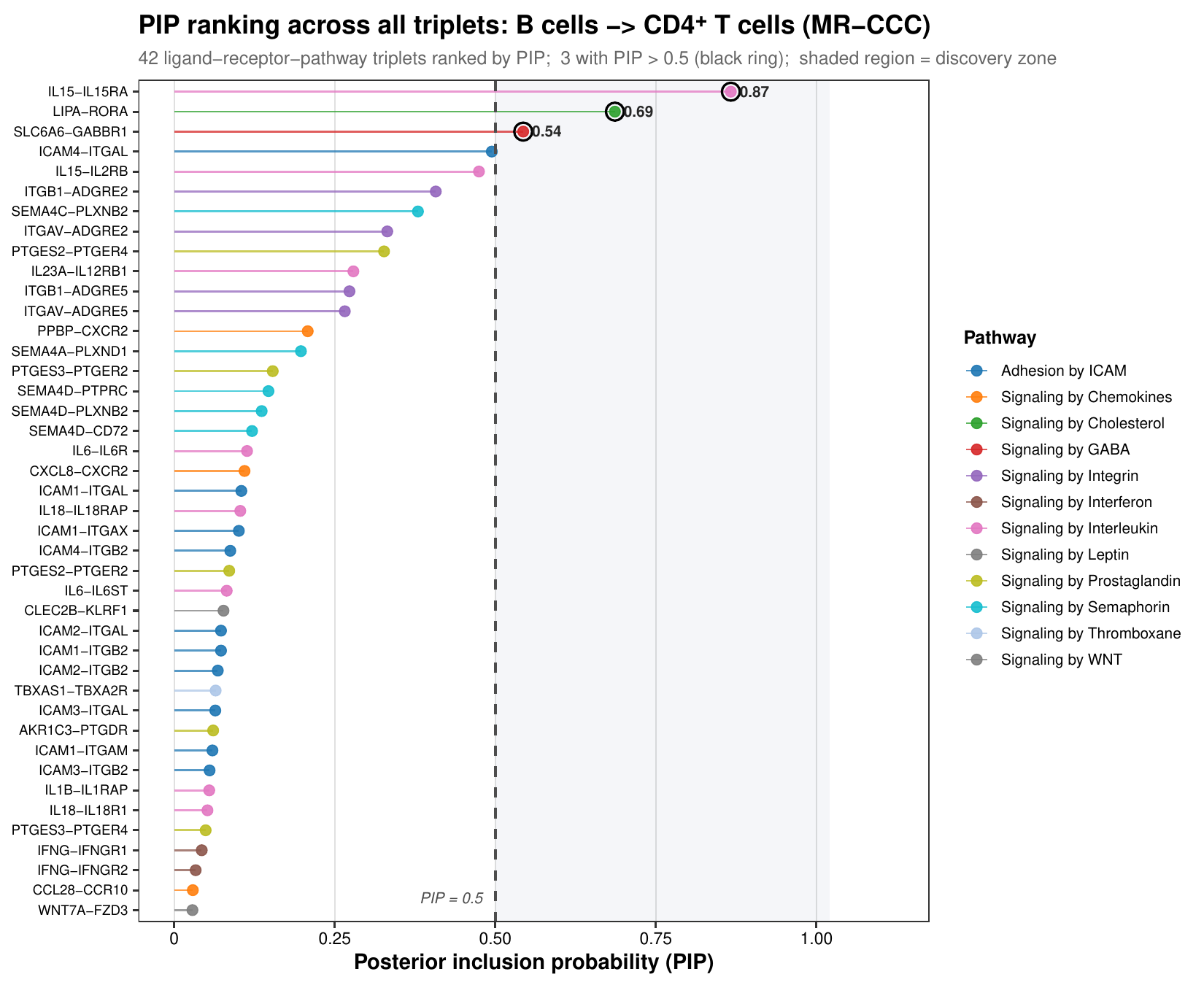}
\caption{\textbf{PIP ranking for the B cells\,$\rightarrow$\,CD4$^+$
  T cells analysis.} All 42 ligand--receptor--pathway triplets across 868
  donors. Each point represents one triplet, colored by pathway;
  the dashed vertical line marks the discovery threshold of
  PIP \(= 0.5\); the shaded region to the right is the discovery zone.
  Black rings identify the three discovered triplets:
  \textit{IL15}--\textit{IL15RA} (PIP \(= 0.867\)),
  \textit{LIPA}--\textit{RORA} (PIP \(= 0.686\)), and
  \textit{SLC6A6}--\textit{GABBR1} (PIP \(= 0.54\)).
  Biologically implausible pairs such as
  \textit{IFNG}--\textit{IFNGR1}/\textit{IFNGR2},
  \textit{CCL28}--\textit{CCR10}, and \textit{WNT7A}--\textit{FZD3}
  rank at the bottom of the distribution with PIPs below 0.05.}
\label{fig:supp_B_CD4_pip}
\end{figure}

\begin{figure}[htbp]
\centering
\includegraphics[width=\textwidth]{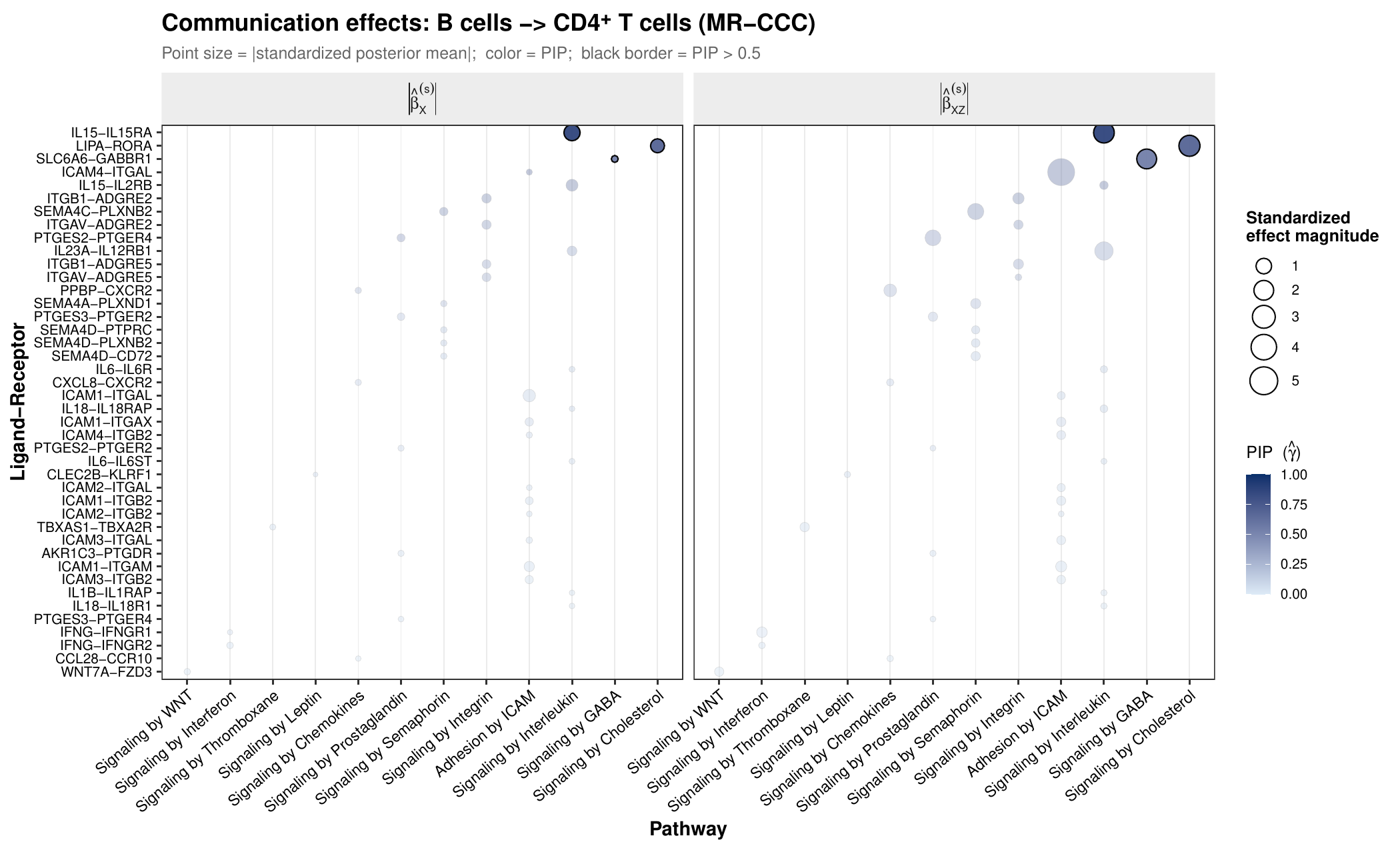}
\caption{\textbf{Standardized posterior effects for the B cells
  \(\rightarrow\) CD4\(^+\) T cells analysis.} Left panel: absolute
  main ligand effect \(|\hat{\beta}_X^{(s)}|\); right panel:
  absolute receptor-modulated interaction effect
  \(|\hat{\beta}_{XZ}^{(s)}|\). Point size encodes effect magnitude;
  fill color encodes PIP (dark blue \(=\) high PIP); black borders
  identify the three triplets with PIP \(> 0.5\). All three
  discovered triplets show large interaction components in the right
  panel, indicating that the causal communication effects are strongly
  modulated by receptor expression in the CD4\(^+\) T cell. Low-PIP
  triplets (faint, no black border) show small effect magnitudes and
  low color saturation throughout.}
\label{fig:supp_B_CD4_bubble}
\end{figure}

\begin{figure}[htbp]
\centering
\includegraphics[width=\textwidth]{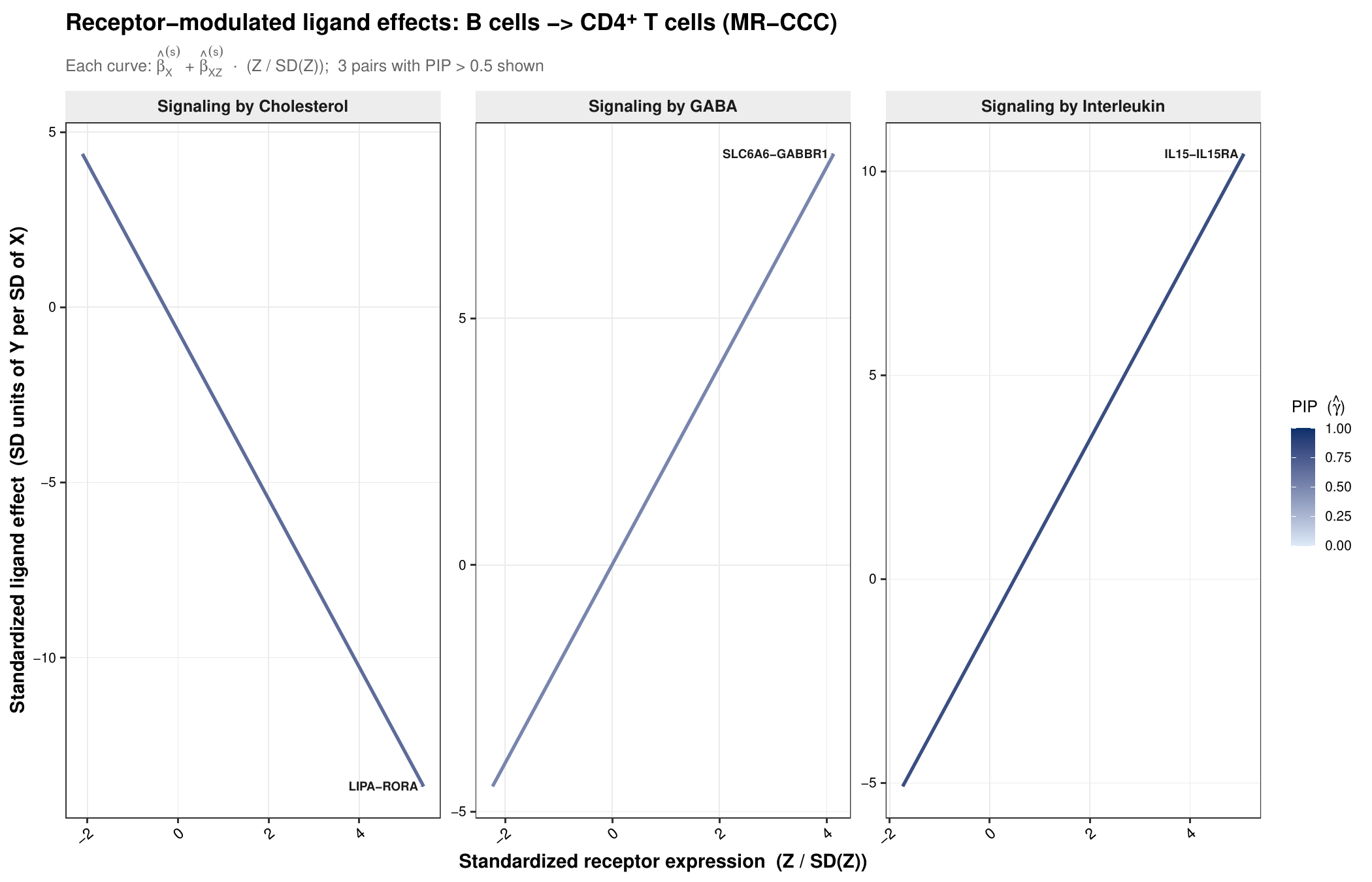}
\caption{\textbf{Receptor-modulated effect curves
  (\(\hat{\beta}_X + \hat{\beta}_{XZ} \cdot Z/\mathrm{SD}(Z)\))
  for the B cells \(\rightarrow\) CD4\(^+\) T cells analysis.}
  Only the three discovery pairs (PIP \(> 0.5\)) are displayed,
  grouped into the three pathway panels with confirmed discoveries
  (Signaling by Cholesterol, Signaling by GABA, and Signaling by
  Interleukin). In the Signaling by Interleukin panel, the
  \textit{IL15}--\textit{IL15RA} curve is steeply increasing and
  crosses zero near the mean receptor level, reflecting a
  receptor-dependent sign switch from suppression to activation.
  In the Signaling by Cholesterol panel, the
  \textit{LIPA}--\textit{RORA} curve is monotonically decreasing,
  indicating receptor-amplified suppression that strengthens as
  \textit{RORA} expression rises. In the Signaling by GABA panel,
  the \textit{SLC6A6}--\textit{GABBR1} curve passes near zero at
  the mean and rises steeply, consistent with a purely
  interaction-driven signal gated by \textit{GABBR1} receptor
  density.}
\label{fig:supp_B_CD4_curves}
\end{figure}

\FloatBarrier

\subsubsection{B Cells \(\rightarrow\) CD8\(^+\) T Cells}
\label{supp:B_CD8}

Across 845 donors, MR-CCC evaluated 42 ligand--receptor--pathway
triplets for the B cell (sender) to CD8\(^+\) T cell (receiver)
direction and identified two high-confidence causal communication
signals with PIP exceeding 0.5: \textit{IL15}--\textit{IL15RA}
(PIP \(= 0.554\)) and \textit{IL15}--\textit{IL2RB} (PIP \(= 0.518\)),
both within the Interleukin signaling pathway
(Figures~\ref{fig:supp_B_CD8_pip}--\ref{fig:supp_B_CD8_curves}).
Both discoveries implicate B cell-derived IL-15 as a causal
communicator to CD8\(^+\) T cells, engaging two distinct receptor
chains of the IL-15/IL-2 receptor complex.

The \textit{IL15}--\textit{IL15RA} triplet (PIP \(= 0.554\))
recapitulates the IL-15 transpresentation axis identified in the B
cells \(\rightarrow\) CD4\(^+\) T cells direction, here operating on
CD8\(^+\) T cells.
IL-15 displayed on B cell surfaces via IL-15R\(\alpha\) delivers
homeostatic survival and proliferation signals to CD8\(^+\) T cells,
a mechanism well established to be indispensable for the maintenance
of virus-specific memory CD8\(^+\) T
cells~\cite{waldmann2006il15,kennedy2000il15}.
Both the estimated main effect (\(\hat{\beta}_X = 0.662\)) and the
receptor-modulated interaction (\(\hat{\beta}_{XZ} = 0.589\)) are
positive, yielding a monotonically increasing effect curve:
CD8\(^+\) T cells expressing higher levels of \textit{IL15RA} show a
uniformly stronger positive causal response to B cell IL-15
production, with no sign reversal across the observed receptor
expression range.

The second discovered triplet, \textit{IL15}--\textit{IL2RB}
(PIP \(= 0.518\)), engages the IL-2R\(\beta\) chain (encoded by
\textit{IL2RB}; also known as CD122), which is shared between the
IL-15 and IL-2 receptor complexes~\cite{waldmann2006il15}.
On CD8\(^+\) T cells, signaling through the IL-2R\(\beta\)/\(\gamma_c\)
heterodimer supports homeostatic and antigen-driven
proliferation~\cite{schluns2003review}.
The positive main effect (\(\hat{\beta}_X = 0.796\)) indicates that B
cell IL-15 production generally up-regulates downstream Interleukin
pathway activity in CD8\(^+\) T cells; the negative receptor-modulated
interaction (\(\hat{\beta}_{XZ} = -0.647\)) reveals that this
stimulatory effect is progressively attenuated in cells with high
\textit{IL2RB} expression.
The effect curve crosses zero near \({\sim}1.2\) standard deviations
above the population-mean \textit{IL2RB} level and becomes mildly
suppressive at the highest receptor densities, consistent with
signaling saturation or competitive displacement by IL-2 at the shared
IL-2R\(\beta\) chain in CD8\(^+\) T cells that strongly up-regulate
this subunit.

Notably, several triplets received near-zero PIPs.
\textit{IFNG}--\textit{IFNGR1} (PIP \(= 0.032\)) and
\textit{IFNG}--\textit{IFNGR2} (PIP \(= 0.032\)) remained at the
bottom of the distribution, consistent with B cells not being a
physiological source of IFN-\(\gamma\) in peripheral
blood~\cite{schroder2004interferon}.
The \textit{LIPA}--\textit{RORA} triplet, which reached
PIP \(= 0.686\) in the B cells \(\rightarrow\) CD4\(^+\) T cells
direction, fell to PIP \(= 0.191\) here, suggesting that B
cell-derived cholesterol-catabolic signals preferentially engage
\textit{RORA}-mediated immunomodulatory programs in helper T cells
rather than cytotoxic T cells, consistent with \textit{RORA}'s
established role in Th17/Treg balance~\cite{yang2008rora}.
Similarly, \textit{SLC6A6}--\textit{GABBR1}, which was discovered in
the B cells \(\rightarrow\) CD4\(^+\) T cells direction
(PIP \(= 0.54\)), received a PIP of only 0.035 here, indicating that
the GABA-B receptor-mediated inhibitory axis from B cell taurine
transport operates in a CD4\(^+\)-selective manner.

\begin{figure}[htbp]
\centering
\includegraphics[width=\textwidth]{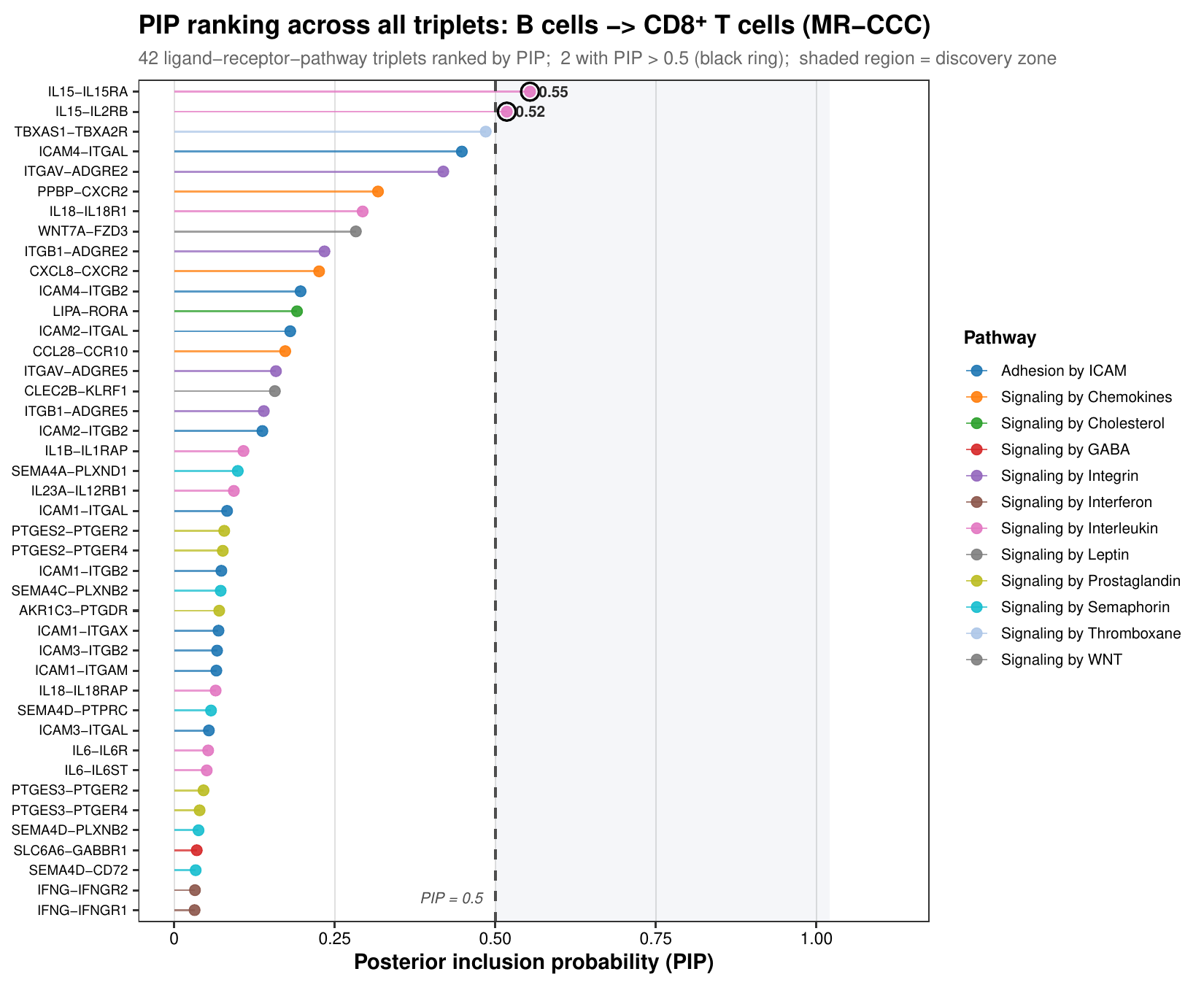}
\caption{\textbf{PIP ranking for the B cells\,$\rightarrow$\,CD8$^+$
  T cells analysis.} All 42 ligand--receptor--pathway triplets across 845
  donors. Points are colored by pathway; the dashed vertical line
  marks the discovery threshold of PIP \(= 0.5\); the shaded region
  to the right is the discovery zone. Black rings identify the two
  discovered triplets: \textit{IL15}--\textit{IL15RA}
  (PIP \(= 0.55\)) and \textit{IL15}--\textit{IL2RB}
  (PIP \(= 0.52\)), both within Signaling by Interleukin.}
\label{fig:supp_B_CD8_pip}
\end{figure}

\begin{figure}[htbp]
\centering
\includegraphics[width=\textwidth]{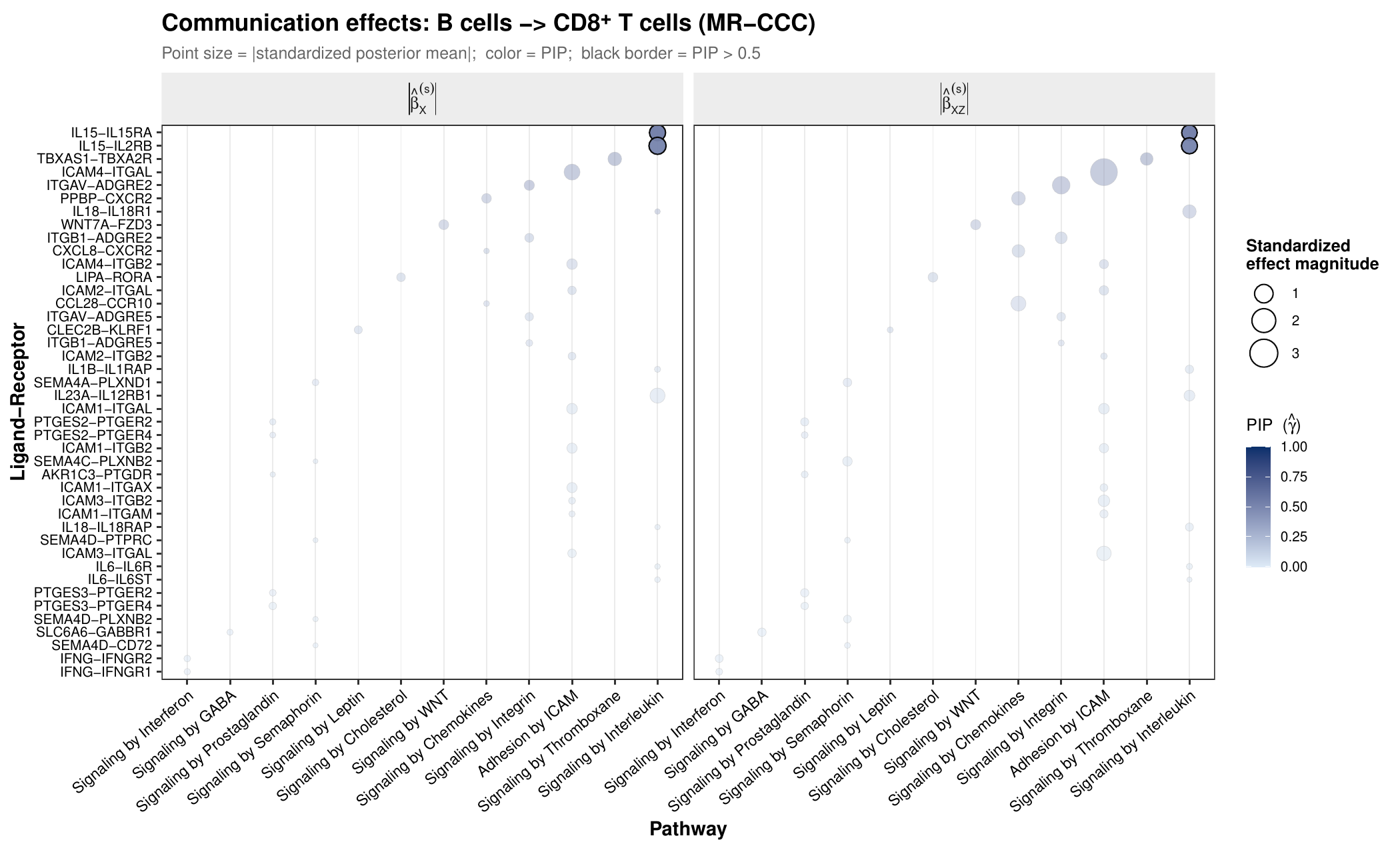}
\caption{\textbf{Standardized posterior effects for the B cells
  \(\rightarrow\) CD8\(^+\) T cells analysis.} Left panel: absolute
  main ligand effect \(|\hat{\beta}_X^{(s)}|\); right panel: absolute
  receptor-modulated interaction effect \(|\hat{\beta}_{XZ}^{(s)}|\).
  Point size encodes effect magnitude; fill color encodes PIP; black
  borders identify the two discovered triplets. Both
  \textit{IL15}--\textit{IL15RA} and \textit{IL15}--\textit{IL2RB}
  show appreciable effects in both panels, indicating a combination of
  a direct ligand effect and receptor-level modulation.}
\label{fig:supp_B_CD8_bubble}
\end{figure}

\begin{figure}[htbp]
\centering
\includegraphics[width=\textwidth]{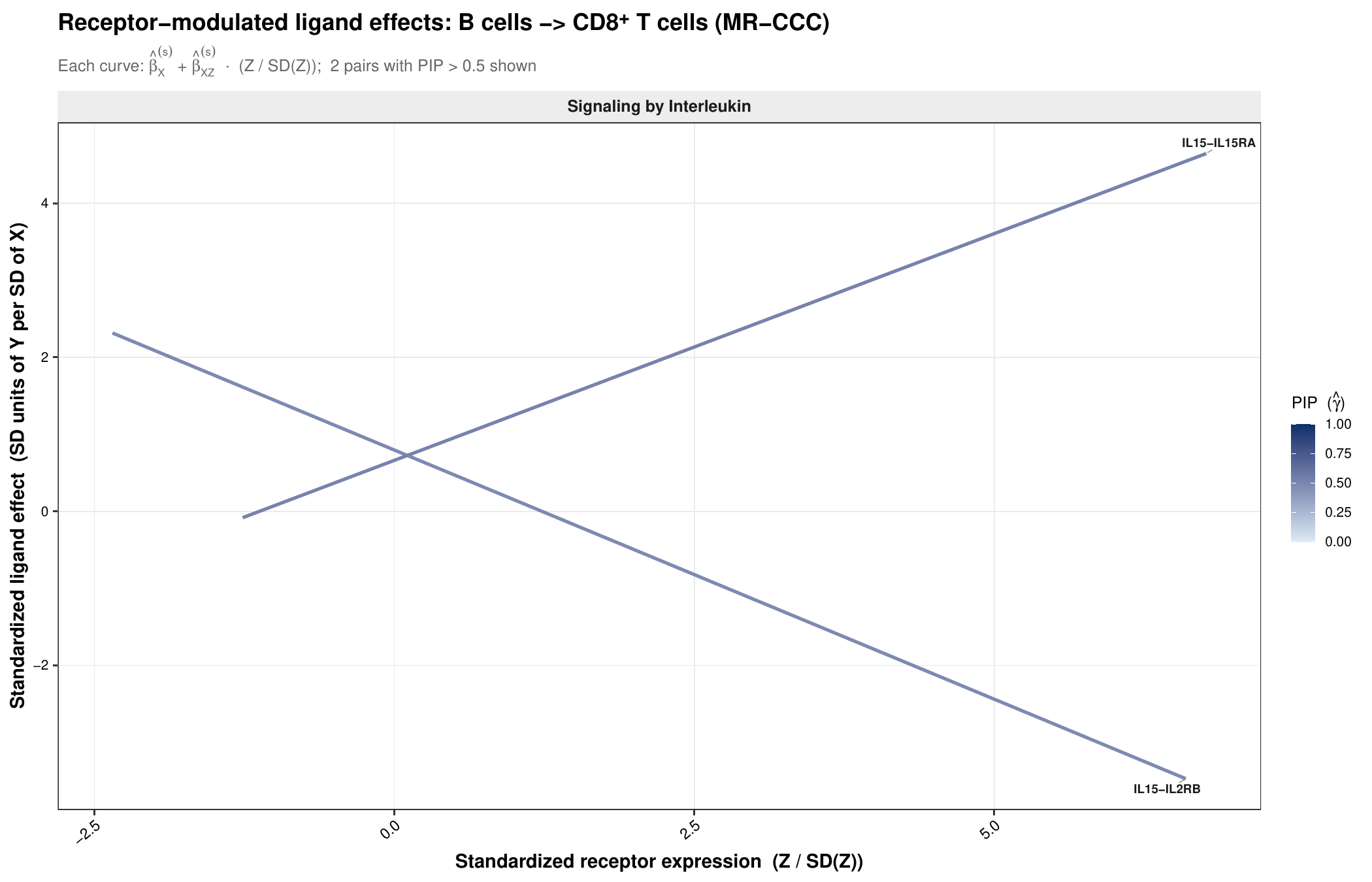}
\caption{\textbf{Receptor-modulated effect curves
  (\(\hat{\beta}_X + \hat{\beta}_{XZ} \cdot Z/\mathrm{SD}(Z)\))
  for the B cells \(\rightarrow\) CD8\(^+\) T cells analysis.}
  Only the two discovery pairs (PIP \(> 0.5\)) are displayed in the
  single pathway panel with confirmed discoveries (Signaling by
  Interleukin). The \textit{IL15}--\textit{IL15RA} curve is
  monotonically increasing (both \(\hat{\beta}_X\) and
  \(\hat{\beta}_{XZ}\) positive), while the
  \textit{IL15}--\textit{IL2RB} curve has a positive intercept but a
  negative slope, crossing zero near \({\sim}1.2\) standard deviations
  above the mean \textit{IL2RB} expression level.}
\label{fig:supp_B_CD8_curves}
\end{figure}

\FloatBarrier

\subsubsection{B Cells \(\rightarrow\) NK Cells}
\label{supp:B_NK}
Across 818 donors, MR-CCC evaluated 42 ligand--receptor--pathway
triplets for the B cell (sender) to NK cell (receiver) direction and
identified three high-confidence causal communication signals with PIP
exceeding 0.5: \textit{IL6}--\textit{IL6ST} within the Interleukin
signaling pathway (PIP \(= 0.972\)),
\textit{PTGES3}--\textit{PTGER2} within the Prostaglandin signaling
pathway (PIP \(= 0.556\)), and \textit{ITGAV}--\textit{ADGRE2}
within the Integrin signaling pathway (PIP \(= 0.506\))
(Figures~\ref{fig:supp_B_NK_pip}--\ref{fig:supp_B_NK_curves}).
All three discoveries are characterized by large receptor-modulated
interaction terms whose sign reversal occurs near the population-mean
receptor expression level, indicating that the direction as well as
the magnitude of each B cell communication effect is gated by
recipient NK cell receptor density.

The dominant signal, \textit{IL6}--\textit{IL6ST}
(PIP \(= 0.972\)), identifies B cell-derived IL-6 as a causal
regulator of NK cell Interleukin pathway activity mediated through
the signal-transducing subunit gp130 (encoded by \textit{IL6ST}).
B cells are well-established producers of IL-6, particularly
following TLR or BCR activation, and IL-6 engages gp130-containing
receptor complexes to activate the JAK1/JAK2--STAT1/STAT3 cascade in
target cells~\cite{heinrich2003il6}.
The modest positive main effect (\(\hat{\beta}_X = 0.394\)) combined
with a large negative interaction (\(\hat{\beta}_{XZ} = -2.09\))
yields an effect curve that crosses zero approximately \(0.19\)
standard deviations above the mean \textit{IL6ST} level and then
becomes steeply negative.
NK cells expressing \textit{IL6ST} at or above the population mean
therefore experience a progressively stronger suppressive effect of B
cell IL-6 on NK cell Interleukin pathway gene expression, consistent
with IL-6's documented ability to inhibit NK cell cytotoxicity and
effector programs when gp130-mediated signaling is
sustained~\cite{cifaldi2015nk}.

The second discovered triplet, \textit{PTGES3}--\textit{PTGER2}
(PIP \(= 0.556\)), links B cell prostaglandin E synthase 3
(\textit{PTGES3})---a cytosolic enzyme that converts prostaglandin
H\(_2\) to prostaglandin E\(_2\) (PGE\(_2\))---to the EP2
prostaglandin receptor (encoded by \textit{PTGER2}) on NK cells.
PGE\(_2\) signaling through the EP2 receptor activates the adenylyl
cyclase--cAMP--PKA axis, a well-characterized suppressor of NK cell
cytotoxicity and IFN-\(\gamma\) production~\cite{kalinski2012pge2}.
The negative main effect (\(\hat{\beta}_X = -0.416\)) and positive
interaction (\(\hat{\beta}_{XZ} = 1.57\)) produce a curve that crosses
zero approximately \(0.26\) standard deviations above the mean
\textit{PTGER2} level.
NK cells with low EP2 expression show a net suppressive response to B
cell PGE\(_2\) output, while NK cells expressing \textit{PTGER2}
above this threshold show progressive up-regulation of their own
Prostaglandin pathway activity, consistent with a PGE\(_2\)--EP2
feed-forward mechanism in which receptor-replete NK cells amplify
prostaglandin signaling in response to B cell-derived PGE\(_2\).

The third discovered triplet, \textit{ITGAV}--\textit{ADGRE2}
(PIP \(= 0.506\)), implicates a direct adhesion contact between B cell
integrin \(\alpha\)V (\textit{ITGAV}) and the adhesion G
protein-coupled receptor EMR2 (encoded by \textit{ADGRE2}) on NK
cells.
\textit{ADGRE2} (EMR2) carries extracellular EGF-like domains that
serve as integrin counter-ligands mediating cell--cell contact, and is
expressed on NK cells and myeloid cells in peripheral
blood~\cite{stacey2003emr2}.
The small positive main effect (\(\hat{\beta}_X = 0.136\)) and large
negative interaction (\(\hat{\beta}_{XZ} = -1.47\)) yield a curve
that crosses zero within \(0.1\) standard deviations of the mean
\textit{ADGRE2} expression level and becomes strongly negative at high
receptor density.
NK cells expressing high levels of \textit{ADGRE2} therefore
experience a strongly suppressive effect of B cell \(\alpha\)V integrin
engagement, suggesting that \(\alpha\)V--EMR2 contact at the B--NK
cell interface inhibits NK cell Integrin pathway activity in an
\textit{ADGRE2} dose-dependent manner.

Notably, several triplets received low PIPs.
\textit{IL15}--\textit{IL15RA} (PIP \(= 0.373\)) and
\textit{IL15}--\textit{IL2RB} (PIP \(= 0.353\)) were the
highest-ranking non-discovered signals but fell below the discovery
threshold, suggesting that B cells are not a dominant cellular source
of transpresented IL-15 for NK cells in peripheral blood; the primary
providers of NK cell-directed IL-15 are dendritic cells and
macrophages~\cite{fehniger2001il15,kennedy2000il15}.
\textit{IFNG}--\textit{IFNGR1} (PIP \(= 0.171\)) and
\textit{IFNG}--\textit{IFNGR2} (PIP \(= 0.060\)) again received
near-zero support, consistent with B cells not being a physiological
source of IFN-\(\gamma\) in peripheral
blood~\cite{schroder2004interferon}.

\begin{figure}[htbp]
\centering
\includegraphics[width=\textwidth]{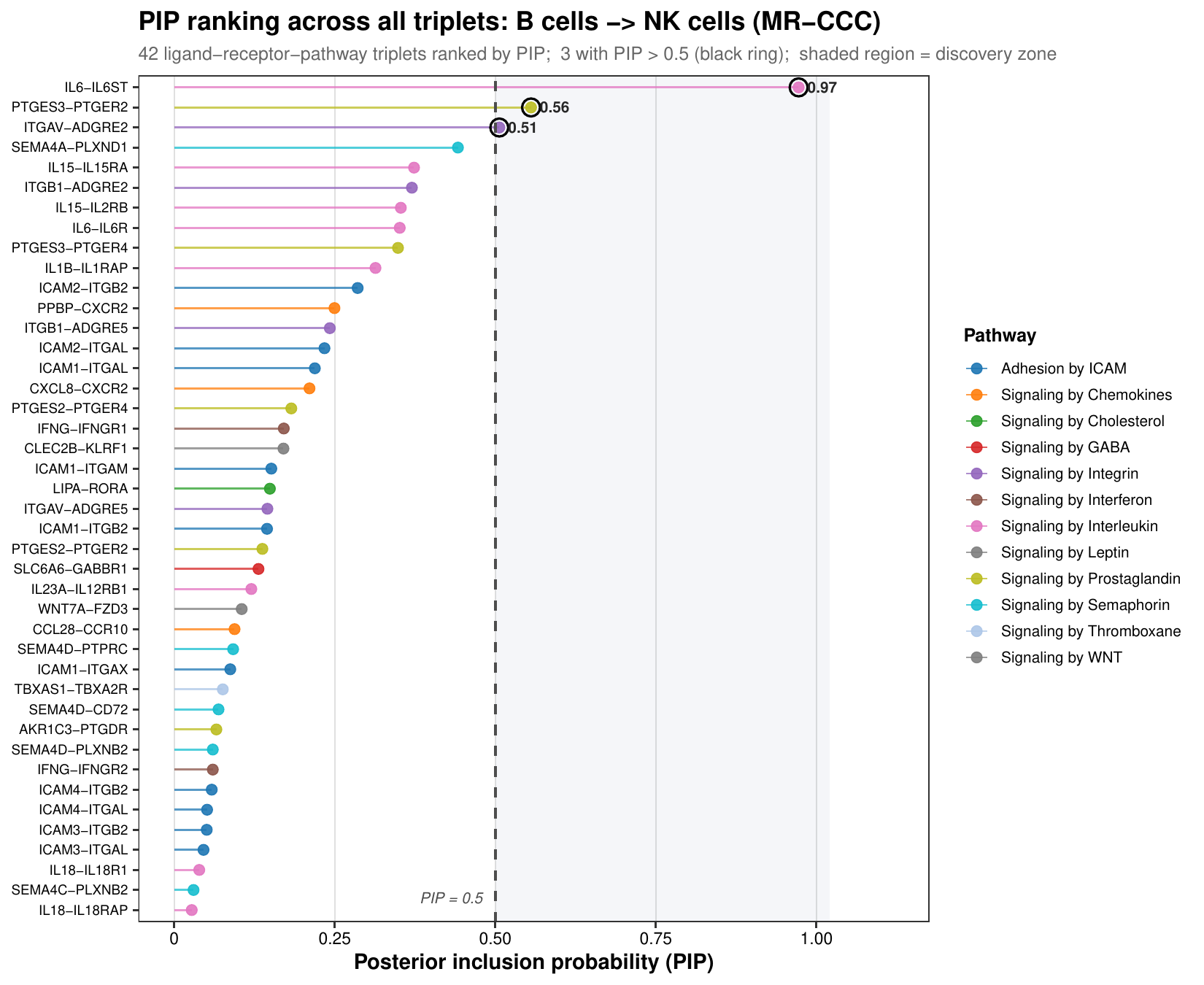}
\caption{\textbf{PIP ranking for the B cells\,$\rightarrow$\,NK cells analysis.} All 42 ligand--receptor--pathway triplets across 818 donors.
  Points are colored by pathway; the dashed vertical line marks the
  discovery threshold of PIP \(= 0.5\); the shaded region to the
  right is the discovery zone. Black rings identify the three
  discovered triplets: \textit{IL6}--\textit{IL6ST}
  (PIP \(= 0.97\)), \textit{PTGES3}--\textit{PTGER2}
  (PIP \(= 0.56\)), and \textit{ITGAV}--\textit{ADGRE2}
  (PIP \(= 0.51\)).}
\label{fig:supp_B_NK_pip}
\end{figure}

\begin{figure}[htbp]
\centering
\includegraphics[width=\textwidth]{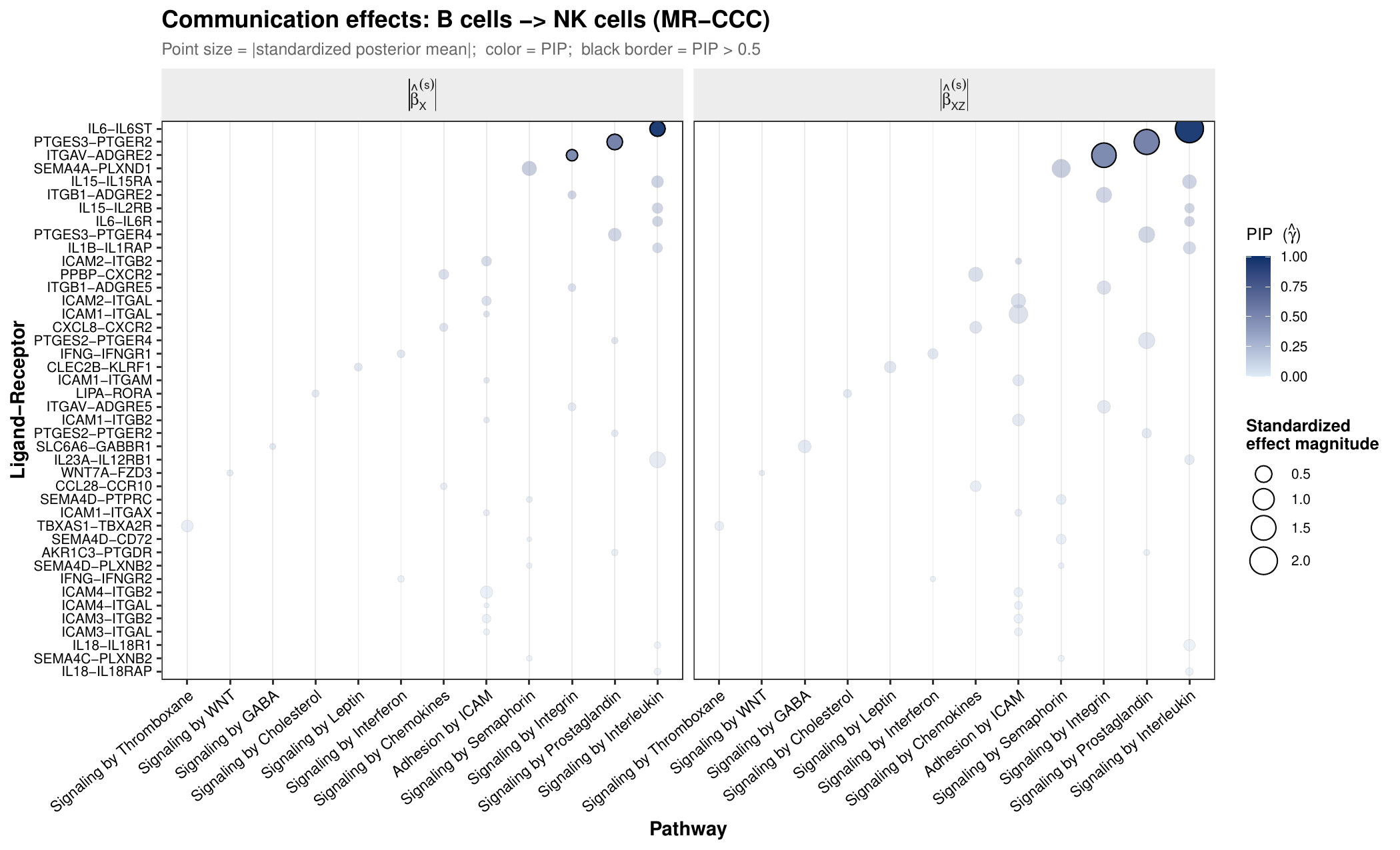}
\caption{\textbf{Standardized posterior effects for the B cells
  \(\rightarrow\) NK cells analysis.} Left panel: absolute main ligand
  effect \(|\hat{\beta}_X^{(s)}|\); right panel: absolute
  receptor-modulated interaction effect
  \(|\hat{\beta}_{XZ}^{(s)}|\). Point size encodes effect magnitude;
  fill color encodes PIP; black borders identify the three discovered
  triplets. All three show dominant interaction components in the
  right panel relative to their main effects in the left panel,
  indicating receptor expression-gated communication effects.}
\label{fig:supp_B_NK_bubble}
\end{figure}

\begin{figure}[htbp]
\centering
\includegraphics[width=\textwidth]{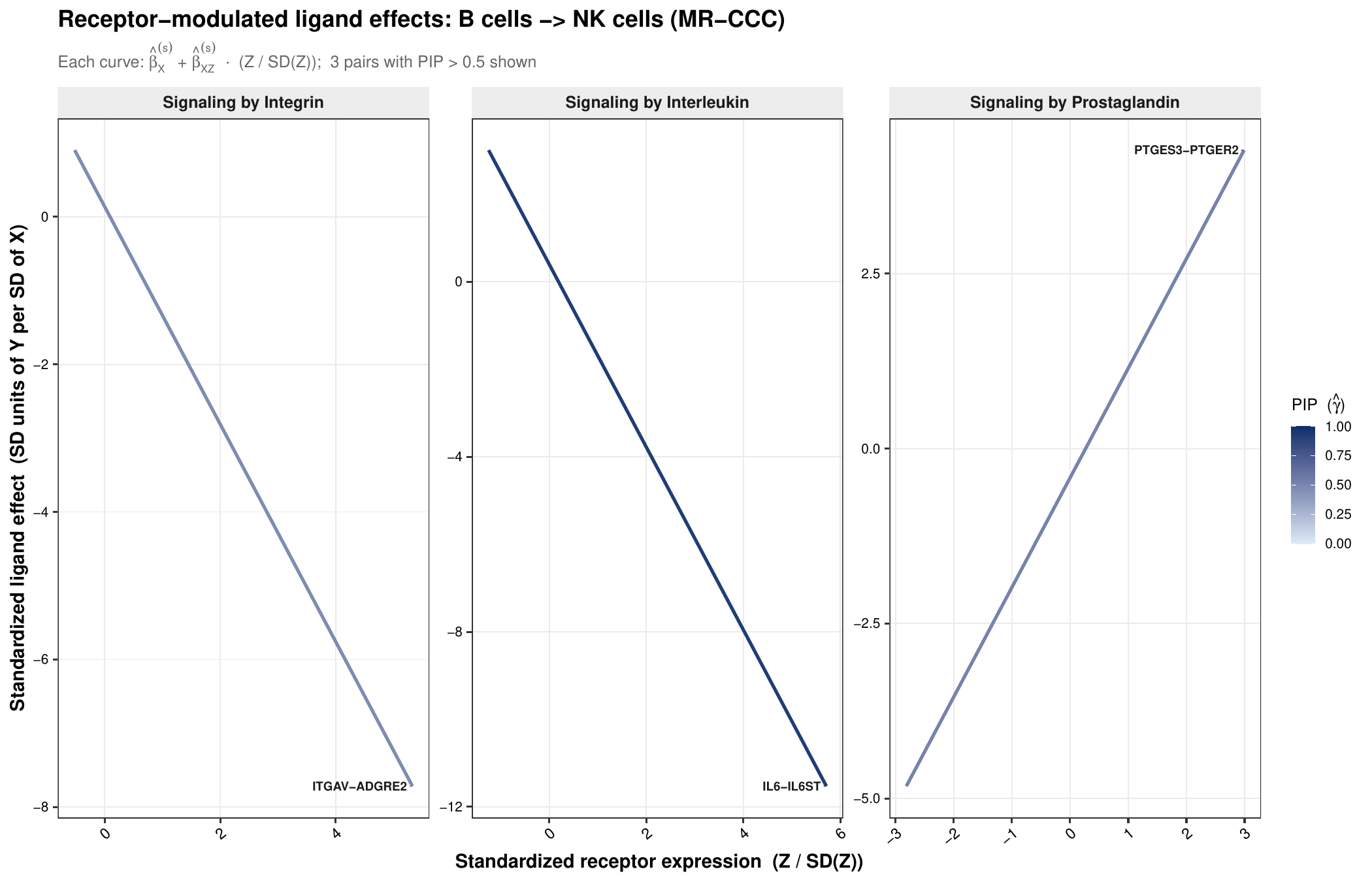}
\caption{\textbf{Receptor-modulated effect curves
  (\(\hat{\beta}_X + \hat{\beta}_{XZ} \cdot Z/\mathrm{SD}(Z)\))
  for the B cells \(\rightarrow\) NK cells analysis.} Only the three
  discovery pairs (PIP \(> 0.5\)) are displayed, grouped into the
  three pathway panels with confirmed discoveries (Signaling by
  Integrin, Signaling by Interleukin, and Signaling by Prostaglandin).
  In the Signaling by Interleukin panel, the
  \textit{IL6}--\textit{IL6ST} curve is steeply decreasing, crossing
  zero near the mean receptor level and becoming strongly negative at
  high \textit{IL6ST} expression. In the Signaling by Prostaglandin
  panel, the \textit{PTGES3}--\textit{PTGER2} curve has a positive
  slope that crosses zero just above the mean \textit{PTGER2} level,
  transitioning from suppression to amplification. In the Signaling
  by Integrin panel, the \textit{ITGAV}--\textit{ADGRE2} curve
  crosses zero near the mean and becomes strongly negative at high
  \textit{ADGRE2} expression.}
\label{fig:supp_B_NK_curves}
\end{figure}

\FloatBarrier

\subsubsection{B Cells \(\rightarrow\) Monocytes}
\label{supp:B_Mono}

Across 665 donors, MR-CCC evaluated 41 ligand--receptor--pathway triplets for the B cell (sender) to Monocyte (receiver) direction and
identified two high-confidence causal communication signals with PIP
exceeding 0.5: \textit{ICAM1}--\textit{ITGAX} within the Adhesion by
ICAM pathway (PIP \(= 0.562\)) and \textit{IL6}--\textit{IL6ST}
within the Interleukin signaling pathway (PIP \(= 0.554\))
(Figures~\ref{fig:supp_B_Mono_pip}--\ref{fig:supp_B_Mono_curves}).
Both discoveries exhibit a strong receptor-modulated interaction
relative to the main ligand effect, and both yield sign-reversing
effect curves near the population-mean receptor expression level.

The top-ranked triplet, \textit{ICAM1}--\textit{ITGAX}
(PIP \(= 0.562\)), implicates B cell intercellular adhesion molecule 1
(ICAM-1, encoded by \textit{ICAM1}) as a causal regulator of monocyte
Adhesion pathway activity through integrin \(\alpha\)X (CD11c, encoded
by \textit{ITGAX}).
ICAM-1 is constitutively expressed on B cells and is further
up-regulated upon activation; it serves as the canonical counter-
receptor for \(\beta_2\) integrins on myeloid
cells~\cite{dustin1986icam}.
CD11c (\(\alpha\)X chain) pairs with \(\beta_2\) (CD18) to form
p150,95 (CR4), a complement receptor expressed at high levels on
classical monocytes and conventional dendritic
cells~\cite{myones1988cd11c}.
The positive main effect (\(\hat{\beta}_X = 0.291\)) and large
negative interaction (\(\hat{\beta}_{XZ} = -2.04\)) yield a curve that crosses zero approximately \(0.14\) standard deviations above the
mean \textit{ITGAX} expression level and becomes strongly negative at
high receptor density.
Monocytes expressing \textit{ITGAX} below this threshold show a
modestly positive Adhesion pathway response to B cell ICAM-1, whereas
monocytes with high CD11c expression---a phenotype associated with a
more differentiated, dendritic cell-like state---experience a
progressively stronger suppressive effect of B cell ICAM-1 engagement
on their Adhesion pathway activity, consistent with contact-mediated
restraint of DC-like monocyte activation through sustained
\(\alpha\)X\(\beta_2\)--ICAM-1 ligation.

The second discovered triplet, \textit{IL6}--\textit{IL6ST}
(PIP \(= 0.554\)), again identifies B cell-derived IL-6 signaling
through gp130 (\textit{IL6ST}) as a causal communication axis,
paralleling the discovery in the B cells \(\rightarrow\) NK cells
direction.
However, the receptor-modulated interaction term here is large and
\emph{positive} (\(\hat{\beta}_{XZ} = 1.35\)), in contrast to the
large \emph{negative} interaction observed for NK cells
(\(\hat{\beta}_{XZ} = -2.09\)).
The modest positive main effect (\(\hat{\beta}_X = 0.153\)) combined
with the positive interaction produces a curve that crosses zero
approximately \(0.11\) standard deviations below the mean
\textit{IL6ST} level, making the effect positive across essentially
the full observed receptor expression range and increasingly so at
high gp130 density.
Monocytes expressing high levels of \textit{IL6ST} therefore show a
strongly amplified stimulatory response to B cell IL-6, consistent
with the well-established role of IL-6/gp130 signaling in promoting
monocyte and macrophage pro-inflammatory activation, acute-phase
responses, and cytokine production~\cite{tanaka2014il6,heinrich2003il6}.
The direction-specific sign of the \textit{IL6ST} interaction
term---activating for monocytes, suppressive for NK cells---reflects
the fundamentally different downstream consequences of sustained gp130
signaling in myeloid versus innate lymphoid cell types.

Notably, several triplets received near-zero PIPs.
\textit{IFNG}--\textit{IFNGR1} (PIP \(= 0.047\)) and
\textit{IFNG}--\textit{IFNGR2} (PIP \(= 0.032\)) remained at the
bottom of the distribution, again consistent with B cells not being a
physiological source of IFN-\(\gamma\) in peripheral
blood~\cite{schroder2004interferon}.
\textit{IL15}--\textit{IL15RA} (PIP \(= 0.143\)) and
\textit{IL15}--\textit{IL2RB} (PIP \(= 0.167\)) both fell well below
the discovery threshold, consistent with monocytes not being primary
targets of IL-15-mediated homeostatic signaling; IL-15 acts
principally on T and NK cells rather than on circulating
monocytes~\cite{waldmann2006il15}.

\begin{figure}[htbp]
\centering
\includegraphics[width=\textwidth]{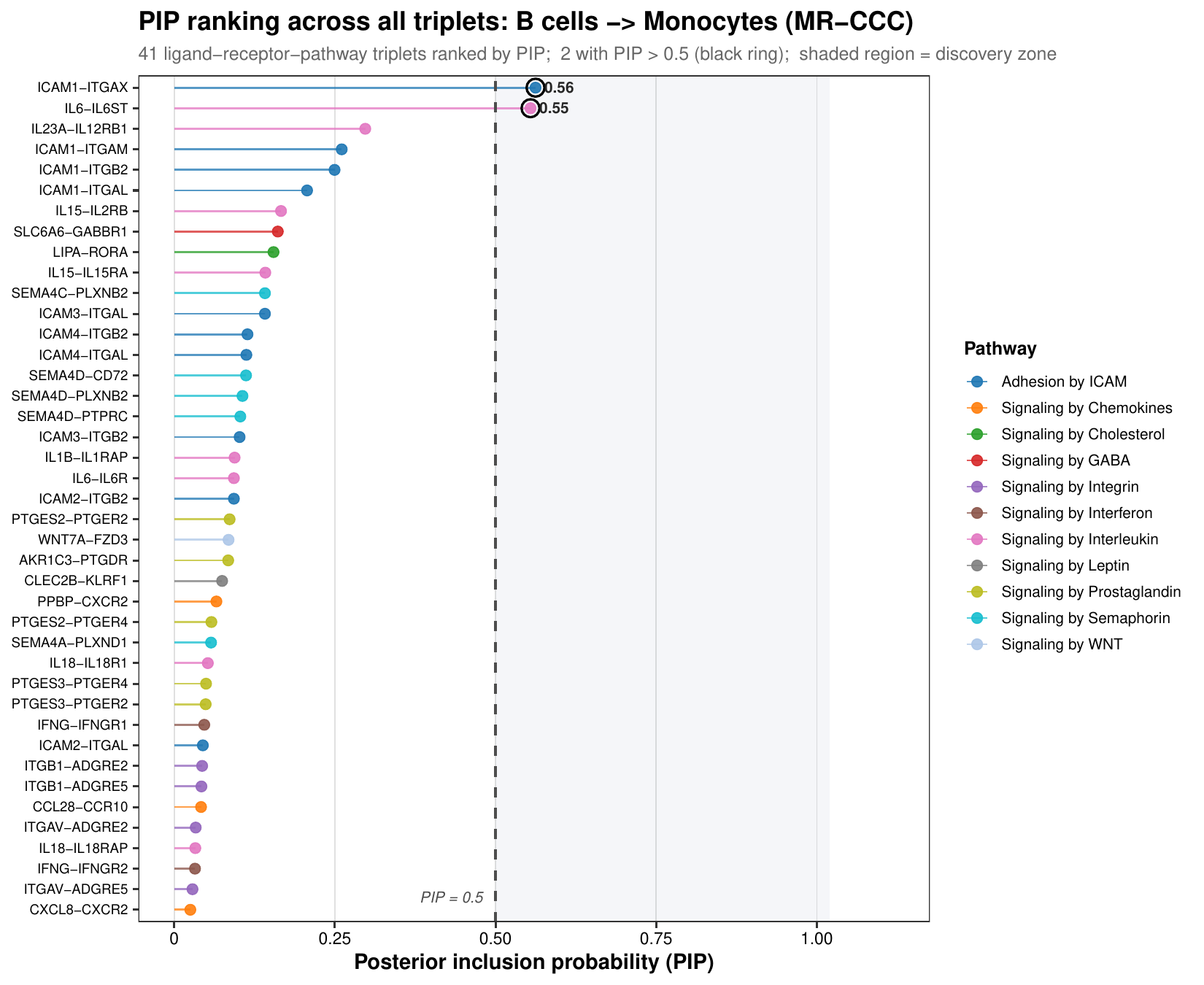}
\caption{\textbf{PIP ranking for the B cells\,$\rightarrow$\,Monocytes analysis.} All 41 ligand--receptor--pathway triplets across 665 donors.
  Points are colored by pathway; the dashed vertical line marks the
  discovery threshold of PIP \(= 0.5\); the shaded region to the
  right is the discovery zone. Black rings identify the two
  discovered triplets: \textit{ICAM1}--\textit{ITGAX}
  (PIP \(= 0.56\)) and \textit{IL6}--\textit{IL6ST}
  (PIP \(= 0.55\)).}
\label{fig:supp_B_Mono_pip}
\end{figure}

\begin{figure}[htbp]
\centering
\includegraphics[width=\textwidth]{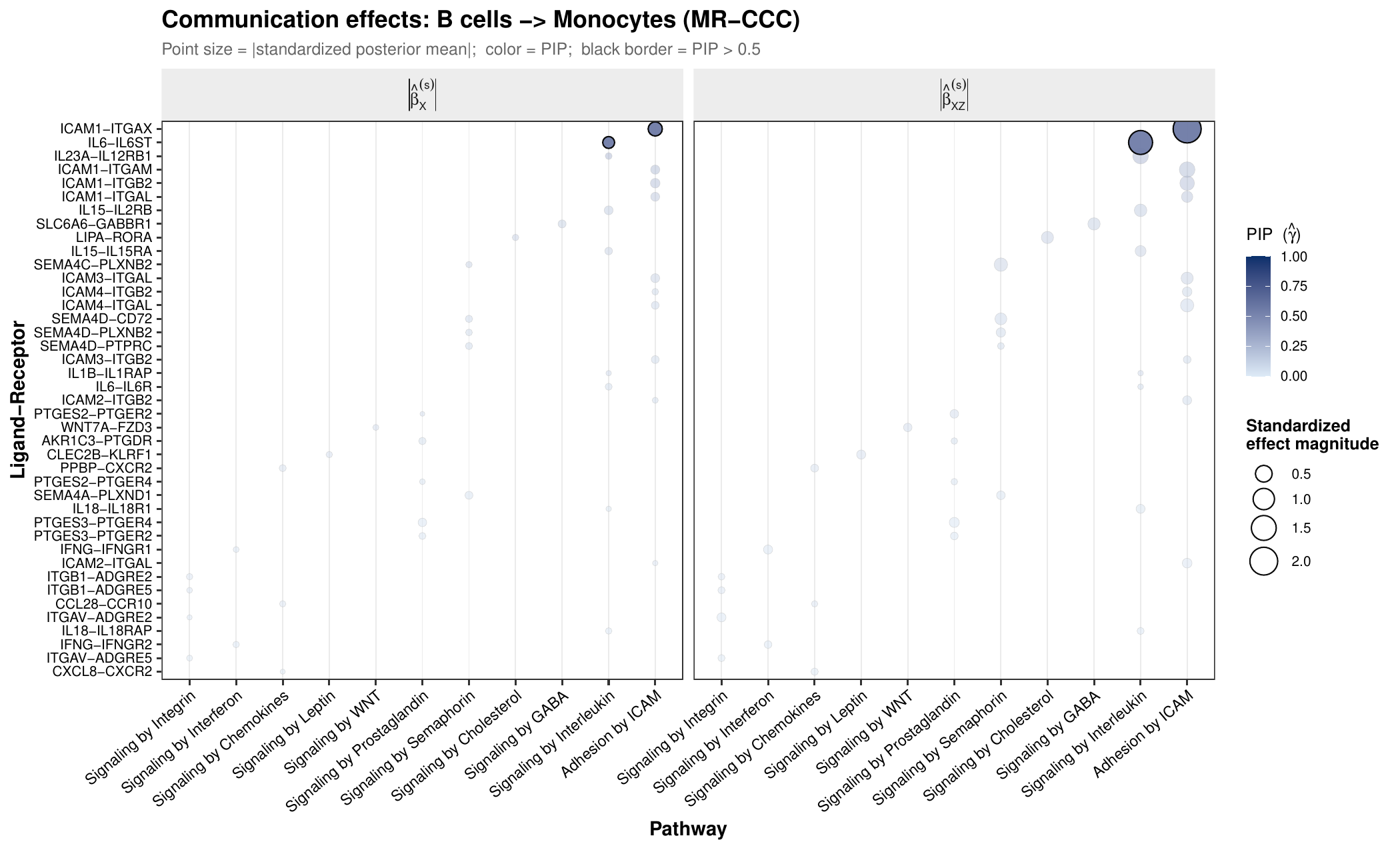}
\caption{\textbf{Standardized posterior effects for the B cells
  \(\rightarrow\) Monocytes analysis.} Left panel: absolute main
  ligand effect \(|\hat{\beta}_X^{(s)}|\); right panel: absolute
  receptor-modulated interaction effect
  \(|\hat{\beta}_{XZ}^{(s)}|\). Point size encodes effect magnitude;
  fill color encodes PIP; black borders identify the two discovered
  triplets. Both \textit{ICAM1}--\textit{ITGAX} and
  \textit{IL6}--\textit{IL6ST} show substantially larger interaction
  components in the right panel than main effects in the left panel,
  indicating interaction-dominated communication effects.}
\label{fig:supp_B_Mono_bubble}
\end{figure}

\begin{figure}[htbp]
\centering
\includegraphics[width=\textwidth]{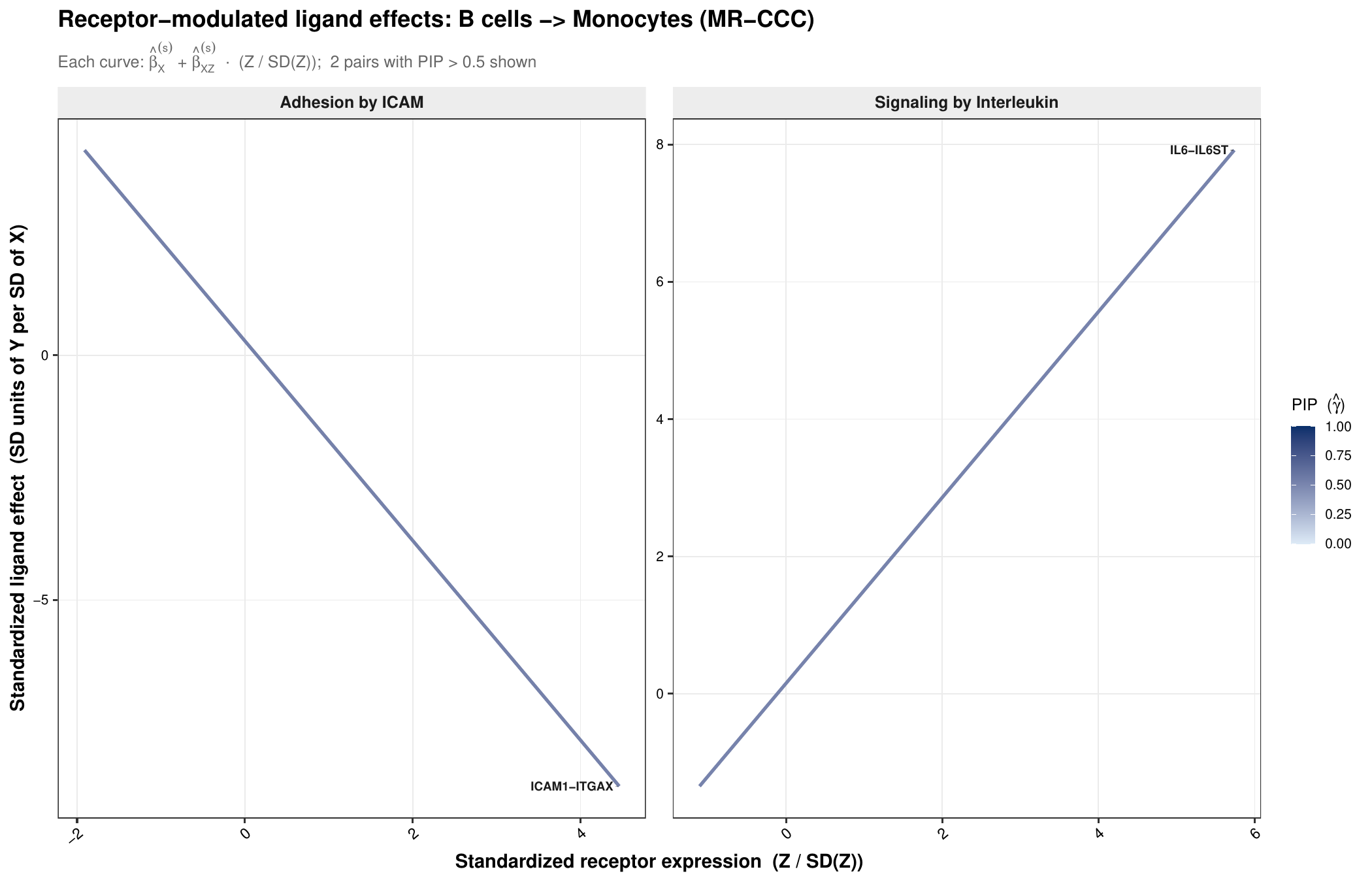}
\caption{\textbf{Receptor-modulated effect curves
  (\(\hat{\beta}_X + \hat{\beta}_{XZ} \cdot Z/\mathrm{SD}(Z)\))
  for the B cells \(\rightarrow\) Monocytes analysis.} Only the two
  discovery pairs (PIP \(> 0.5\)) are displayed, grouped into the two
  pathway panels with confirmed discoveries (Adhesion by ICAM and
  Signaling by Interleukin). In the Adhesion by ICAM panel, the
  \textit{ICAM1}--\textit{ITGAX} curve is steeply decreasing,
  crossing zero just above the mean \textit{ITGAX} level and becoming
  strongly negative at high CD11c expression. In the Signaling by
  Interleukin panel, the \textit{IL6}--\textit{IL6ST} curve has a
  positive slope crossing zero just below the mean receptor level,
  producing a monotonically increasing effect that is strongly
  positive at high \textit{IL6ST} expression.}
\label{fig:supp_B_Mono_curves}
\end{figure}

\FloatBarrier

\clearpage
\subsection{CD4\texorpdfstring{\(^+\)}{+} T Cells as Sender}

\subsubsection{CD4\(^+\) T Cells \(\rightarrow\) B Cells}
\label{supp:CD4_B}
Across 868 donors, MR-CCC evaluated 42 ligand--receptor--pathway
triplets for the CD4\(^+\) T cell (sender) to B cell (receiver)
direction and identified ten high-confidence causal communication
signals with PIP exceeding 0.5
(Figures~\ref{fig:supp_CD4_B_pip}--\ref{fig:supp_CD4_B_curves}),
the largest number of discoveries in any sender--receiver direction analyzed here.
The ten triplets span four pathway classes: six Adhesion by ICAM
pairs (\textit{ICAM1}--\textit{ITGAL}, PIP \(= 0.658\);
\textit{ICAM1}--\textit{ITGAM}, PIP \(= 0.639\);
\textit{ICAM2}--\textit{ITGAL}, PIP \(= 0.947\);
\textit{ICAM2}--\textit{ITGB2}, PIP \(= 0.749\);
\textit{ICAM3}--\textit{ITGAL}, PIP \(= 0.504\);
\textit{ICAM4}--\textit{ITGAL}, PIP \(= 0.593\)), two Signaling by
Integrin pairs (\textit{ITGB1}--\textit{ADGRE5}, PIP \(= 0.992\);
\textit{ITGB1}--\textit{ADGRE2}, PIP \(= 0.606\)), one Signaling by
Interleukin triplet (\textit{IL23A}--\textit{IL12RB1},
PIP \(= 0.798\)), and one Signaling by Thromboxane triplet
(\textit{TBXAS1}--\textit{TBXA2R}, PIP \(= 0.595\)).

\noindent\textbf{Adhesion by ICAM signals.}
The six discovered ICAM--integrin triplets collectively reflect the
central role of adhesion molecule contacts at the CD4\(^+\) T
cell--B cell immunological synapse, where ICAM family members on T
cells engage \(\beta_2\) integrins on B cells to stabilize
antigen-recognition and co-stimulatory
interactions~\cite{springer1990adhesion,dustin1986icam}.
Despite sharing the same physical interface, the six pairs exhibit
strikingly heterogeneous receptor-modulated interaction terms,
producing a characteristic fan of effect curves in the Adhesion by
ICAM panel (Figure~\ref{fig:supp_CD4_B_curves}).
\textit{ICAM2}--\textit{ITGAL} (PIP \(= 0.947\)) and
\textit{ICAM1}--\textit{ITGAM} (PIP \(= 0.639\)) both show
sign-reversing curves: negative main effects
(\(\hat{\beta}_X = -0.368\) and \(-0.614\), respectively) combined
with large positive interactions
(\(\hat{\beta}_{XZ} = 2.59\) and \(2.25\)) mean that B cells
expressing LFA-1 (\textit{ITGAL}) or Mac-1 (\textit{ITGAM}) above
approximately 0.14 and 0.27 standard deviations above the population
mean transition from a suppressed to a progressively stimulated
Adhesion pathway response to T cell ICAM engagement.
\textit{ICAM1}--\textit{ITGAL} (PIP \(= 0.658\),
\(\hat{\beta}_X = -0.740\), \(\hat{\beta}_{XZ} = 0.321\)) shows a
predominantly suppressive curve with a modest positive interaction
that partially attenuates this suppression at the highest LFA-1
densities.
\textit{ICAM2}--\textit{ITGB2} (PIP \(= 0.749\),
\(\hat{\beta}_X = -0.346\), \(\hat{\beta}_{XZ} = -0.241\)) is
monotonically suppressive, with both a negative main effect and a
negative interaction delivering a consistent down-regulation of B
cell Adhesion pathway activity regardless of \(\beta_2\) expression
level.
In contrast, \textit{ICAM3}--\textit{ITGAL} (PIP \(= 0.504\),
\(\hat{\beta}_X = 0.059\), \(\hat{\beta}_{XZ} = -4.79\)) and
\textit{ICAM4}--\textit{ITGAL} (PIP \(= 0.593\),
\(\hat{\beta}_X = -0.017\), \(\hat{\beta}_{XZ} = -5.65\)) carry the
steepest negative interaction terms in the entire analysis, with
near-zero main effects and dominant negative interactions that
produce strongly receptor-amplified suppression: B cells with high
LFA-1 expression experience a steeply increasing suppression of
Adhesion pathway activity upon contact with T cell ICAM3 or ICAM4.
The co-discovery of stimulatory and suppressive ICAM--LFA-1 signals
through different ICAM ligands is consistent with the architectural organization of the immunological synapse, where different ICAM
family members occupy distinct supramolecular activation cluster
domains and fulfill complementary activation and negative-feedback roles~\cite{springer1990adhesion}.

\noindent\textbf{Integrin--ADGRE signals.}
The highest-PIP discovery in this direction,
\textit{ITGB1}--\textit{ADGRE5} (PIP \(= 0.992\)), implicates CD4\(^+\)
T cell \(\beta_1\) integrin (\textit{ITGB1}) as a causal activator of
B cell Integrin pathway activity through the adhesion G
protein-coupled receptor CD97 (encoded by \textit{ADGRE5}).
CD97 is expressed on resting and activated B cells and serves as a
multifunctional adhesion receptor whose extracellular EGF-like domains
mediate cell--cell contact~\cite{hamann1996cd97}.
Both the main effect (\(\hat{\beta}_X = 0.498\)) and the interaction
(\(\hat{\beta}_{XZ} = 3.63\)) are positive, yielding a monotonically
increasing curve that crosses zero approximately \(0.14\) standard
deviations below the mean \textit{ADGRE5} level: across essentially
the full observed receptor expression range, T cell \(\beta_1\)
integrin engagement potentiates B cell Integrin pathway activity, with
this potentiation growing steeply in B cells expressing high CD97.
The companion triplet \textit{ITGB1}--\textit{ADGRE2}
(PIP \(= 0.606\), \(\hat{\beta}_X = 0.309\), \(\hat{\beta}_{XZ} = 1.48\))
shows a parallel, though weaker, stimulatory and interaction-amplified
pattern through the related adhesion receptor EMR2 (encoded by
\textit{ADGRE2}), which also carries EGF-like counter-receptor
domains~\cite{stacey2003emr2}.

\noindent\textbf{Interleukin signal.}
The \textit{IL23A}--\textit{IL12RB1} triplet (PIP \(= 0.798\))
identifies CD4\(^+\) T cell-derived IL-23 subunit \(\alpha\)
(\textit{IL23A}, the p19 subunit) as a causal regulator of B cell
Interleukin pathway activity through the IL-12R\(\beta_1\) chain
(encoded by \textit{IL12RB1}).
IL-23 is assembled from IL-23A and the shared p40 subunit (IL-12B),
and its cognate receptor complex on B cells incorporates
IL-12R\(\beta_1\) as the signal-transducing shared
chain~\cite{oppmann2000il23}.
The modest positive main effect (\(\hat{\beta}_X = 0.186\)) and large
positive interaction (\(\hat{\beta}_{XZ} = 3.16\)) produce a curve
that crosses zero approximately \(0.06\) standard deviations below the
mean \textit{IL12RB1} level, making the stimulatory effect positive
across virtually the entire observed receptor expression range and
strongly amplified in B cells with high IL-12R\(\beta_1\) expression.

\noindent\textbf{Thromboxane signal.}
\textit{TBXAS1}--\textit{TBXA2R} (PIP \(= 0.595\)) links thromboxane
A\(_2\) synthase 1 (\textit{TBXAS1}) expressed in CD4\(^+\) T cells to
the thromboxane A\(_2\) receptor (TP receptor, encoded by
\textit{TBXA2R}) on B cells.
Activated CD4\(^+\) T cells produce thromboxane A\(_2\) (TXA\(_2\)) via
TBXAS1, and TXA\(_2\) signaling through TBXA2R on lymphocytes regulates
cell activation, migration, and cytokine
production~\cite{nakahata2008tbxa2r}.
Both the main effect (\(\hat{\beta}_X = -0.620\)) and the
receptor-modulated interaction (\(\hat{\beta}_{XZ} = -5.47\)) are
negative, yielding a monotonically and steeply decreasing curve that
crosses zero approximately \(0.11\) standard deviations below the mean
\textit{TBXA2R} level.
B cells expressing \textit{TBXA2R} at or above the population mean
therefore experience an increasingly strong suppression of their own
Thromboxane pathway activity as CD4\(^+\) T cell TXA\(_2\) production
rises, consistent with a T cell-mediated negative-feedback loop that
dampens B cell autologous thromboxane signaling in proportion to TP
receptor density.

\noindent\textbf{Low-PIP pairs.}
\textit{IFNG}--\textit{IFNGR1} (PIP \(= 0.144\)) and
\textit{IFNG}--\textit{IFNGR2} (PIP \(= 0.123\)) both fell below the
discovery threshold.
Unlike the B cell sender direction---where the low PIPs reflect the
fact that B cells are not a physiological IFN-\(\gamma\) source---the
low PIPs here reflect the cellular heterogeneity of the CD4\(^+\)
T cell pool: only the Th1 subset produces IFN-\(\gamma\), and the
population-averaged causal effect across mixed CD4\(^+\) T cell states
is insufficient to clear the discovery
threshold~\cite{schroder2004interferon}.
\textit{IL6}--\textit{IL6ST} (PIP \(= 0.079\)) and
\textit{IL6}--\textit{IL6R} (PIP \(= 0.091\)) received near-zero
support, indicating that CD4\(^+\) T cell IL-6 production does not
constitute a strong causal signal to B cells via gp130 in this dataset.

\begin{figure}[htbp]
\centering
\includegraphics[width=\textwidth]{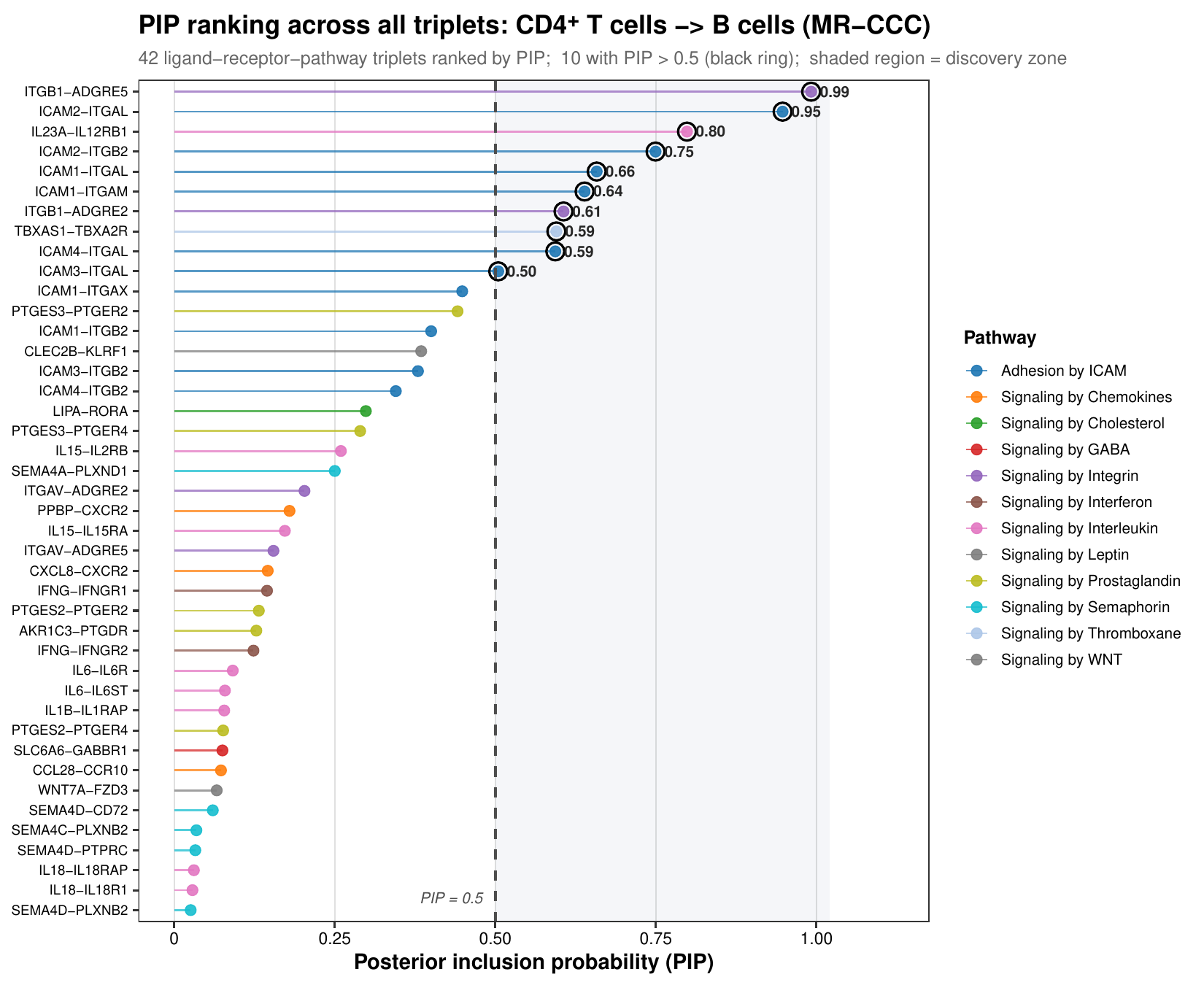}
\caption{\textbf{PIP ranking for the CD4$^+$ T cells\,$\rightarrow$\,B cells analysis.} All 42 ligand--receptor--pathway triplets across 868
  donors. Points are colored by pathway; the dashed vertical line
  marks the discovery threshold of PIP \(= 0.5\); the shaded region
  to the right is the discovery zone. Black rings identify the ten
  discovered triplets spanning Adhesion by ICAM
  (\textit{ICAM1}--\textit{ITGAL}, \textit{ICAM1}--\textit{ITGAM},
  \textit{ICAM2}--\textit{ITGAL}, \textit{ICAM2}--\textit{ITGB2},
  \textit{ICAM3}--\textit{ITGAL}, \textit{ICAM4}--\textit{ITGAL}),
  Signaling by Integrin (\textit{ITGB1}--\textit{ADGRE5},
  \textit{ITGB1}--\textit{ADGRE2}), Signaling by Interleukin
  (\textit{IL23A}--\textit{IL12RB1}), and Signaling by Thromboxane
  (\textit{TBXAS1}--\textit{TBXA2R}) pathways.}
\label{fig:supp_CD4_B_pip}
\end{figure}

\begin{figure}[htbp]
\centering
\includegraphics[width=\textwidth]{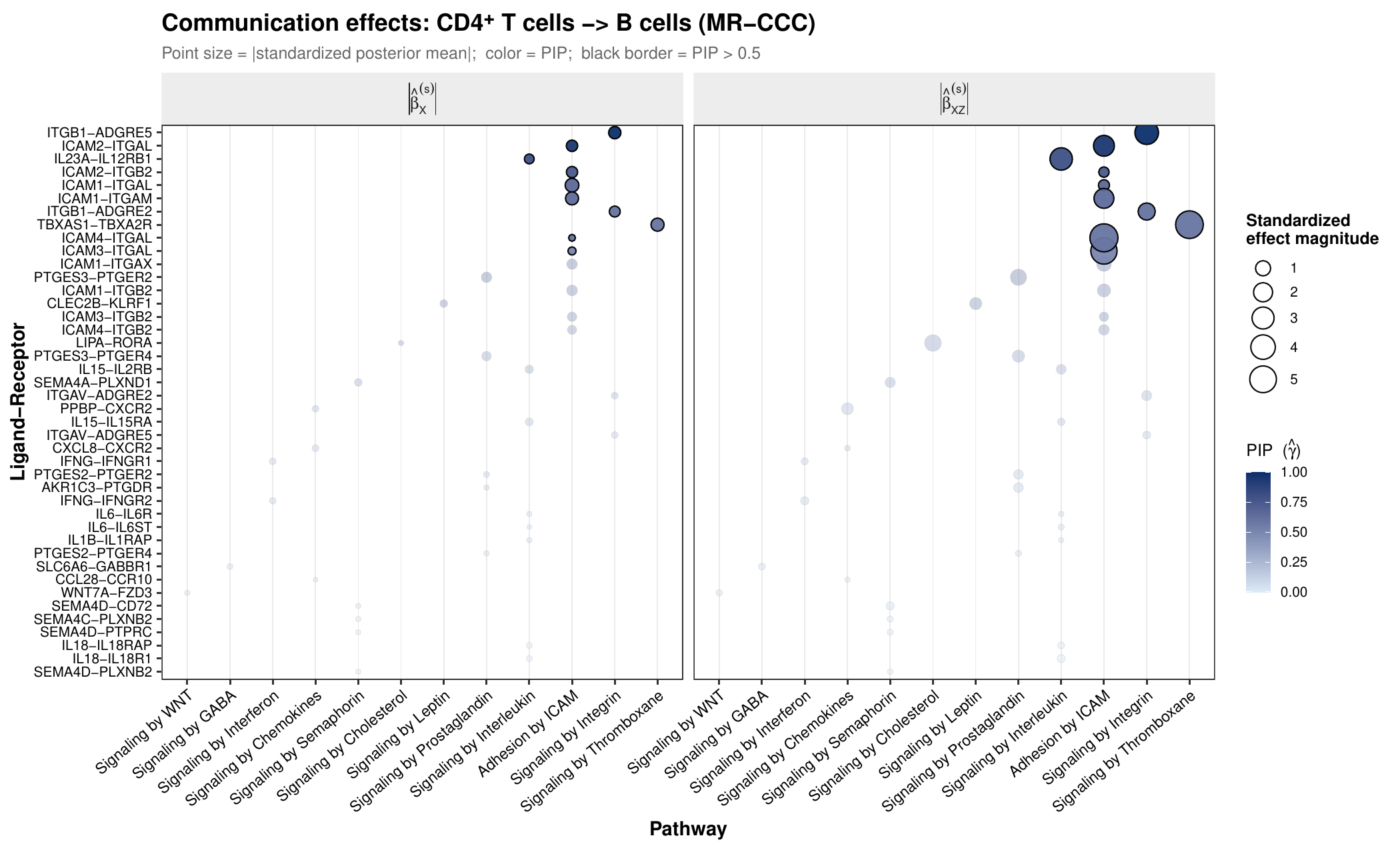}
\caption{\textbf{Standardized posterior effects for the CD4\(^+\) T cells
  \(\rightarrow\) B cells analysis.} Left panel: absolute main ligand
  effect \(|\hat{\beta}_X^{(s)}|\); right panel: absolute
  receptor-modulated interaction effect \(|\hat{\beta}_{XZ}^{(s)}|\).
  Point size encodes effect magnitude; fill color encodes PIP; black
  borders identify the ten discovered triplets. The interaction panel
  (right) reveals large \(\hat{\beta}_{XZ}\) magnitudes for both the
  ICAM--integrin cluster and the \textit{ITGB1}--\textit{ADGRE5},
  \textit{TBXAS1}--\textit{TBXA2R}, and
  \textit{IL23A}--\textit{IL12RB1} triplets, indicating that receptor
  expression on B cells dominates the modulation of each effect.}
\label{fig:supp_CD4_B_bubble}
\end{figure}

\begin{figure}[htbp]
\centering
\includegraphics[width=\textwidth]{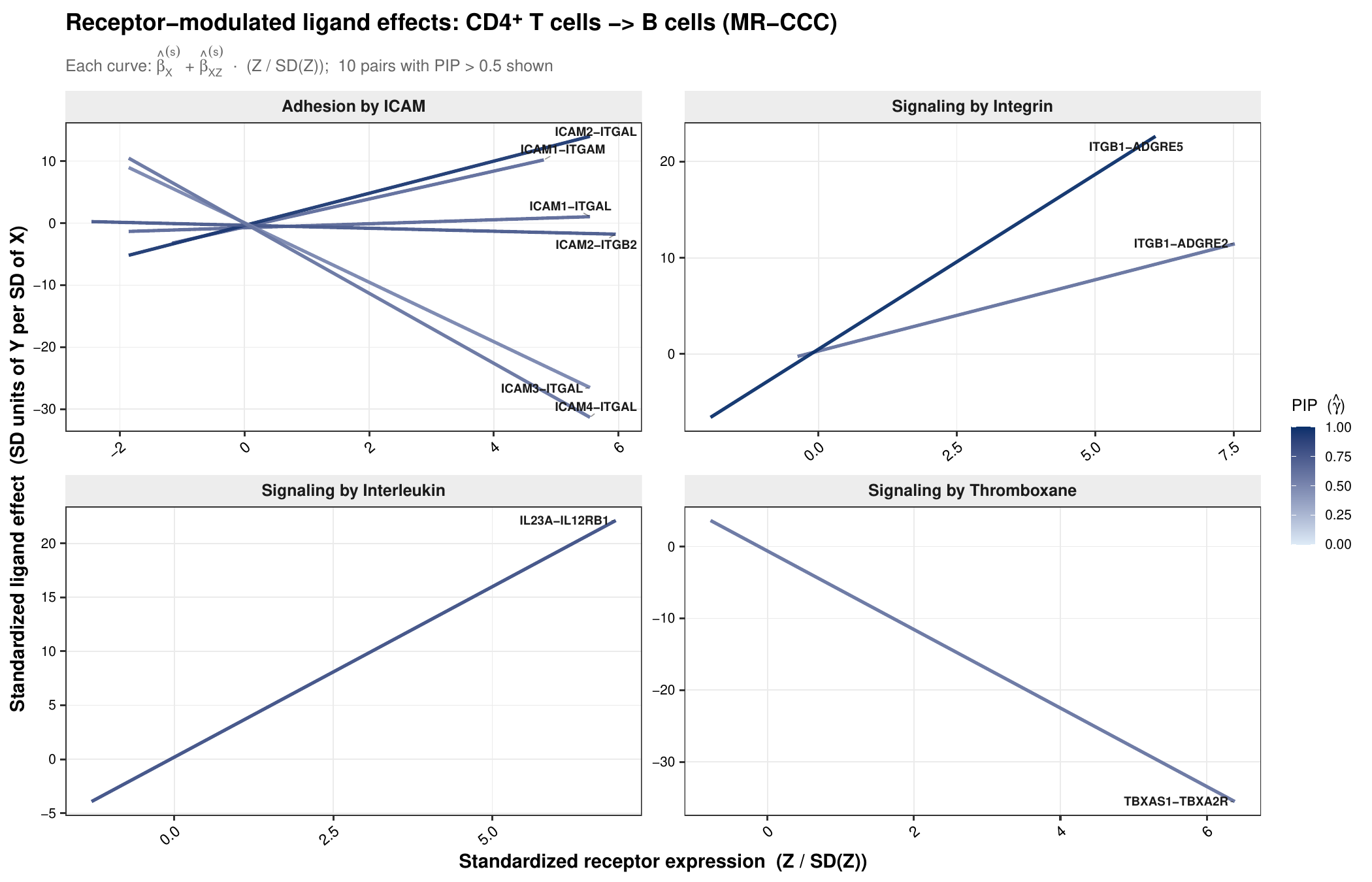}
\caption{\textbf{Receptor-modulated effect curves
  (\(\hat{\beta}_X + \hat{\beta}_{XZ} \cdot Z/\mathrm{SD}(Z)\)) for
  the CD4\(^+\) T cells \(\rightarrow\) B cells analysis.} Only the
  ten discovery pairs (PIP \(> 0.5\)) are displayed, grouped into the
  four pathway panels with confirmed discoveries (Adhesion by ICAM,
  Signaling by Integrin, Signaling by Interleukin, and Signaling by
  Thromboxane). The Adhesion by ICAM panel contains six curves forming
  a divergent fan: \textit{ICAM2}--\textit{ITGAL} and
  \textit{ICAM1}--\textit{ITGAM} cross from negative to positive near
  the mean receptor level, while \textit{ICAM3}--\textit{ITGAL} and
  \textit{ICAM4}--\textit{ITGAL} show the steepest negative slopes in
  the analysis. In the Signaling by Integrin panel,
  \textit{ITGB1}--\textit{ADGRE5} and \textit{ITGB1}--\textit{ADGRE2}
  both show steep positive-slope curves. In the Signaling by
  Interleukin panel, \textit{IL23A}--\textit{IL12RB1} rises steeply
  from near zero at the mean. In the Signaling by Thromboxane panel,
  \textit{TBXAS1}--\textit{TBXA2R} is the steepest declining curve,
  becoming strongly negative at high \textit{TBXA2R} expression.}
\label{fig:supp_CD4_B_curves}
\end{figure}
\FloatBarrier

\subsubsection{CD4\(^+\) T Cells \(\rightarrow\) CD8\(^+\) T Cells}
\label{supp:CD4_CD8}
Across 882 donors, MR-CCC evaluated 42 ligand--receptor--pathway
triplets for the CD4\(^+\) T cell (sender) to CD8\(^+\) T cell (receiver)
direction and identified three high-confidence causal communication
signals with PIP exceeding 0.5: \textit{IL15}--\textit{IL2RB} within
the Interleukin signaling pathway (PIP \(= 0.668\)),
\textit{ICAM2}--\textit{ITGAL} within the Adhesion by ICAM pathway
(PIP \(= 0.547\)), and \textit{SEMA4D}--\textit{PTPRC} within the
Semaphorin signaling pathway (PIP \(= 0.538\))
(Figures~\ref{fig:supp_CD4_CD8_pip}--\ref{fig:supp_CD4_CD8_curves}).

The top-ranked discovery, \textit{IL15}--\textit{IL2RB}
(PIP \(= 0.668\)), implicates CD4\(^+\) T cell-derived IL-15 as a
causal activator of CD8\(^+\) T cell Interleukin pathway activity
through the IL-2R\(\beta\) chain (encoded by \textit{IL2RB}; also known
as CD122), which is shared between the IL-15 and IL-2 receptor
complexes and supports homeostatic proliferation and memory maintenance
in CD8\(^+\) T cells~\cite{waldmann2006il15,kennedy2000il15,schluns2003review}.
Both the main effect (\(\hat{\beta}_X = 0.770\)) and the
receptor-modulated interaction (\(\hat{\beta}_{XZ} = 0.614\)) are
positive, yielding a monotonically increasing effect curve: the
estimated causal effect is positive across the full observed range of
\textit{IL2RB} expression, with the zero crossing located at
\(Z^* \approx -1.25\) standard deviations below the mean---well outside
the bulk of the observed distribution.
CD8\(^+\) T cells expressing higher levels of the IL-2R\(\beta\) chain
therefore show a uniformly stronger and progressively amplified positive
response to CD4\(^+\) T cell IL-15 production.

The second discovered triplet, \textit{ICAM2}--\textit{ITGAL}
(PIP \(= 0.547\)), links CD4\(^+\) T cell intercellular adhesion
molecule 2 (\textit{ICAM2}) to LFA-1 (\textit{ITGAL}) on CD8\(^+\) T
cells. ICAM--LFA-1 contacts are a central stabilizing force at the T
cell--T cell immunological interface, coordinating antigen-dependent and
helper-driven cytotoxic T cell
activation~\cite{springer1990adhesion,dustin1986icam}.
The negative main effect (\(\hat{\beta}_X = -0.403\)) combined with a
large positive interaction (\(\hat{\beta}_{XZ} = 2.17\)) produces a
sign-reversing effect curve that crosses zero approximately \(0.19\)
standard deviations above the mean \textit{ITGAL} level.
CD8\(^+\) T cells expressing \textit{ITGAL} below this threshold
experience a net suppressive effect of CD4\(^+\) T cell ICAM2
engagement on their Adhesion pathway activity; above it, the direction
flips to increasingly stimulatory, consistent with a context-dependent
adhesion signal in which LFA-1 density gates the transition between
inhibitory and activating CD4--CD8 contact.

The third discovery, \textit{SEMA4D}--\textit{PTPRC}
(PIP \(= 0.538\)), implicates CD100 (semaphorin 4D, encoded by
\textit{SEMA4D}), a class~IV transmembrane semaphorin expressed on
activated CD4\(^+\) T cells, as a causal regulator of CD8\(^+\) T cell
Semaphorin pathway activity through the receptor tyrosine phosphatase
CD45 (encoded by \textit{PTPRC}).
SEMA4D is known to modulate immune cell activation through engagement
of plexin and co-receptor signaling complexes on lymphocytes, with
established roles in T cell cross-talk and leukocyte
recruitment~\cite{suzuki2008sema4d}.
The near-zero main effect (\(\hat{\beta}_X = -0.041\)) and strongly
positive interaction (\(\hat{\beta}_{XZ} = 0.907\)) produce a curve
that crosses zero approximately \(0.05\) standard deviations above the
mean \textit{PTPRC} level---essentially at the population mean.
The causal signal is therefore almost entirely interaction-driven: only
in CD8\(^+\) T cells expressing \textit{PTPRC} above the population
mean does CD4\(^+\) T cell SEMA4D exert a meaningful positive effect,
suggesting that CD45 expression level gates the permissiveness of
CD8\(^+\) T cells to semaphorin-mediated signals from helper T cells.

Notably, several triplets did not reach the discovery threshold.
\textit{IL15}--\textit{IL15RA} (PIP \(= 0.473\)) was the highest-ranking
non-discovered signal and a close near-miss; while the shared
IL-2R\(\beta\) chain was discovered (PIP \(= 0.668\)), the
transpresentation axis through IL-15R\(\alpha\) appears less dominant in
this CD4\(^+\) sender direction, consistent with IL-15R\(\alpha\)-mediated
transpresentation being a function more characteristic of antigen-presenting
cells such as B cells and dendritic
cells~\cite{fehniger2001il15,kennedy2000il15}.
\textit{LIPA}--\textit{RORA} (PIP \(= 0.419\)) was also a notable
near-miss; this triplet was discovered in the B cells \(\rightarrow\)
CD4\(^+\) T cells direction (PIP \(= 0.686\)) but does not clear the
threshold here, consistent with CD4\(^+\) T cells expressing lower levels
of lysosomal lipid-catabolic activity than B cells and thus transmitting a
weaker cholesterol-metabolite signal~\cite{yang2008rora}.
\textit{IFNG}--\textit{IFNGR2} (PIP \(= 0.348\)) and
\textit{IFNG}--\textit{IFNGR1} (PIP \(= 0.299\)) received moderate but
sub-threshold support; unlike the B cell sender direction---where
IFN-\(\gamma\) production is biologically implausible---CD4\(^+\) Th1 cells
are a bona fide source of IFN-\(\gamma\), and the moderate PIPs here likely
reflect attenuation of the population-level signal by the cellular
heterogeneity of the CD4\(^+\) T cell pool across
donors~\cite{schroder2004interferon}.

\begin{figure}[htbp]
\centering
\includegraphics[width=\textwidth]{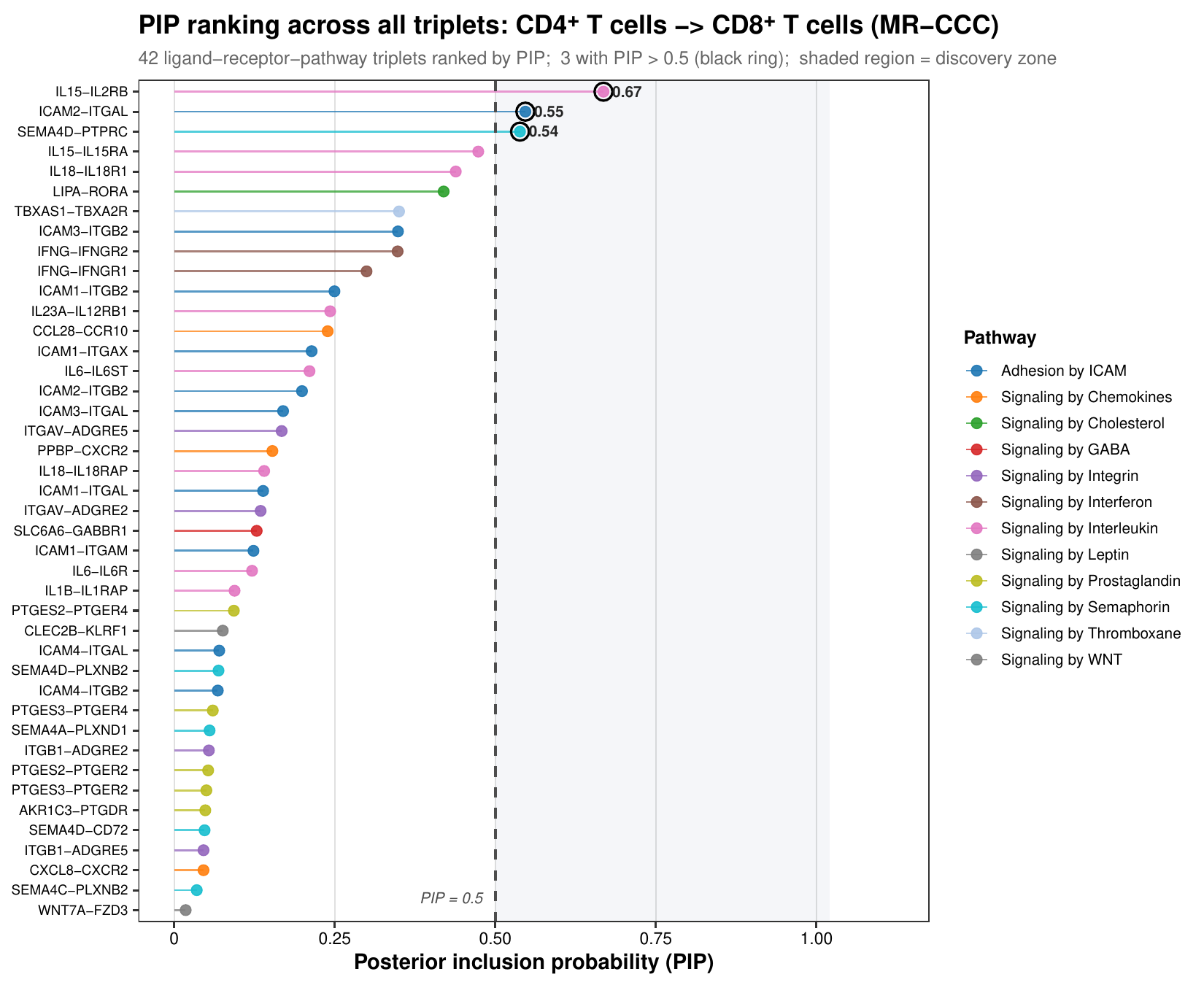}
\caption{\textbf{PIP ranking for the CD4$^+$ T cells\,$\rightarrow$\,CD8\(^+\) T cells analysis.} All 42 ligand--receptor--pathway triplets across 882 donors. Points are colored by pathway; the dashed vertical
  line marks the discovery threshold of PIP \(= 0.5\); the shaded region
  to the right is the discovery zone. Black rings identify the three
  discovered triplets: \textit{IL15}--\textit{IL2RB}
  (PIP \(= 0.67\)), \textit{ICAM2}--\textit{ITGAL}
  (PIP \(= 0.55\)), and \textit{SEMA4D}--\textit{PTPRC}
  (PIP \(= 0.54\)).}
\label{fig:supp_CD4_CD8_pip}
\end{figure}

\begin{figure}[htbp]
\centering
\includegraphics[width=\textwidth]{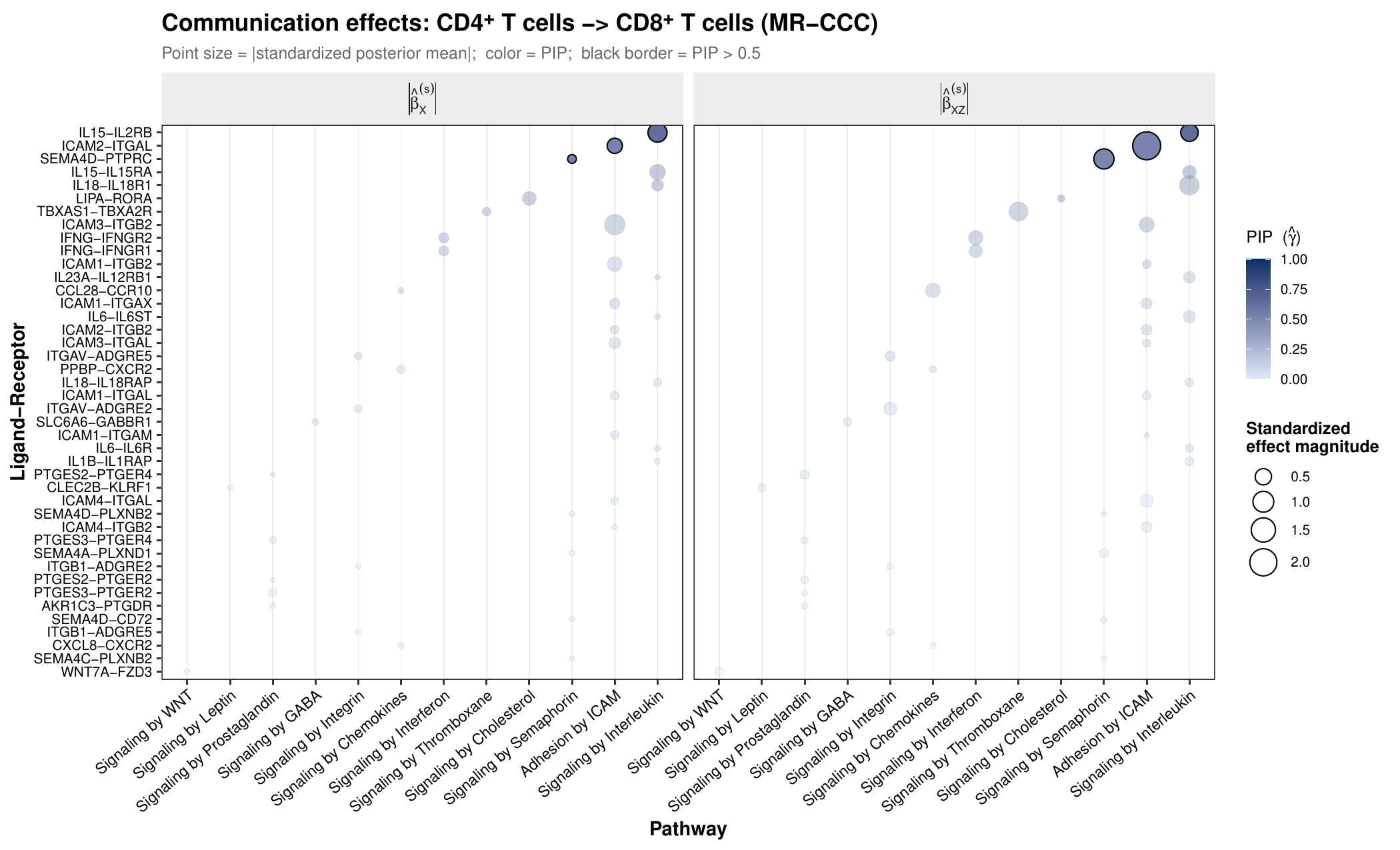}
\caption{\textbf{Standardized posterior effects for the CD4\(^+\) T cells
  \(\rightarrow\) CD8\(^+\) T cells analysis.} Left panel: absolute
  main ligand effect \(|\hat{\beta}_X^{(s)}|\); right panel: absolute
  receptor-modulated interaction effect \(|\hat{\beta}_{XZ}^{(s)}|\).
  Point size encodes effect magnitude; fill color encodes PIP; black
  borders identify the three discovered triplets.
  \textit{IL15}--\textit{IL2RB} shows appreciable effects in both
  panels; \textit{ICAM2}--\textit{ITGAL} and
  \textit{SEMA4D}--\textit{PTPRC} display dominant interaction
  components relative to their main effects, indicating
  receptor-expression-gated communication signals.}
\label{fig:supp_CD4_CD8_bubble}
\end{figure}

\begin{figure}[htbp]
\centering
\includegraphics[width=\textwidth]{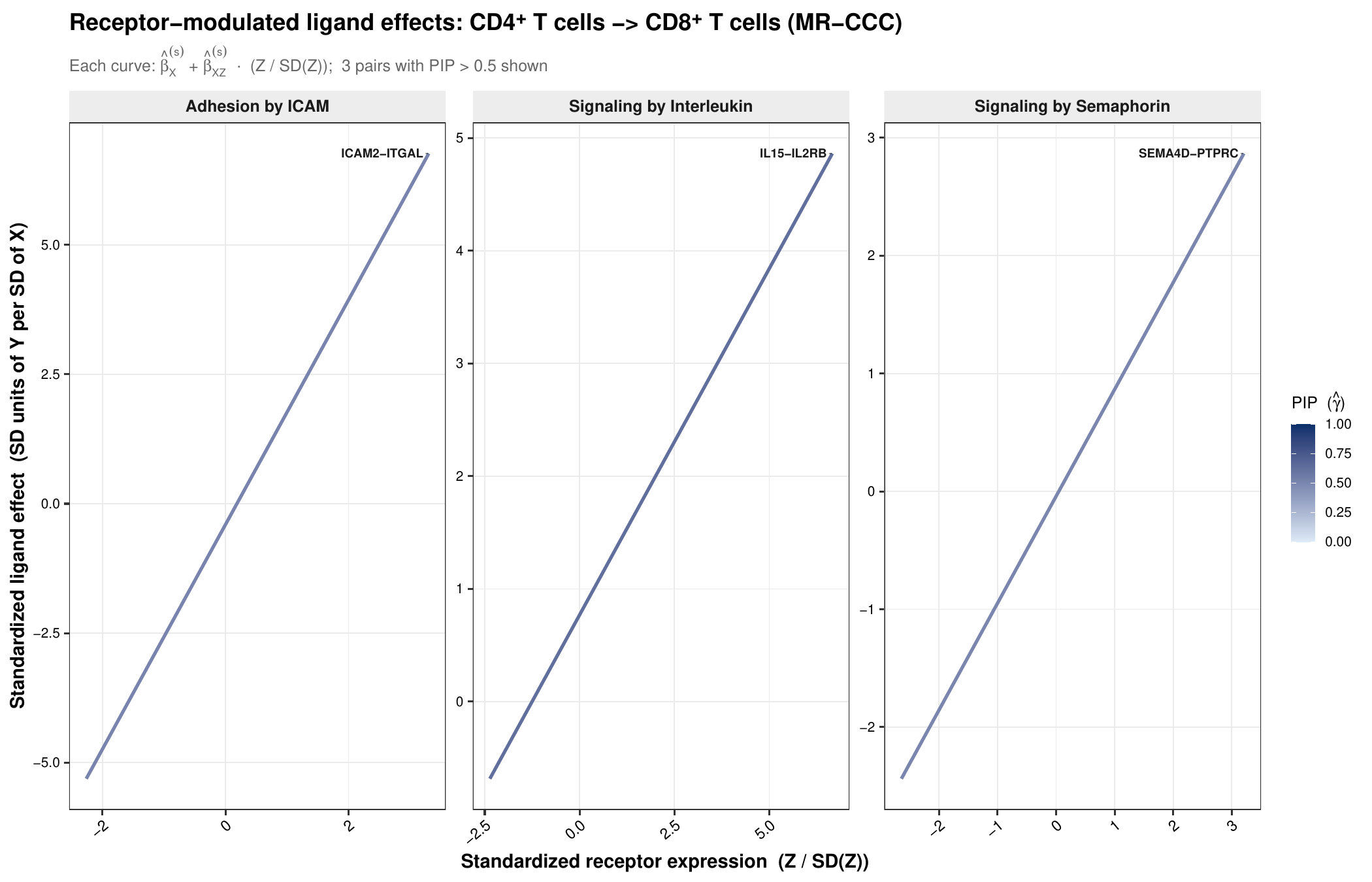}
\caption{\textbf{Receptor-modulated effect curves
  (\(\hat{\beta}_X + \hat{\beta}_{XZ} \cdot Z/\mathrm{SD}(Z)\))
  for the CD4\(^+\) T cells \(\rightarrow\) CD8\(^+\) T cells analysis.}
  Only the three discovery pairs (PIP \(> 0.5\)) are displayed, grouped
  into the three pathway panels with confirmed discoveries (Adhesion by
  ICAM, Signaling by Interleukin, and Signaling by Semaphorin). In the
  Signaling by Interleukin panel, the \textit{IL15}--\textit{IL2RB}
  curve is monotonically increasing, remaining positive throughout the
  observed receptor range with a zero crossing near \(-1.25\) standard
  deviations below the mean. In the Adhesion by ICAM panel, the
  \textit{ICAM2}--\textit{ITGAL} curve crosses zero near \(+0.19\)
  standard deviations above the mean, transitioning from suppressive to
  increasingly activating at high LFA-1 densities. In the Signaling by
  Semaphorin panel, the \textit{SEMA4D}--\textit{PTPRC} curve passes
  near zero at the mean and rises steeply, consistent with an almost
  purely interaction-driven signal gated by CD45 expression level.}
\label{fig:supp_CD4_CD8_curves}
\end{figure}
\FloatBarrier

\subsubsection{CD4\(^+\) T Cells \(\rightarrow\) NK Cells}
\label{supp:CD4_NK}
Across 850 donors, MR-CCC evaluated 42 ligand--receptor--pathway
triplets for the CD4\(^+\) T cell (sender) to NK cell (receiver)
direction and identified four high-confidence causal communication
signals with PIP exceeding 0.5: \textit{ITGB1}--\textit{ADGRE2}
within the Integrin signaling pathway (PIP \(= 0.910\)),
\textit{ITGAV}--\textit{ADGRE5} within the Integrin signaling pathway
(PIP \(= 0.858\)), \textit{PTGES3}--\textit{PTGER2} within the
Prostaglandin signaling pathway (PIP \(= 0.625\)), and
\textit{IFNG}--\textit{IFNGR2} within the Interferon signaling pathway
(PIP \(= 0.542\))
(Figures~\ref{fig:supp_CD4_NK_pip}--\ref{fig:supp_CD4_NK_curves}).

The top-ranked discovery, \textit{ITGB1}--\textit{ADGRE2}
(PIP \(= 0.910\)), implicates CD4\(^+\) T cell \(\beta_1\) integrin
(\textit{ITGB1}) as a causal regulator of NK cell Integrin pathway
activity through the adhesion G protein-coupled receptor EMR2
(encoded by \textit{ADGRE2}), which is expressed on NK cells and
myeloid cells and carries extracellular EGF-like counter-receptor
domains~\cite{stacey2003emr2}.
The positive main effect (\(\hat{\beta}_X = 0.404\)) and large
negative interaction (\(\hat{\beta}_{XZ} = -2.67\)) produce a
sign-reversing effect curve that crosses zero approximately \(0.15\)
standard deviations above the mean \textit{ADGRE2} expression level.
NK cells with low EMR2 expression therefore experience a net
stimulatory response to CD4\(^+\) T cell \(\beta_1\) integrin
engagement, while those above this threshold transition to a
progressively suppressive effect that grows steeply with EMR2 density.
Notably, the \textit{ITGB1}--\textit{ADGRE2} triplet was also
discovered in the CD4\(^+\) T cells \(\rightarrow\) B cells direction
(PIP \(= 0.606\)), but there the interaction term was positive
(\(\hat{\beta}_{XZ} = 1.48\)); the sign reversal of the
interaction between receiver cell types illustrates the cell-type
specificity of receptor-modulated causal communication detected by
MR-CCC.

The second discovered triplet, \textit{ITGAV}--\textit{ADGRE5}
(PIP \(= 0.858\)), implicates CD4\(^+\) T cell \(\alpha\)V integrin
(\textit{ITGAV}) as a causal activator of NK cell Integrin pathway
activity through the adhesion G protein-coupled receptor CD97
(encoded by \textit{ADGRE5}), whose EGF-like ectodomain serves as a
counter-receptor for multiple integrin species~\cite{hamann1996cd97}.
Both the main effect (\(\hat{\beta}_X = 0.342\)) and the interaction
(\(\hat{\beta}_{XZ} = 3.44\)) are positive, yielding a curve that
crosses zero approximately \(0.10\) standard deviations below the
mean \textit{ADGRE5} level and is therefore positive across
essentially the full observed receptor expression range, growing
steeply with CD97 density.
Together, \textit{ITGB1}--\textit{ADGRE2} and
\textit{ITGAV}--\textit{ADGRE5} form a characteristic crossing pair
in the Signaling by Integrin panel of the effect-curve figure
(Figure~\ref{fig:supp_CD4_NK_curves}): the former descends steeply
from positive to negative as receptor expression rises, while the
latter rises steeply from near zero to strongly positive, producing
an X-shaped intersection that reflects divergent receptor-modulated
outcomes of two distinct integrin--adhesion-GPCR contacts on NK cell
Integrin pathway activity.

The third discovered triplet, \textit{PTGES3}--\textit{PTGER2}
(PIP \(= 0.625\)), links CD4\(^+\) T cell prostaglandin E
synthase 3 (\textit{PTGES3}) to the EP2 prostaglandin receptor
(encoded by \textit{PTGER2}) on NK cells, recapitulating the same
PGE\(_2\)--EP2 axis identified in the B cells \(\rightarrow\) NK
cells direction (PIP \(= 0.556\)).
PGE\(_2\) signaling through EP2 activates the adenylyl
cyclase--cAMP--PKA cascade, a well-characterized suppressor of NK
cell cytotoxicity and effector cytokine
production~\cite{kalinski2012pge2}.
The negative main effect (\(\hat{\beta}_X = -0.532\)) and large
positive interaction (\(\hat{\beta}_{XZ} = 2.24\)) produce a curve
that crosses zero approximately \(0.24\) standard deviations above
the mean \textit{PTGER2} level.
NK cells with EP2 expression below this threshold experience net
suppression of their Prostaglandin pathway activity in response to
CD4\(^+\) T cell PGE\(_2\), while those with high EP2 expression
show progressively stronger activation, consistent with a
PGE\(_2\)--EP2 feed-forward mechanism in which receptor-replete NK
cells amplify prostaglandin signaling independently of the
lymphocyte source.

The fourth discovery, \textit{IFNG}--\textit{IFNGR2}
(PIP \(= 0.542\)), identifies CD4\(^+\) T cell-derived
IFN-\(\gamma\) as a causal communicator to NK cells through the
signal-transducing \(\beta\) chain of the IFN-\(\gamma\) receptor
complex (encoded by \textit{IFNGR2})~\cite{bach1997ifngr,schroder2004interferon}.
This is the first direction in the present analysis in which an
IFN-\(\gamma\) triplet clears the discovery threshold, consistent
with CD4\(^+\) Th1 cells being a bona fide source of IFN-\(\gamma\)
in peripheral blood---unlike B cells, for which all IFN-\(\gamma\)
pairs received near-zero PIPs.
The positive main effect (\(\hat{\beta}_X = 0.274\)) reflects a
baseline stimulatory IFN-\(\gamma\) signal on NK cell Interferon
pathway activity; the negative interaction
(\(\hat{\beta}_{XZ} = -1.05\)) indicates that this stimulatory
effect diminishes with increasing \textit{IFNGR2} expression and
crosses zero approximately \(0.26\) standard deviations above the
mean, becoming suppressive at the highest receptor densities,
consistent with receptor-mediated negative feedback on sustained
IFN-\(\gamma\) signaling~\cite{schroder2004interferon}.
The receptor-chain specificity of the discovery---\textit{IFNGR2}
discovered at PIP \(= 0.542\) while the ligand-binding \(\alpha\)
chain \textit{IFNGR1} remained sub-threshold at PIP \(= 0.224\)---points
to the expression level of the signal-transducing \(\beta\) chain as
the primary determinant of the receptor-modulated interaction.

Notably, several triplets did not reach the discovery threshold.
\textit{IL1B}--\textit{IL1RAP} (PIP \(= 0.448\)) was the
highest-ranking non-discovered triplet; IL-1\(\beta\) produced by
activated CD4\(^+\) T cells is a known activator of NK cell
cytotoxicity and IFN-\(\gamma\) production through the IL-1 receptor
accessory protein, but the population-level signal did not achieve
consistent discovery strength~\cite{dinarello2009interleukin}.
\textit{ICAM1}--\textit{ITGAX} (PIP \(= 0.385\)) and
\textit{SEMA4A}--\textit{PLXND1} (PIP \(= 0.376\)) were notable
near-misses within the Adhesion by ICAM and Semaphorin signaling
pathways, respectively.
\textit{CLEC2B}--\textit{KLRF1} (PIP \(= 0.342\)) implicated AICL
(activation-induced C-type lectin, encoded by \textit{CLEC2B}) as a
ligand for the activating NK receptor NKp80 (encoded by
\textit{KLRF1}), an axis established to mediate mutual activation
between NK cells and myeloid cells~\cite{welte2006nkp80}, but fell
below the discovery threshold.
\textit{IFNG}--\textit{IFNGR1} (PIP \(= 0.222\)) fell well short of
discovery, confirming that the causal IFN-\(\gamma\) signal from
CD4\(^+\) T cells to NK cells is primarily modulated by the
expression level of the signal-transducing \(\beta\) chain rather
than the ligand-binding \(\alpha\) chain of the receptor complex.

\begin{figure}[htbp]
\centering
\includegraphics[width=\textwidth]{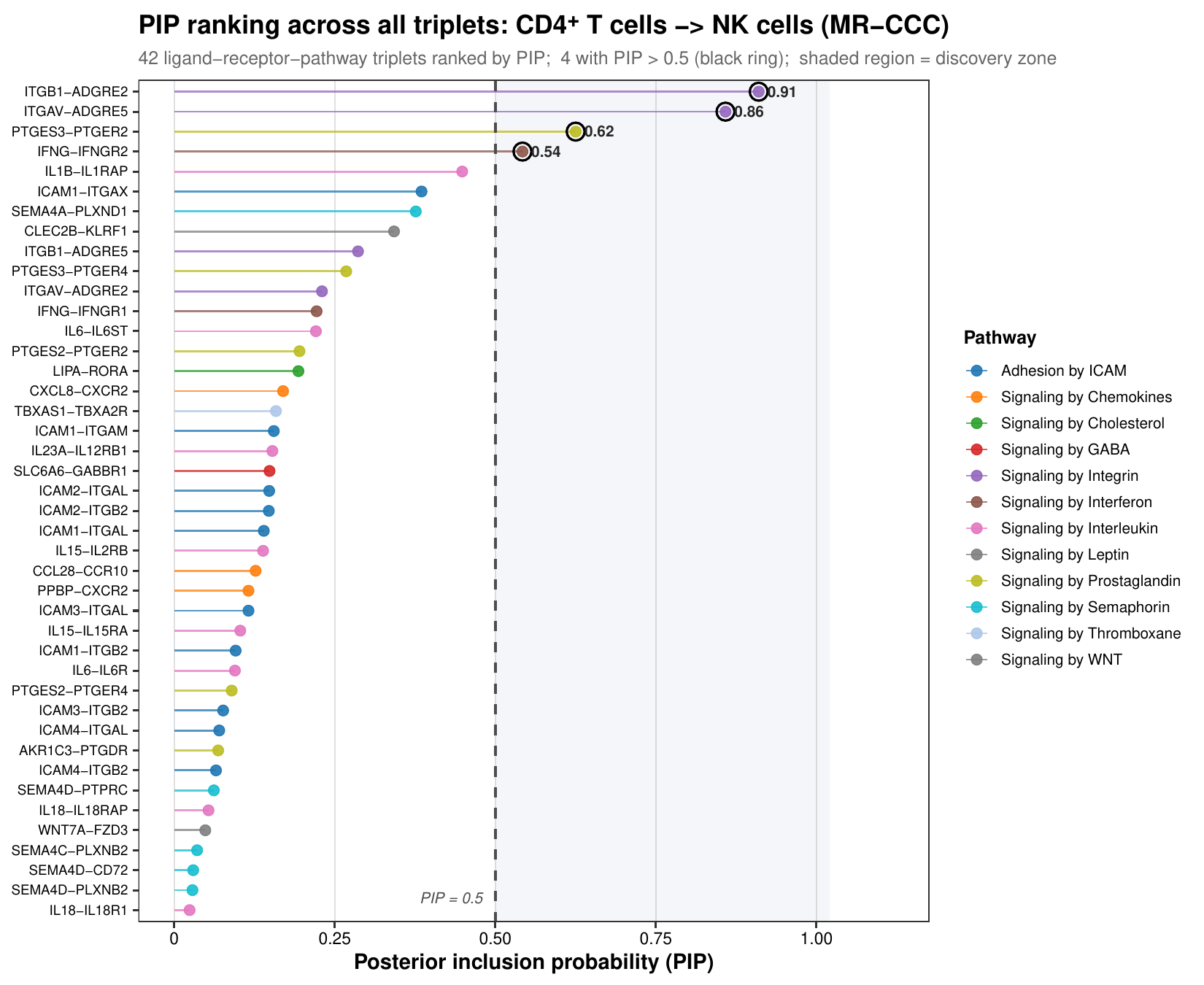}
\caption{\textbf{PIP ranking for the CD4$^+$ T cells\,$\rightarrow$\,NK cells analysis.} All 42 ligand--receptor--pathway triplets across 850
  donors. Points are colored by pathway; the dashed vertical line
  marks the discovery threshold of PIP \(= 0.5\); the shaded region
  to the right is the discovery zone. Black rings identify the four
  discovered triplets: \textit{ITGB1}--\textit{ADGRE2}
  (PIP \(= 0.91\)), \textit{ITGAV}--\textit{ADGRE5}
  (PIP \(= 0.86\)), \textit{PTGES3}--\textit{PTGER2}
  (PIP \(= 0.63\)), and \textit{IFNG}--\textit{IFNGR2}
  (PIP \(= 0.54\)).}
\label{fig:supp_CD4_NK_pip}
\end{figure}

\begin{figure}[htbp]
\centering
\includegraphics[width=\textwidth]{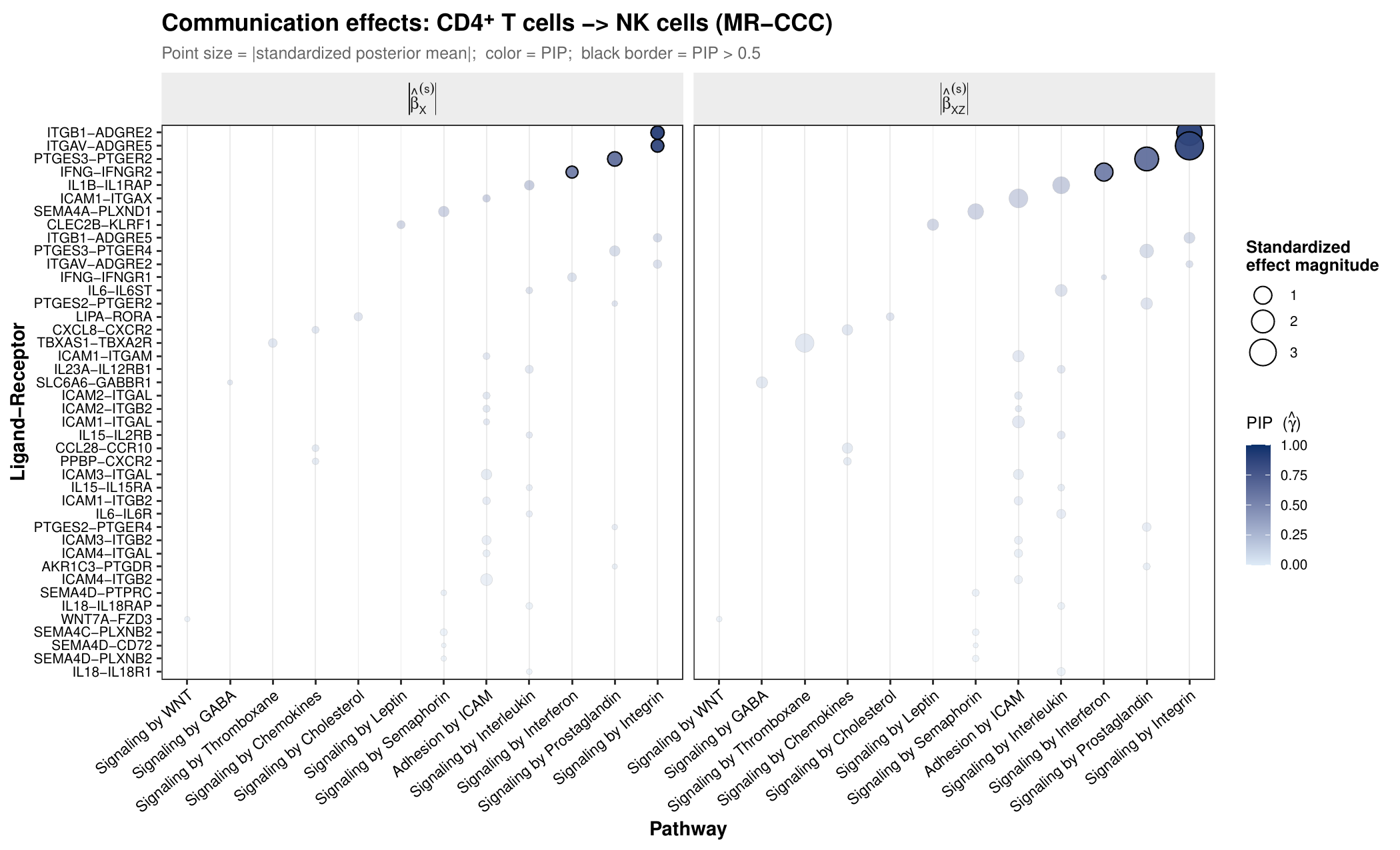}
\caption{\textbf{Standardized posterior effects for the CD4\(^+\) T cells
  \(\rightarrow\) NK cells analysis.} Left panel: absolute main ligand
  effect \(|\hat{\beta}_X^{(s)}|\); right panel: absolute
  receptor-modulated interaction effect \(|\hat{\beta}_{XZ}^{(s)}|\).
  Point size encodes effect magnitude; fill color encodes PIP; black
  borders identify the four discovered triplets. Both Integrin
  discoveries (\textit{ITGB1}--\textit{ADGRE2} and
  \textit{ITGAV}--\textit{ADGRE5}) show large effects in both
  panels; \textit{PTGES3}--\textit{PTGER2} and
  \textit{IFNG}--\textit{IFNGR2} display dominant interaction
  components relative to their main effects, indicating
  receptor-expression-gated communication signals.}
\label{fig:supp_CD4_NK_bubble}
\end{figure}

\begin{figure}[htbp]
\centering
\includegraphics[width=\textwidth]{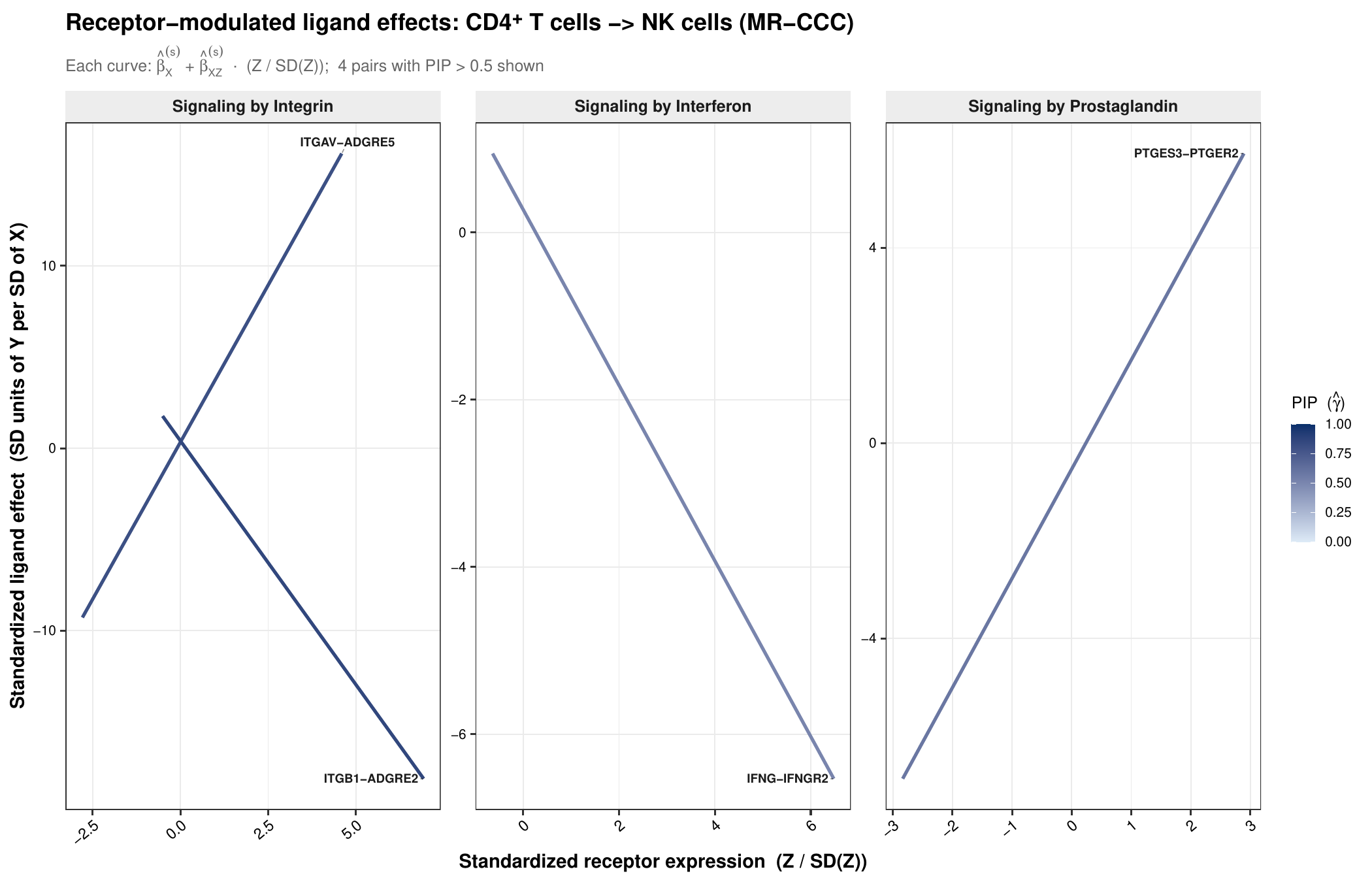}
\caption{\textbf{Receptor-modulated effect curves
  (\(\hat{\beta}_X + \hat{\beta}_{XZ} \cdot Z/\mathrm{SD}(Z)\))
  for the CD4\(^+\) T cells \(\rightarrow\) NK cells analysis.}
  Only the four discovery pairs (PIP \(> 0.5\)) are displayed, grouped
  into the three pathway panels with confirmed discoveries (Signaling
  by Integrin, Signaling by Interferon, and Signaling by Prostaglandin).
  In the Signaling by Integrin panel,
  \textit{ITGB1}--\textit{ADGRE2} and \textit{ITGAV}--\textit{ADGRE5}
  form an X-shaped crossing pair: \textit{ITGB1}--\textit{ADGRE2}
  descends steeply from positive to negative above the mean receptor
  level, while \textit{ITGAV}--\textit{ADGRE5} rises steeply from
  near zero to strongly positive. In the Signaling by Interferon
  panel, the \textit{IFNG}--\textit{IFNGR2} curve starts positive at
  low \textit{IFNGR2} expression and becomes increasingly negative at
  high receptor levels, crossing zero near \(+0.26\) standard
  deviations above the mean. In the Signaling by Prostaglandin panel,
  the \textit{PTGES3}--\textit{PTGER2} curve crosses zero near
  \(+0.24\) standard deviations above the mean \textit{PTGER2} level,
  transitioning from suppressive to strongly activating at high EP2
  densities.}
\label{fig:supp_CD4_NK_curves}
\end{figure}
\FloatBarrier

\subsubsection{CD4\(^+\) T Cells \(\rightarrow\) Monocytes}
\label{supp:CD4_Mono}
Across 691 donors, MR-CCC evaluated 42 ligand--receptor--pathway
triplets for the CD4\(^+\) T cell (sender) to Monocyte (receiver)
direction and identified two high-confidence causal communication
signals with PIP exceeding 0.5: \textit{PPBP}--\textit{CXCR2} within
the Chemokine signaling pathway (PIP \(= 0.988\)) and
\textit{ITGB1}--\textit{ADGRE5} within the Integrin signaling pathway
(PIP \(= 0.597\))
(Figures~\ref{fig:supp_CD4_Mono_pip}--\ref{fig:supp_CD4_Mono_curves}).

The dominant discovery, \textit{PPBP}--\textit{CXCR2}
(PIP \(= 0.988\)), identifies the platelet basic protein precursor
(\textit{PPBP}, which encodes the CXCL7/NAP-2 family of CXC
chemokines upon processing) expressed in CD4\(^+\) T cells as a
causal regulator of monocyte Chemokine pathway activity through the
CXC chemokine receptor 2 (CXCR2).
CXCL7/NAP-2, the processed product of PPBP, is a potent agonist for
CXCR2 on myeloid cells~\cite{ahuja1996cxcr2}, and CXCR2 is the
dominant chemokine receptor governing monocyte recruitment and
activation in response to ELR\(^+\) CXC chemokines.
The positive main effect (\(\hat{\beta}_X = 0.882\)) indicates that
higher CD4\(^+\) T cell PPBP production exerts a baseline stimulatory
effect on monocyte Chemokine pathway activity; however, the
interaction term (\(\hat{\beta}_{XZ} = -12.6\)) is the largest in
magnitude of any triplet in the present analysis.
The effect curve crosses zero at only \(Z^* \approx +0.07\) standard
deviations above the mean \textit{CXCR2} level---essentially at the
population mean---and then descends steeply to strongly negative
values.
Monocytes expressing \textit{CXCR2} at or above the population mean
therefore experience a rapidly escalating suppression of Chemokine
pathway activity as CD4\(^+\) T cell PPBP production rises, consistent
with receptor saturation and homologous desensitization: CXCR2 is a
G protein-coupled receptor subject to rapid phosphorylation,
arrestin recruitment, and internalization upon sustained ligand
exposure~\cite{ahuja1996cxcr2}, and monocytes already expressing high
CXCR2 levels may be in a prior state of chemokine exposure that
amplifies this counter-regulatory response.

The second discovered triplet, \textit{ITGB1}--\textit{ADGRE5}
(PIP \(= 0.597\)), implicates CD4\(^+\) T cell \(\beta_1\) integrin
(\textit{ITGB1}) as a causal regulator of monocyte Integrin pathway
activity through the adhesion G protein-coupled receptor CD97
(encoded by \textit{ADGRE5})~\cite{hamann1996cd97}.
The positive main effect (\(\hat{\beta}_X = 0.360\)) and negative
interaction (\(\hat{\beta}_{XZ} = -0.618\)) produce a sign-reversing
curve that crosses zero approximately \(0.58\) standard deviations
above the mean \textit{ADGRE5} level.
Monocytes with low CD97 expression show a modest stimulatory response
to CD4\(^+\) T cell \(\beta_1\) integrin engagement; those expressing
\textit{ADGRE5} above this threshold transition to a progressively
suppressive effect.
This pattern contrasts with the same triplet in the CD4\(^+\) T cells
\(\rightarrow\) B cells direction, where the interaction term was
large and positive (\(\hat{\beta}_{XZ} = 3.63\)), yielding a
monotonically and steeply activating curve.
The cell-type-specific sign reversal of the \textit{ADGRE5}
interaction term between monocytes and B cells reflects the
fundamentally different downstream signaling environments of
CD97-expressing cells in these two lineages, and illustrates the
receiver-cell specificity of MR-CCC discoveries.

Notably, several triplets did not reach the discovery threshold.
\textit{ITGAV}--\textit{ADGRE2} (PIP \(= 0.484\)) was the
highest-ranking non-discovered triplet and a close near-miss; the same receptor \textit{ADGRE2} was part of the top-ranked discovery in
the CD4\(^+\) T cells \(\rightarrow\) NK cells direction
(\textit{ITGB1}--\textit{ADGRE2}, PIP \(= 0.910\)), but here it
falls just below threshold, suggesting the EMR2--integrin axis on
monocytes may be more heterogeneous at the population level.
\textit{IL1B}--\textit{IL1RAP} (PIP \(= 0.349\)) was also a notable
near-miss, consistent with the sub-threshold IL-1\(\beta\) signal
observed in the CD4\(^+\) T cells \(\rightarrow\) NK cells direction
as well~\cite{dinarello2009interleukin}.
\textit{IFNG}--\textit{IFNGR2} (PIP \(= 0.309\)) and
\textit{IFNG}--\textit{IFNGR1} (PIP \(= 0.159\)) both fell well
below the discovery threshold; while CD4\(^+\) Th1 cells are a
physiological source of IFN-\(\gamma\), the IFN-\(\gamma\) signal to
monocytes does not achieve discovery strength here, in contrast to the
CD4\(^+\) T cells \(\rightarrow\) NK cells direction where
\textit{IFNG}--\textit{IFNGR2} was discovered (PIP \(= 0.542\)),
suggesting NK cells are a more consistent causal target of Th1-derived
IFN-\(\gamma\) in this cohort~\cite{schroder2004interferon}.

\begin{figure}[htbp]
\centering
\includegraphics[width=\textwidth]{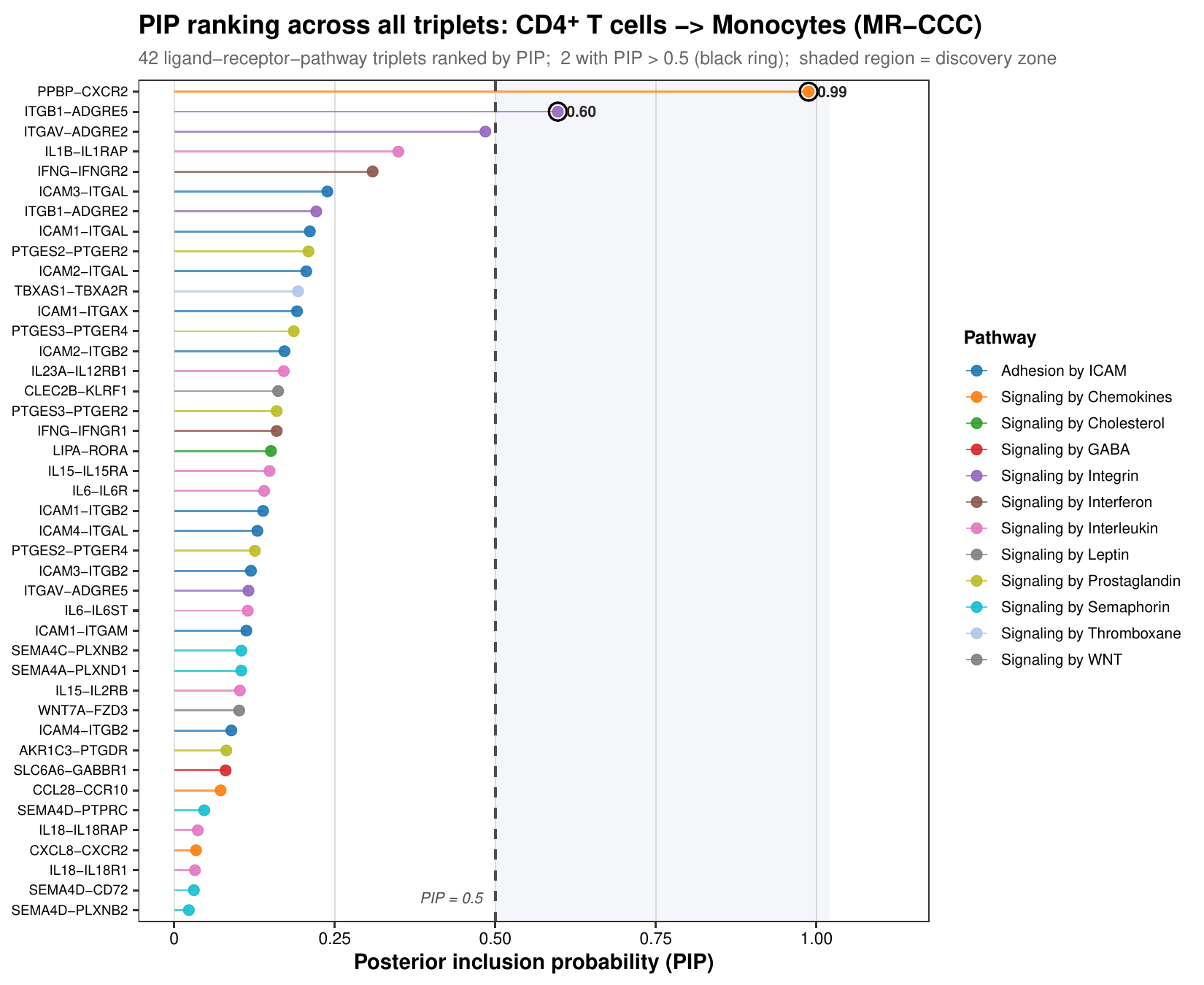}
\caption{\textbf{PIP ranking for the CD4$^+$ T cells\,$\rightarrow$\,Monocytes analysis.} All 42 ligand--receptor--pathway triplets across 691
  donors. Points are colored by pathway; the dashed vertical line
  marks the discovery threshold of PIP \(= 0.5\); the shaded region
  to the right is the discovery zone. Black rings identify the two
  discovered triplets: \textit{PPBP}--\textit{CXCR2}
  (PIP \(= 0.99\)) and \textit{ITGB1}--\textit{ADGRE5}
  (PIP \(= 0.60\)).}
\label{fig:supp_CD4_Mono_pip}
\end{figure}

\begin{figure}[htbp]
\centering
\includegraphics[width=\textwidth]{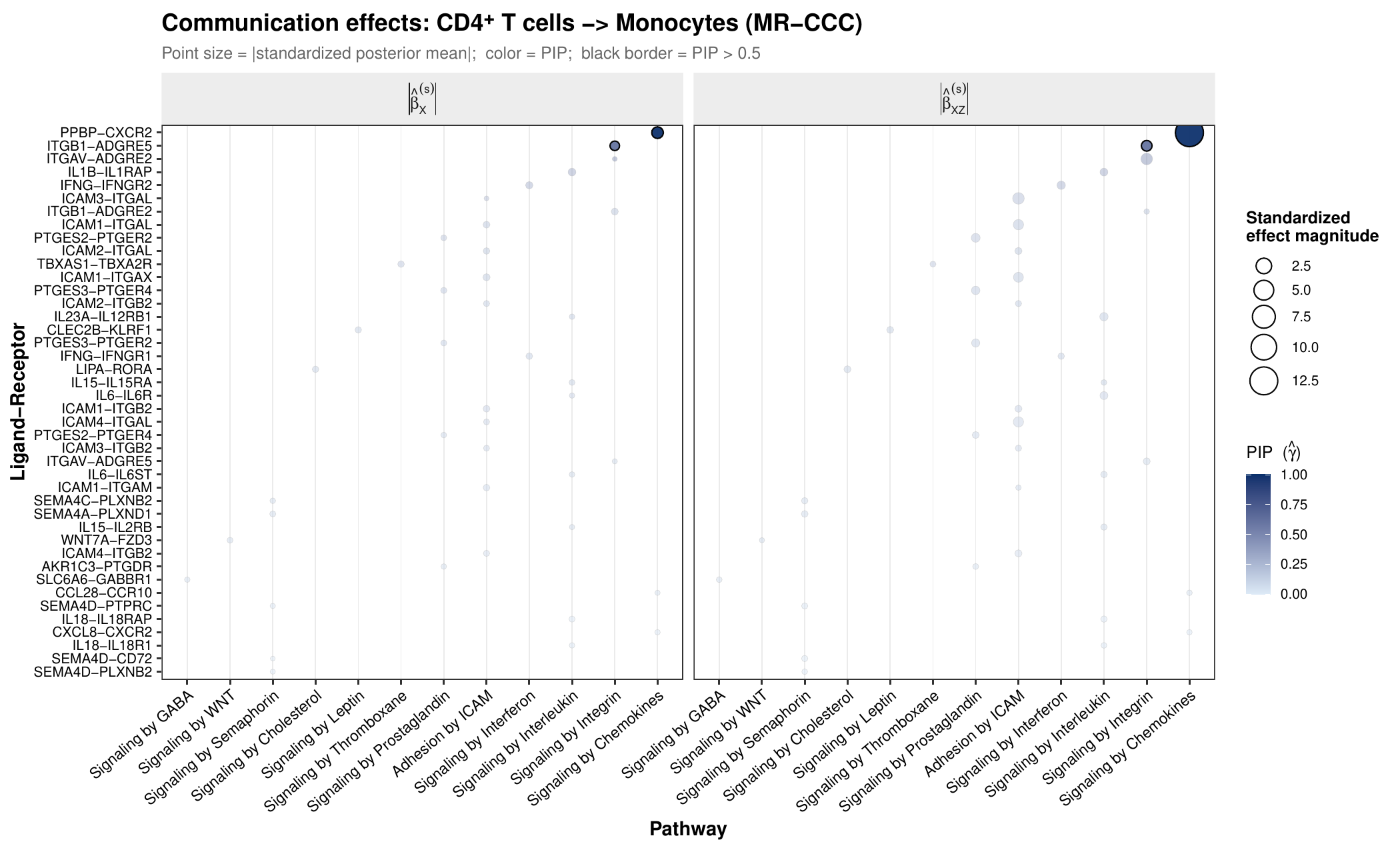}
\caption{\textbf{Standardized posterior effects for the CD4\(^+\) T cells
  \(\rightarrow\) Monocytes analysis.} Left panel: absolute main ligand
  effect \(|\hat{\beta}_X^{(s)}|\); right panel: absolute
  receptor-modulated interaction effect \(|\hat{\beta}_{XZ}^{(s)}|\).
  Point size encodes effect magnitude; fill color encodes PIP; black
  borders identify the two discovered triplets.
  \textit{PPBP}--\textit{CXCR2} dominates the right panel with by
  far the largest interaction magnitude in the analysis, while
  \textit{ITGB1}--\textit{ADGRE5} shows moderate effects in both
  panels. All remaining triplets show small effect magnitudes and low
  PIP.}
\label{fig:supp_CD4_Mono_bubble}
\end{figure}

\begin{figure}[htbp]
\centering
\includegraphics[width=\textwidth]{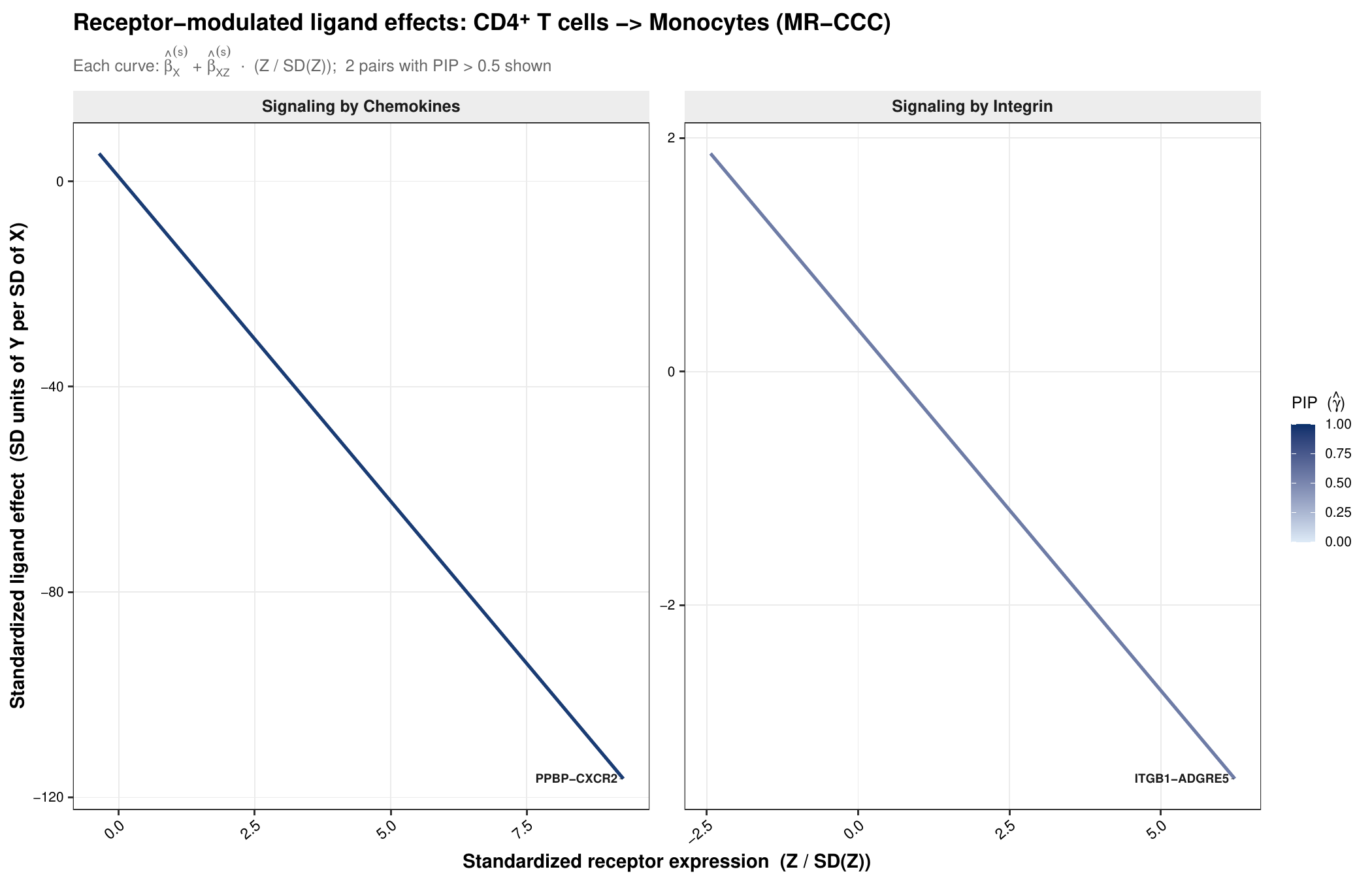}
\caption{\textbf{Receptor-modulated effect curves
  (\(\hat{\beta}_X + \hat{\beta}_{XZ} \cdot Z/\mathrm{SD}(Z)\))
  for the CD4\(^+\) T cells \(\rightarrow\) Monocytes analysis.}
  Only the two discovery pairs (PIP \(> 0.5\)) are displayed, grouped
  into the two pathway panels with confirmed discoveries (Signaling by
  Chemokines and Signaling by Integrin). In the Signaling by
  Chemokines panel, the \textit{PPBP}--\textit{CXCR2} curve descends
  steeply from a small positive value near zero receptor expression to
  strongly negative values at high \textit{CXCR2} levels, with a zero
  crossing essentially at the population mean; the y-axis range
  (reaching below \(-100\) in standardized units) reflects the largest
  interaction magnitude in the entire analysis. In the Signaling by
  Integrin panel, the \textit{ITGB1}--\textit{ADGRE5} curve
  decreases gently from positive to negative, crossing zero
  approximately \(0.58\) standard deviations above the mean
  \textit{ADGRE5} expression level.}
\label{fig:supp_CD4_Mono_curves}
\end{figure}
\FloatBarrier

\clearpage
\subsection{CD8\texorpdfstring{\(^+\)}{+} T Cells as Sender}

\subsubsection{CD8\(^+\) T Cells \(\rightarrow\) B Cells}
\label{supp:CD8_B}
Across 845 donors, MR-CCC evaluated 42 ligand--receptor--pathway
triplets for the CD8\(^+\) T cell (sender) to B cell (receiver)
direction and identified seven high-confidence causal communication
signals with PIP exceeding 0.5
(Figures~\ref{fig:supp_CD8_B_pip}--\ref{fig:supp_CD8_B_curves}).
The seven triplets span two pathway classes: six Adhesion by ICAM
pairs (\textit{ICAM2}--\textit{ITGAL}, PIP \(= 0.861\);
\textit{ICAM1}--\textit{ITGAX}, PIP \(= 0.755\);
\textit{ICAM1}--\textit{ITGAL}, PIP \(= 0.738\);
\textit{ICAM2}--\textit{ITGB2}, PIP \(= 0.712\);
\textit{ICAM1}--\textit{ITGAM}, PIP \(= 0.676\);
\textit{ICAM1}--\textit{ITGB2}, PIP \(= 0.556\)) and one Signaling
by Interleukin triplet (\textit{IL23A}--\textit{IL12RB1},
PIP \(= 0.800\)).

\noindent\textbf{Adhesion by ICAM signals.}
The six discovered ICAM--integrin triplets again implicate the CD8\(^+\)
T cell--B cell adhesive interface, where ICAM family members on T cells
engage \(\beta_2\) integrins on B cells~\cite{springer1990adhesion,dustin1986icam}.
In contrast to the CD4\(^+\) T cells \(\rightarrow\) B cells direction
---where the six ICAM pairs produced a heterogeneous fan including both
strongly stimulatory and strongly suppressive curves---the CD8\(^+\)
T cell ICAM contacts are predominantly suppressive at physiological
receptor expression levels, consistent with the cytotoxic rather than
helper function of CD8\(^+\) T cells at the T cell--B cell interface.
All six curves are visible in the Adhesion by ICAM panel
(Figure~\ref{fig:supp_CD8_B_curves}).

The top-ranked ICAM pair, \textit{ICAM2}--\textit{ITGAL}
(PIP \(= 0.861\), \(\hat{\beta}_X = 0.165\), \(\hat{\beta}_{XZ} = -3.98\)),
has a near-zero main effect and a large negative interaction, producing
a curve that crosses zero at only \(Z^* \approx +0.04\) standard
deviations above the mean \textit{ITGAL} level---essentially at the
population mean.
For B cells expressing LFA-1 above this threshold, which encompasses
the vast majority of the observed population, CD8\(^+\) T cell ICAM2
engagement exerts a steeply increasing suppression of Adhesion pathway
activity, with the effect growing strongly negative at high LFA-1
densities.
This contrasts markedly with the CD4\(^+\) T cells \(\rightarrow\) B
cells direction, where \textit{ICAM2}--\textit{ITGAL} had a large
\emph{positive} interaction (\(\hat{\beta}_{XZ} = 2.59\)), yielding
the opposite receptor-modulated outcome despite the same ligand--receptor
pair.

Of the four discovered \textit{ICAM1}--integrin pairs,
\textit{ICAM1}--\textit{ITGAX} (PIP \(= 0.755\),
\(\hat{\beta}_X = -0.758\), \(\hat{\beta}_{XZ} = -2.50\)) and
\textit{ICAM1}--\textit{ITGAL} (PIP \(= 0.738\),
\(\hat{\beta}_X = -0.986\), \(\hat{\beta}_{XZ} = 0.927\)) show the
largest coefficient magnitudes.
\textit{ICAM1}--\textit{ITGAX} crosses zero at approximately
\(0.30\) standard deviations \emph{below} the mean \textit{ITGAX}
level, placing essentially the full observed CD11c-expressing B cell
population in the suppressive regime, with the suppression intensifying
steeply at high receptor density.
\textit{ICAM1}--\textit{ITGAL} has a negative main effect and a
partially positive interaction, crossing zero at approximately
\(+1.06\) standard deviations above the mean \textit{ITGAL} level;
because this threshold is well above the population mean, the majority
of B cells experience a net suppressive effect of CD8\(^+\) T cell
ICAM1 engagement through LFA-1.
\textit{ICAM1}--\textit{ITGAM} (PIP \(= 0.676\),
\(\hat{\beta}_X = -0.902\), \(\hat{\beta}_{XZ} = -0.782\)) and
\textit{ICAM1}--\textit{ITGB2} (PIP \(= 0.556\),
\(\hat{\beta}_X = -0.690\), \(\hat{\beta}_{XZ} = -0.893\)) are both
monotonically suppressive with zero crossings at approximately \(1.15\)
and \(0.77\) standard deviations \emph{below} the mean receptor
expression level, respectively, placing the entire observed population
of B cells in the suppressive regime for both pairs.

The exception within the ICAM discoveries is
\textit{ICAM2}--\textit{ITGB2} (PIP \(= 0.712\),
\(\hat{\beta}_X = 0.195\), \(\hat{\beta}_{XZ} = 1.61\)): both the main
effect and the interaction are positive, yielding a monotonically
increasing curve that crosses zero approximately \(0.12\) standard
deviations below the mean \textit{ITGB2} level and rises steeply with
\(\beta_2\) density.
The divergence between \textit{ICAM2}--\textit{ITGAL} (strongly
suppressive via LFA-1) and \textit{ICAM2}--\textit{ITGB2}
(monotonically activating via the \(\beta_2\) chain alone) indicates
that the identity of the integrin chain engaged by CD8\(^+\) T cell
ICAM2 governs the sign of the downstream B cell Adhesion pathway
response.

\noindent\textbf{Interleukin signal.}
The \textit{IL23A}--\textit{IL12RB1} triplet (PIP \(= 0.800\),
\(\hat{\beta}_X = 0.904\), \(\hat{\beta}_{XZ} = -3.17\)) recapitulates
the same ligand--receptor--pathway axis discovered in the CD4\(^+\)
T cells \(\rightarrow\) B cells direction (PIP \(= 0.798\)), but with
a strikingly opposite interaction sign.
In the CD4\(^+\) sender direction, the interaction was large and
positive (\(\hat{\beta}_{XZ} = 3.16\)), producing a monotonically and
steeply stimulatory curve through IL-12R\(\beta_1\); here, the
interaction is equally large but negative (\(\hat{\beta}_{XZ} = -3.17\)),
producing a sign-reversing curve that crosses zero at approximately
\(0.29\) standard deviations above the mean \textit{IL12RB1} level
and descends steeply thereafter.
B cells with below-average IL-12R\(\beta_1\) expression therefore
experience a strong stimulatory response to CD8\(^+\) T cell IL-23A
production, while those with above-average receptor expression undergo
progressively stronger suppression.
The exact inversion of the interaction term between the CD4\(^+\) and
CD8\(^+\) sender directions for the same triplet---\(\hat{\beta}_{XZ} \approx +3.16\)
versus \(\hat{\beta}_{XZ} \approx -3.17\)---is a striking example of
sender-cell-specific modulation of the same ligand--receptor--pathway
axis, underscoring the capacity of MR-CCC to detect qualitatively
different causal communication patterns from different sender cell
populations~\cite{oppmann2000il23}.

\noindent\textbf{Low-PIP pairs.}
\textit{WNT7A}--\textit{FZD3} (PIP \(= 0.458\)) was the
highest-ranking non-discovered triplet, a notable near-miss for which
no established biological basis exists for CD8\(^+\) T cell WNT
signaling to B cells~\cite{clevers2006wnt}.
\textit{IFNG}--\textit{IFNGR1} (PIP \(= 0.087\)) and
\textit{IFNG}--\textit{IFNGR2} (PIP \(= 0.074\)) both received
near-zero PIPs despite CD8\(^+\) cytotoxic T cells being a major
physiological source of IFN-\(\gamma\)~\cite{schroder2004interferon}.
The low PIPs here likely reflect that IFN-\(\gamma\)-driven B cell
Interferon pathway activity is not well-explained by direct CD8\(^+\)
T cell cis-eQTL variation for \textit{IFNG} at the population level in
this cohort, or that the target pathway genes in B cells are
insufficiently variable to detect this signal.
\textit{IL15}--\textit{IL15RA} (PIP \(= 0.094\)) and
\textit{IL15}--\textit{IL2RB} (PIP \(= 0.101\)) received near-zero
support, consistent with B cells not being primary targets of
IL-15-mediated homeostatic signaling~\cite{waldmann2006il15}.

\begin{figure}[htbp]
\centering
\includegraphics[width=\textwidth]{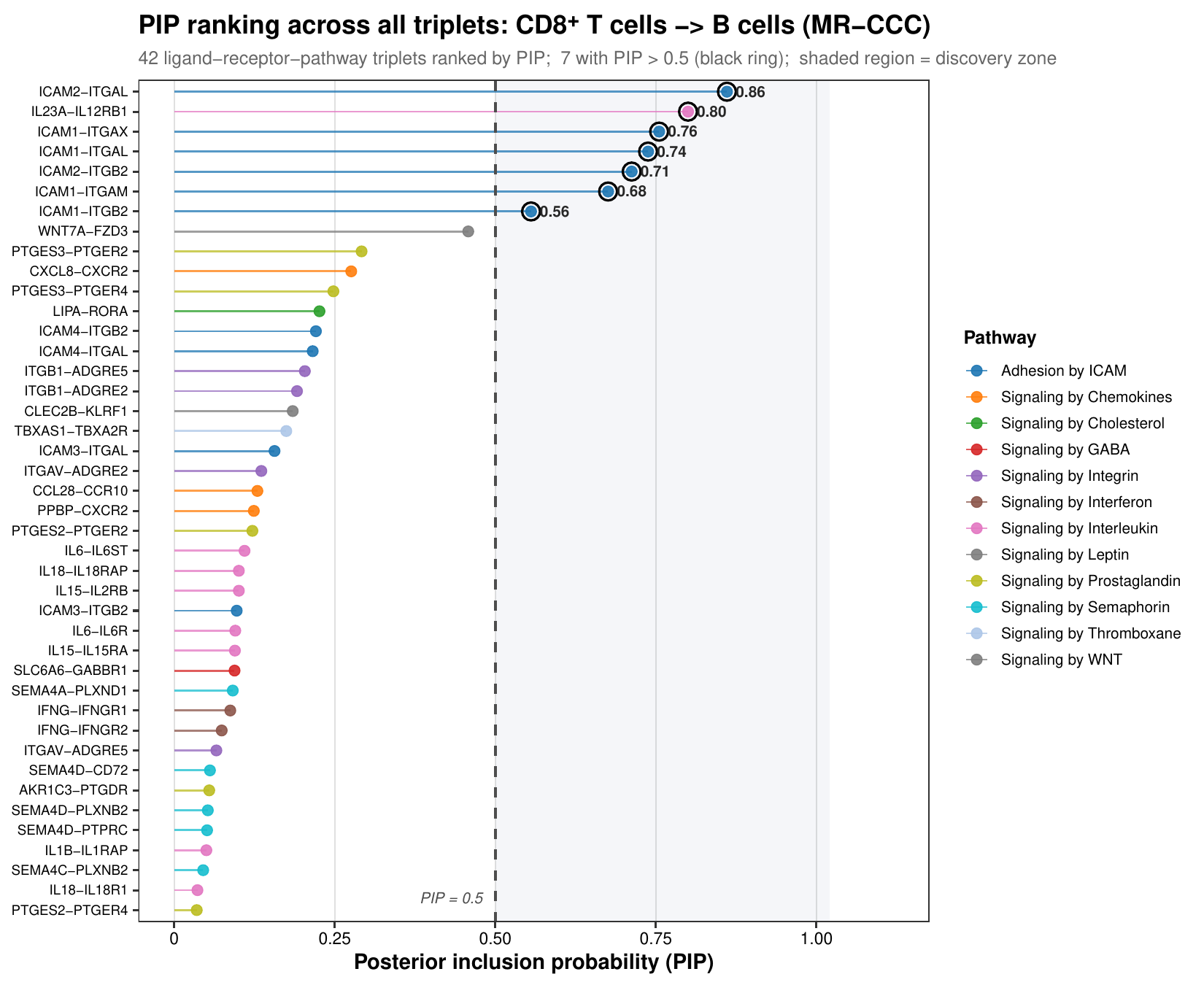}
\caption{\textbf{PIP ranking for the CD8$^+$ T cells\,$\rightarrow$\,B cells analysis.} All 42 ligand--receptor--pathway triplets across 845
  donors. Points are colored by pathway; the dashed vertical line
  marks the discovery threshold of PIP \(= 0.5\); the shaded region
  to the right is the discovery zone. Black rings identify the seven
  discovered triplets spanning Adhesion by ICAM
  (\textit{ICAM2}--\textit{ITGAL}, \textit{ICAM1}--\textit{ITGAX},
  \textit{ICAM1}--\textit{ITGAL}, \textit{ICAM2}--\textit{ITGB2},
  \textit{ICAM1}--\textit{ITGAM}, \textit{ICAM1}--\textit{ITGB2})
  and Signaling by Interleukin (\textit{IL23A}--\textit{IL12RB1}).}
\label{fig:supp_CD8_B_pip}
\end{figure}

\begin{figure}[htbp]
\centering
\includegraphics[width=\textwidth]{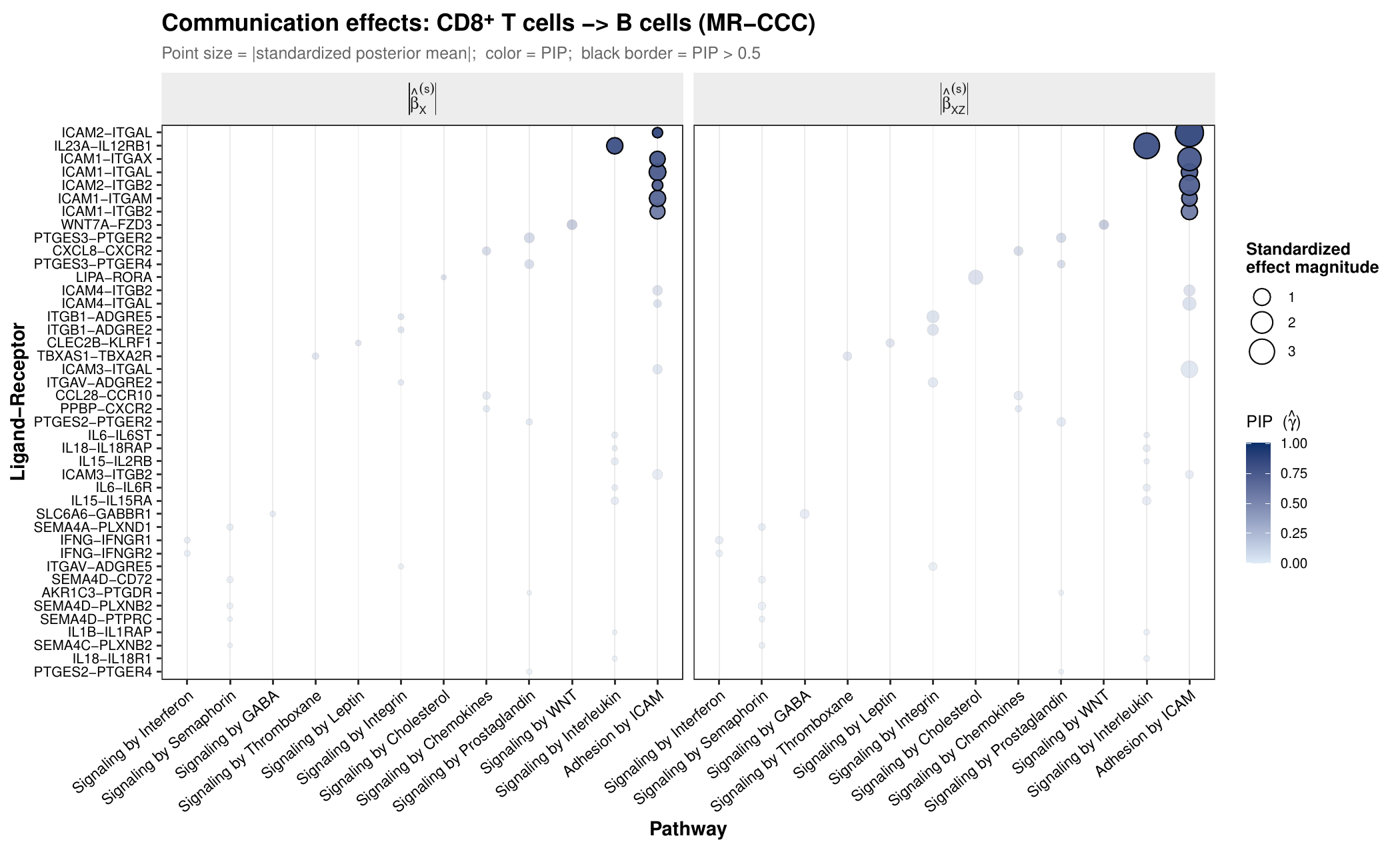}
\caption{\textbf{Standardized posterior effects for the CD8\(^+\) T cells
  \(\rightarrow\) B cells analysis.} Left panel: absolute main ligand
  effect \(|\hat{\beta}_X^{(s)}|\); right panel: absolute
  receptor-modulated interaction effect \(|\hat{\beta}_{XZ}^{(s)}|\).
  Point size encodes effect magnitude; fill color encodes PIP; black
  borders identify the seven discovered triplets. The ICAM--integrin
  cluster dominates the Adhesion by ICAM column in both panels;
  \textit{IL23A}--\textit{IL12RB1} shows the largest main effect
  and a large interaction magnitude in the Signaling by Interleukin
  column.}
\label{fig:supp_CD8_B_bubble}
\end{figure}

\begin{figure}[htbp]
\centering
\includegraphics[width=\textwidth]{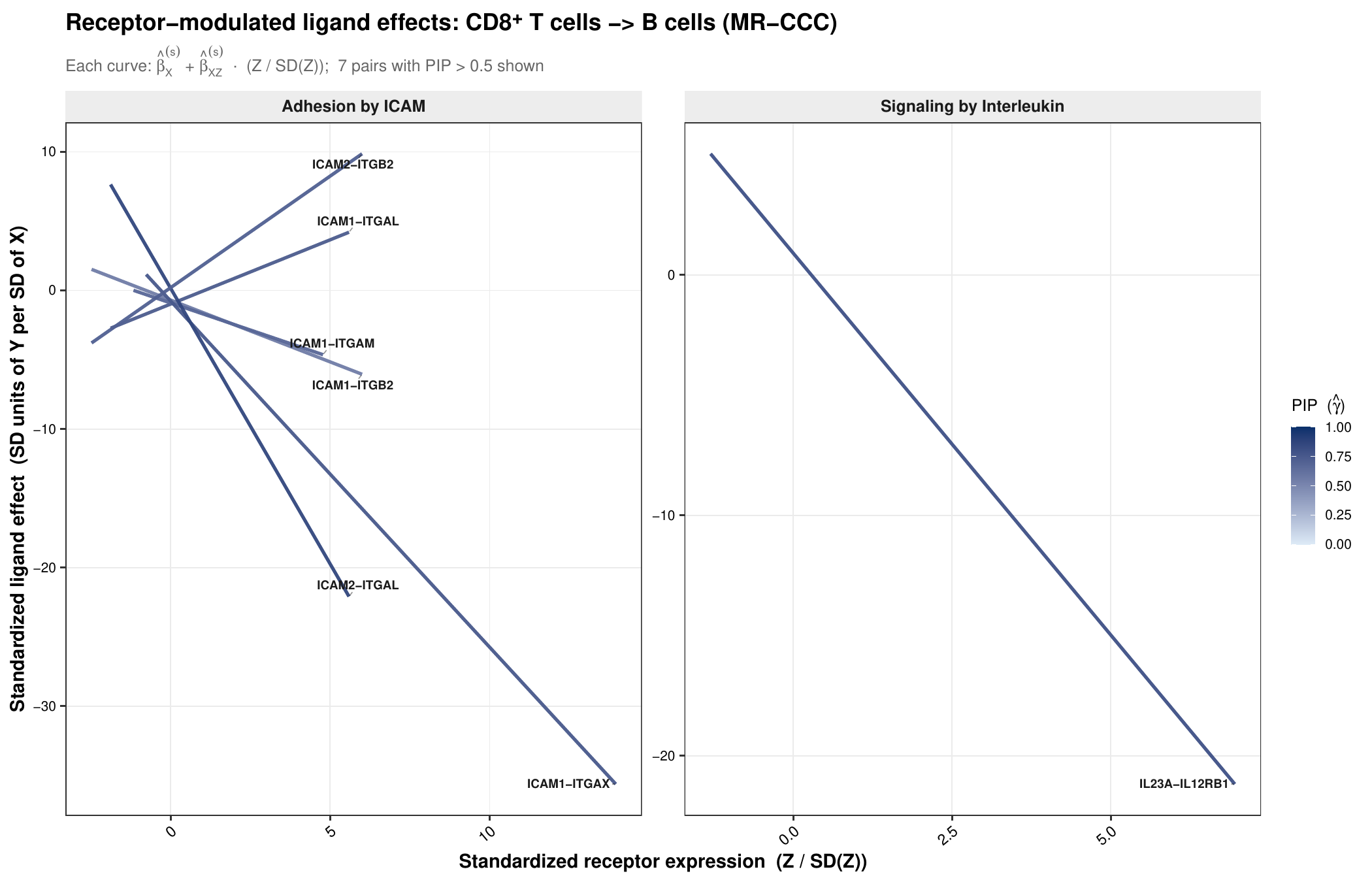}
\caption{\textbf{Receptor-modulated effect curves
  (\(\hat{\beta}_X + \hat{\beta}_{XZ} \cdot Z/\mathrm{SD}(Z)\))
  for the CD8\(^+\) T cells \(\rightarrow\) B cells analysis.}
  Only the seven discovery pairs (PIP \(> 0.5\)) are displayed,
  grouped into the two pathway panels with confirmed discoveries
  (Adhesion by ICAM and Signaling by Interleukin). In the Adhesion
  by ICAM panel, six curves form a predominantly downward-sloping
  fan: \textit{ICAM2}--\textit{ITGAL} and
  \textit{ICAM1}--\textit{ITGAX} descend most steeply to strongly
  negative values at high receptor expression, while
  \textit{ICAM1}--\textit{ITGAM} and \textit{ICAM1}--\textit{ITGB2}
  are monotonically suppressive across the observed range;
  \textit{ICAM1}--\textit{ITGAL} shows a partially positive
  interaction that moderates suppression at the highest LFA-1 densities;
  and \textit{ICAM2}--\textit{ITGB2} is the sole rising curve,
  monotonically stimulatory across the observed \(\beta_2\) expression
  range. In the Signaling by Interleukin panel, the
  \textit{IL23A}--\textit{IL12RB1} curve starts at a strongly
  positive value at low receptor expression and descends steeply,
  crossing zero near \(+0.29\) standard deviations above the mean
  and reaching strongly negative values at high IL-12R\(\beta_1\)
  density.}
\label{fig:supp_CD8_B_curves}
\end{figure}
\FloatBarrier

\subsubsection{CD8\(^+\) T Cells \(\rightarrow\) CD4\(^+\) T Cells}
\label{supp:CD8_CD4}
Across 882 donors, MR-CCC evaluated 42 ligand--receptor--pathway
triplets for the CD8\(^+\) T cell (sender) to CD4\(^+\) T cell
(receiver) direction.
No triplet reached the PIP \(> 0.5\) discovery threshold: the
highest-ranking pair, \textit{CXCL8}--\textit{CXCR2}, achieved a
PIP of only 0.367, and the distribution of PIPs was compressed below
0.4 across all 42 triplets
(Figures~\ref{fig:supp_CD8_CD4_pip}--\ref{fig:supp_CD8_CD4_bubble}).
Because no triplet exceeded the PIP \(> 0.5\) threshold, no
receptor-modulated effect curves are displayed for this direction.
This null result is consistent with the biological asymmetry of the
CD8\(^+\)--CD4\(^+\) T cell axis: CD8\(^+\) cytotoxic T cells are
primarily effectors of target-cell killing and cytokine-mediated
immune regulation, and do not function as major ligand-expressing
senders of pathway-activating signals to CD4\(^+\) T cells in the
context of peripheral blood homeostasis.
The absence of high-PIP discoveries here, alongside the substantial
number of discoveries identified for the reverse direction
(CD4\(^+\) T cells \(\rightarrow\) CD8\(^+\) T cells; three
triplets, PIP up to 0.668), underscores the directional specificity
of MR-CCC and the biological asymmetry of helper versus cytotoxic T
cell communication in peripheral blood.

Among the highest-ranking sub-threshold triplets,
\textit{CXCL8}--\textit{CXCR2} (PIP \(= 0.367\),
\(\hat{\beta}_X = -0.145\), \(\hat{\beta}_{XZ} = 0.795\)) reflects
the capacity of activated CD8\(^+\) T cells to produce CXCL8
(IL-8), which can engage CXCR2 on CD4\(^+\) T cells; however,
CXCR2 expression on CD4\(^+\) T cells is limited and variable in
peripheral blood, likely attenuating the population-level causal
signal below the discovery threshold.
\textit{IL15}--\textit{IL2RB} (PIP \(= 0.334\),
\(\hat{\beta}_X = 0.196\), \(\hat{\beta}_{XZ} = 0.557\)) and
\textit{IL15}--\textit{IL15RA} (PIP \(= 0.235\),
\(\hat{\beta}_X = 0.163\), \(\hat{\beta}_{XZ} = 0.054\)) both
represent the potential for CD8\(^+\) T cells to contribute to
IL-15-mediated signaling to CD4\(^+\) T cells through the
shared IL-2R\(\beta\) chain and the high-affinity
IL-15R\(\alpha\) chain~\cite{waldmann2006il15,kennedy2000il15};
the sub-threshold PIPs suggest this IL-15 axis from CD8\(^+\) T
cells to CD4\(^+\) T cells is insufficiently consistent at the
population level compared with the B cell sender direction, where
both chains were discovered.
\textit{SEMA4A}--\textit{PLXND1} (PIP \(= 0.320\)) and
\textit{SEMA4C}--\textit{PLXNB2} (PIP \(= 0.258\)) represent the
two highest-ranking Semaphorin pathway near-misses, consistent with
semaphorin expression on activated CD8\(^+\) T cells but
insufficient to drive a detectable causal communication effect on
CD4\(^+\) T cell Semaphorin pathway activity~\cite{suzuki2008sema4d}.
\textit{IFNG}--\textit{IFNGR1} (PIP \(= 0.032\)) and
\textit{IFNG}--\textit{IFNGR2} (PIP \(= 0.030\)) received
near-zero support despite CD8\(^+\) cytotoxic T cells being a
major physiological source of IFN-\(\gamma\); the very low PIPs
indicate that CD8\(^+\) T cell-derived IFN-\(\gamma\) does not
exert a detectable causal effect on CD4\(^+\) T cell Interferon
pathway activity at the population level in this cohort, in
contrast to the CD4\(^+\) T cells \(\rightarrow\) NK cells
direction where \textit{IFNG}--\textit{IFNGR2} was
discovered~\cite{schroder2004interferon,bach1997ifngr}.

\begin{figure}[htbp]
\centering
\includegraphics[width=\textwidth]{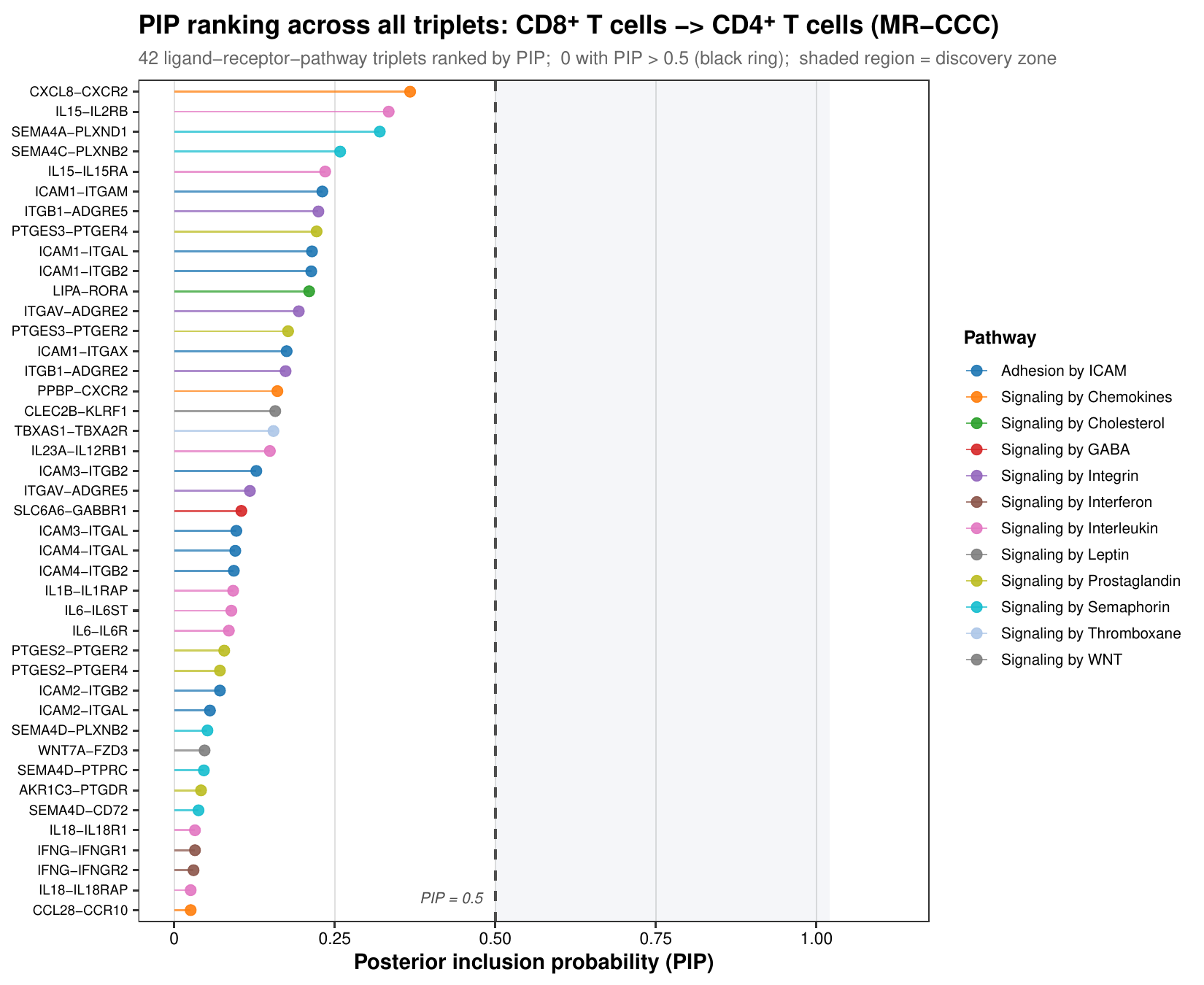}
\caption{\textbf{PIP ranking for the CD8$^+$ T cells\,$\rightarrow$\,CD4\(^+\) T cells analysis.} All 42 ligand--receptor--pathway triplets across 882 donors. Points are colored by pathway; the dashed
  vertical line marks the discovery threshold of PIP \(= 0.5\); the
  shaded region to the right is the discovery zone. No triplet
  reached the discovery threshold; the highest-ranking pair is
  \textit{CXCL8}--\textit{CXCR2} (PIP \(= 0.37\)), and all 42
  triplets have PIP \(< 0.4\).}
\label{fig:supp_CD8_CD4_pip}
\end{figure}

\begin{figure}[htbp]
\centering
\includegraphics[width=\textwidth]{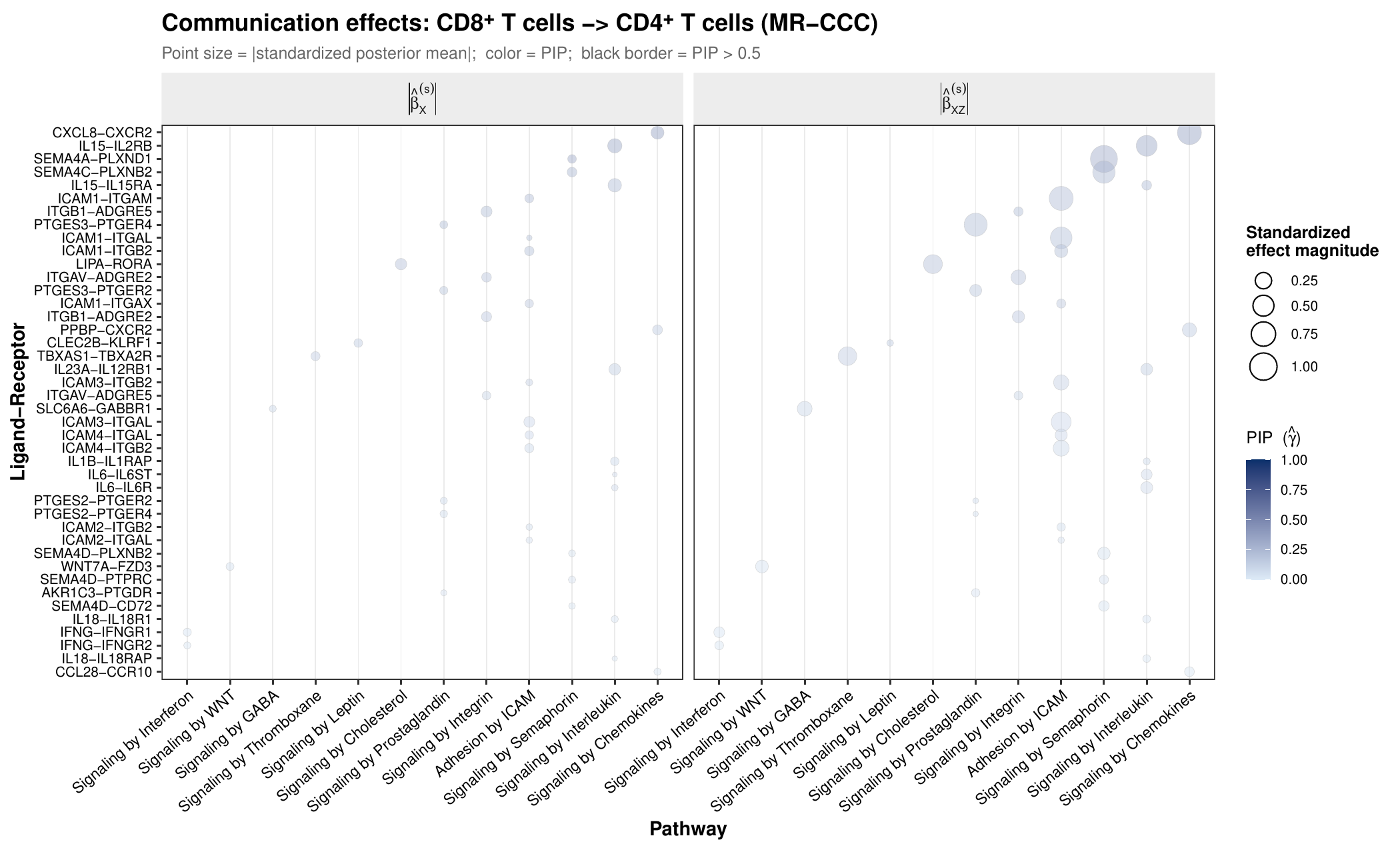}
\caption{\textbf{Standardized posterior effects for the CD8\(^+\) T cells
  \(\rightarrow\) CD4\(^+\) T cells analysis.} Left panel: absolute
  main ligand effect \(|\hat{\beta}_X^{(s)}|\); right panel: absolute
  receptor-modulated interaction effect \(|\hat{\beta}_{XZ}^{(s)}|\).
  Point size encodes effect magnitude; fill color encodes PIP. No
  triplet has a black border (PIP \(> 0.5\)). Effect magnitudes are
  small throughout both panels and no pathway cluster shows a
  concentration of large high-PIP effects, consistent with the
  absence of discoveries in this direction.}
\label{fig:supp_CD8_CD4_bubble}
\end{figure}

\FloatBarrier

\subsubsection{CD8\(^+\) T Cells \(\rightarrow\) NK Cells}
\label{supp:CD8_NK}
Across 830 donors, MR-CCC evaluated 42 ligand--receptor--pathway
triplets for the CD8\(^+\) T cell (sender) to NK cell (receiver)
direction and identified one high-confidence causal communication
signal with PIP exceeding 0.5: \textit{ICAM1}--\textit{ITGAX}
within the Adhesion by ICAM pathway (PIP \(= 0.642\))
(Figures~\ref{fig:supp_CD8_NK_pip}--\ref{fig:supp_CD8_NK_curves}).

The sole discovery, \textit{ICAM1}--\textit{ITGAX}
(PIP \(= 0.642\)), implicates CD8\(^+\) T cell intercellular
adhesion molecule 1 (\textit{ICAM1}) as a causal regulator of NK
cell Adhesion pathway activity through the integrin \(\alpha\)X chain
(CD11c, encoded by \textit{ITGAX}), which pairs with \(\beta_2\)
(CD18) to form complement receptor 4 (CR4, p150,95) expressed on
mature NK cells and myeloid
cells~\cite{myones1988cd11c,dustin1986icam}.
The near-zero main effect (\(\hat{\beta}_X = 0.024\)) and large
negative interaction (\(\hat{\beta}_{XZ} = -2.25\)) produce a
curve that crosses zero approximately \(0.01\) standard deviations
above the mean \textit{ITGAX} level---essentially at the population
mean.
The causal signal is therefore almost entirely interaction-driven:
NK cells expressing \textit{ITGAX} at or above the population mean
experience a steeply increasing suppression of Adhesion pathway
activity upon CD8\(^+\) T cell ICAM1 contact, while those with
below-mean CD11c expression are effectively unaffected.
This receptor-gated suppressive pattern mirrors the
\textit{ICAM1}--\textit{ITGAX} discovery in the B cells
\(\rightarrow\) Monocytes direction (PIP \(= 0.562\),
\(\hat{\beta}_{XZ} = -2.04\)), where CD11c-high monocytes
similarly experienced contact-mediated suppression via the same
integrin \(\alpha\)X axis, suggesting that CD11c expression level
is a general gate for ICAM1-mediated suppressive adhesion contacts
across multiple receiver cell types.

Notably, several triplets narrowly missed the discovery threshold.
\textit{CLEC2B}--\textit{KLRF1} (PIP \(= 0.498\)) was the
highest-ranking sub-threshold triplet, with a PIP essentially at
the discovery boundary.
This pair implicates AICL (activation-induced C-type lectin, encoded
by \textit{CLEC2B}) on CD8\(^+\) T cells as a ligand for the
activating NK receptor NKp80 (encoded by \textit{KLRF1}), an axis
established to mediate mutual cytotoxic activation between NK cells
and myeloid cells, and expressed on activated T
cells~\cite{welte2006nkp80}.
The moderately negative main effect (\(\hat{\beta}_X = -0.097\)) and
negative interaction (\(\hat{\beta}_{XZ} = -0.495\)) produce a
predominantly suppressive curve for NK cells expressing
\textit{KLRF1} above approximately \(0.20\) standard deviations
below the mean, encompassing the majority of the observed
population.
The \textit{PTGES3}--\textit{PTGER2} triplet (PIP \(= 0.423\))
again represents the PGE\(_2\)--EP2 axis on NK cells, which was
discovered from B cell (PIP \(= 0.556\)) and CD4\(^+\) T cell
(PIP \(= 0.625\)) senders but remains sub-threshold from the CD8\(^+\)
sender, suggesting that CD8\(^+\) T cells are less dominant
prostaglandin producers than B cells or CD4\(^+\) T cells in this
cohort~\cite{kalinski2012pge2}.
\textit{ICAM1}--\textit{ITGAM} (PIP \(= 0.375\)) and
\textit{ICAM2}--\textit{ITGAL} (PIP \(= 0.369\)) were notable
Adhesion by ICAM near-misses, both showing near-zero main effects
with moderate negative interactions (\(\hat{\beta}_{XZ} = -1.14\)
and \(-0.359\), respectively), indicating receptor-density-gated
suppressive contacts that fell below the discovery threshold.
\textit{PTGES3}--\textit{PTGER4} (PIP \(= 0.367\)) provided
parallel evidence for a sub-threshold PGE\(_2\)--EP4 axis
through a distinct prostaglandin receptor.
\textit{IFNG}--\textit{IFNGR2} (PIP \(= 0.209\)) and
\textit{IFNG}--\textit{IFNGR1} (PIP \(= 0.137\)) were both
sub-threshold; the contrast with the CD4\(^+\) T cells
\(\rightarrow\) NK cells direction, where
\textit{IFNG}--\textit{IFNGR2} was discovered (PIP \(= 0.542\)),
likely reflects greater heterogeneity in IFN-\(\gamma\) production
across CD8\(^+\) T cell states at the population level compared with
the more coordinated Th1-driven IFN-\(\gamma\) output of CD4\(^+\)
T cells~\cite{schroder2004interferon,bach1997ifngr}.

\begin{figure}[htbp]
\centering
\includegraphics[width=\textwidth]{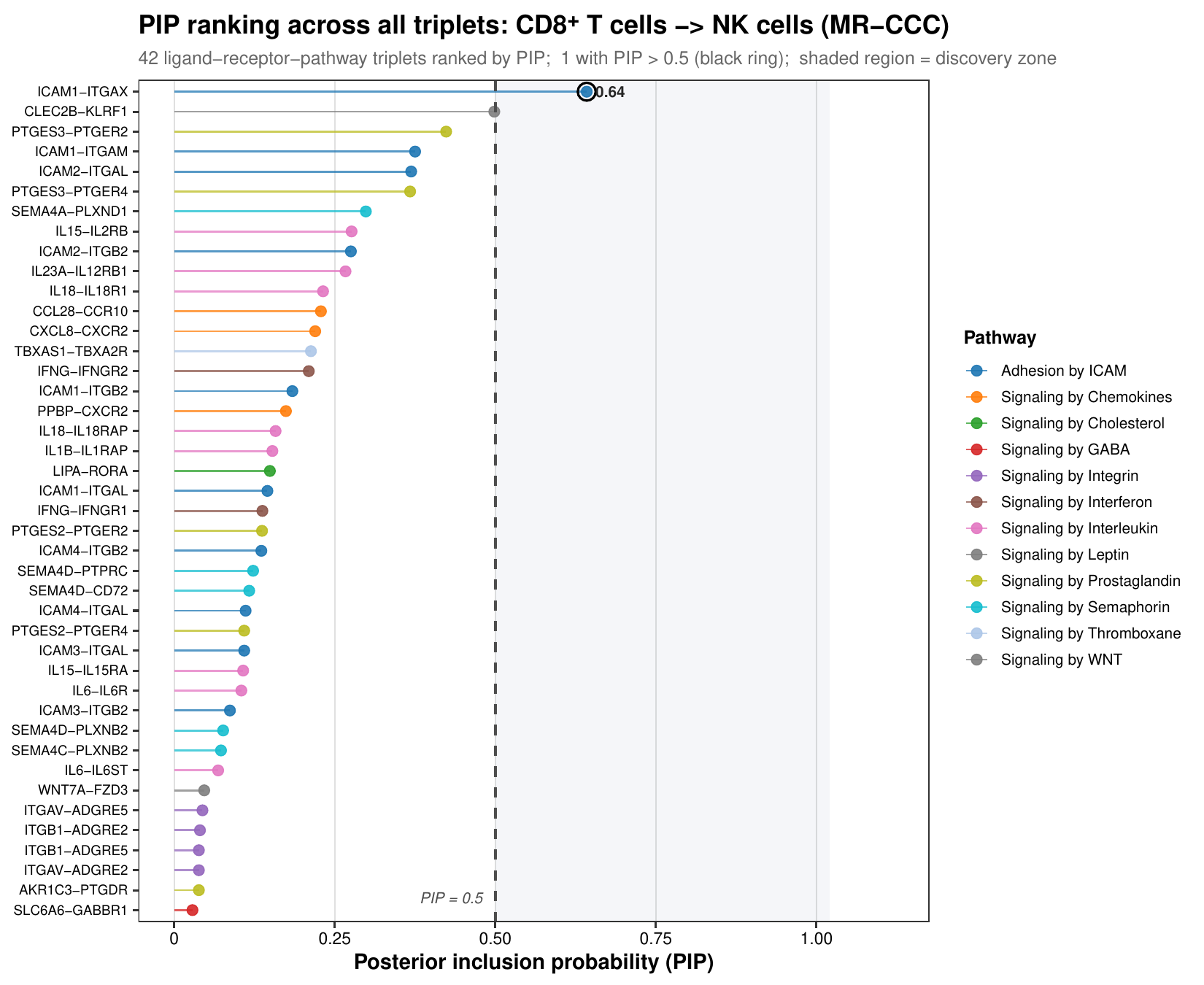}
\caption{\textbf{PIP ranking for the CD8$^+$ T cells\,$\rightarrow$\,NK cells analysis.} All 42 ligand--receptor--pathway triplets across 830
  donors. Points are colored by pathway; the dashed vertical line
  marks the discovery threshold of PIP \(= 0.5\); the shaded region
  to the right is the discovery zone. The single discovered triplet,
  \textit{ICAM1}--\textit{ITGAX} (PIP \(= 0.64\)), is identified
  with a black ring. The second-ranked triplet,
  \textit{CLEC2B}--\textit{KLRF1} (PIP \(= 0.498\)), falls
  marginally below the threshold.}
\label{fig:supp_CD8_NK_pip}
\end{figure}

\begin{figure}[htbp]
\centering
\includegraphics[width=\textwidth]{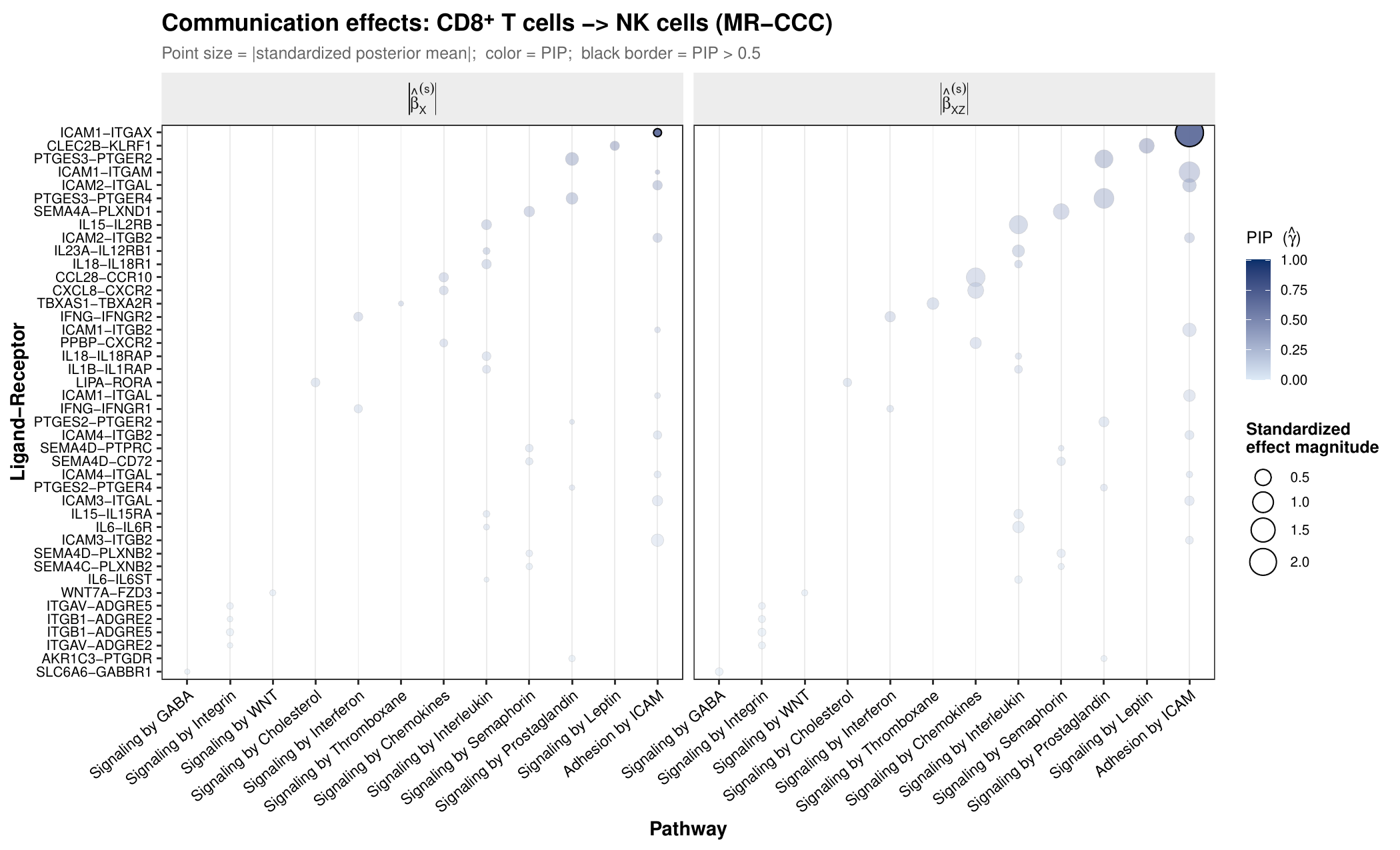}
\caption{\textbf{Standardized posterior effects for the CD8\(^+\) T cells
  \(\rightarrow\) NK cells analysis.} Left panel: absolute main
  ligand effect \(|\hat{\beta}_X^{(s)}|\); right panel: absolute
  receptor-modulated interaction effect \(|\hat{\beta}_{XZ}^{(s)}|\).
  Point size encodes effect magnitude; fill color encodes PIP; the
  single black-bordered point identifies the discovered triplet
  \textit{ICAM1}--\textit{ITGAX}. The right panel shows a larger
  interaction magnitude for \textit{ICAM1}--\textit{ITGAX} relative
  to its main effect in the left panel, consistent with the
  near-purely interaction-driven causal signal.}
\label{fig:supp_CD8_NK_bubble}
\end{figure}

\begin{figure}[htbp]
\centering
\includegraphics[width=\textwidth]{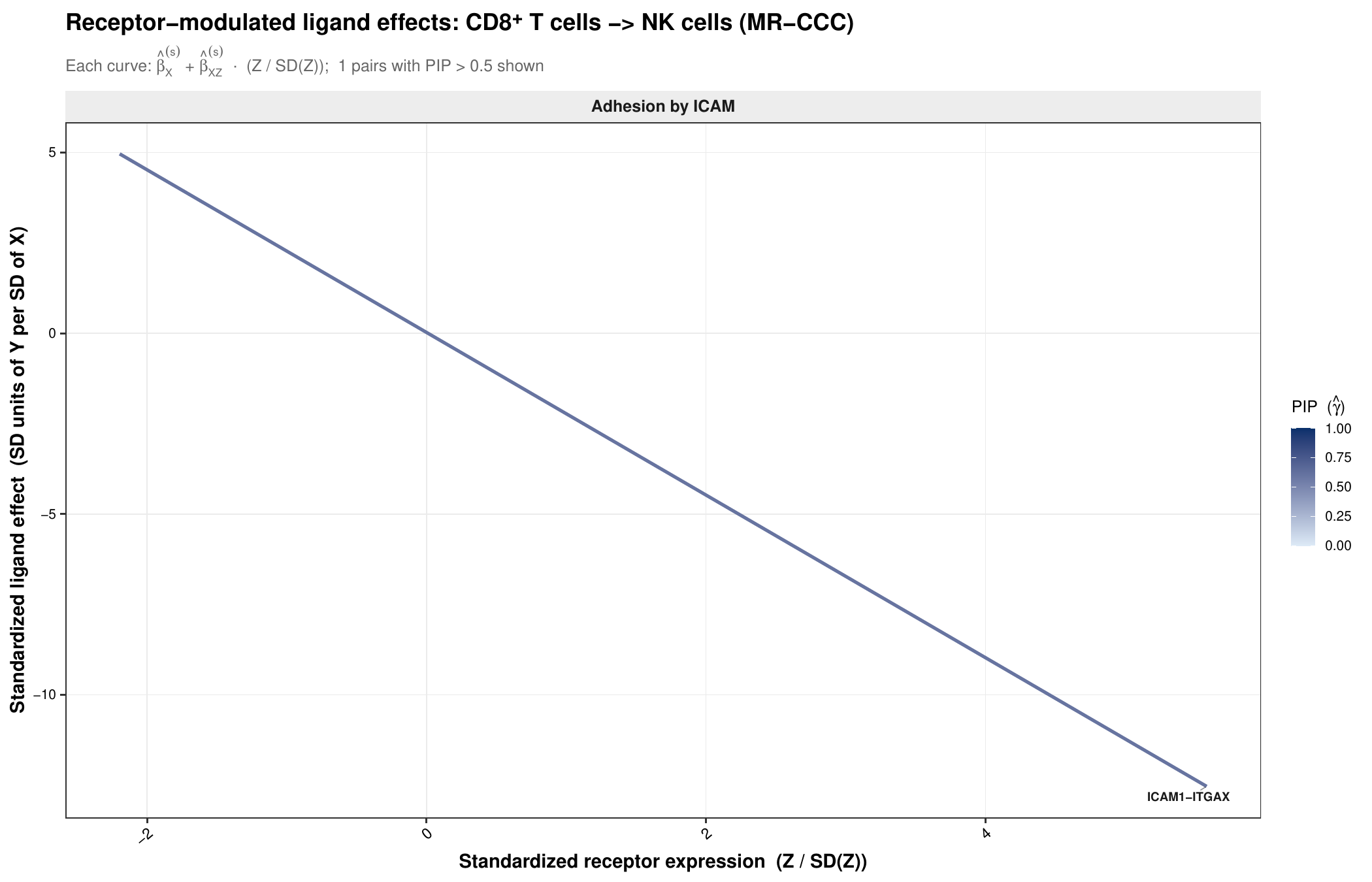}
\caption{\textbf{Receptor-modulated effect curves
  (\(\hat{\beta}_X + \hat{\beta}_{XZ} \cdot Z/\mathrm{SD}(Z)\))
  for the CD8\(^+\) T cells \(\rightarrow\) NK cells analysis.}
  Only the single discovery pair (PIP \(> 0.5\)) is displayed in
  the Adhesion by ICAM panel. The
  \textit{ICAM1}--\textit{ITGAX} curve crosses zero essentially at
  the population-mean \textit{ITGAX} level and descends steeply
  to strongly negative values at high CD11c expression, consistent
  with a near-purely interaction-driven suppressive signal.}
\label{fig:supp_CD8_NK_curves}
\end{figure}
\FloatBarrier

\subsubsection{CD8\(^+\) T Cells \(\rightarrow\) Monocytes}
\label{supp:CD8_Mono}

Across 678 donors, MR-CCC evaluated 42 ligand--receptor--pathway triplets for the
CD8\(^+\) T cell (sender) to Monocyte (receiver) direction and identified two
high-confidence causal communication signals with PIP exceeding 0.5, both involving
the cytokine \textit{IL18} paired with components of its heterodimeric receptor complex
(Figures~\ref{fig:supp_CD8_Mono_pip}--\ref{fig:supp_CD8_Mono_curves}).

\noindent\textbf{Signaling by Interleukin.}
\textit{IL18}--\textit{IL18R1} attained PIP $= 0.82$ with
$\hat{\beta}_X = 0.515$ and $\hat{\beta}_{XZ} = 1.63$, indicating that the positive main
effect of \textit{IL18} on monocyte Interleukin pathway activity is \emph{amplified} as
monocyte \textit{IL18R1} levels rise. The sign-reversal threshold
$Z^{*} = -\hat{\beta}_X / \hat{\beta}_{XZ} \approx -0.32$ SD lies well below the
population mean, so the causal effect is positive across essentially the entire monocyte
population and increases monotonically with receptor abundance---consistent with a
feed-forward architecture in which IL-18-responsive monocytes upregulate the very receptor
subunit (\textit{IL18R1}, the ligand-binding $\alpha$ chain) that mediates the
response~\cite{okamura1995interleukin18}.

\textit{IL18}--\textit{IL18RAP} attained PIP $= 0.85$ with
$\hat{\beta}_X = 0.483$ and $\hat{\beta}_{XZ} = -1.87$. Here the receptor covariate is
\textit{IL18RAP}, the signal-transducing $\beta$ (accessory-protein) chain of the same
heterodimer~\cite{born1998il18rap}. The negative $\hat{\beta}_{XZ}$ means that monocytes
expressing high levels of the accessory chain show an \emph{attenuated}---and, above the
sign-reversal threshold $Z^{*} \approx +0.26$ SD, \emph{reversed}---causal effect of
\textit{IL18}. Together with the \textit{IL18}--\textit{IL18R1} result, this suggests
that the two receptor chains are not simply co-regulated: monocytes with relatively
abundant $\alpha$ chain are sensitized to IL-18, whereas monocytes with relatively
abundant $\beta$ accessory chain may exhibit saturated or downregulated signaling, and
can even show a negative response---potentially through receptor desensitization or
negative-feedback pathways downstream of IL-18R~\cite{okamura1995interleukin18,born1998il18rap}.

The most notable sub-threshold triplets were \textit{ICAM3}--\textit{ITGB2}
(PIP $= 0.43$), \textit{IFNG}--\textit{IFNGR2} (PIP $= 0.40$), and
\textit{IFNG}--\textit{IFNGR1} (PIP $= 0.39$). The IFN-$\gamma$ pairs are
biologically noteworthy: CD8\(^+\) T cells are a principal source of IFN-$\gamma$ and
monocytes express both receptor chains at high levels, yet neither triplet crossed the
discovery threshold here, in contrast to the CD4\(^+\) T cells \(\rightarrow\) NK cells
direction where \textit{IFNG}--\textit{IFNGR2} was robustly discovered
(PIP $= 0.542$). This likely reflects greater donor-to-donor variability in CD8\(^+\)
IFN-$\gamma$ secretion relative to CD4\(^+\) Th1 cells in this cohort, or a more
complex multivariate architecture for CD8-mediated monocyte activation that is not fully
captured by a single ligand--receptor pair~\cite{bach1997ifngr}.

\begin{figure}[htbp]
\centering
\includegraphics[width=\textwidth]{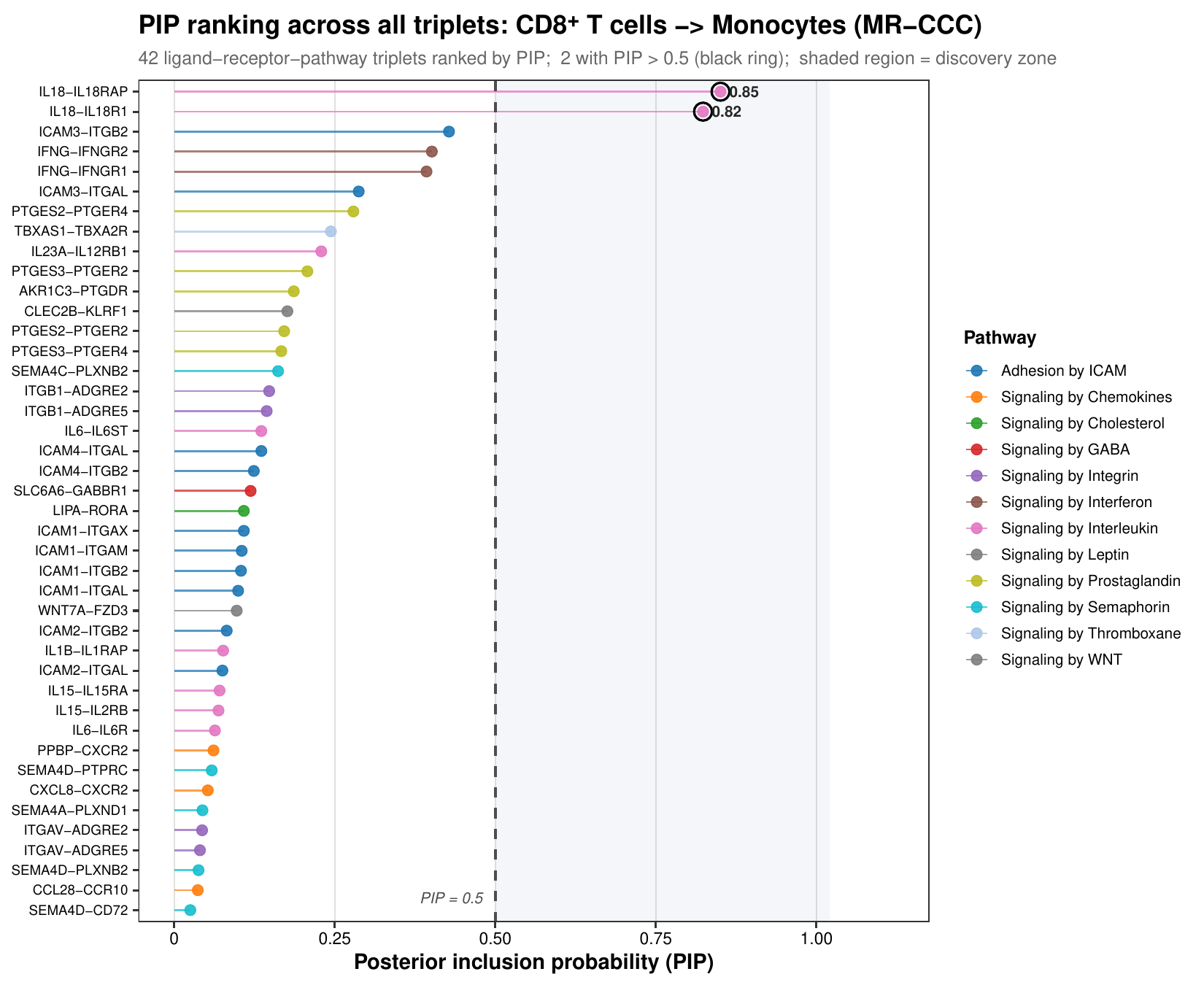}
\caption{\textbf{PIP ranking for the CD8$^+$ T cells\,$\rightarrow$\,Monocytes analysis.} All 42 ligand--receptor--pathway triplets across 678
  donors. Points are colored by pathway; the dashed vertical line
  marks the discovery threshold of PIP \(= 0.5\); the shaded region
  to the right is the discovery zone. Black rings identify the two
  discovered triplets: \textit{IL18}--\textit{IL18RAP}
  (PIP \(= 0.85\)) and \textit{IL18}--\textit{IL18R1}
  (PIP \(= 0.82\)), both within Signaling by Interleukin.}
\label{fig:supp_CD8_Mono_pip}
\end{figure}

\begin{figure}[htbp]
\centering
\includegraphics[width=\textwidth]{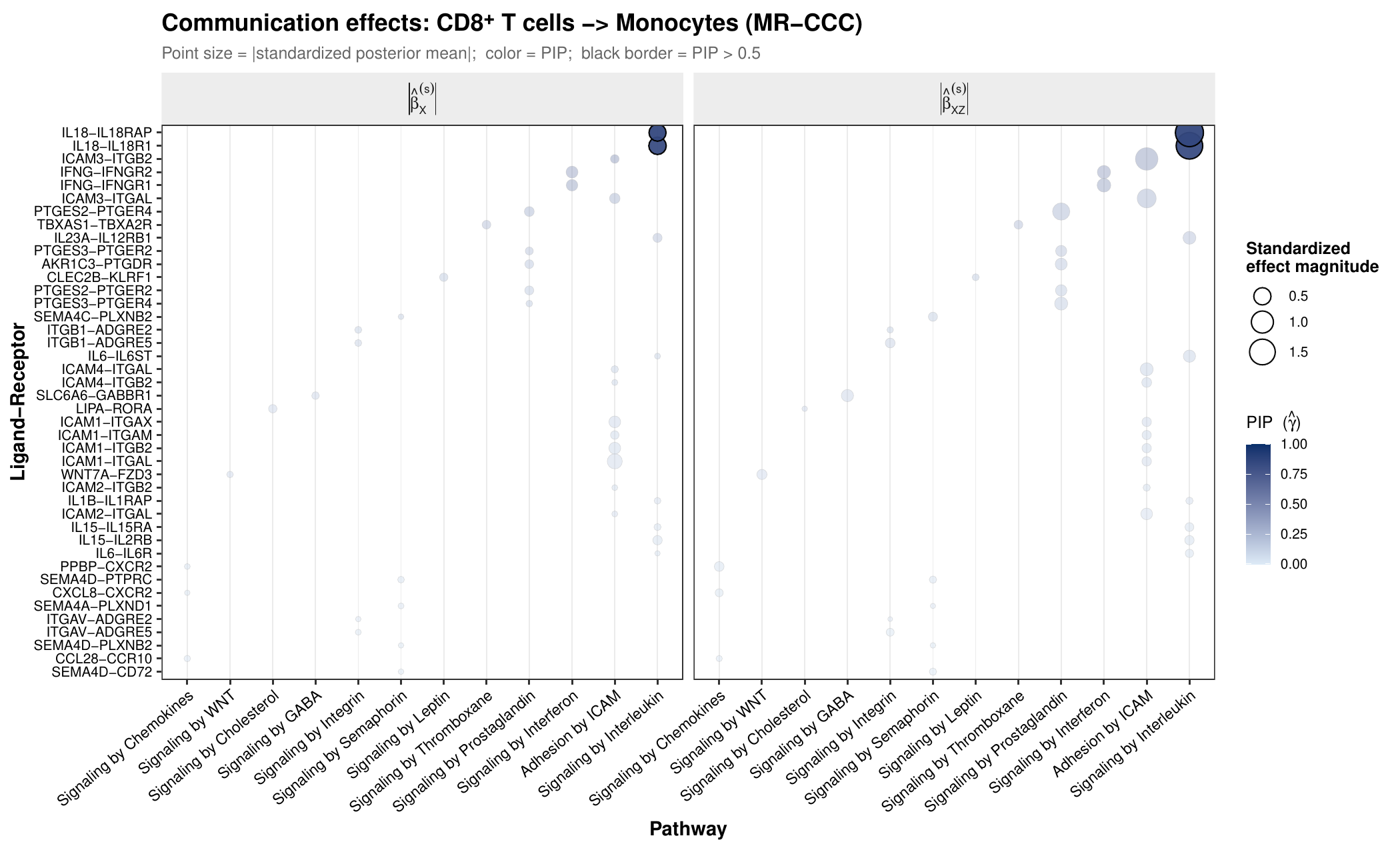}
\caption{\textbf{Standardized posterior effects for the CD8\(^+\) T cells
  \(\rightarrow\) Monocytes analysis.} Left panel: absolute main ligand
  effect \(|\hat{\beta}_X^{(s)}|\); right panel: absolute
  receptor-modulated interaction effect \(|\hat{\beta}_{XZ}^{(s)}|\).
  Point size encodes effect magnitude; fill color encodes PIP; black
  borders identify the two discovered triplets. Both
  \textit{IL18}--\textit{IL18RAP} and \textit{IL18}--\textit{IL18R1}
  show appreciable effects in both panels, with dominant interaction
  components relative to their main effects.}
\label{fig:supp_CD8_Mono_bubble}
\end{figure}

\begin{figure}[htbp]
\centering
\includegraphics[width=\textwidth]{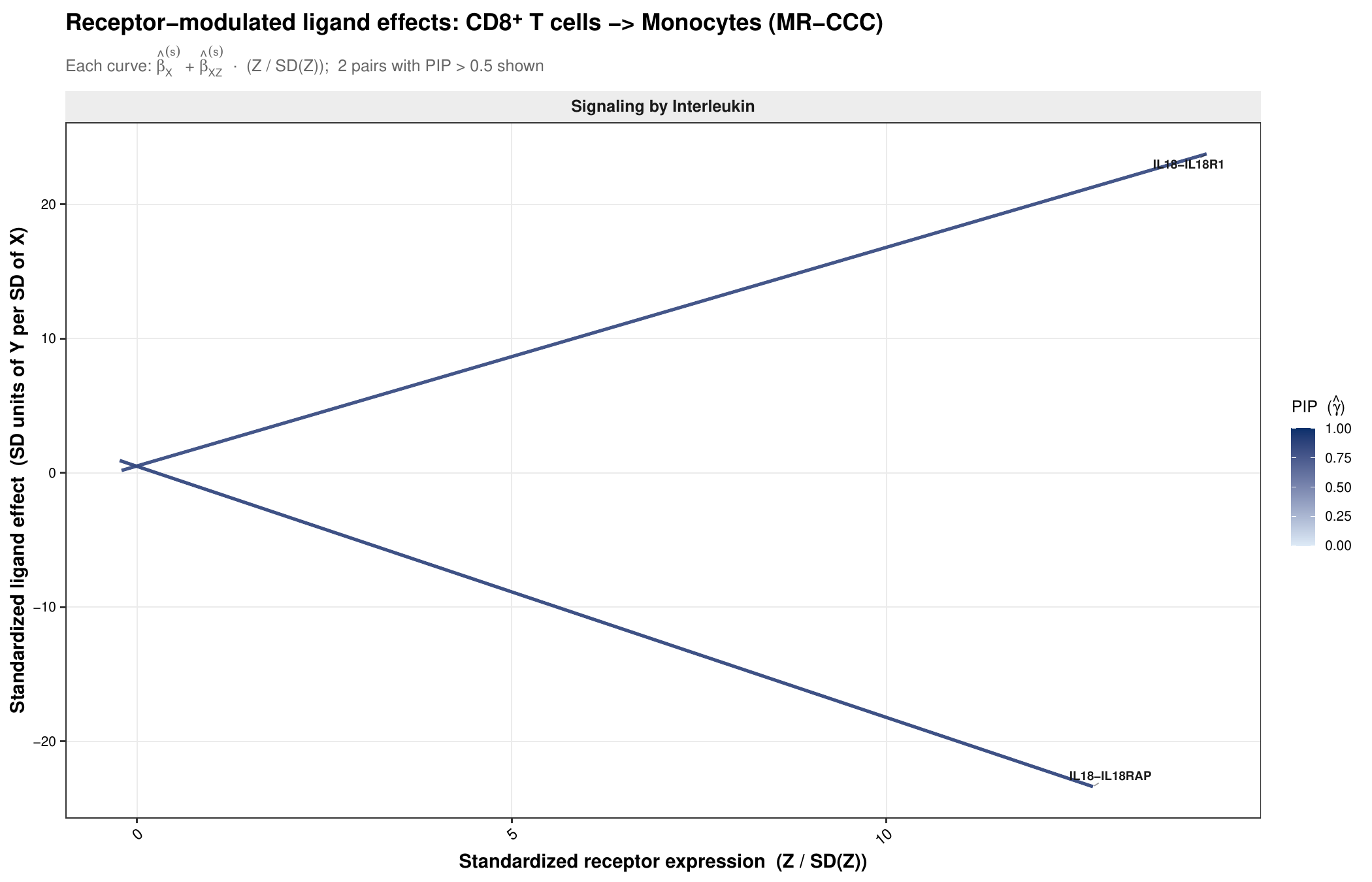}
\caption{\textbf{Receptor-modulated effect curves
  (\(\hat{\beta}_X + \hat{\beta}_{XZ} \cdot Z/\mathrm{SD}(Z)\))
  for the CD8\(^+\) T cells \(\rightarrow\) Monocytes analysis.}
  Only the two discovery pairs (PIP \(> 0.5\)) are displayed in the
  single pathway panel with confirmed discoveries (Signaling by
  Interleukin). The two curves slope in opposite directions:
  \textit{IL18}--\textit{IL18R1} rises monotonically, remaining
  positive across essentially the full observed receptor range with a
  zero crossing near \(-0.32\) SD below the mean;
  \textit{IL18}--\textit{IL18RAP} descends from positive to negative,
  crossing zero near \(+0.26\) SD above the mean. The opposing slopes
  reflect the divergent interaction signs for the $\alpha$ and $\beta$
  chains of the IL-18 receptor heterodimer.}
\label{fig:supp_CD8_Mono_curves}
\end{figure}
\FloatBarrier

\clearpage
\subsection{NK Cells as Sender}

\subsubsection{NK Cells \(\rightarrow\) B Cells}
\label{supp:NK_B}

Across 818 donors, MR-CCC evaluated 42 ligand--receptor--pathway
triplets for the NK cell (sender) to B cell (receiver) direction and
identified six high-confidence causal communication signals with PIP
exceeding 0.5: four Adhesion by ICAM pairs
(\textit{ICAM1}--\textit{ITGAL}, PIP \(= 0.782\);
\textit{ICAM2}--\textit{ITGAL}, PIP \(= 0.738\);
\textit{ICAM1}--\textit{ITGAM}, PIP \(= 0.603\);
\textit{ICAM1}--\textit{ITGAX}, PIP \(= 0.507\)),
one Signaling by GABA triplet (\textit{SLC6A6}--\textit{GABBR1},
PIP \(= 0.618\)), and one Signaling by Prostaglandin triplet
(\textit{PTGES3}--\textit{PTGER2}, PIP \(= 0.544\))
(Figures~\ref{fig:supp_NK_B_pip}--\ref{fig:supp_NK_B_curves}).

\noindent\textbf{Adhesion by ICAM signals.}
The four discovered ICAM--integrin triplets collectively implicate NK cell
adhesion molecule contacts as causal regulators of B cell Adhesion pathway
activity, paralleling the multi-ICAM discovery pattern observed from CD4\(^+\)
and CD8\(^+\) T cell senders~\cite{springer1990adhesion,dustin1986icam}.
The dominant pair, \textit{ICAM1}--\textit{ITGAL} (PIP \(= 0.782\),
\(\hat{\beta}_X = 0.961\), \(\hat{\beta}_{XZ} = -2.13\)), has a large positive
main effect and a negative interaction, yielding a sign-reversing curve that
crosses zero approximately \(0.45\) standard deviations above the mean
\textit{ITGAL} level. B cells expressing LFA-1 below this threshold experience a
net stimulatory response to NK cell ICAM-1 engagement; those with above-average
LFA-1 density transition to a progressively suppressive effect. This receptor-gated
switch mirrors the same \textit{ICAM1}--\textit{ITGAL} pattern observed from CD4\(^+\)
and CD8\(^+\) T cell senders, indicating that LFA-1 density on B cells is a
general determinant of the direction of ICAM-1-mediated adhesion
signals regardless of the lymphocyte sender type.

In contrast, \textit{ICAM2}--\textit{ITGAL} (PIP \(= 0.738\),
\(\hat{\beta}_X = -0.046\), \(\hat{\beta}_{XZ} = 5.29\)) has a near-zero main
effect and an unusually large positive interaction, placing the zero crossing at
only \(\approx 0.01\) standard deviations above the mean---essentially at the
population mean. The causal signal is therefore almost entirely interaction-driven:
B cells expressing \textit{ITGAL} at or above the mean experience a steeply
amplified positive response to NK cell ICAM2, while those below the mean are
effectively unaffected. The opposing interaction signs for \textit{ICAM1}--\textit{ITGAL}
(\(\hat{\beta}_{XZ} = -2.13\)) and \textit{ICAM2}--\textit{ITGAL}
(\(\hat{\beta}_{XZ} = 5.29\)) through the same LFA-1 receptor recapitulate the
divergent ICAM1 versus ICAM2 pattern seen from CD4\(^+\) and CD8\(^+\) senders:
at high LFA-1 densities, NK cell ICAM1 contact is suppressive while NK cell
ICAM2 contact is strongly stimulatory.

\textit{ICAM1}--\textit{ITGAM} (PIP \(= 0.603\), \(\hat{\beta}_X = 0.786\),
\(\hat{\beta}_{XZ} = -3.08\)) follows the same receptor-gated suppressive pattern
as \textit{ICAM1}--\textit{ITGAL}, with the zero crossing at approximately
\(0.26\) standard deviations above the mean \textit{ITGAM} level; B cells
expressing the Mac-1 \(\alpha\) chain (CD11b) above this threshold experience
progressively stronger suppression of Adhesion pathway activity upon NK cell
ICAM-1 contact~\cite{myones1988cd11c}.

The fourth ICAM discovery, \textit{ICAM1}--\textit{ITGAX} (PIP \(= 0.507\),
\(\hat{\beta}_X = 0.530\), \(\hat{\beta}_{XZ} = 0.056\)), is qualitatively
distinct from the other three: the interaction term is negligibly small and
positive, placing the zero crossing at approximately \(-9.4\) standard deviations
below the mean---far outside the observed data range. This triplet therefore
represents a near-purely main-effect discovery: NK cell ICAM-1 exerts a
uniformly positive causal effect on the Adhesion pathway activity of B cells
expressing \textit{ITGAX} (CD11c), independent of receptor expression level.
CD11c\textsuperscript{+} B cells are a recognized subset (including age-associated B cells
and activated B cell subpopulations) with an altered adhesion receptor
profile~\cite{dustin1986icam}, and the flat, uniformly positive curve suggests
that NK cell ICAM-1 constitutively engages this subset irrespective of CD11c
density.

\noindent\textbf{Signaling by GABA.}
\textit{SLC6A6}--\textit{GABBR1} (PIP \(= 0.618\), \(\hat{\beta}_X = -0.042\),
\(\hat{\beta}_{XZ} = -3.32\)) implicates NK cell taurine and \(\beta\)-alanine
transporter activity (\textit{SLC6A6}) as a causal suppressor of B cell GABA-B
receptor-mediated pathway activity. With a near-zero main effect and a large
negative interaction, the zero crossing lies essentially at the population-mean
\textit{GABBR1} expression level (\(Z^* \approx -0.013\) SD): B cells expressing
\textit{GABBR1} at or above the mean experience a steeply increasing suppression
of GABA signaling pathway activity as NK cell taurine-transport activity rises.
Taurine, the primary substrate of SLC6A6, is a known partial agonist of
GABA\(_B\) receptors and provides a mechanistic link between NK cell amino-acid
transport and GABA-B receptor signaling in neighboring
cells~\cite{tian1999gaba,bjurstrom2008gaba}.
This triplet was also discovered in the B cells \(\rightarrow\) CD4\(^+\) T cells
direction (PIP \(= 0.54\)), but there the interaction was large and positive
(\(\hat{\beta}_{XZ} = 2.01\)), yielding a stimulatory rather than suppressive
receptor-amplified effect. The sign reversal of the \textit{GABBR1} interaction
between receiver cell types---stimulatory for CD4\(^+\) T cells, suppressive for
B cells---illustrates how the same ligand--receptor axis can mediate qualitatively
different outcomes depending on the downstream signaling environment of the
recipient cell.

\noindent\textbf{Signaling by Prostaglandin.}
\textit{PTGES3}--\textit{PTGER2} (PIP \(= 0.544\), \(\hat{\beta}_X = 0.438\),
\(\hat{\beta}_{XZ} = -1.97\)) links NK cell prostaglandin E synthase 3
(\textit{PTGES3}) to the EP2 prostaglandin receptor (encoded by \textit{PTGER2})
on B cells. The positive main effect and negative interaction produce a
sign-reversing curve that crosses zero approximately \(0.22\) standard deviations
above the mean \textit{PTGER2} level: B cells with low EP2 expression experience
a net stimulatory response to NK cell PGE\(_2\) output, while those above the
threshold undergo progressively stronger suppression of Prostaglandin pathway
activity~\cite{kalinski2012pge2}. This PGE\(_2\)--EP2 axis has now been
discovered across multiple sender types targeting NK cells (B cell sender,
PIP \(= 0.556\); CD4\(^+\) T cell sender, PIP \(= 0.625\)), and here NK cells
themselves appear as a PGE\(_2\) sender, now acting on B cells as receiver.
The recurrence of \textit{PTGES3}--\textit{PTGER2} across directions involving
NK cells---both as receiver and as sender---suggests that PGE\(_2\)--EP2
signaling is a constitutively active immunomodulatory axis in the NK cell niche
of peripheral blood.

Several triplets narrowly missed the discovery threshold.
\textit{ICAM1}--\textit{ITGB2} (PIP \(= 0.480\)) was the highest-ranking
sub-threshold triplet, with a PIP at the discovery boundary; the modest positive
main effect (\(\hat{\beta}_X = 0.568\)) and positive interaction
(\(\hat{\beta}_{XZ} = 1.65\)) produce a monotonically increasing curve, and its
near-discovery suggests the \(\beta_2\) chain axis from NK cells is borderline
consistent at the population level.
\textit{IL6}--\textit{IL6ST} (PIP \(= 0.434\)) was a notable near-miss; the
IL-6/gp130 axis was discovered from both B cell (PIP \(= 0.554\)) and CD4\(^+\)
T cell (not discovered) senders, and its sub-threshold PIP here suggests NK
cells are not a dominant population-level source of IL-6 in peripheral blood.
\textit{PTGES3}--\textit{PTGER4} (PIP \(= 0.416\)) was the second prostaglandin
near-miss, indicating that the EP4 receptor axis on B cells is weaker than EP2
in this NK sender direction.

\textit{IFNG}--\textit{IFNGR1} (PIP \(= 0.046\)) and
\textit{IFNG}--\textit{IFNGR2} (PIP \(= 0.045\)) both received near-zero
support despite NK cells being a major physiological source of
IFN-\(\gamma\)~\cite{schroder2004interferon}. The very low PIPs likely reflect
that B cells express IFN-\(\gamma\) receptors at comparatively low levels
relative to myeloid and T cell populations in peripheral blood, limiting the
causal footprint of NK cell-derived IFN-\(\gamma\) on B cell Interferon pathway
activity at the population level.

\begin{figure}[htbp]
\centering
\includegraphics[width=\textwidth]{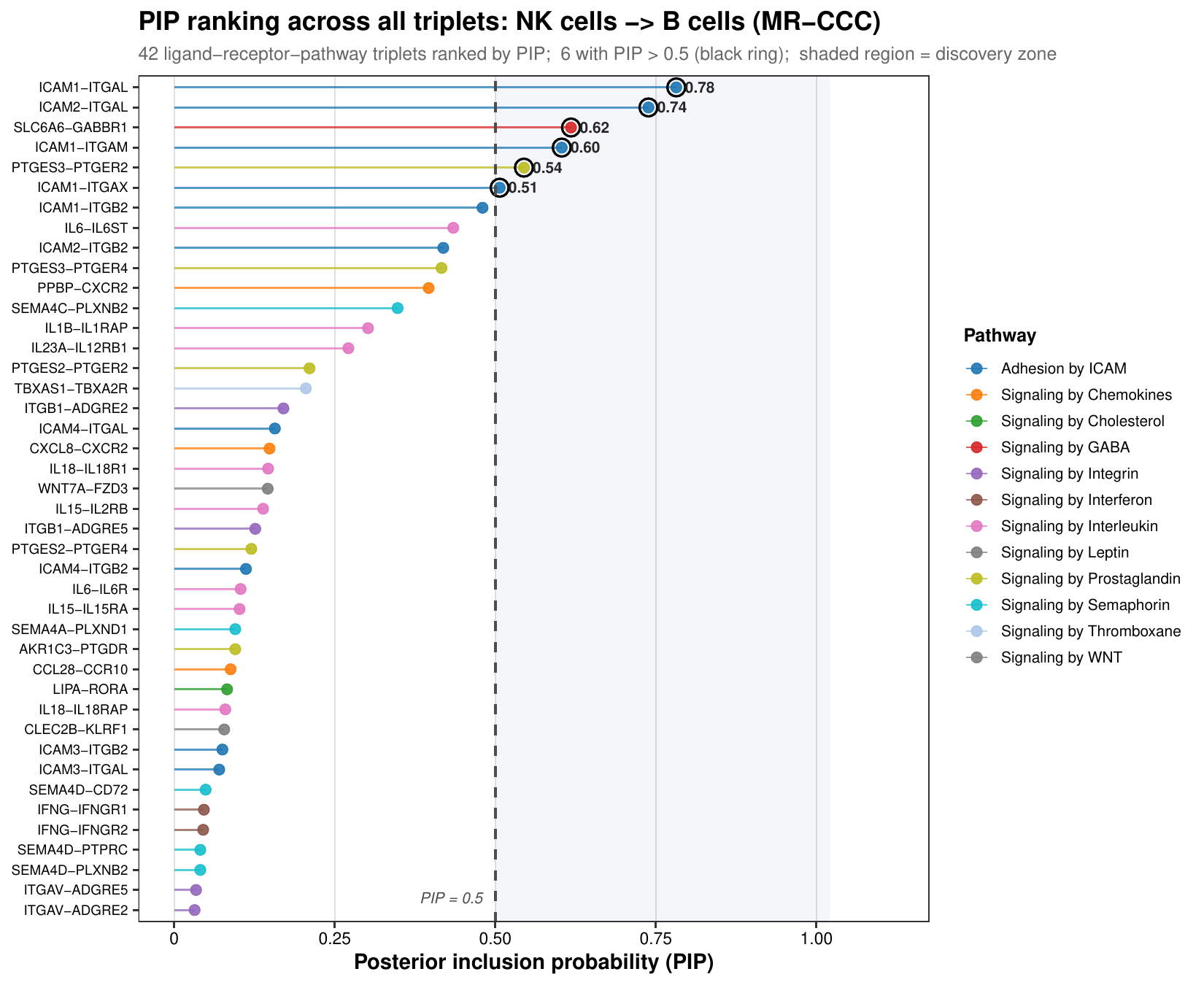}
\caption{\textbf{PIP ranking for the NK cells\,$\rightarrow$\,B cells analysis.} All 42 ligand--receptor--pathway triplets across 818 donors.
  Points are colored by pathway; the dashed vertical line marks the
  discovery threshold of PIP \(= 0.5\); the shaded region to the
  right is the discovery zone. Black rings identify the six discovered
  triplets: \textit{ICAM1}--\textit{ITGAL} (PIP \(= 0.78\)),
  \textit{ICAM2}--\textit{ITGAL} (PIP \(= 0.74\)),
  \textit{SLC6A6}--\textit{GABBR1} (PIP \(= 0.62\)),
  \textit{ICAM1}--\textit{ITGAM} (PIP \(= 0.60\)),
  \textit{PTGES3}--\textit{PTGER2} (PIP \(= 0.54\)), and
  \textit{ICAM1}--\textit{ITGAX} (PIP \(= 0.51\)).}
\label{fig:supp_NK_B_pip}
\end{figure}

\begin{figure}[htbp]
\centering
\includegraphics[width=\textwidth]{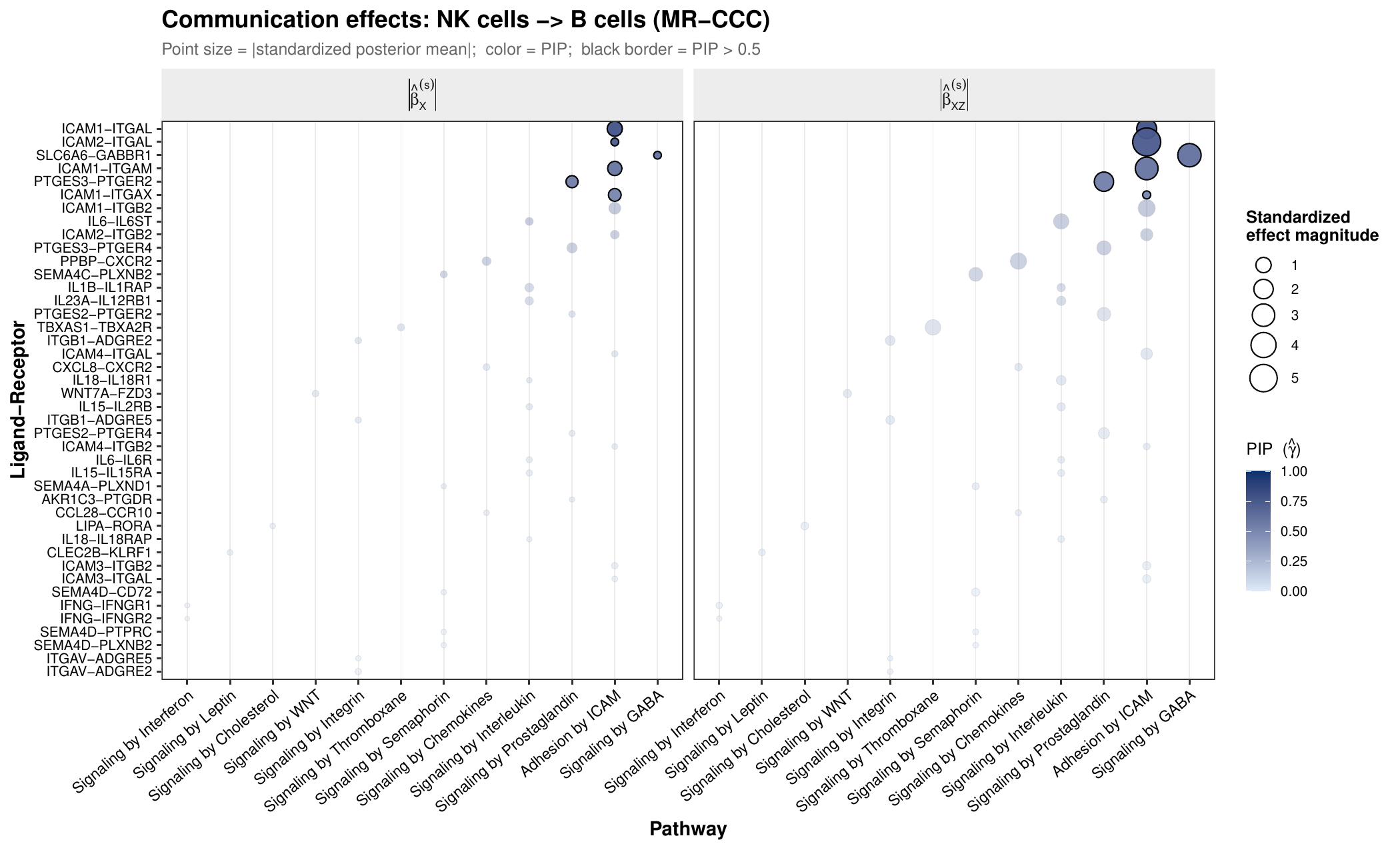}
\caption{\textbf{Standardized posterior effects for the NK cells \(\rightarrow\)
  B cells analysis.} Left panel: absolute main ligand effect
  \(|\hat{\beta}_X^{(s)}|\); right panel: absolute receptor-modulated
  interaction effect \(|\hat{\beta}_{XZ}^{(s)}|\). Point size encodes
  effect magnitude; fill color encodes PIP; black borders identify the
  six discovered triplets. \textit{ICAM2}--\textit{ITGAL} dominates
  the right panel with the largest interaction magnitude in this direction;
  \textit{ICAM1}--\textit{ITGAL} and \textit{ICAM1}--\textit{ITGAM} show
  appreciable effects in both panels; \textit{ICAM1}--\textit{ITGAX}
  shows a large main effect but near-zero interaction, consistent with
  its near-purely main-effect discovery profile.}
\label{fig:supp_NK_B_bubble}
\end{figure}

\begin{figure}[htbp]
\centering
\includegraphics[width=\textwidth]{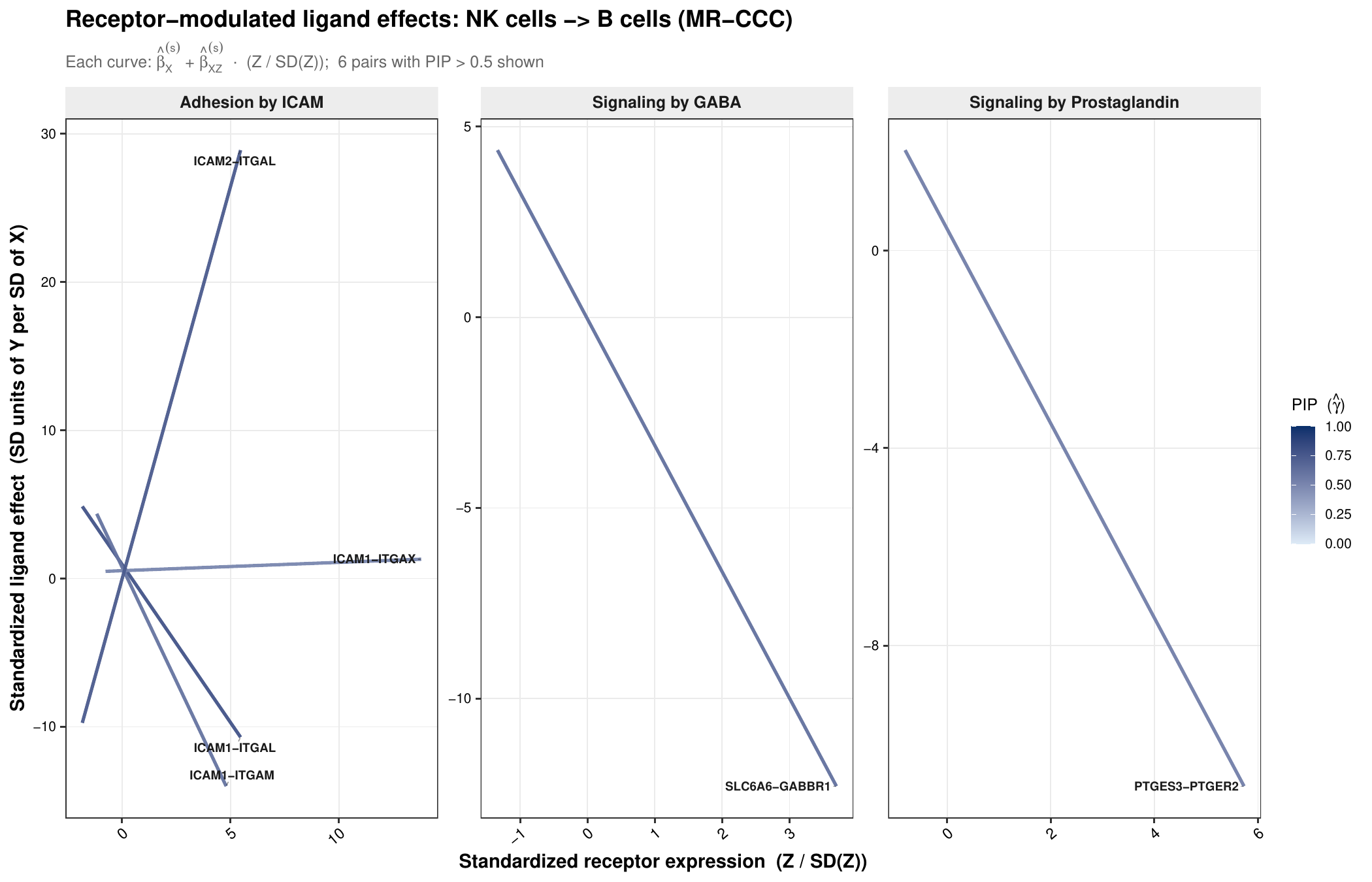}
\caption{\textbf{Receptor-modulated effect curves
  (\(\hat{\beta}_X + \hat{\beta}_{XZ} \cdot Z/\mathrm{SD}(Z)\))
  for the NK cells \(\rightarrow\) B cells analysis.} Only the six
  discovery pairs (PIP \(> 0.5\)) are displayed, grouped into the
  three pathway panels with confirmed discoveries (Adhesion by ICAM,
  Signaling by GABA, and Signaling by Prostaglandin). In the
  Adhesion by ICAM panel, four curves are visible:
  \textit{ICAM2}--\textit{ITGAL} rises most steeply from near zero,
  while \textit{ICAM1}--\textit{ITGAL} and \textit{ICAM1}--\textit{ITGAM}
  descend from positive to negative values at zero crossings near
  \(+0.45\) and \(+0.26\) standard deviations above the mean,
  respectively; \textit{ICAM1}--\textit{ITGAX} is nearly flat and
  positive across the observed receptor range. In the Signaling by
  GABA panel, the \textit{SLC6A6}--\textit{GABBR1} curve crosses zero
  essentially at the population-mean \textit{GABBR1} level and descends
  steeply to strongly negative values at high receptor expression. In
  the Signaling by Prostaglandin panel, the
  \textit{PTGES3}--\textit{PTGER2} curve crosses zero near \(+0.22\)
  standard deviations above the mean and becomes increasingly negative
  at high EP2 density.}
\label{fig:supp_NK_B_curves}
\end{figure}
\FloatBarrier

\subsubsection{NK Cells \(\rightarrow\) CD4\(^+\) T Cells}
\label{supp:NK_CD4}

Across 850 donors, MR-CCC evaluated 42 ligand--receptor--pathway
triplets for the NK cell (sender) to CD4\(^+\) T cell (receiver) direction
and identified one high-confidence causal communication signal with
PIP exceeding 0.5: \textit{ICAM1}--\textit{ITGAL} within the Adhesion by ICAM
pathway (PIP \(= 0.54\))
(Figures~\ref{fig:supp_NK_CD4_pip}--\ref{fig:supp_NK_CD4_curves}).

The sole discovery, \textit{ICAM1}--\textit{ITGAL} (PIP \(= 0.54\),
\(\hat{\beta}_X = -0.064\), \(\hat{\beta}_{XZ} = -0.910\)), implicates NK cell
intercellular adhesion molecule 1 (\textit{ICAM1}) as a causal suppressor of
CD4\(^+\) T cell Adhesion pathway activity through LFA-1 (encoded by
\textit{ITGAL})~\cite{dustin1986icam,springer1990adhesion}.
The near-zero main effect and moderate negative interaction place the zero
crossing at approximately \(0.07\) standard deviations \emph{below} the mean
\textit{ITGAL} level---essentially at the population mean.
The signal is therefore almost entirely interaction-driven: CD4\(^+\) T cells
expressing LFA-1 at or above the population mean experience a progressively
stronger suppression of Adhesion pathway activity upon NK cell ICAM-1
engagement, while those with the lowest LFA-1 expression are effectively
unaffected.
This suppressive NK--CD4 contact through ICAM-1--LFA-1 is consistent with the
established capacity of NK cells to exert regulatory restraint on T cell
activation through direct cell--cell adhesion, limiting excessive CD4\(^+\) T
cell responses in an LFA-1 density-dependent
manner~\cite{springer1990adhesion,dustin1986icam}.
The \textit{ICAM1}--\textit{ITGAL} triplet was also discovered in the NK cells
\(\rightarrow\) B cells direction (PIP \(= 0.782\),
\(\hat{\beta}_{XZ} = -2.13\)), where the same suppressive interaction sign was
observed; the considerably smaller interaction magnitude here
(\(\hat{\beta}_{XZ} = -0.910\) versus \(-2.13\)) suggests that NK cell ICAM-1
contact is a less potent suppressor of Adhesion pathway activity in CD4\(^+\)
T cells than in B cells at the population level.

Two triplets narrowly missed the discovery threshold and merit specific
attention. \textit{ICAM1}--\textit{ITGAX} (PIP \(= 0.491\),
\(\hat{\beta}_X = -0.154\), \(\hat{\beta}_{XZ} = 1.03\)) was the
highest-ranking sub-threshold triplet; the negative main effect and positive
interaction produce a sign-reversing curve that crosses zero approximately
\(0.15\) standard deviations above the mean \textit{ITGAX} level, with CD4\(^+\)
T cells expressing high CD11c showing a net positive Adhesion pathway response
to NK cell ICAM-1 contact.
\textit{ICAM4}--\textit{ITGAL} (PIP \(= 0.487\), \(\hat{\beta}_X = -0.156\),
\(\hat{\beta}_{XZ} = -4.20\)) was the second borderline near-miss; despite
failing to clear the threshold, it carries the largest interaction magnitude of
any triplet in this direction and a zero crossing only \(0.037\) standard
deviations below the mean, placing essentially the entire CD4\(^+\) T cell
population in a regime of steeply increasing suppression of Adhesion pathway
activity as NK cell ICAM4 production rises.
The co-occurrence of near-misses for both \textit{ICAM1}--\textit{ITGAX}
(PIP \(= 0.491\)) and \textit{ICAM4}--\textit{ITGAL} (PIP \(= 0.487\)) just
below the discovery boundary, alongside the confirmed \textit{ICAM1}--\textit{ITGAL}
discovery, indicates that the NK--CD4 adhesion axis involves multiple partially
consistent ICAM--integrin contacts at the population level that collectively
fall near the discovery threshold.

Among other sub-threshold triplets, \textit{SEMA4A}--\textit{PLXND1}
(PIP \(= 0.364\)) represents the highest-ranking Semaphorin pathway near-miss;
\textit{SEMA4C}--\textit{PLXNB2} (PIP \(= 0.261\)) provides parallel
evidence for sub-threshold NK semaphorin signaling to CD4\(^+\) T cells,
consistent with the established role of semaphorin-plexin contacts in
lymphocyte co-stimulation and cytoskeletal
regulation~\cite{suzuki2008sema4d,kumanogoh2013neuropilins}.
\textit{LIPA}--\textit{RORA} (PIP \(= 0.357\)) was a notable Cholesterol
pathway near-miss; this triplet was discovered from B cell senders in the
B cells \(\rightarrow\) CD4\(^+\) T cells direction (PIP \(= 0.686\)), but
here the PIP falls considerably below threshold, consistent with NK cells
having lower lysosomal lipid-catabolic activity than B cells~\cite{yang2008rora}.
\textit{AKR1C3}--\textit{PTGDR} (PIP \(= 0.323\)) indicates a borderline
prostaglandin D\(_2\)--DP receptor axis that does not achieve discovery
strength in this direction.

\textit{IFNG}--\textit{IFNGR1} (PIP \(= 0.041\)) and
\textit{IFNG}--\textit{IFNGR2} (PIP \(= 0.049\)) both received near-zero
support despite NK cells being a major physiological source of
IFN-\(\gamma\)~\cite{schroder2004interferon,bach1997ifngr}.
The low PIPs here likely reflect that CD4\(^+\) T cell Interferon pathway
activity is not well-explained by NK cell cis-eQTL variation for \textit{IFNG}
at the population level, or that the IFN-\(\gamma\) signal from NK cells to
CD4\(^+\) T cells is overshadowed by autocrine or CD4\(^+\) Th1-derived
IFN-\(\gamma\) signaling within the CD4\(^+\) compartment itself.
\textit{SLC6A6}--\textit{GABBR1} (PIP \(= 0.034\)) received a near-zero
PIP despite being discovered in the NK cells \(\rightarrow\) B cells direction
(PIP \(= 0.618\)) in this same analysis, pointing to receiver-cell specificity
in the GABA-B receptor-mediated taurine signaling axis: CD4\(^+\) T cells
are not causal targets of NK cell SLC6A6-mediated GABBR1 signaling at the
population level, whereas B cells are.
This observation echoes the direction-selectivity previously noted for
\textit{SLC6A6}--\textit{GABBR1} in the B cells \(\rightarrow\) CD4\(^+\)
T cells direction (PIP \(= 0.54\)), where B cells rather than NK cells were
the active sender~\cite{tian1999gaba,bjurstrom2008gaba}.

The single-discovery result here contrasts with the four discoveries
identified in the reverse direction (CD4\(^+\) T cells \(\rightarrow\) NK
cells), underscoring a directional asymmetry in which CD4\(^+\) T helper
cells are more potent causal communicators to NK cells than NK cells are to
CD4\(^+\) T cells at the population level in peripheral blood.

\begin{figure}[htbp]
\centering
\includegraphics[width=\textwidth]{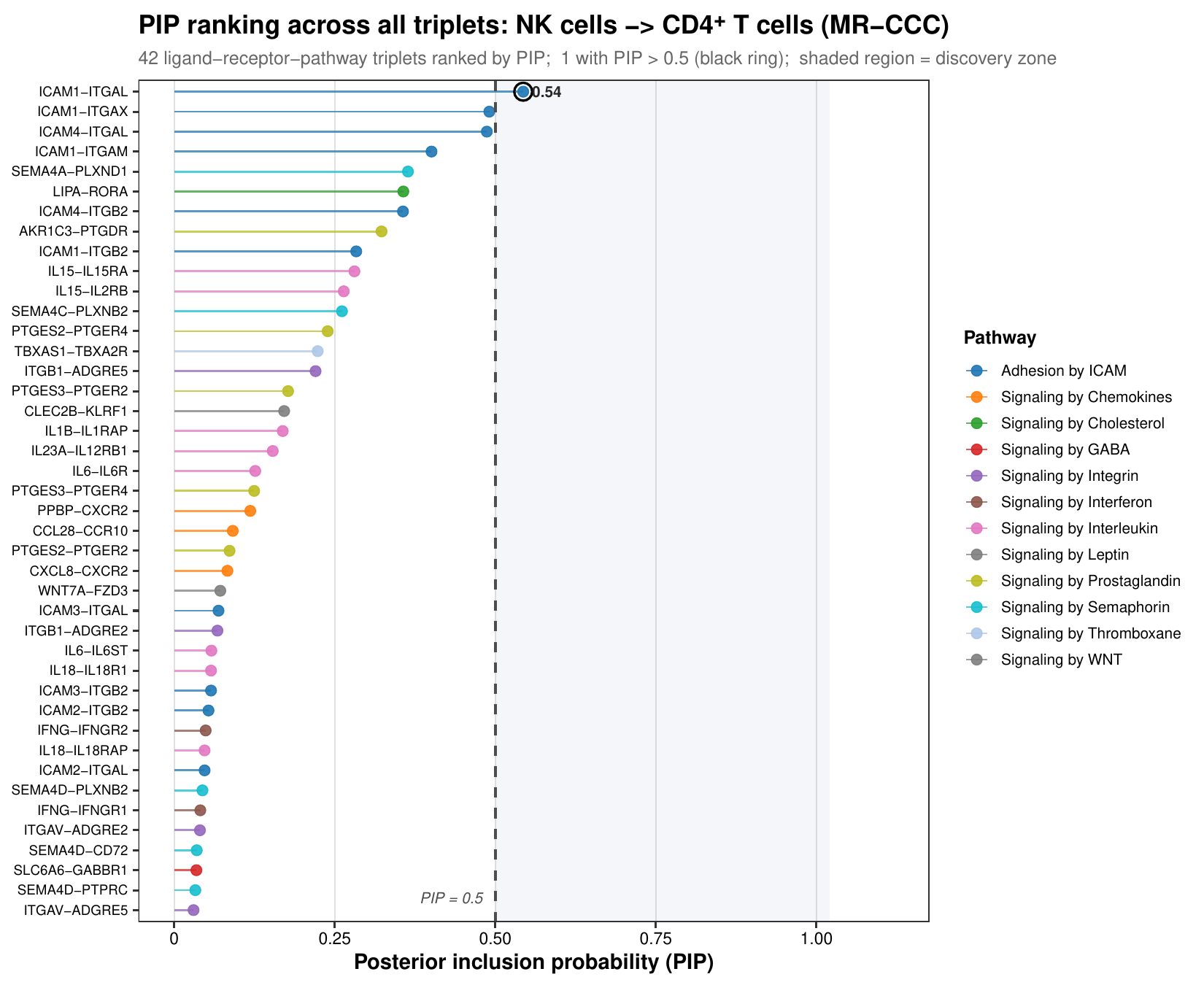}
\caption{\textbf{PIP ranking for the NK cells\,$\rightarrow$\,CD4\(^+\) T cells analysis.} All 42 ligand--receptor--pathway triplets across 850
  donors. Points are colored by pathway; the dashed vertical line
  marks the discovery threshold of PIP \(= 0.5\); the shaded region
  to the right is the discovery zone. The single discovered triplet,
  \textit{ICAM1}--\textit{ITGAL} (PIP \(= 0.54\)), is identified with
  a black ring. The two highest-ranking sub-threshold triplets,
  \textit{ICAM1}--\textit{ITGAX} (PIP \(= 0.49\)) and
  \textit{ICAM4}--\textit{ITGAL} (PIP \(= 0.49\)), fall marginally
  below the threshold.}
\label{fig:supp_NK_CD4_pip}
\end{figure}

\begin{figure}[htbp]
\centering
\includegraphics[width=\textwidth]{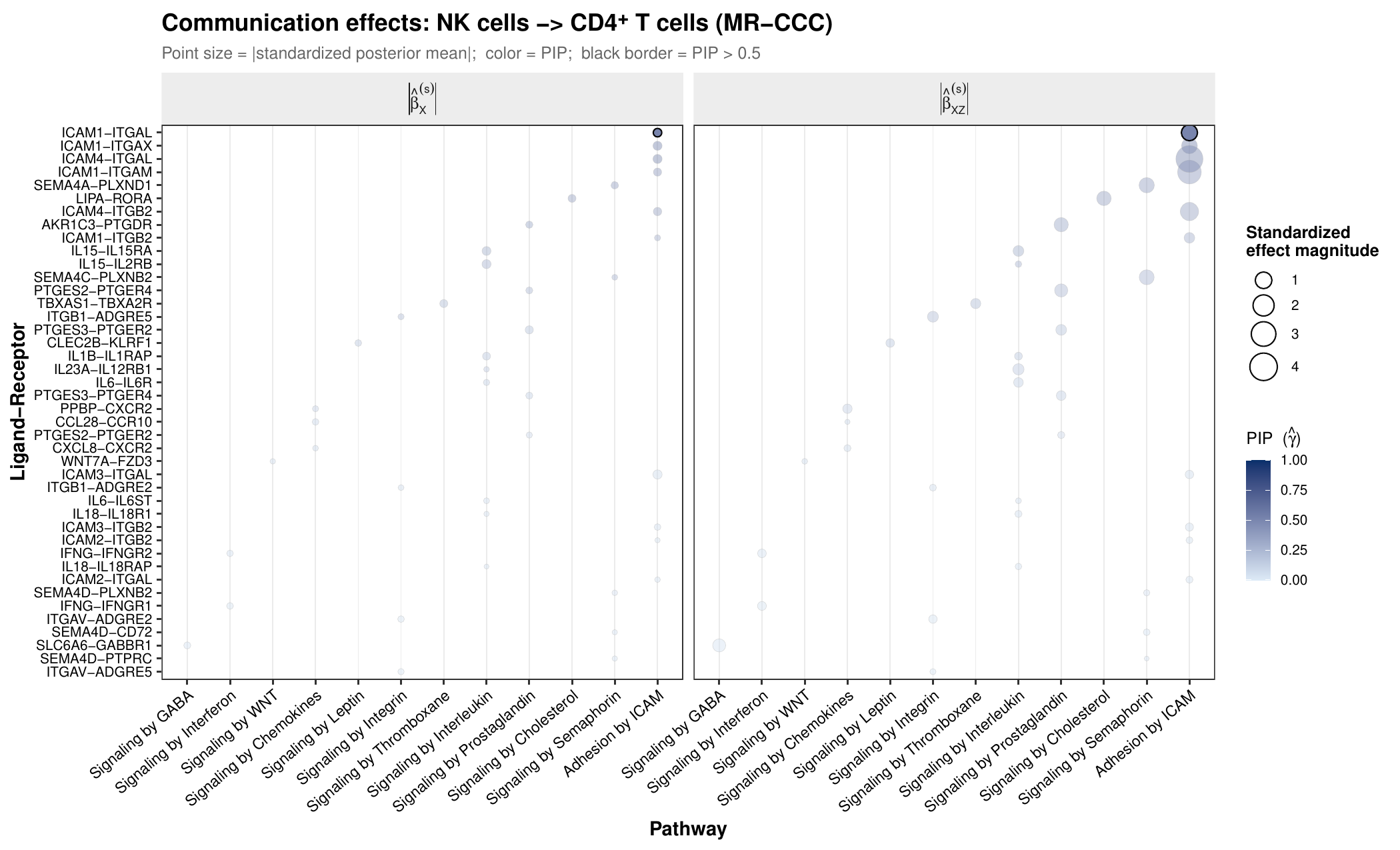}
\caption{\textbf{Standardized posterior effects for the NK cells \(\rightarrow\)
  CD4\(^+\) T cells analysis.} Left panel: absolute main ligand effect
  \(|\hat{\beta}_X^{(s)}|\); right panel: absolute receptor-modulated
  interaction effect \(|\hat{\beta}_{XZ}^{(s)}|\). Point size encodes
  effect magnitude; fill color encodes PIP; the single black-bordered
  point identifies the discovered triplet \textit{ICAM1}--\textit{ITGAL}.
  The near-zero main effect and moderate interaction of the discovered
  triplet are consistent with its near-purely interaction-driven discovery
  profile. \textit{ICAM4}--\textit{ITGAL} (PIP \(= 0.487\)) shows the
  largest interaction magnitude in the right panel despite falling below
  the threshold.}
\label{fig:supp_NK_CD4_bubble}
\end{figure}

\begin{figure}[htbp]
\centering
\includegraphics[width=\textwidth]{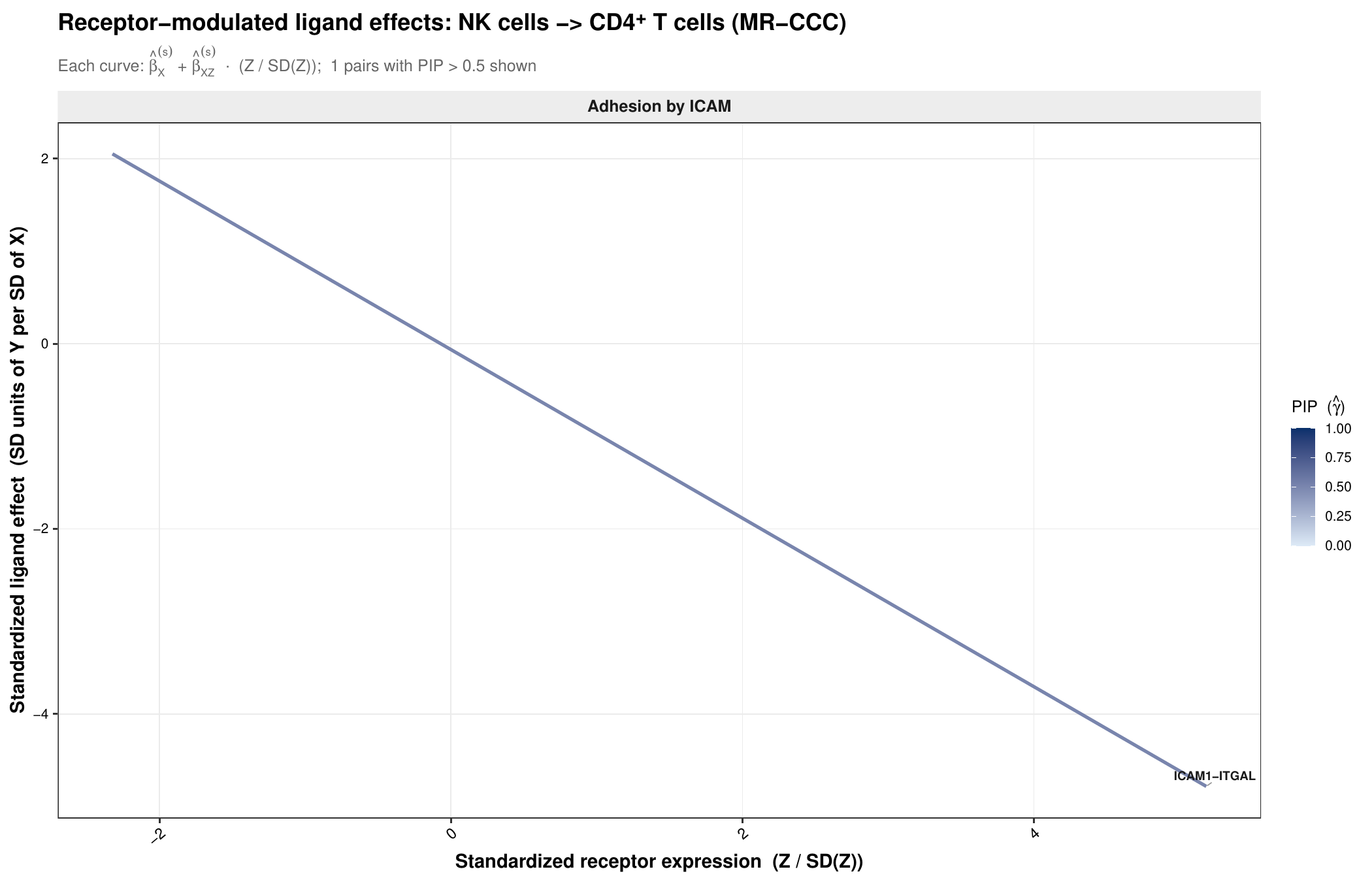}
\caption{\textbf{Receptor-modulated effect curves
  (\(\hat{\beta}_X + \hat{\beta}_{XZ} \cdot Z/\mathrm{SD}(Z)\))
  for the NK cells \(\rightarrow\) CD4\(^+\) T cells analysis.}
  Only the single discovery pair (PIP \(> 0.5\)) is displayed in the
  Adhesion by ICAM panel. The \textit{ICAM1}--\textit{ITGAL} curve
  crosses zero essentially at the population-mean \textit{ITGAL}
  level and descends to negative values at above-average LFA-1
  expression, consistent with the near-purely interaction-driven
  suppressive signal.}
\label{fig:supp_NK_CD4_curves}
\end{figure}
\FloatBarrier

\subsubsection{NK Cells \(\rightarrow\) CD8\(^+\) T Cells}
\label{supp:NK_CD8}

Across 830 donors, MR-CCC evaluated 42 ligand--receptor--pathway
triplets for the NK cell (sender) to CD8\(^+\) T cell (receiver) direction
and identified two high-confidence causal communication signals with
PIP exceeding 0.5: \textit{ICAM1}--\textit{ITGAL} within the Adhesion by ICAM
pathway (PIP \(= 0.862\)) and \textit{CXCL8}--\textit{CXCR2} within the
Chemokine signaling pathway (PIP \(= 0.546\))
(Figures~\ref{fig:supp_NK_CD8_pip}--\ref{fig:supp_NK_CD8_curves}).

The dominant discovery, \textit{ICAM1}--\textit{ITGAL}
(PIP \(= 0.862\), \(\hat{\beta}_X = 1.46\), \(\hat{\beta}_{XZ} = -8.36\)),
implicates NK cell ICAM-1 as a causal regulator of CD8\(^+\) T cell Adhesion
pathway activity through LFA-1 (encoded by \textit{ITGAL}), with an interaction
magnitude that is the second largest of any discovered triplet in the entire
analysis~\cite{dustin1986icam,springer1990adhesion}. The large positive main
effect and strongly negative interaction produce a sign-reversing effect curve
that crosses zero approximately \(0.18\) standard deviations above the mean
\textit{ITGAL} level. CD8\(^+\) T cells with below-average LFA-1 expression
experience a strongly stimulatory response to NK cell ICAM-1 engagement;
above this threshold, the effect reverses sharply, and CD8\(^+\) T cells with
high LFA-1 expression undergo a steeply amplified suppression of Adhesion
pathway activity that grows in magnitude with receptor density. Because
LFA-1 is progressively upregulated during CD8\(^+\) T cell activation and
effector differentiation~\cite{springer1990adhesion}, this pattern suggests that
NK cell ICAM-1 preferentially suppresses activated and effector-like CD8\(^+\)
T cells while leaving naive or resting cells with low LFA-1 expression either
unaffected or mildly stimulated---a receptor-gated regulatory mechanism
consistent with the established capacity of NK cells to selectively eliminate
or restrain over-activated T cells.

The \textit{ICAM1}--\textit{ITGAL} triplet has now been discovered across all
three non-B cell receiver types in the NK sender directions---NK cells
\(\rightarrow\) B cells (PIP \(= 0.782\), \(\hat{\beta}_{XZ} = -2.13\)),
NK cells \(\rightarrow\) CD4\(^+\) T cells (PIP \(= 0.54\),
\(\hat{\beta}_{XZ} = -0.910\)), and here with the largest PIP
and by far the largest interaction magnitude (\(\hat{\beta}_{XZ} = -8.36\)).
The progressive increase in interaction magnitude across receiver cell types
(CD4: \(-0.910\); B: \(-2.13\); CD8: \(-8.36\)) indicates that the NK cell
ICAM-1--LFA-1 suppressive axis is most potently amplified in CD8\(^+\)
T cells, consistent with their higher constitutive and activation-induced LFA-1
expression relative to B and CD4\(^+\) T cells~\cite{springer1990adhesion,dustin1986icam}.

The second discovered triplet, \textit{CXCL8}--\textit{CXCR2}
(PIP \(= 0.546\), \(\hat{\beta}_X = 0.443\), \(\hat{\beta}_{XZ} = 0.530\)),
identifies NK cell-derived CXCL8 (IL-8) as a causal activator of CD8\(^+\)
T cell Chemokine pathway activity through the CXCR2 receptor. Both the main
effect and the interaction are positive, placing the zero crossing at
approximately \(-0.84\) standard deviations below the mean \textit{CXCR2}
expression level, outside the bulk of the observed distribution. The causal
effect is therefore positive across essentially all observed CD8\(^+\) T cells
and grows with receptor density: CD8\(^+\) T cells expressing higher levels of
CXCR2 show a progressively amplified Chemokine pathway response to NK cell
CXCL8 production. CXCL8 is produced by NK cells upon activation and engages
CXCR2 on effector and terminally differentiated CD8\(^+\) T cell subsets, which
express CXCR2 at higher levels than naive or central memory
cells~\cite{ahuja1996cxcr2}. The positive interaction is consistent with a
feed-forward chemokine amplification mechanism in which NK-derived CXCL8
increasingly activates Chemokine pathway gene expression in CD8\(^+\) T cells
that are already in a chemokine-responsive, high-CXCR2 state. Notably,
\textit{CXCL8}--\textit{CXCR2} was the highest-ranking sub-threshold triplet in
the CD8\(^+\) T cells \(\rightarrow\) CD4\(^+\) T cells direction
(PIP \(= 0.367\)), where it did not achieve discovery strength; here the same
triplet is discovered from the NK sender direction, indicating that NK cells
rather than CD8\(^+\) T cells are the dominant population-level source of
CXCL8 driving CXCR2-mediated signaling in CD8\(^+\) T cells in this cohort.

Among sub-threshold triplets, \textit{ICAM4}--\textit{ITGAL}
(PIP \(= 0.476\), \(\hat{\beta}_X = 0.159\), \(\hat{\beta}_{XZ} = -1.55\)) was
the highest-ranking near-miss; the positive main effect and negative interaction
produce a sign-reversing curve crossing zero near \(+0.10\) standard deviations
above the mean \textit{ITGAL} level, with the majority of CD8\(^+\) T cells
falling in the suppressive regime. Its near-discovery alongside the confirmed
\textit{ICAM1}--\textit{ITGAL} signal suggests that NK cell ICAM4 contributes
a partially consistent suppressive adhesion contact that falls just below the
discovery threshold. \textit{ICAM1}--\textit{ITGB2} (PIP \(= 0.398\)) and
\textit{ICAM1}--\textit{ITGAM} (PIP \(= 0.369\)) represent additional
sub-threshold ICAM pairs, both consistent with broad NK--CD8 adhesive contacts
that do not individually achieve the consistency threshold at the population
level~\cite{springer1990adhesion}.

\textit{IL18}--\textit{IL18R1} (PIP \(= 0.376\)) was the highest-ranking
Interleukin near-miss; this triplet was discovered in the CD8\(^+\) T cells
\(\rightarrow\) Monocytes direction (PIP \(= 0.82\)), where CD8\(^+\) T cells
were the sender, but here NK cells as sender to CD8\(^+\) T cells fall below
the threshold, suggesting that the IL-18--IL-18R1 axis from NK cells to CD8
cells is insufficiently consistent at the population level compared with the
CD8-to-monocyte direction~\cite{okamura1995interleukin18}.

\textit{IFNG}--\textit{IFNGR1} (PIP \(= 0.029\)) and
\textit{IFNG}--\textit{IFNGR2} (PIP \(= 0.039\)) both received near-zero
support, consistent with the pattern observed across all NK sender directions:
despite NK cells being a major source of IFN-\(\gamma\), the NK cell
cis-eQTL-driven variation in \textit{IFNG} expression does not produce a
detectable causal signal on CD8\(^+\) T cell Interferon pathway activity at the
population level~\cite{schroder2004interferon,bach1997ifngr}.

\begin{figure}[htbp]
\centering
\includegraphics[width=\textwidth]{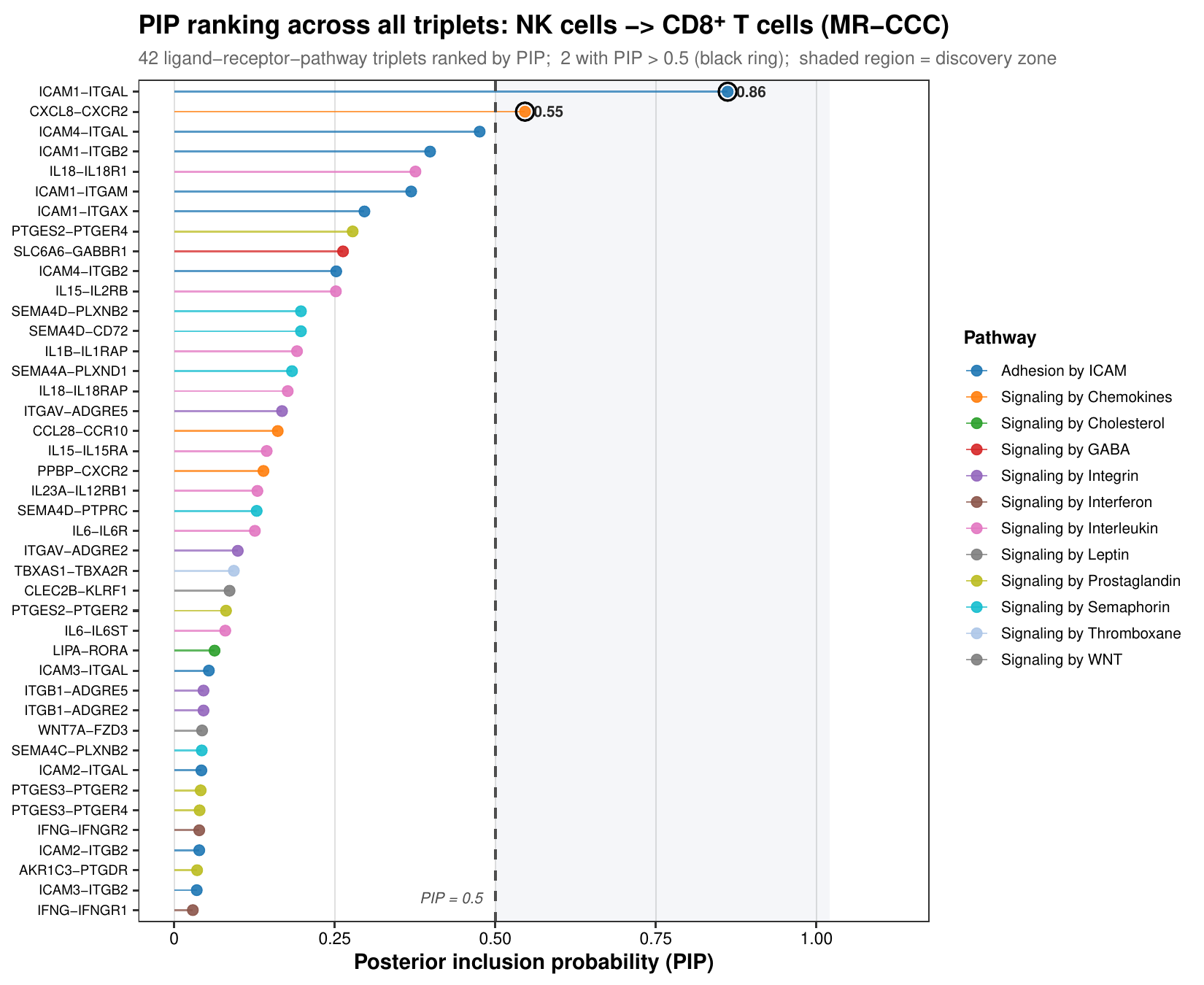}
\caption{\textbf{PIP ranking for the NK cells\,$\rightarrow$\,CD8\(^+\) T cells analysis.} All 42 ligand--receptor--pathway triplets across 830
  donors. Points are colored by pathway; the dashed vertical line
  marks the discovery threshold of PIP \(= 0.5\); the shaded region
  to the right is the discovery zone. Black rings identify the two
  discovered triplets: \textit{ICAM1}--\textit{ITGAL}
  (PIP \(= 0.86\)) and \textit{CXCL8}--\textit{CXCR2}
  (PIP \(= 0.55\)).}
\label{fig:supp_NK_CD8_pip}
\end{figure}

\begin{figure}[htbp]
\centering
\includegraphics[width=\textwidth]{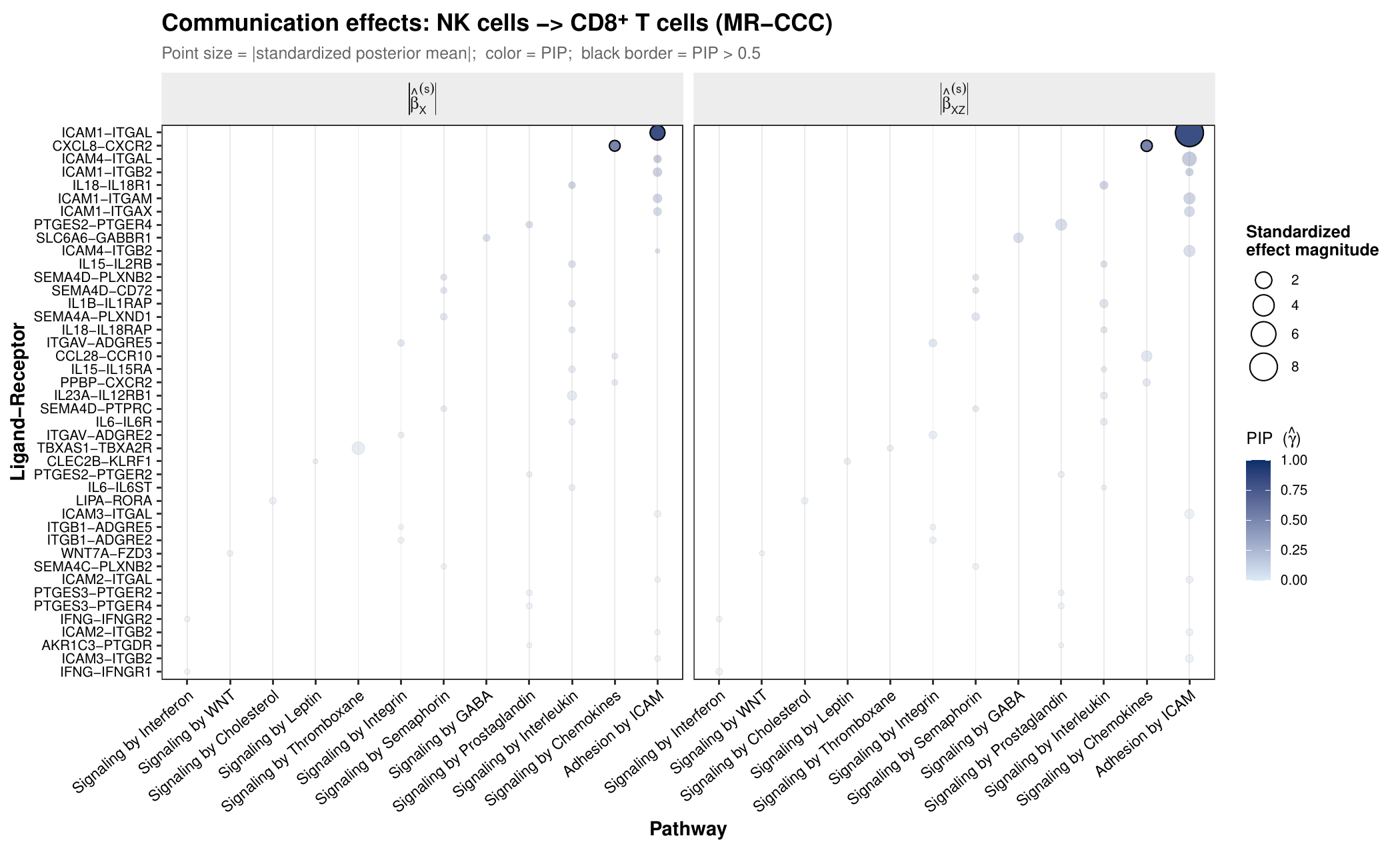}
\caption{\textbf{Standardized posterior effects for the NK cells \(\rightarrow\)
  CD8\(^+\) T cells analysis.} Left panel: absolute main ligand effect
  \(|\hat{\beta}_X^{(s)}|\); right panel: absolute receptor-modulated
  interaction effect \(|\hat{\beta}_{XZ}^{(s)}|\). Point size encodes
  effect magnitude; fill color encodes PIP; black borders identify the
  two discovered triplets. \textit{ICAM1}--\textit{ITGAL} dominates
  the right panel with the largest interaction magnitude in this
  direction; \textit{CXCL8}--\textit{CXCR2} shows appreciable effects
  in both panels, consistent with its combination of a positive main
  effect and a positive interaction.}
\label{fig:supp_NK_CD8_bubble}
\end{figure}

\begin{figure}[htbp]
\centering
\includegraphics[width=\textwidth]{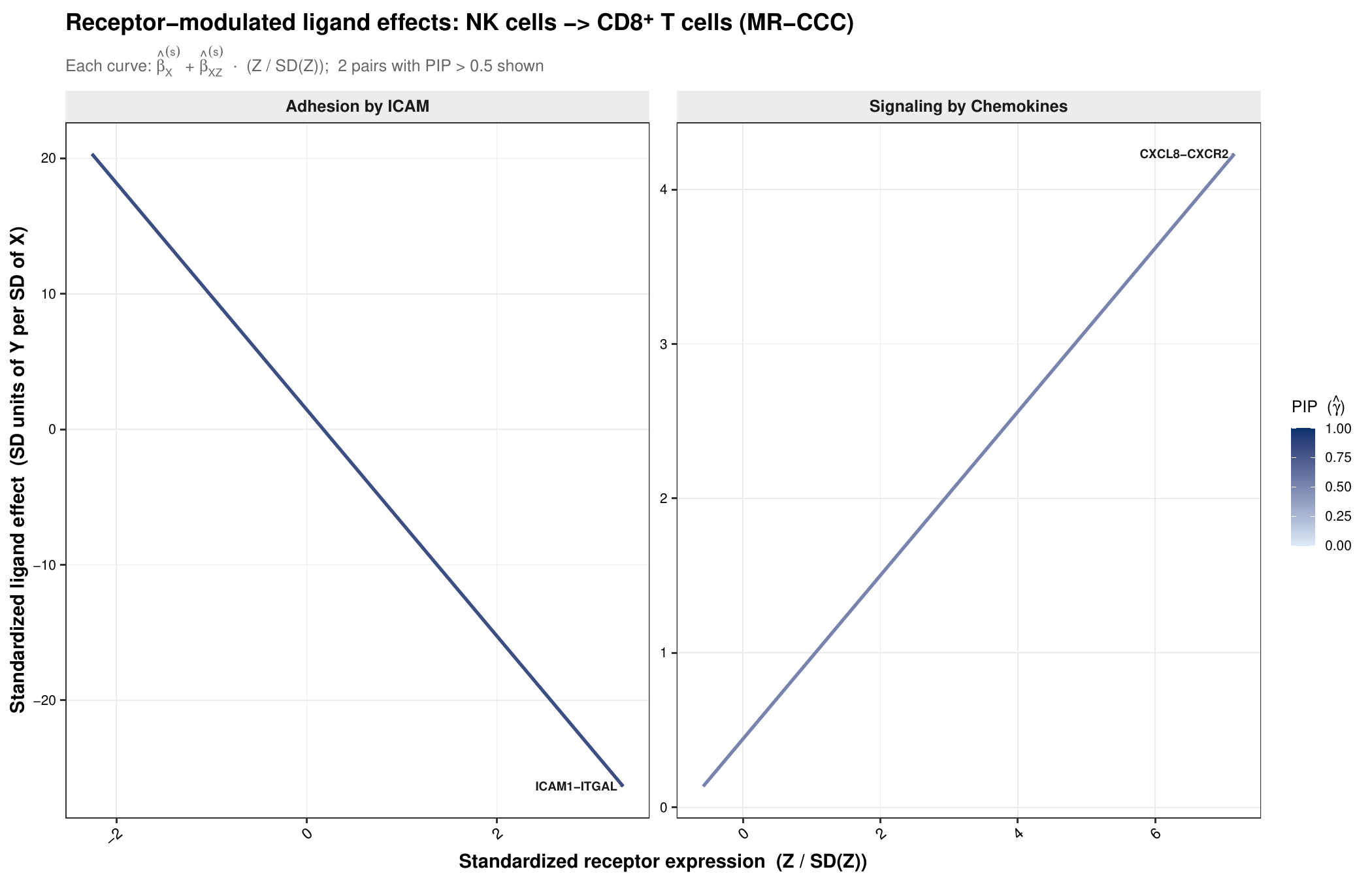}
\caption{\textbf{Receptor-modulated effect curves
  (\(\hat{\beta}_X + \hat{\beta}_{XZ} \cdot Z/\mathrm{SD}(Z)\))
  for the NK cells \(\rightarrow\) CD8\(^+\) T cells analysis.}
  Only the two discovery pairs (PIP \(> 0.5\)) are displayed, grouped
  into the two pathway panels with confirmed discoveries (Adhesion by
  ICAM and Signaling by Chemokines). In the Adhesion by ICAM panel,
  the \textit{ICAM1}--\textit{ITGAL} curve starts at a large positive
  value at low \textit{ITGAL} expression and descends very steeply,
  crossing zero near \(+0.18\) standard deviations above the mean and
  reaching strongly negative values at high LFA-1 density; the steep
  slope reflects the second-largest interaction magnitude in the entire
  analysis. In the Signaling by Chemokines panel, the
  \textit{CXCL8}--\textit{CXCR2} curve rises monotonically, remaining
  positive across the full observed receptor range with a zero crossing
  near \(-0.84\) standard deviations below the mean.}
\label{fig:supp_NK_CD8_curves}
\end{figure}
\FloatBarrier

\subsubsection{NK Cells \(\rightarrow\) Monocytes}
\label{supp:NK_Mono}
The NK cells\,$\rightarrow$\,monocytes direction (8 discoveries
spanning GABA, interferon, interleukin, and prostaglandin
signaling, $n = 651$ donors) is presented in full detail in the
Real Data Analysis section of the main manuscript and is therefore
omitted here.

\subsection{Monocytes as Sender}

\subsubsection{Monocytes \(\rightarrow\) B Cells}
\label{supp:Mono_B}

Across 665 donors, MR-CCC evaluated 41 candidate ligand--receptor--pathway triplets for the Monocyte (sender) to B cell (receiver) direction and identified eight high-confidence causal
communication signals with PIP exceeding 0.5, spanning three pathway classes:
five Adhesion by ICAM pairs (\textit{ICAM1}--\textit{ITGAM},
PIP \(= 0.780\); \textit{ICAM1}--\textit{ITGAL}, PIP \(= 0.732\);
\textit{ICAM1}--\textit{ITGB2}, PIP \(= 0.721\);
\textit{ICAM1}--\textit{ITGAX}, PIP \(= 0.628\);
\textit{ICAM4}--\textit{ITGAL}, PIP \(= 0.548\)), two Signaling by
Interleukin triplets (\textit{IL1B}--\textit{IL1RAP}, PIP \(= 0.685\);
\textit{IL15}--\textit{IL2RB}, PIP \(= 0.548\)), and one Signaling by Prostaglandin triplet (\textit{AKR1C3}--\textit{PTGDR}, PIP \(= 0.687\))
(Figures~\ref{fig:supp_Mono_B_pip}--\ref{fig:supp_Mono_B_curves}).

\noindent\textbf{Adhesion by ICAM signals.}
The five discovered ICAM--integrin triplets constitute the densest single-ligand
adhesion discovery cluster in the Monocyte sender directions: all four evaluated
\textit{ICAM1}--integrin pairs are simultaneously discovered, joined by
\textit{ICAM4}--\textit{ITGAL}~\cite{springer1990adhesion,dustin1986icam}.
Despite sharing the same ligand, the four \textit{ICAM1} pairs divide into two
qualitatively distinct groups based on the sign of the interaction term.

\textit{ICAM1}--\textit{ITGAL} (PIP \(= 0.732\), \(\hat{\beta}_X = 0.613\),
\(\hat{\beta}_{XZ} = 2.36\)) and \textit{ICAM1}--\textit{ITGAX}
(PIP \(= 0.628\), \(\hat{\beta}_X = 0.955\), \(\hat{\beta}_{XZ} = 0.320\))
both show positive main effects and positive interactions, yielding
monotonically stimulatory effect curves across the observed receptor expression
range. For \textit{ICAM1}--\textit{ITGAL} the zero crossing lies at
approximately \(0.26\) standard deviations \emph{below} the mean \textit{ITGAL}
level, meaning the stimulatory effect is active across essentially all B cells
and grows steeply with LFA-1 density. For \textit{ICAM1}--\textit{ITGAX} the
interaction is smaller and the zero crossing lies near \(-3.0\) standard
deviations below the mean, making this effectively a pure positive main effect
that uniformly stimulates the Adhesion pathway of CD11c-expressing B cells.

\textit{ICAM1}--\textit{ITGAM} (PIP \(= 0.780\), \(\hat{\beta}_X = 1.13\),
\(\hat{\beta}_{XZ} = -2.17\)), \textit{ICAM1}--\textit{ITGB2}
(PIP \(= 0.721\), \(\hat{\beta}_X = 1.05\), \(\hat{\beta}_{XZ} = -1.38\)),
and \textit{ICAM4}--\textit{ITGAL} (PIP \(= 0.548\),
\(\hat{\beta}_X = 0.617\), \(\hat{\beta}_{XZ} = -1.15\)) form the second
group: large positive main effects combined with negative interactions yield
sign-reversing curves that cross zero at \(+0.52\), \(+0.76\), and
\(+0.54\) standard deviations above their respective receptor means.
B cells expressing Mac-1 (\textit{ITGAM}), \(\beta_2\) (\textit{ITGB2}), or
LFA-1 (\textit{ITGAL}, via ICAM4) below these thresholds experience net
stimulation; those above them transition to progressively stronger suppression
of Adhesion pathway activity~\cite{myones1988cd11c,dustin1986icam}.

The most striking cross-direction contrast concerns \textit{ICAM1}--\textit{ITGAL}:
in every NK cell sender direction in which this triplet was discovered, the
interaction was large and \emph{negative} (NK cells \(\rightarrow\) B cells:
\(\hat{\beta}_{XZ} = -2.13\); NK cells \(\rightarrow\) CD4\(^+\) T cells:
\(\hat{\beta}_{XZ} = -0.910\); NK cells \(\rightarrow\) CD8\(^+\) T cells:
\(\hat{\beta}_{XZ} = -8.36\)), yielding receptor-gated suppression of
Adhesion pathway activity at high LFA-1. From the monocyte sender, the same
triplet yields a large \emph{positive} interaction
(\(\hat{\beta}_{XZ} = +2.36\)), converting the curve from
receptor-amplified suppression to receptor-amplified stimulation. This
sender-specific sign reversal of the interaction term for the identical
ligand--receptor--pathway triplet illustrates the capacity of MR-CCC to detect
qualitatively opposite causal consequences of the same molecular contact
depending on the cellular source of the ligand.

\noindent\textbf{Signaling by Interleukin.}
\textit{IL1B}--\textit{IL1RAP} (PIP \(= 0.685\), \(\hat{\beta}_X = 0.462\),
\(\hat{\beta}_{XZ} = -0.098\)) identifies monocyte-derived IL-1\(\beta\) as a
causal activator of B cell Interleukin pathway activity through the IL-1
receptor accessory protein~\cite{dinarello2009interleukin}. Monocytes are the primary
cellular source of IL-1\(\beta\) in peripheral blood, and B cells express
functional IL-1 receptor complexes that regulate B cell activation and antibody
production. The negligibly small interaction term places the zero crossing at
approximately \(+4.7\) standard deviations above the mean \textit{IL1RAP}
level, far outside the observed data range; this triplet is therefore a
near-purely main-effect discovery in which monocyte IL-1\(\beta\) exerts a
uniformly positive stimulatory effect on B cell Interleukin pathway activity
that is essentially independent of IL-1RAP expression level.

\textit{IL15}--\textit{IL2RB} (PIP \(= 0.548\), \(\hat{\beta}_X = 0.453\),
\(\hat{\beta}_{XZ} = -2.11\)) implicates monocyte-derived IL-15 as a causal
regulator of B cell Interleukin pathway activity through the shared IL-2R\(\beta\)
chain (encoded by \textit{IL2RB}; CD122)~\cite{waldmann2006il15,fehniger2001il15}.
Monocytes constitutively express IL-15R\(\alpha\) (\textit{IL15RA}) and
transpresent IL-15 to neighboring cells, making them a biologically plausible
cellular source for this signal. The positive main effect and large negative
interaction yield a sign-reversing curve that crosses zero at approximately
\(+0.22\) standard deviations above the mean \textit{IL2RB} level: B cells
with below-average IL-2R\(\beta\) expression experience net stimulation, while
those above the threshold---associated with a more activated or memory B cell
phenotype---undergo progressively stronger suppression, consistent with
receptor-mediated negative feedback under high IL-2R\(\beta\) expression.
This triplet was previously discovered in the B cells \(\rightarrow\) CD8\(^+\)
T cells direction (PIP \(= 0.518\)), where B cells were the sender; its
discovery here from the monocyte sender direction indicates that the
IL-15--IL-2R\(\beta\) axis to B cells is activated by multiple cellular
sources in peripheral blood.

\noindent\textbf{Signaling by Prostaglandin.}
\textit{AKR1C3}--\textit{PTGDR} (PIP \(= 0.687\), \(\hat{\beta}_X = -0.225\),
\(\hat{\beta}_{XZ} = -1.85\)) identifies monocyte aldo-keto reductase 1C3
(\textit{AKR1C3}) as a causal suppressor of B cell Prostaglandin pathway
activity through the prostaglandin D\(_2\) receptor (DP1, encoded by
\textit{PTGDR}). AKR1C3 is a multifunctional lipid-metabolizing enzyme
expressed at high levels in monocytes that catalyses the synthesis of
prostaglandin F\(_2\alpha\) and related eicosanoids from prostaglandin H\(_2\),
and can interconvert prostaglandin D\(_2\) metabolites that engage the DP1
receptor on lymphocytes~\cite{penning2006akr1c3,hata2004pgrecept}. Both the
main effect and the interaction are negative, with the zero crossing at
approximately \(0.12\) standard deviations \emph{below} the mean \textit{PTGDR}
expression level---essentially at the population mean. Monocytes with higher
AKR1C3-mediated eicosanoid synthesis activity therefore exert an increasingly
strong suppression of B cell Prostaglandin DP1 pathway activity, with the
suppression uniformly experienced by essentially all B cells expressing
\textit{PTGDR} at or above the mean. This is the first direction in the
analysis in which \textit{AKR1C3}--\textit{PTGDR} is discovered; it was a sub-threshold near-miss (PIP \(= 0.365\)) in the NK cells \(\rightarrow\)
Monocytes direction.

Several triplets narrowly missed the discovery threshold.
\textit{ITGB1}--\textit{ADGRE5} (PIP \(= 0.492\)) was the highest-ranking
sub-threshold triplet and a borderline near-miss; this triplet was among the
top discoveries in the CD4\(^+\) T cells \(\rightarrow\) B cells direction
(PIP \(= 0.992\)) and was also discovered in the CD4\(^+\) T cells
\(\rightarrow\) Monocytes direction (PIP \(= 0.597\)), but monocyte \(\beta_1\)
integrin-to-B cell CD97 signaling falls just below the discovery threshold here.
\textit{ICAM4}--\textit{ITGB2} (PIP \(= 0.482\)) was also a close
near-miss within the Adhesion by ICAM cluster.
\textit{IFNG}--\textit{IFNGR2} (PIP \(= 0.468\)) was the highest-ranking
non-ICAM, non-prostaglandin sub-threshold triplet; monocytes can produce
low levels of IFN-\(\gamma\) under activation conditions, but the
population-averaged signal does not achieve discovery strength, consistent
with IFN-\(\gamma\) being a more characteristic product of T and NK
cells~\cite{schroder2004interferon}.

\begin{figure}[htbp]
\centering
\includegraphics[width=\textwidth]{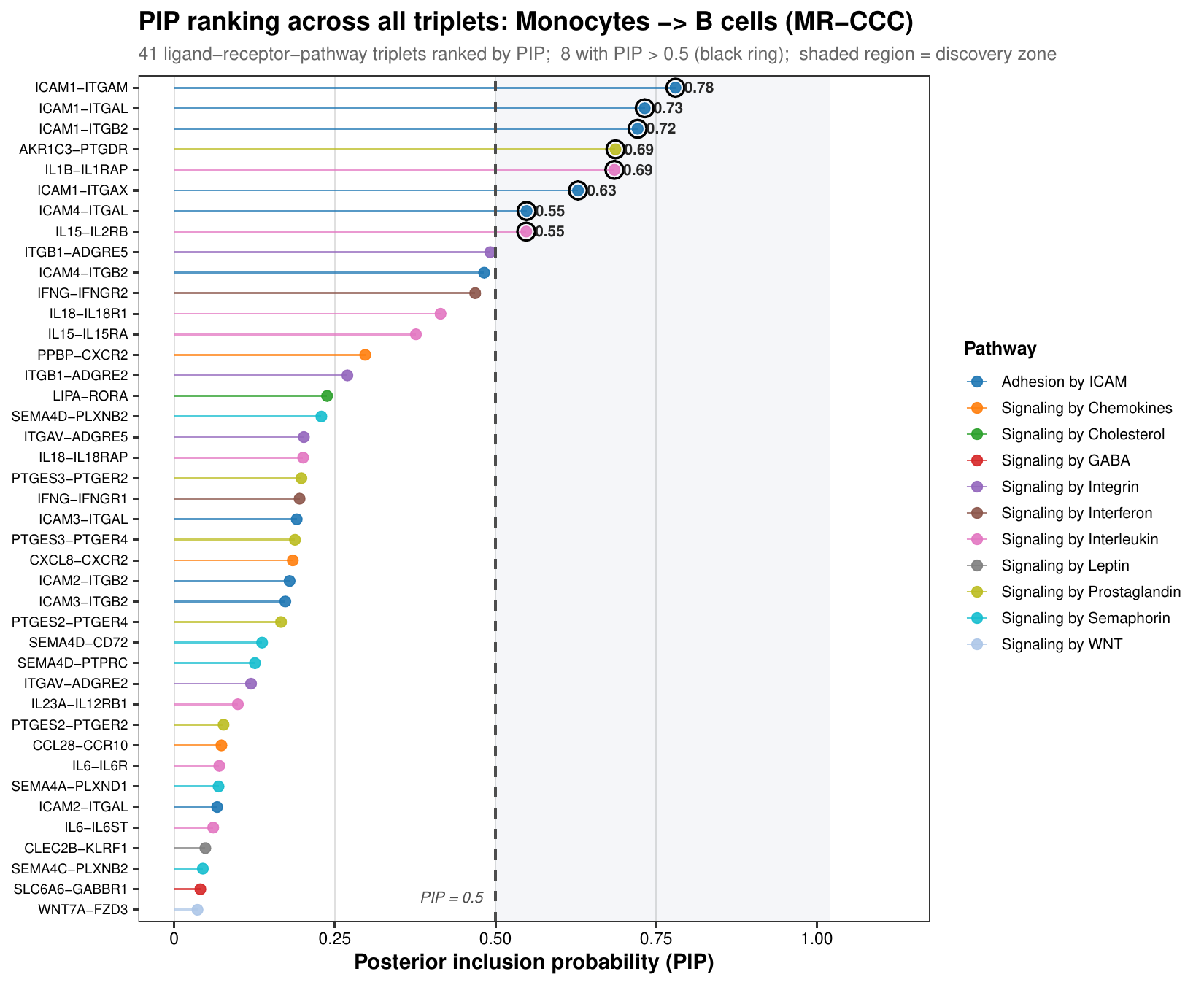}
\caption{\textbf{PIP ranking for the Monocytes\,$\rightarrow$\,B cells analysis.} All 41 ligand--receptor--pathway triplets across 665 donors.
  Points are colored by pathway; the dashed vertical line marks the
  discovery threshold of PIP \(= 0.5\); the shaded region to the
  right is the discovery zone. Black rings identify the eight
  discovered triplets: \textit{ICAM1}--\textit{ITGAM}
  (PIP \(= 0.78\)), \textit{ICAM1}--\textit{ITGAL}
  (PIP \(= 0.73\)), \textit{ICAM1}--\textit{ITGB2}
  (PIP \(= 0.72\)), \textit{AKR1C3}--\textit{PTGDR}
  (PIP \(= 0.687\)), \textit{IL1B}--\textit{IL1RAP}
  (PIP \(= 0.686\)), \textit{ICAM1}--\textit{ITGAX}
  (PIP \(= 0.63\)), \textit{ICAM4}--\textit{ITGAL}
  (PIP \(= 0.55\)), and \textit{IL15}--\textit{IL2RB}
  (PIP \(= 0.55\)).}
\label{fig:supp_Mono_B_pip}
\end{figure}

\begin{figure}[htbp]
\centering
\includegraphics[width=\textwidth]{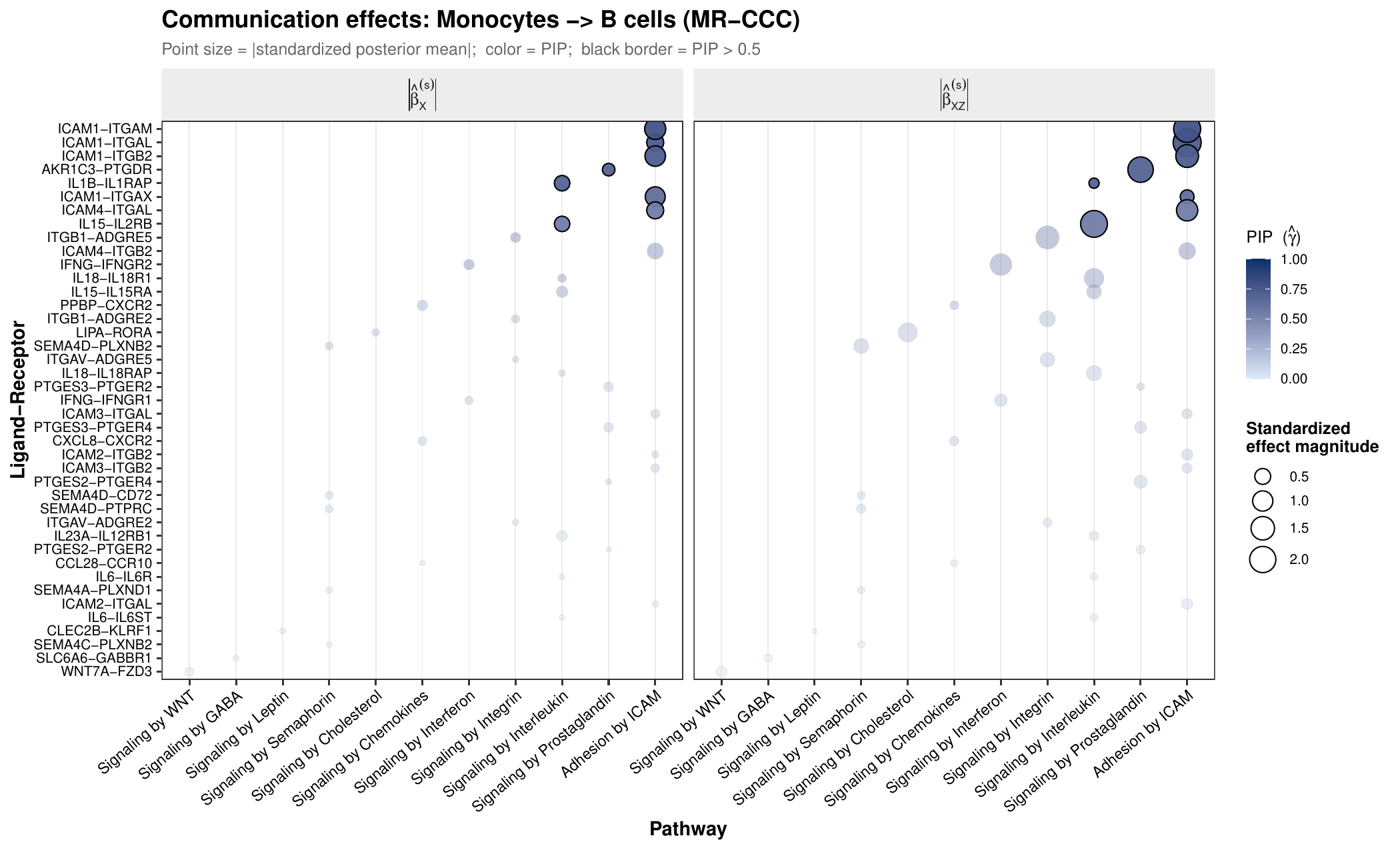}
\caption{\textbf{Standardized posterior effects for the Monocytes \(\rightarrow\)
  B cells analysis.} Left panel: absolute main ligand effect
  \(|\hat{\beta}_X^{(s)}|\); right panel: absolute receptor-modulated
  interaction effect \(|\hat{\beta}_{XZ}^{(s)}|\). Point size encodes
  effect magnitude; fill color encodes PIP; black borders identify the
  eight discovered triplets. The Adhesion by ICAM cluster dominates
  both panels: \textit{ICAM1}--\textit{ITGAL} shows the largest
  positive interaction magnitude while \textit{ICAM1}--\textit{ITGAM}
  and \textit{IL15}--\textit{IL2RB} show the largest negative
  interaction magnitudes. \textit{IL1B}--\textit{IL1RAP} has a large
  main effect but near-zero interaction, consistent with its
  main-effect-driven discovery profile.}
\label{fig:supp_Mono_B_bubble}
\end{figure}

\begin{figure}[htbp]
\centering
\includegraphics[width=\textwidth]{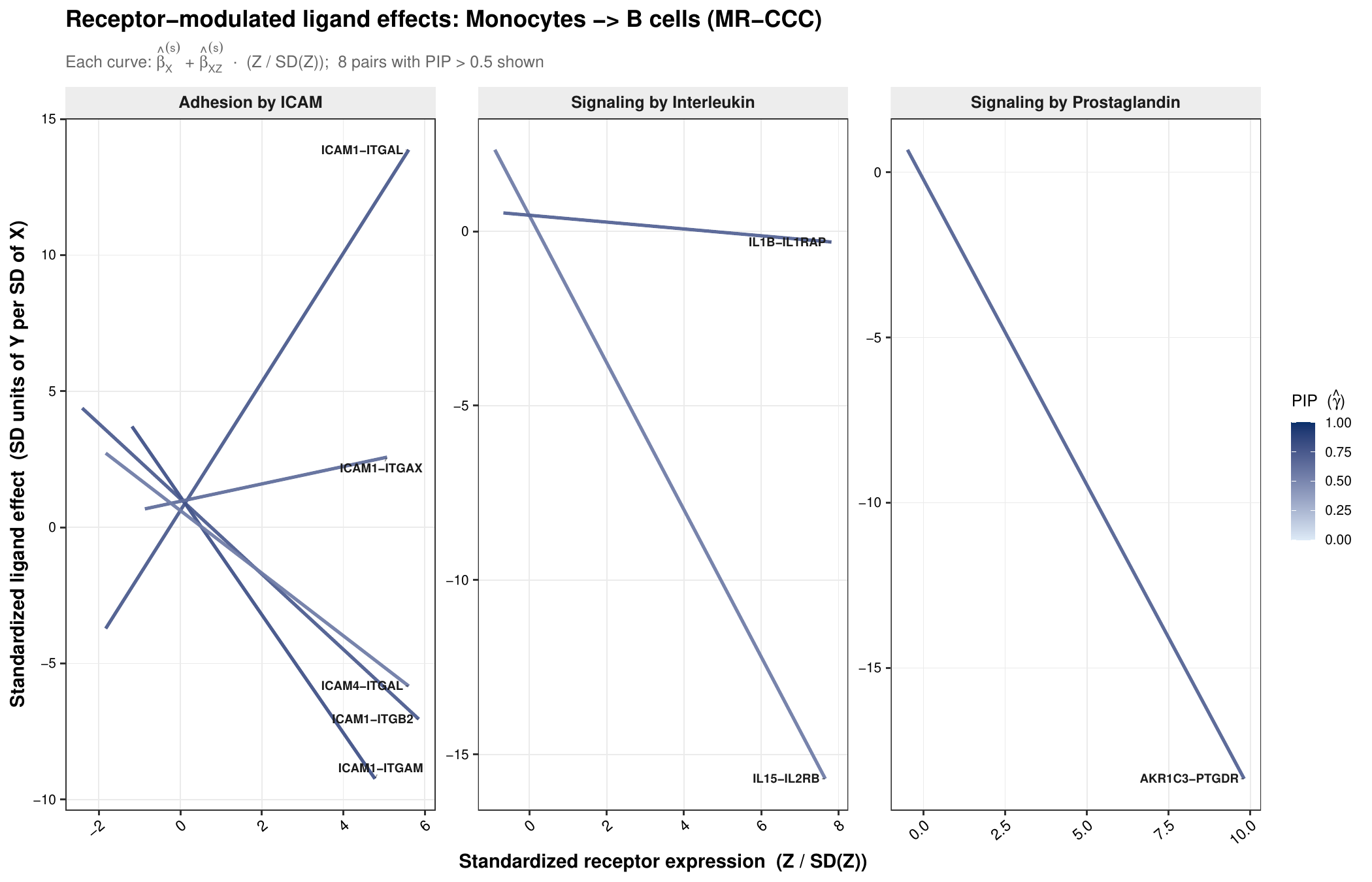}
\caption{\textbf{Receptor-modulated effect curves
  (\(\hat{\beta}_X + \hat{\beta}_{XZ} \cdot Z/\mathrm{SD}(Z)\))
  for the Monocytes \(\rightarrow\) B cells analysis.} Only the eight
  discovery pairs (PIP \(> 0.5\)) are displayed, grouped into the
  three pathway panels with confirmed discoveries (Adhesion by ICAM,
  Signaling by Interleukin, and Signaling by Prostaglandin). In the
  Adhesion by ICAM panel, five curves are visible: the
  \textit{ICAM1}--\textit{ITGAL} curve rises steeply and is positive
  across the full observed LFA-1 range; \textit{ICAM1}--\textit{ITGAX}
  is nearly flat and positive; \textit{ICAM1}--\textit{ITGAM},
  \textit{ICAM1}--\textit{ITGB2}, and \textit{ICAM4}--\textit{ITGAL}
  each start positive and cross zero at \(+0.52\), \(+0.76\), and
  \(+0.54\) standard deviations above the respective receptor mean. In
  the Signaling by Interleukin panel, \textit{IL15}--\textit{IL2RB}
  crosses zero near \(+0.22\) SD above the mean and becomes
  increasingly negative at high \textit{IL2RB} expression, while
  \textit{IL1B}--\textit{IL1RAP} is nearly flat and uniformly positive.
  In the Signaling by Prostaglandin panel,
  \textit{AKR1C3}--\textit{PTGDR} crosses zero just below the
  population mean and descends steeply to strongly negative values at
  high \textit{PTGDR} expression.}
\label{fig:supp_Mono_B_curves}
\end{figure}
\FloatBarrier

\subsubsection{Monocytes \(\rightarrow\) CD4\(^+\) T Cells}
\label{supp:Mono_CD4}

Across 691 donors, MR-CCC evaluated 42 ligand--receptor--pathway
triplets for the Monocyte (sender) to CD4\(^+\) T cell (receiver) direction
and identified three high-confidence causal communication signals
with PIP exceeding 0.5: \textit{IL15}--\textit{IL15RA} and
\textit{IL15}--\textit{IL2RB} within the Signaling by Interleukin pathway
(PIP \(= 0.711\) and PIP \(= 0.694\), respectively) and
\textit{SEMA4D}--\textit{CD72} within the Signaling by Semaphorin pathway
(PIP \(= 0.674\))
(Figures~\ref{fig:supp_Mono_CD4_pip}--\ref{fig:supp_Mono_CD4_curves}). The two IL-15 discoveries jointly implicate monocyte-derived IL-15 as a
causal communicator to CD4\(^+\) T cells through two distinct receptor chains
of the IL-15/IL-2 receptor complex. Monocytes constitutively express
\textit{IL15RA} and transpresent IL-15 to neighboring lymphocytes in
\emph{trans}~\cite{dubois2002il15ra,waldmann2006il15}, making them a primary
cellular vehicle for IL-15 delivery in peripheral blood.

\textit{IL15}--\textit{IL15RA} (PIP \(= 0.711\), \(\hat{\beta}_X = 0.381\),
\(\hat{\beta}_{XZ} = -0.898\)) has a positive main effect and a negative
interaction, yielding a sign-reversing curve that crosses zero at approximately
\(0.42\) standard deviations above the mean \textit{IL15RA} level. CD4\(^+\)
T cells with below-average IL-15R\(\alpha\) expression experience net
stimulation of Interleukin pathway activity by monocyte IL-15, while those
with high \textit{IL15RA} expression undergo progressively stronger
attenuation and eventual suppression, consistent with receptor saturation or
negative feedback at high transpresentation receptor densities. This same
triplet was the dominant discovery in the B cells \(\rightarrow\) CD4\(^+\)
T cells direction (PIP \(= 0.867\)), but with the opposite interaction sign
(\(\hat{\beta}_{XZ} = +2.28\)), producing a monotonically stimulatory curve.
The sender-specific sign reversal of the \textit{IL15RA} interaction
term---positive from B cell sender, negative from monocyte sender---indicates
that the same IL-15R\(\alpha\)-mediated transpresentation interface produces
qualitatively different receptor-modulated outcomes in CD4\(^+\) T cells
depending on the cellular source, consistent with structural differences in
the B cell versus monocyte IL-15 transpresentation complex affecting the
downstream signaling outcome~\cite{waldmann2006il15,fehniger2001il15}.

\textit{IL15}--\textit{IL2RB} (PIP \(= 0.694\), \(\hat{\beta}_X = -0.348\),
\(\hat{\beta}_{XZ} = 1.33\)) exhibits the opposite sign structure: a negative
main effect and a large positive interaction. The zero crossing lies at
approximately \(0.26\) standard deviations above the mean \textit{IL2RB}
expression level. CD4\(^+\) T cells with below-average IL-2R\(\beta\)
expression experience net suppression of Interleukin pathway activity in
response to monocyte IL-15, while those expressing \textit{IL2RB}
above this threshold experience progressively amplified stimulation. The
opposing main-effect signs of the two co-discovered IL-15 triplets---positive
main effect through \textit{IL15RA}, negative main effect through
\textit{IL2RB}---indicate that the two receptor chains impose qualitatively
different baseline responses to monocyte IL-15, with the IL-2R\(\beta\)
chain acting as a gatekeeper that converts a baseline-suppressive signal into
a stimulatory one only in CD4\(^+\) T cells that express it at sufficient
levels.

\textit{SEMA4D}--\textit{CD72} (PIP \(= 0.674\), \(\hat{\beta}_X = 0.192\),
\(\hat{\beta}_{XZ} = 2.61\)) identifies monocyte-expressed SEMA4D (CD100)
as a causal activator of CD4\(^+\) T cell Semaphorin pathway activity through
CD72. The near-zero main effect and large positive interaction place the zero
crossing at approximately \(0.07\) standard deviations \emph{below} the mean
\textit{CD72} level---essentially at the population mean. The causal signal is
therefore almost entirely interaction-driven and operates across the full
observed CD4\(^+\) T cell population: CD4\(^+\) T cells expressing
\textit{CD72} at or above the mean experience a steeply increasing stimulation
of Semaphorin pathway activity as monocyte SEMA4D production rises. CD72 was
first characterized as a counter-receptor for SEMA4D (CD100) on B cells, where
CD100--CD72 signaling modulates B cell activation thresholds and
humoral immunity~\cite{kumanogoh2000cd100}; its discovery here as a receptor
on CD4\(^+\) T cells for monocyte-derived CD100 signals extends this
semaphorin axis to a myeloid--T helper cell communication context, consistent
with the broader role of SEMA4D in lymphocyte co-stimulation and cytoskeletal
regulation~\cite{suzuki2008sema4d}.

Several triplets narrowly missed the discovery threshold. Both
\textit{ITGAV}--\textit{ADGRE2} (PIP \(= 0.469\),
\(\hat{\beta}_X = -0.240\), \(\hat{\beta}_{XZ} = -0.577\)) and
\textit{ITGAV}--\textit{ADGRE5} (PIP \(= 0.436\),
\(\hat{\beta}_X = -0.200\), \(\hat{\beta}_{XZ} = -0.369\)) were notable
Integrin pathway near-misses; both exhibit negative main effects and negative
interactions, producing monotonically suppressive curves across the observed
receptor expression range. The ADGRE2 and ADGRE5 receptors (EMR2 and CD97)
were discovered from CD4\(^+\) T cell senders in the
CD4\(^+\) T cells \(\rightarrow\) NK cells direction
(PIP \(= 0.910\) and PIP \(= 0.858\), respectively) but fall below the
discovery threshold here, suggesting that the integrin--adhesion-GPCR
axis between these cell types is directionally asymmetric: more consistently
detected from CD4\(^+\) T cell sender to NK receiver than from monocyte sender
to CD4\(^+\) T cell receiver at the population level~\cite{hamann1996cd97,stacey2003emr2}.
\textit{SEMA4D}--\textit{PLXNB2} (PIP \(= 0.358\)) was the highest-ranking
sub-threshold Semaphorin triplet beyond the discovered \textit{SEMA4D}--\textit{CD72},
consistent with monocyte SEMA4D engaging multiple receptor systems on CD4\(^+\)
T cells but only the CD72-mediated axis achieving discovery strength.
\textit{CLEC2B}--\textit{KLRF1} (PIP \(= 0.304\)) and
\textit{IL18}--\textit{IL18RAP} (PIP \(= 0.336\)) both fell below threshold;
the latter was discovered from NK cell senders to monocytes (PIP \(= 0.976\)),
again reflecting the directional specificity of IL-18 receptor signaling in
this cohort~\cite{okamura1995interleukin18,born1998il18rap}.

\textit{IFNG}--\textit{IFNGR1} (PIP \(= 0.039\)) and
\textit{IFNG}--\textit{IFNGR2} (PIP \(= 0.031\)) both received near-zero
PIPs, consistent with monocytes not being a dominant physiological source of
IFN-\(\gamma\) in peripheral blood~\cite{schroder2004interferon}.

\begin{figure}[htbp]
\centering
\includegraphics[width=\textwidth]{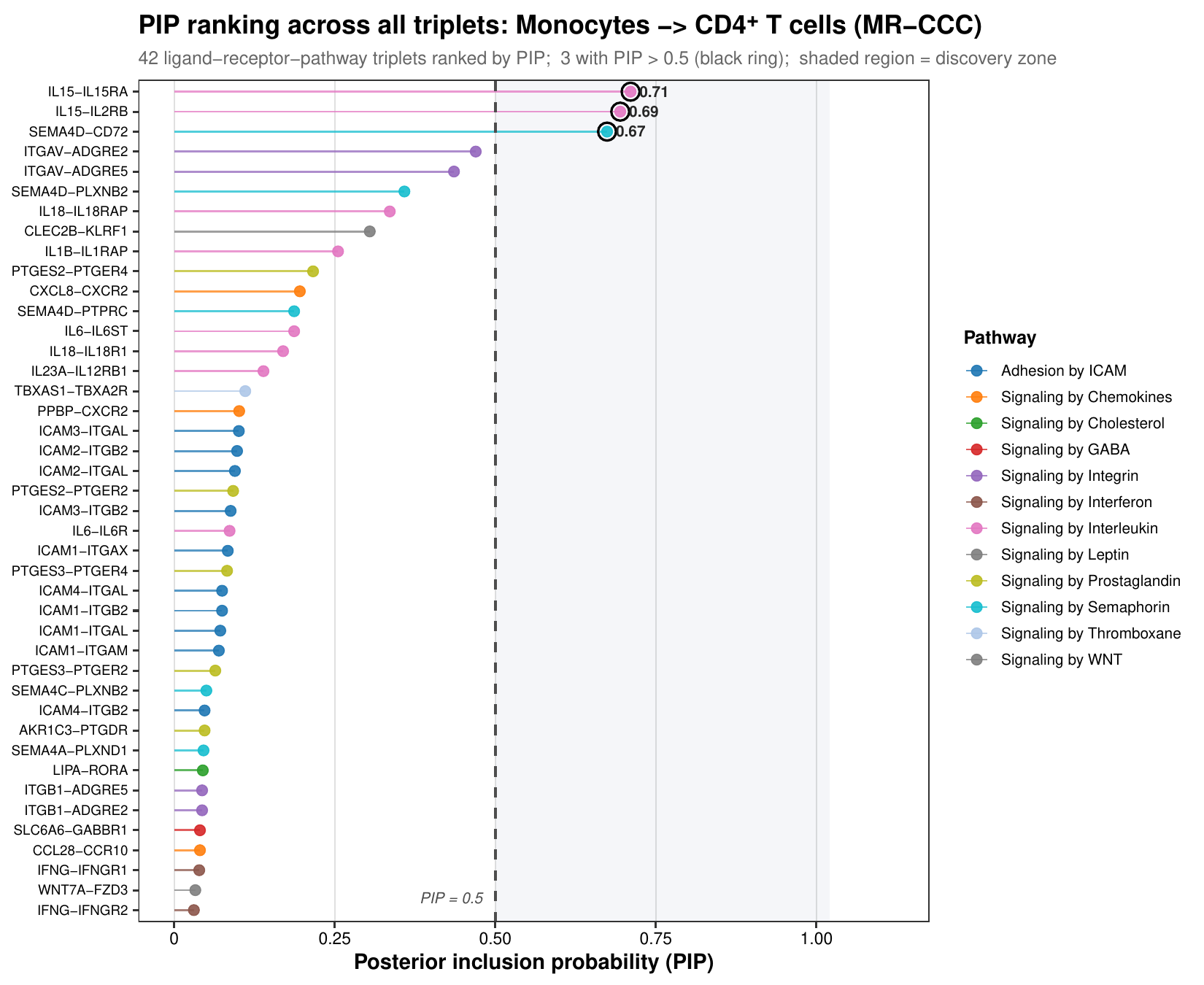}
\caption{\textbf{PIP ranking for the Monocytes\,$\rightarrow$\,CD4\(^+\) T cells analysis.} All 42 ligand--receptor--pathway triplets across 691
  donors. Points are colored by pathway; the dashed vertical line
  marks the discovery threshold of PIP \(= 0.5\); the shaded region
  to the right is the discovery zone. Black rings identify the three
  discovered triplets: \textit{IL15}--\textit{IL15RA}
  (PIP \(= 0.71\)), \textit{IL15}--\textit{IL2RB}
  (PIP \(= 0.69\)), and \textit{SEMA4D}--\textit{CD72}
  (PIP \(= 0.67\)).}
\label{fig:supp_Mono_CD4_pip}
\end{figure}

\begin{figure}[htbp]
\centering
\includegraphics[width=\textwidth]{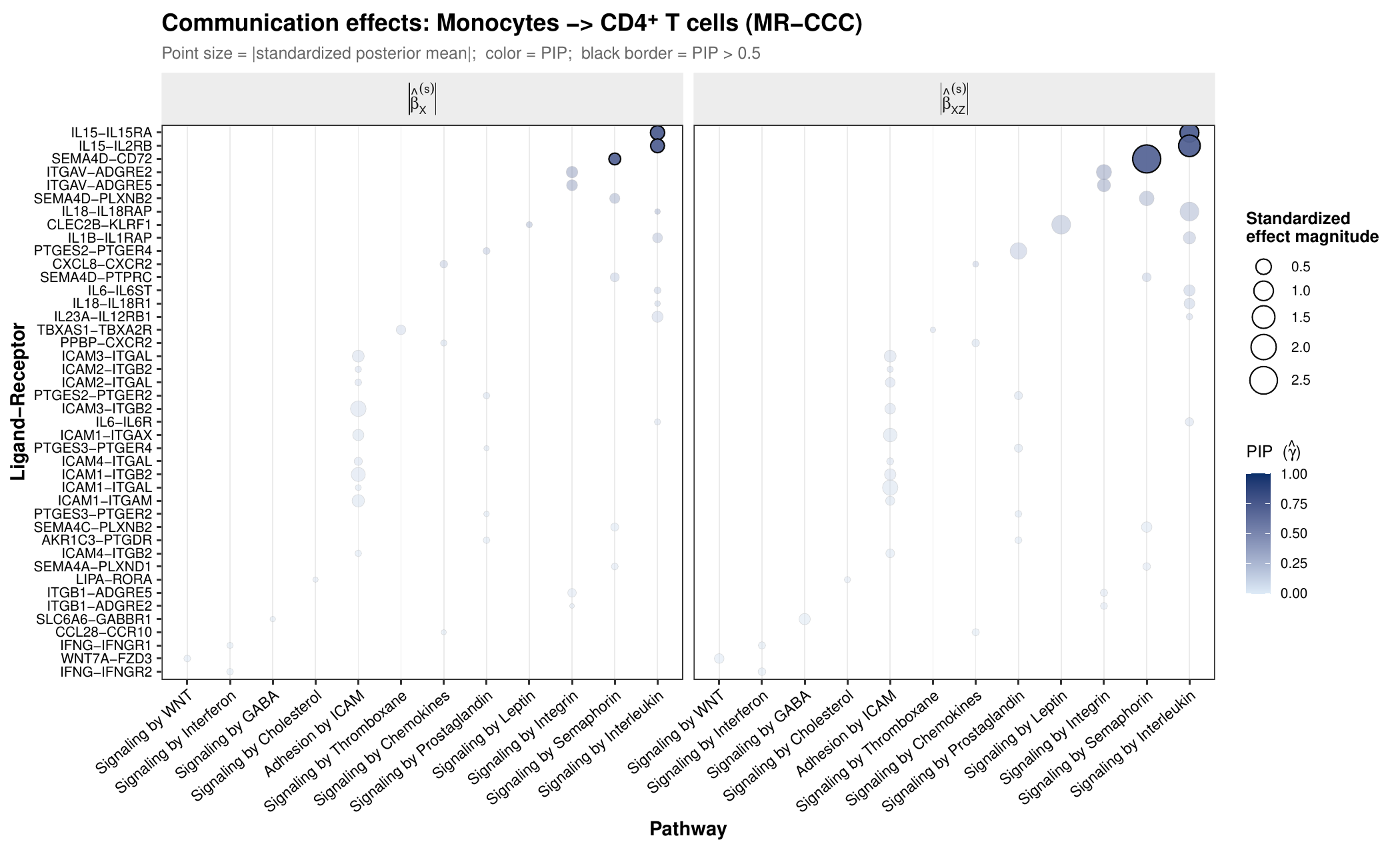}
\caption{\textbf{Standardized posterior effects for the Monocytes \(\rightarrow\)
  CD4\(^+\) T cells analysis.} Left panel: absolute main ligand effect
  \(|\hat{\beta}_X^{(s)}|\); right panel: absolute receptor-modulated
  interaction effect \(|\hat{\beta}_{XZ}^{(s)}|\). Point size encodes
  effect magnitude; fill color encodes PIP; black borders identify the
  three discovered triplets. \textit{SEMA4D}--\textit{CD72} dominates
  the right panel with the largest interaction magnitude in this
  direction; \textit{IL15}--\textit{IL15RA} and
  \textit{IL15}--\textit{IL2RB} show appreciable effects in both
  panels with opposing interaction signs, consistent with their
  divergent receptor-modulated effect profiles.}
\label{fig:supp_Mono_CD4_bubble}
\end{figure}

\begin{figure}[htbp]
\centering
\includegraphics[width=\textwidth]{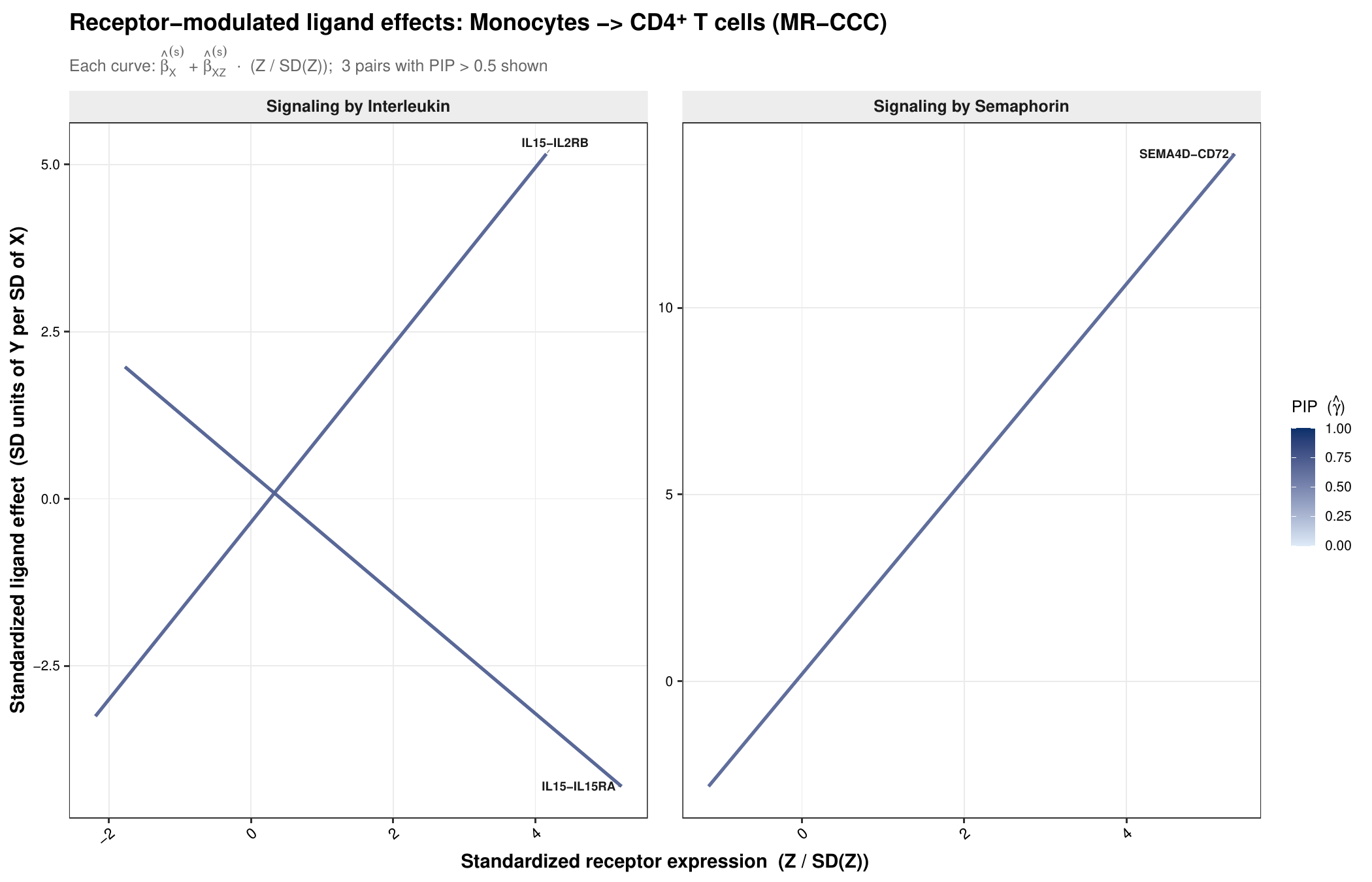}
\caption{\textbf{Receptor-modulated effect curves
  (\(\hat{\beta}_X + \hat{\beta}_{XZ} \cdot Z/\mathrm{SD}(Z)\))
  for the Monocytes \(\rightarrow\) CD4\(^+\) T cells analysis.}
  Only the three discovery pairs (PIP \(> 0.5\)) are displayed,
  grouped into the two pathway panels with confirmed discoveries
  (Signaling by Interleukin and Signaling by Semaphorin). In the
  Signaling by Interleukin panel,
  \textit{IL15}--\textit{IL15RA} descends from positive to negative,
  crossing zero at \(+0.42\) SD above the mean, while
  \textit{IL15}--\textit{IL2RB} ascends from negative to positive,
  crossing zero at \(+0.26\) SD above the mean; the two curves form
  an X-shaped crossing pair with opposite slopes despite sharing the
  same ligand. In the Signaling by Semaphorin panel,
  \textit{SEMA4D}--\textit{CD72} passes near zero just below the
  population mean and rises steeply to strongly positive values at
  high \textit{CD72} expression.}
\label{fig:supp_Mono_CD4_curves}
\end{figure}
\FloatBarrier

\subsubsection{Monocytes \(\rightarrow\) CD8\(^+\) T Cells}
\label{supp:Mono_CD8}

Across 678 donors, MR-CCC evaluated 42 ligand--receptor--pathway
triplets for the Monocyte (sender) to CD8\(^+\) T cell (receiver)
direction and identified three high-confidence causal communication
signals with PIP exceeding 0.5: \textit{ICAM1}--\textit{ITGAL} and
\textit{ICAM1}--\textit{ITGAX} within the Adhesion by ICAM pathway
(PIP \(= 0.654\) and PIP \(= 0.598\), respectively) and
\textit{IL1B}--\textit{IL1RAP} within the Signaling by Interleukin pathway
(PIP \(= 0.549\))
(Figures~\ref{fig:supp_Mono_CD8_pip}--\ref{fig:supp_Mono_CD8_curves}).

The two discovered ICAM pairs both engage monocyte ICAM-1 as ligand but
through different integrin receptors on CD8\(^+\) T cells, producing
qualitatively opposite interaction-modulated
outcomes~\cite{dustin1986icam,springer1990adhesion}.
\textit{ICAM1}--\textit{ITGAL} (PIP \(= 0.654\), \(\hat{\beta}_X = 0.541\),
\(\hat{\beta}_{XZ} = -3.02\)) has a positive main effect and a large negative
interaction, yielding a sign-reversing curve that crosses zero at approximately
\(0.18\) standard deviations above the mean \textit{ITGAL} level. CD8\(^+\)
T cells with LFA-1 expression below this threshold experience net stimulation
of Adhesion pathway activity by monocyte ICAM-1 engagement; those above it
transition to progressively stronger suppression, with the majority of the
observed CD8\(^+\) T cell population falling in the suppressive regime given
the near-mean zero crossing.
This suppressive pattern from monocyte sender to CD8\(^+\) T cell receiver
is consistent with the NK cell sender directions, where
\textit{ICAM1}--\textit{ITGAL} carried uniformly negative interactions
(NK cells \(\rightarrow\) CD8\(^+\) T cells: \(\hat{\beta}_{XZ} = -8.36\)).
Strikingly, however, the same triplet from the same monocyte sender to B cell
receiver showed the opposite sign (\(\hat{\beta}_{XZ} = +2.36\), Monocytes
\(\rightarrow\) B cells), converting receptor-amplified suppression in
CD8\(^+\) T cells into receptor-amplified stimulation in B cells. The
receiver-specific sign reversal of the ICAM1--LFA-1 interaction term between
CD8\(^+\) T cell and B cell targets from the identical monocyte sender
underscores the cell-type specificity of adhesion-mediated causal
communication detected by MR-CCC.

\textit{ICAM1}--\textit{ITGAX} (PIP \(= 0.598\), \(\hat{\beta}_X = 0.468\),
\(\hat{\beta}_{XZ} = 1.49\)) has a positive main effect and a positive
interaction, yielding a monotonically increasing effect curve with a zero
crossing at approximately \(0.31\) standard deviations \emph{below} the mean
\textit{ITGAX} expression level. Monocyte ICAM-1 therefore exerts a broadly
stimulatory causal effect on the Adhesion pathway activity of CD8\(^+\) T cells
expressing CD11c (encoded by \textit{ITGAX}), with the stimulation growing in
magnitude at higher CD11c densities~\cite{myones1988cd11c}. This positive
interaction from monocyte ICAM-1 through CD11c is consistent with the same
triplet in the Monocytes \(\rightarrow\) B cells direction
(\(\hat{\beta}_{XZ} = +0.32\)), where CD11c-expressing B cells were similarly
stimulated. The shared positive interaction sign for \textit{ICAM1}--\textit{ITGAX}
across B cell and CD8\(^+\) T cell receivers---in contrast to the opposing
\textit{ICAM1}--\textit{ITGAL} signs---indicates that the integrin
\(\alpha\) chain identity (\(\alpha\)X vs \(\alpha\)L) is the key determinant
of whether monocyte ICAM-1 contact is stimulatory or suppressive in the
receiver cell.

\textit{IL1B}--\textit{IL1RAP} (PIP \(= 0.549\), \(\hat{\beta}_X = 0.231\),
\(\hat{\beta}_{XZ} = -0.392\)) identifies monocyte-derived IL-1\(\beta\) as
a causal activator of CD8\(^+\) T cell Interleukin pathway activity through
the IL-1 receptor accessory protein~\cite{dinarello2009interleukin}. The positive
main effect and moderate negative interaction yield a sign-reversing curve
that crosses zero at approximately \(0.59\) standard deviations above the
mean \textit{IL1RAP} level. CD8\(^+\) T cells with below-average or
moderately above-average IL-1RAP expression experience net stimulation by
monocyte IL-1\(\beta\); those with the highest IL-1RAP expression undergo
progressive attenuation and reversal, consistent with receptor-mediated
negative feedback under high accessory protein density. This triplet was
also discovered from monocyte sender to B cell receiver (PIP \(= 0.685\),
\(\hat{\beta}_{XZ} = -0.098\)), but there the interaction was negligibly
small and the effect was essentially a uniform positive main effect; the
more pronounced negative interaction here (\(\hat{\beta}_{XZ} = -0.392\))
indicates that IL-1RAP-mediated attenuation of monocyte IL-1\(\beta\)
signaling is a more receptor-density-dependent process in CD8\(^+\) T cells
than in B cells.

Among sub-threshold triplets, \textit{ICAM3}--\textit{ITGAL}
(PIP \(= 0.429\), \(\hat{\beta}_X = -0.192\), \(\hat{\beta}_{XZ} = 1.83\))
was the highest-ranking near-miss; the negative main effect and large positive
interaction produce a sign-reversing curve that crosses zero at approximately
\(0.10\) standard deviations above the mean \textit{ITGAL} level, with
CD8\(^+\) T cells expressing LFA-1 above the mean experiencing increasing
stimulation of Adhesion pathway activity by monocyte ICAM3. Its near-discovery
alongside the confirmed \textit{ICAM1}--\textit{ITGAL} signal highlights the
complex heterogeneity of monocyte ICAM-family contacts at the
monocyte--CD8\(^+\) T cell interface.
\textit{IL15}--\textit{IL2RB} (PIP \(= 0.482\)) and
\textit{IL15}--\textit{IL15RA} (PIP \(= 0.431\)) were the highest-ranking
Interleukin near-misses; both IL-15 pairs were discovered from monocyte sender
to CD4\(^+\) T cell receiver (PIPs \(= 0.694\) and \(0.711\), respectively),
but fall below the discovery threshold here, suggesting that the monocyte
IL-15 transpresentation axis targets CD4\(^+\) T cells more consistently than
CD8\(^+\) T cells at the population level in this
cohort~\cite{waldmann2006il15,fehniger2001il15}.

\textit{IFNG}--\textit{IFNGR1} (PIP \(= 0.166\)) received a higher PIP than
in most sender directions but still fell well below the discovery threshold;
monocytes are not a dominant physiological source of IFN-\(\gamma\), and
the moderate PIP here likely reflects partial overlap of the monocyte
\textit{IFNG} cis-eQTL with IFN-\(\gamma\)-responsive CD8\(^+\) T cell
pathway activity at the population
level~\cite{schroder2004interferon,bach1997ifngr}.

\begin{figure}[htbp]
\centering
\includegraphics[width=\textwidth]{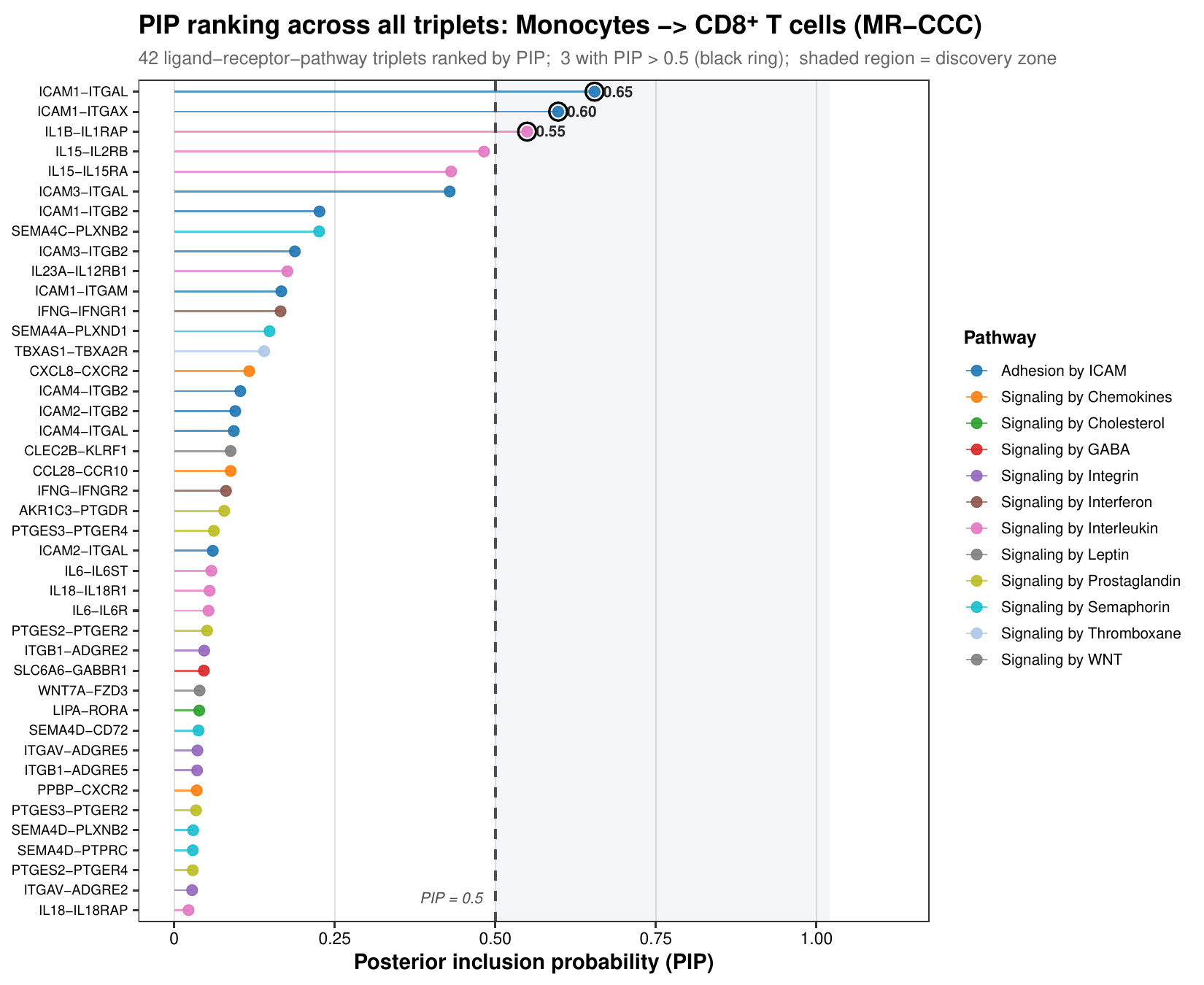}
\caption{\textbf{PIP ranking for the Monocytes\,$\rightarrow$\,CD8\(^+\) T cells analysis.} All 42 ligand--receptor--pathway triplets across 678
  donors. Points are colored by pathway; the dashed vertical line
  marks the discovery threshold of PIP \(= 0.5\); the shaded region
  to the right is the discovery zone. Black rings identify the three
  discovered triplets: \textit{ICAM1}--\textit{ITGAL}
  (PIP \(= 0.65\)), \textit{ICAM1}--\textit{ITGAX}
  (PIP \(= 0.60\)), and \textit{IL1B}--\textit{IL1RAP}
  (PIP \(= 0.55\)).}
\label{fig:supp_Mono_CD8_pip}
\end{figure}

\begin{figure}[htbp]
\centering
\includegraphics[width=\textwidth]{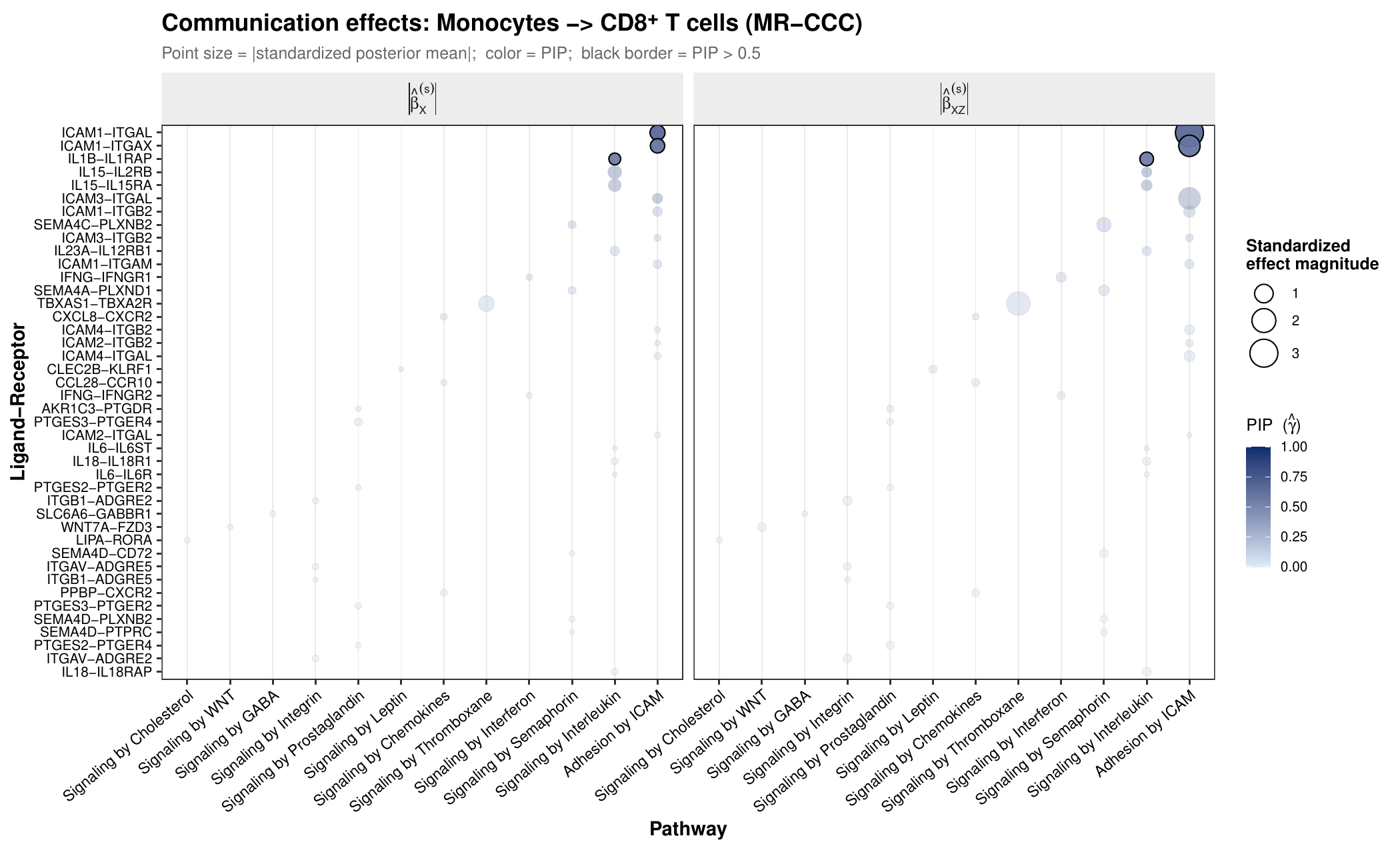}
\caption{\textbf{Standardized posterior effects for the Monocytes \(\rightarrow\)
  CD8\(^+\) T cells analysis.} Left panel: absolute main ligand effect
  \(|\hat{\beta}_X^{(s)}|\); right panel: absolute receptor-modulated
  interaction effect \(|\hat{\beta}_{XZ}^{(s)}|\). Point size encodes
  effect magnitude; fill color encodes PIP; black borders identify the
  three discovered triplets. \textit{ICAM1}--\textit{ITGAL} dominates
  the right panel with the largest negative interaction magnitude in
  this direction; \textit{ICAM1}--\textit{ITGAX} shows the largest
  positive interaction; \textit{IL1B}--\textit{IL1RAP} shows a
  moderate main effect with a smaller negative interaction.}
\label{fig:supp_Mono_CD8_bubble}
\end{figure}

\begin{figure}[htbp]
\centering
\includegraphics[width=\textwidth]{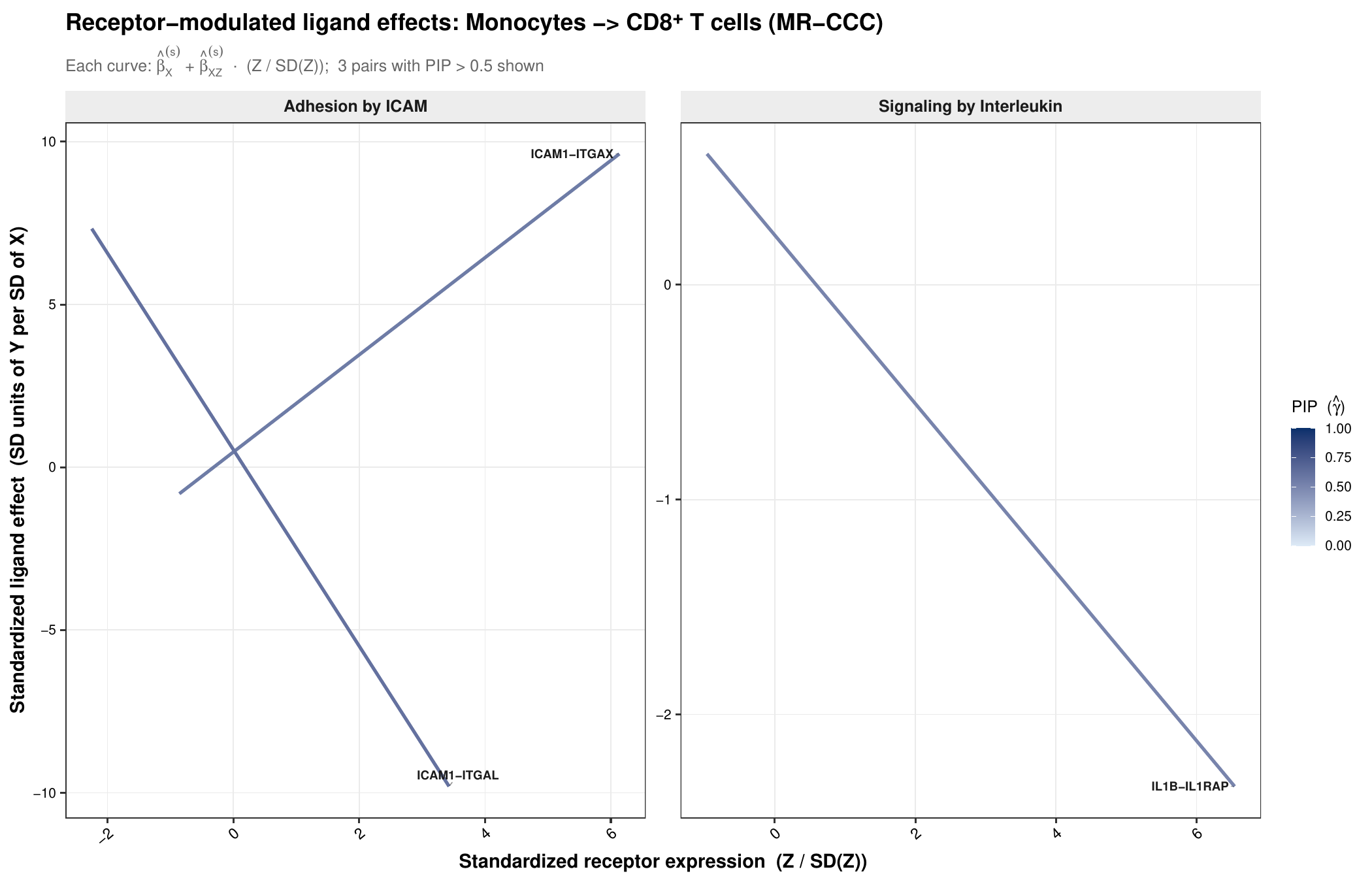}
\caption{\textbf{Receptor-modulated effect curves
  (\(\hat{\beta}_X + \hat{\beta}_{XZ} \cdot Z/\mathrm{SD}(Z)\))
  for the Monocytes \(\rightarrow\) CD8\(^+\) T cells analysis.}
  Only the three discovery pairs (PIP \(> 0.5\)) are displayed,
  grouped into the two pathway panels with confirmed discoveries
  (Adhesion by ICAM and Signaling by Interleukin). In the Adhesion
  by ICAM panel, \textit{ICAM1}--\textit{ITGAL} descends steeply
  from positive to negative, crossing zero near \(+0.18\) SD above
  the mean, while \textit{ICAM1}--\textit{ITGAX} rises monotonically,
  remaining positive across the majority of the observed CD11c
  expression range with a zero crossing near \(-0.31\) SD below the
  mean; the two curves form a crossing pair reflecting the divergent
  interaction signs for the \(\alpha\)L versus \(\alpha\)X integrin
  chains. In the Signaling by Interleukin panel, the
  \textit{IL1B}--\textit{IL1RAP} curve is positive at low to moderate
  \textit{IL1RAP} expression and crosses zero near \(+0.59\) SD above
  the mean.}
\label{fig:supp_Mono_CD8_curves}
\end{figure}
\FloatBarrier

\subsubsection{Monocytes \(\rightarrow\) NK Cells}
\label{supp:Mono_NK}

Across 651 donors, MR-CCC evaluated 41 ligand--receptor--pathway triplets for the Monocyte (sender) to NK cell
(receiver) direction and identified one high-confidence causal
communication signal with PIP exceeding 0.5: \textit{CXCL8}--\textit{CXCR2}
within the Chemokine signaling pathway (PIP \(= 0.518\))
(Figures~\ref{fig:supp_Mono_NK_pip}--\ref{fig:supp_Mono_NK_curves}).

The sole discovery, \textit{CXCL8}--\textit{CXCR2}
(PIP \(= 0.518\), \(\hat{\beta}_X = 0.375\), \(\hat{\beta}_{XZ} = -0.180\)),
identifies monocyte-derived CXCL8 (IL-8) as a causal activator of NK cell
Chemokine pathway activity through the CXCR2 receptor. Monocytes are among
the most prolific cellular sources of CXCL8 in peripheral blood, and NK cells
express CXCR2 on cytotoxic and terminally differentiated
subsets~\cite{ahuja1996cxcr2}. The positive main effect and negligibly small
negative interaction term place the zero crossing at approximately \(+2.08\)
standard deviations above the mean \textit{CXCR2} expression level---far
above the bulk of the observed NK cell distribution. The causal effect is
therefore positive across essentially the full observed CXCR2 expression
range, with only the most extreme high-CXCR2 NK cells approaching neutral
or mildly negative territory. This triplet is a near-pure main-effect
discovery: monocyte CXCL8 production exerts a broadly uniform stimulatory

effect on NK cell Chemokine pathway activity that is largely independent of
receptor expression level. \textit{CXCL8}--\textit{CXCR2} was also discovered
in the NK cells \(\rightarrow\) CD8\(^+\) T cells direction (PIP \(= 0.546\)),
where NK cells were the sender; its appearance here from the monocyte sender
direction confirms that CXCL8--CXCR2 signaling is an active chemokine
communication axis in both directions involving NK cells, with monocytes and
NK cells each capable of driving CXCR2-mediated pathway activity in the other
cell type.

The overall sparsity of discoveries in this direction---a single borderline
triplet, in contrast to the eight discoveries identified in the reverse
direction (NK cells \(\rightarrow\) Monocytes)---represents the strongest
directional asymmetry in the entire analysis. NK cells act as potent causal
communicators to monocytes via IFN-\(\gamma\), IL-18, IL-1\(\beta\),
prostaglandin, and GABA receptor axes, while monocytes return only a minimal
consistent causal footprint on NK cell pathway activity at the population level.
This asymmetry is consistent with the established immunological relationship
between NK cells and monocytes in peripheral blood: NK cells provide
activation, licensing, and cytokine signals to monocytes, while monocyte-to-NK
communication in resting peripheral blood is more limited and dominated by
contact-independent chemokine cues~\cite{schroder2004interferon,kalinski2012pge2}.

Several triplets narrowly missed the discovery threshold.
\textit{IL1B}--\textit{IL1RAP} (PIP \(= 0.471\),
\(\hat{\beta}_X = 0.328\), \(\hat{\beta}_{XZ} = 0.389\)) was the
highest-ranking near-miss; both coefficients are positive, producing a
monotonically stimulatory curve with a zero crossing near \(-0.84\) standard
deviations below the mean \textit{IL1RAP} level. This near-discovery is
noteworthy because \textit{IL1B}--\textit{IL1RAP} was confirmed from monocyte
senders to both B cells (PIP \(= 0.685\)) and CD8\(^+\) T cells
(PIP \(= 0.549\)), and was also discovered from NK cell senders to monocytes (PIP \(= 0.716\)); its sub-threshold PIP here suggests that monocyte IL-1\(\beta\)
signaling to NK cells through IL-1RAP is less consistent at the population
level than to lymphocyte targets~\cite{dinarello2009interleukin}.
\textit{SLC6A6}--\textit{GABBR1} (PIP \(= 0.442\),
\(\hat{\beta}_X = -0.235\), \(\hat{\beta}_{XZ} = 0.422\)) was also a
notable near-miss; the negative main effect and positive interaction produce
a sign-reversing curve crossing zero near \(+0.56\) standard deviations above
the mean \textit{GABBR1} level, with NK cells expressing high GABBR1 showing
a net stimulatory response to monocyte SLC6A6 taurine-transport activity.
This contrasts with the NK cell sender directions (NK cells \(\rightarrow\)
B cells and NK cells \(\rightarrow\) Monocytes), where \textit{SLC6A6}--\textit{GABBR1}
carried large negative interactions producing receptor-gated suppression; the
positive interaction here from monocyte sender to NK receiver is opposite in
sign, but falls below the discovery threshold~\cite{tian1999gaba,bjurstrom2008gaba}.
\textit{ICAM1}--\textit{ITGAX} (PIP \(= 0.393\)) was the highest-ranking
Adhesion by ICAM near-miss, consistent with the monocyte ICAM1--CD11c axis
that was discovered targeting both B cells and CD8\(^+\) T cells but does not
achieve consistency to NK cells at the population level.

\textit{IFNG}--\textit{IFNGR1} (PIP \(= 0.076\)) and
\textit{IFNG}--\textit{IFNGR2} (PIP \(= 0.162\)) both received low support,
consistent with monocytes not being a dominant physiological source of
IFN-\(\gamma\)~\cite{schroder2004interferon}. All Semaphorin, Integrin,
Cholesterol, WNT, and Thromboxane pathway triplets received PIPs below 0.35.

\begin{figure}[htbp]
\centering
\includegraphics[width=\textwidth]{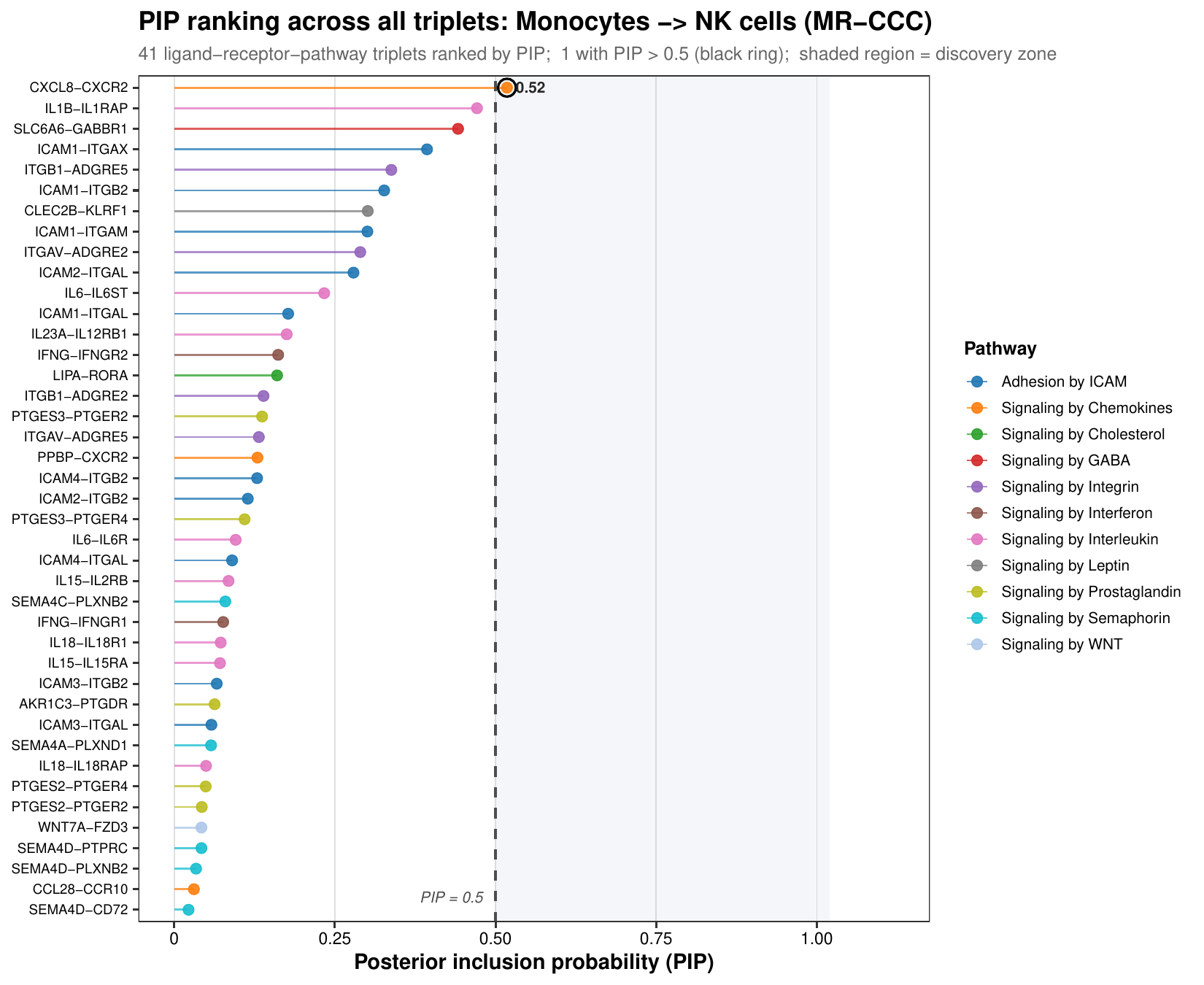}
\caption{\textbf{PIP ranking for the Monocytes\,$\rightarrow$\,NK cells analysis.} All 41 ligand--receptor--pathway triplets across 651 donors.
  Points are colored by pathway; the dashed vertical line marks the
  discovery threshold of PIP \(= 0.5\); the shaded region to the
  right is the discovery zone. The single discovered triplet,
  \textit{CXCL8}--\textit{CXCR2} (PIP \(= 0.52\)), is identified
  with a black ring. The two highest-ranking sub-threshold triplets,
  \textit{IL1B}--\textit{IL1RAP} (PIP \(= 0.47\)) and
  \textit{SLC6A6}--\textit{GABBR1} (PIP \(= 0.44\)), fall just below
  the threshold.}
\label{fig:supp_Mono_NK_pip}
\end{figure}

\begin{figure}[htbp]
\centering
\includegraphics[width=\textwidth]{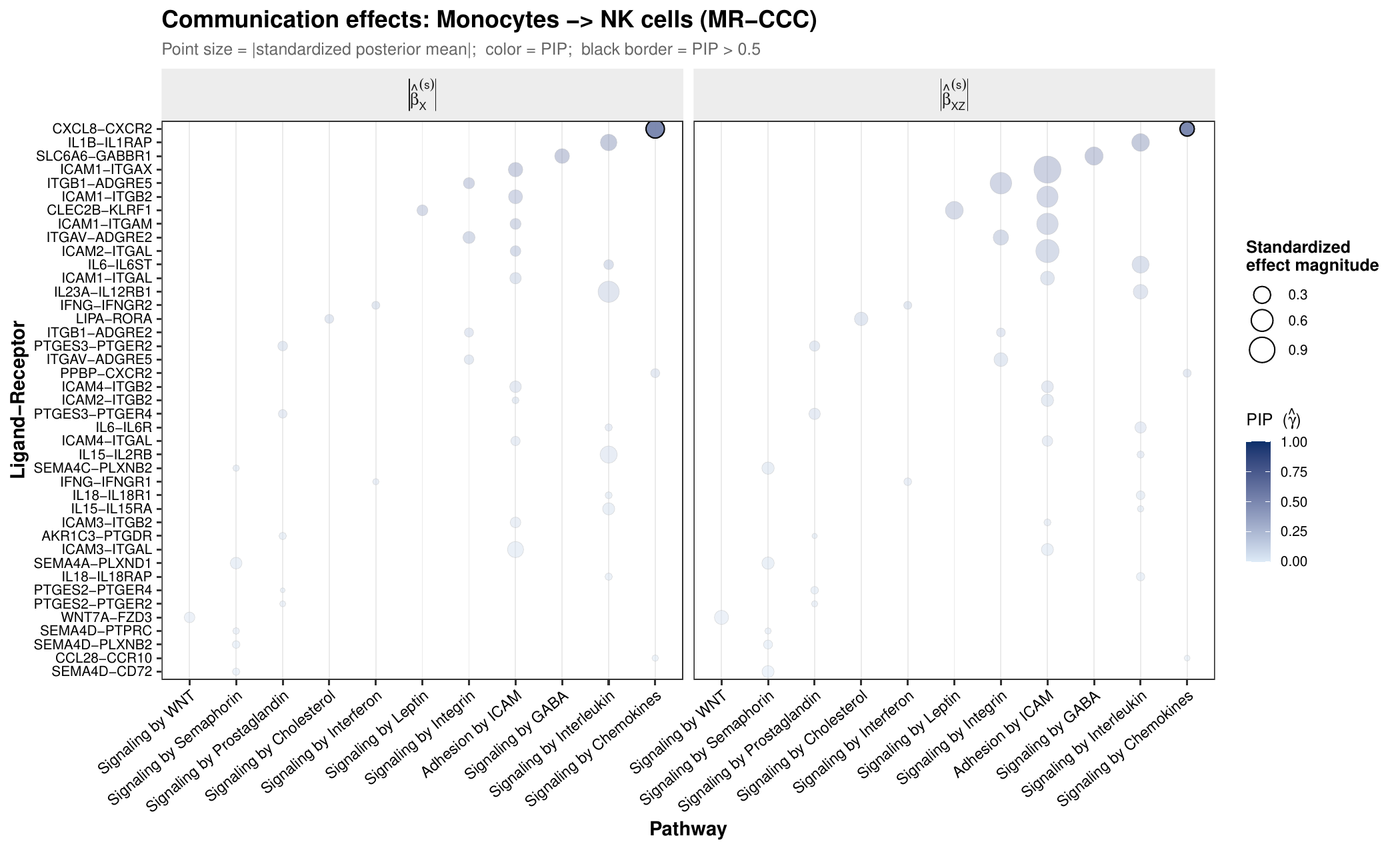}
\caption{\textbf{Standardized posterior effects for the Monocytes \(\rightarrow\)
  NK cells analysis.} Left panel: absolute main ligand effect
  \(|\hat{\beta}_X^{(s)}|\); right panel: absolute receptor-modulated
  interaction effect \(|\hat{\beta}_{XZ}^{(s)}|\). Point size encodes
  effect magnitude; fill color encodes PIP; the single black-bordered
  point identifies the discovered triplet \textit{CXCL8}--\textit{CXCR2}.
  Effect magnitudes are small throughout both panels and no pathway
  cluster shows a concentration of large high-PIP effects, consistent
  with the near-absence of discoveries in this direction.}
\label{fig:supp_Mono_NK_bubble}
\end{figure}

\begin{figure}[htbp]
\centering
\includegraphics[width=\textwidth]{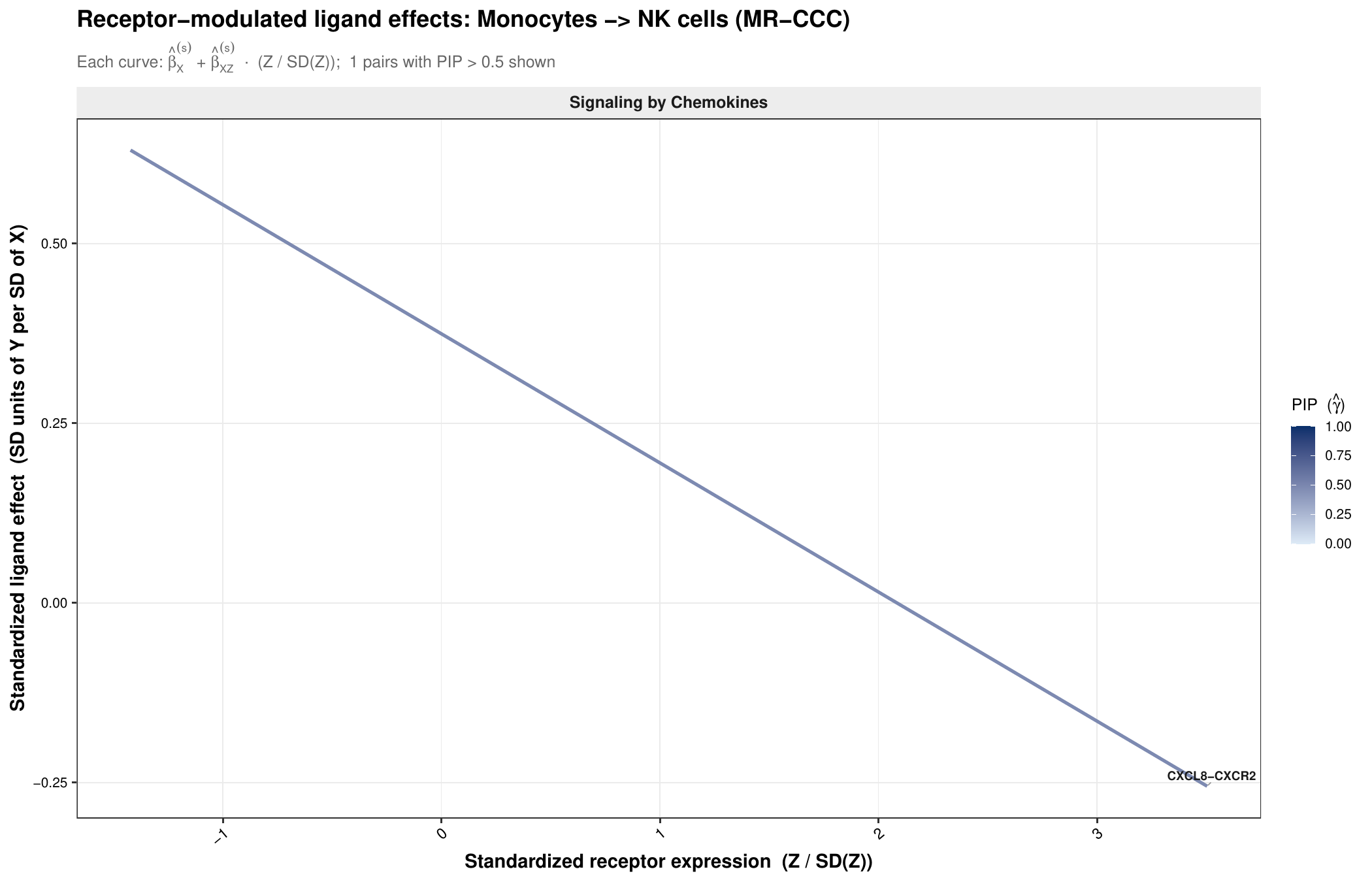}
\caption{\textbf{Receptor-modulated effect curves
  (\(\hat{\beta}_X + \hat{\beta}_{XZ} \cdot Z/\mathrm{SD}(Z)\))
  for the Monocytes \(\rightarrow\) NK cells analysis.} Only the
  single discovery pair (PIP \(> 0.5\)) is displayed in the Signaling
  by Chemokines panel. The \textit{CXCL8}--\textit{CXCR2} curve is
  nearly flat and positive across the bulk of the observed
  \textit{CXCR2} expression range, with a zero crossing near
  \(+2.08\) standard deviations above the mean, consistent with a
  near-pure main-effect discovery.}
\label{fig:supp_Mono_NK_curves}
\end{figure}

\FloatBarrier

Across all 19 supplementary ordered pairs, MR-CCC consistently
assigns high posterior probability to biologically established signaling
axes while returning near-zero PIPs for mechanistically implausible
directions.  Recurrent triplets such as \textit{ICAM1}--\textit{ITGAL},
\textit{PTGES3}--\textit{PTGER2}, and \textit{SLC6A6}--\textit{GABBR1}
appear across multiple sender--receiver combinations with interaction
signs that are specific to the sender cell type and receiver cell type,
illustrating that MR-CCC detects directionally and cellularly resolved
causal signals rather than co-expressed pairs that would be flagged
indiscriminately under association-based methods.
The directional asymmetries observed---most strikingly the eight discoveries from NK cells to monocytes versus the single borderline
discovery in the reverse direction---are consistent with known
immunological communication hierarchies in peripheral blood and
reinforce the biological validity of the causal inference framework
established in the main manuscript.


\FloatBarrier
\bibliographystyle{unsrtnat}  
\bibliography{Bibliography-MM-MC}  